\pdfoutput=1
\documentclass[a4paper,USenglish]{lipics-v2019}

\setcounter{tocdepth}{3}
\nolinenumbers
\hideLIPIcs

\usepackage{thmtools,thm-restate}

\usepackage{amsmath}
\usepackage{amsthm}
\usepackage{amssymb}
\usepackage{tikz}
\usetikzlibrary{positioning,automata,fit,shapes,calc}
\usepackage{dsfont}
\usepackage{caption}
\usepackage{subcaption}
\usepackage{wrapfig}

\usepackage[section]{placeins}
\setlength{\belowcaptionskip}{-18pt}
\setlength{\abovecaptionskip}{0pt}

\newcommand{\PE}{\mathit{PE}}
\newcommand{\CE}{\mathit{CE}}
\newcommand{\PMP}{\mathit{WLF}}

\newcommand{\VaR}{\mathit{VaR}}
\newcommand{\CVaR}{\mathit{CVaR}}

\newcommand{\tudparagraph}[2]{%
\vspace*{#1}

\noindent
{\bf #2}
}


\newcommand{\cA}{\mathcal{A}}
\newcommand{\cB}{\mathcal{B}}
\newcommand{\cC}{\mathcal{C}}

\newcommand{\cE}{\mathcal{E}}

\newcommand{\cK}{\mathcal{K}}
\newcommand{\cL}{\mathcal{L}}
\newcommand{\cM}{\mathcal{M}}
\newcommand{\cN}{\mathcal{N}}


\newcommand{\eqdef}{\ensuremath{\stackrel{\text{\tiny def}}{=}}}


\renewcommand{\Pr}{\mathrm{Pr}}

\newcommand{\threshold}{\vartheta}

\newcommand{\flrprob}[2]{\mathit{lrp}^{#1}_{\scriptscriptstyle #2}}

\newcommand{\LPr}[3]{\mathbb{LP}^{#1}_{#2}(#3)}

\newcommand{\after}[2]{\residual{#1}{#2}}
\newcommand{\SatPoint}{K}

\newcommand{\FM}{\mathrm{FM}}
\newcommand{\fmstate}[2]{\mathfrak{#1}} 
\newcommand{\lgth}{n}

\newcommand{\eap}{\mathit{eaw}}
\newcommand{\ens}{\mathit{ens}}


\newcommand{\AP}{\mathsf{AP}}

\newcommand{\sinit}{s_{\mathit{\scriptscriptstyle init}}}

\newcommand{\Act}{\mathit{Act}}

\newcommand{\act}{\alpha}


\newcommand{\fpath}{\pi}
\newcommand{\finpath}{\varrho}

\newcommand{\infpath}{\varsigma}

\newcommand{\last}{\mathit{last}}

\newcommand{\fragment}[3]{#1[#2 \ldots #3]}
\newcommand{\prefix}[2]{\mathit{pref}(#1,#2)}




\newcommand{\sched}{\mathfrak{S}}
\newcommand{\tsched}{\mathfrak{T}}

\newcommand{\fsched}{\mathfrak{F}}
\newcommand{\qsched}{\mathfrak{Q}}
\newcommand{\rsched}{\mathfrak{R}}

\newcommand{\residual}[2]{#1 {\uparrow} {#2}}



\newcommand{\wgt}{\mathit{wgt}}



\newcommand{\goal}{\mathit{goal}}
\newcommand{\fail}{\mathit{fail}}

\newcommand{\Fail}{\mathit{Fail}}

\newcommand{\Goal}{\mathit{Goal}}


\newcommand{\neXt}{\bigcirc}

\DeclareMathOperator{\Until}{\ensuremath{\mathrm{U}}}
\DeclareMathOperator{\until}{\Until}




\newcommand{\Nat}{\mathbb{N}}
\newcommand{\Rational}{\mathbb{Q}}





\newcommand{\LP}{\mathbb{LP}}



\newcommand{\CiteAppendix}[1]{}



\usepackage{amssymb}
\usepackage{tikz}

\newcommand{\rawdiaplus}{%
  \begin{tikzpicture}
    \useasboundingbox (-0.7ex, -0.9ex) rectangle (0.7ex, 0.9ex);
    \node (w) at (-0.7ex,0) {};
    \node (e) at (+0.7ex,0) {};
    \node (s) at (0,-0.9ex) {};
    \node (n) at (0,+0.9ex) {};
    \draw (n.center) -- (e.center) -- (s.center) -- (w.center) -- (n.center);
    \draw (n.center) -- (s.center);
    \draw (e.center) -- (w.center);
  \end{tikzpicture}}

\newsavebox{\diamondplusbox}
\savebox{\diamondplusbox}{\rawdiaplus}

\newcommand{\rawdiaminus}{%
  \begin{tikzpicture}
    \useasboundingbox (-0.7ex, -0.9ex) rectangle (0.7ex, 0.9ex);
    \node (w) at (-0.7ex,0) {};
    \node (e) at (+0.7ex,0) {};
    \node (s) at (0,-0.9ex) {};
    \node (n) at (0,+0.9ex) {};
    \draw (n.center) -- (e.center) -- (s.center) -- (w.center) -- (n.center);
    \draw (e.center) -- (w.center);
  \end{tikzpicture}}

\newsavebox{\diamondminusbox}
\savebox{\diamondminusbox}{\rawdiaminus}

\theoremstyle{definition}
\newtheorem{mydef}{Definition}

\theoremstyle{plain}
\newtheorem{myprop}[mydef]{Proposition}

\theoremstyle{definition}
\newtheorem{myex}[mydef]{Example}

\theoremstyle{plain}
\newtheorem{mylem}[mydef]{Lemma}

\theoremstyle{plain}
\newtheorem{mycor}[mydef]{Corollary}

\theoremstyle{plain}
\newtheorem{mythm}[mydef]{Theorem}

\theoremstyle{definition}
\newtheorem{myrem}[mydef]{Remark}

\pagestyle{plain}

\usepackage{tabularx}
    \newcolumntype{L}{>{\raggedright\arraybackslash}X}

\newtheorem*{theorem*}{Main result}

\setlength{\belowcaptionskip}{-9pt}

\begin{CCSXML}
<ccs2012>
   <concept>
       <concept_id>10003752.10003753.10003757</concept_id>
       <concept_desc>Theory of computation~Probabilistic computation</concept_desc>
       <concept_significance>500</concept_significance>
       </concept>
   <concept>
       <concept_id>10003752.10003790.10002990</concept_id>
       <concept_desc>Theory of computation~Logic and verification</concept_desc>
       <concept_significance>300</concept_significance>
       </concept>
 </ccs2012>
\end{CCSXML}

\ccsdesc[500]{Theory of computation~Probabilistic computation}
\ccsdesc[300]{Theory of computation~Logic and verification}

\title{On Skolem-hardness and saturation points in Markov decision processes\\
\vspace{12pt}
\Large{(Extended Version)}}%
\author{Jakob Piribauer}
{Technische Universit{\"a}t Dresden}
{jakob.piribauer@tu-dresden.de}
{}
{}
\author{Christel Baier}
{Technische Universit{\"a}t Dresden}
{christel.baier@tu-dresden.de}
{}
{}
\authorrunning{J. Piribauer and C. Baier}

\funding{ The authors are supported by the DFG through
        the DFG grant 389792660 as part of {TRR~248},
        the Cluster of Excellence EXC 2050/1 (CeTI, project ID 390696704, as part of Germany's Excellence Strategy), 
        the Research Training Group QuantLA (GRK 1763),
        the Collaborative Research Centers CRC 912 (HAEC),
        	the DFG-project BA-1679/11-1, and
        the DFG-project BA-1679/12-1.}

\Copyright{Jakob Piribauer, Christel Baier}

\keywords{Markov decision process, Skolem problem, stochastic shortest path, conditional expectation, conditional value-at-risk, model checking, frequency-LTL}

\EventEditors{Artur Czumaj, Anuj Dawar, and Emanuela Merelli}
\EventNoEds{3}
\EventLongTitle{47th International Colloquium on Automata, Languages, and Programming (ICALP 2020)}
\EventShortTitle{ICALP 2020}
\EventAcronym{ICALP}
\EventYear{2020}
\EventDate{July 8--11, 2020}
\EventLocation{Saarbrücken, Germany (virtual conference)}
\EventLogo{}
\SeriesVolume{168}
\ArticleNo{138}

\pdfstringdefDisableCommands{%
  \def\\{}%
  \def\texttt#1{<#1>}%
}

\begin{document}

\maketitle

\begin{abstract}

The Skolem problem and the related Positivity problem for linear recurrence sequences are outstanding number-theoretic problems whose decidability has been open for many decades. 
In this paper, the inherent mathematical difficulty of a series of optimization problems on Markov decision processes (MDPs) is shown by a reduction from the Positivity problem to the associated decision problems which establishes that the problems are also at least as hard as the Skolem problem as an immediate consequence.
The optimization problems under consideration are two non-classical variants of the stochastic shortest path problem (SSPP) in terms of expected partial or conditional 
accumulated weights, the optimization of the conditional value-at-risk for accumulated weights, and two problems addressing the long-run satisfaction of path properties, namely the optimization of long-run probabilities of regular co-safety properties and the model-checking problem of the logic frequency-LTL. To prove the Positivity- and  hence Skolem-hardness  for the latter two problems, a new  auxiliary 
path measure, called weighted long-run frequency, is introduced  and the Positivity-hardness of the corresponding decision problem is shown as an intermediate step.
For the partial and conditional SSPP on MDPs with non-negative weights and for the optimization of long-run probabilities of constrained reachability properties ($a\Until b$),
solutions
are known 
that 
rely on the identification of a bound on the accumulated weight or the number of consecutive visits to certain sates, called a saturation point, from which on
optimal schedulers behave memorylessly. In this paper, it is shown that  also the optimization of the conditional value-at-risk for the classical SSPP and of weighted long-run frequencies on MDPs with non-negative weights can  be solved in pseudo-polynomial time exploiting the existence of a saturation point. As a consequence, one obtains the decidability of the qualitative model-checking problem of a frequency-LTL formula that is not included in the fragments with known solutions.

\end{abstract}	

\newpage

\section{Introduction}


Markov decision processes (MDPs) (see, e.g., \cite{puterman1994}) constitute one of the most prominent classes of 
operational models combining randomization and non-determinism and are widely used in verification, articifical intelligence, robotics and operations research.
Consequently, a vast landscape of optimization problems on MDPs has been studied. The  task usually is to find a strategy resolving the non-deterministic choices, called a \emph{scheduler}, such that a certain objective quantity is optimized or to decide whether the optimal value exceeds a given rational threshold (\emph{threshold problem}).

\emph{Stochastic shortest path problems (SSPPs)} are one important type of such optimization problems on MDPs equipped with weights. These problems ask for a scheduler maximizing or minimizing the expected accumulated weight before reaching a designated goal state. In the classical setting, only schedulers reaching the goal almost surely are taken into consideration. This \emph{classical SSPP} is known to be solvable in polynomial time using graph-based algorithms and linear-programming techniques \cite{bertsekas1991,deAlfaro1999,lics2018}. 
For various purposes, the requirement that the goal has to be reached almost surely, however, is not appropriate. 
This applies, e.g., to work on  the semantics of probabilistic programs when no guarantee on almost sure termination can be given \cite{gretz2014,katoen2015,barthe2016,chatterjee2016,olmedo2018}, to the analysis of the behavior of fault-tolerant systems in error scenarios which occur with low probability, or to the trade-off analysis when combinations of utility and cost constraints can be achieved with positive probability, but not almost surely (see, e.g., \cite{baier2014}).
This motivates a switch to non-classical variants of the SSPP: The  \emph{conditional SSPP}
\cite{tacas2017}  asks for a scheduler optimizing the conditional expected accumulated weight before reaching the goal under the condition that the goal will indeed be reached and  the \emph{partial SSPP} \cite{chen2013,fossacs2019}  assigns weight $0$ to all executions not reaching the goal. Both variants increase the algorithmic difficulties.  In the special case of MDPs with non-negative weights, exponential-time algorithms for the partial and conditional SSPP  exploit the monotonicity of accumulated weights and rely on the existence of a \emph{saturation point} (a bound for the accumulated weight) from which on optimal schedulers behave memorylessly. Apart from a PSPACE lower bound and approximation algorithms \cite{fossacs2019}, no algorithms are known for solving the partial or conditional SSPP in integer-weighted MDPs.

Conditional expectations also play a crucial role in risk management: The \emph{conditional value-at-risk} is an established risk measure quanitfying the expected loss in bad cases \cite{Uryasev00,AcerbiTasche02}. Given a probability value $p$, the \emph{value-at-risk} of a random variable $X$ is defined as the worst $p$-quantile. Quantile queries on the distribution of path lengths have been studied in \cite{UB13}. The conditional value-at-risk is  the expectation of $X$ under the condition that the outcome is worse than the value-at-risk. For MDPs, the conditional value-at-risk has been studied for mean-payoffs and for  weighted reachability where on each run only once a terminal weight is collected when a target state is reached \cite{kretinsky2018}. In this paper, we consider the conditional value-at-risk for the more general accumulated weight before reaching the goal, i.e. for the classical SSPP. To the best of our knowledge, this problem has not been studied.

Other typical optimization problems arise in the context of verification, asking for worst-case schedulers that minimize or maximize the probability of a given path property. 
While such problems are well-understood, e.g., for properties given by linear temporal logic (LTL)-formulas or non-deterministic B\"uchi-automata \cite{CY95}, there has been increasing interest in ways to quantify the degree to which a property is satisfied not only by the probability (see \cite{Henzinger2013}). Approaches in this direction include the work on
 robust satisfaction of temporal specifications \cite{KupfVar99,TabNeider16},  coverage semantics \cite{ChockKupfVardi06}, robustness distances \cite{CerHenRad12},
and the more general model-measurement semantics \cite{HenOtop13} among others.
Furthermore, this has  lead to different notions quantifying to which degree a property is satisfied in the long-run: 
\emph{Frequency-LTL} 
has been introduced in \cite{ForejtK15,ForejtKK15} 
as an extension of LTL by a frequency modality that makes assertions on the portion of time (or relative frequency of positions in paths) where a given event holds.
While \cite{ForejtK15,ForejtKK15} presents model-checking algorithms for Markov chains and arbitrary frequency-LTL formulas, the presented model checking algorithms for MDPs are restricted to fragments of frequency-LTL. We address the model checking problem for frequency-LTL formulas not contained in these fragments. Further, the concept of \emph{long-run probabilities} \cite{lics2019} has been introduced for reasoning about the probabilities of path properties when the system is in equilibrium and can, e.g., be useful to formalize refined notions of long-run availability. 
In \cite{lics2019}, a pseudo-polynomial time algorithm that exploits the existence of a saturation point for the computation of optimal long-run probabilities of
 constrained reachability properties ($a\Until b$) is provided. 
Here, we study long-run probabilities of  general regular co-safety properties.

\tudparagraph{.3ex}{Contributions.} The main contribution of the paper is to provide evidence for the mathematical difficulty of the series of decision problems described above in terms of a reduction from the \emph{Positivity problem} of linear recurrence sequences. The Positivity problem is closely related to the \emph{Skolem problem}, a prominent number-theoretic decision problem for linear recurrence sequences, and the decidability of both problems has been open for many decades (see, e.g., \cite{halava2005}). As it is well-known that the Skolem problem is reducible to the Positivity problem, the provided reductions establish that the investigated decision problems are also at least as hard as the Skolem problem. In the middle column of Table \ref{tab:overview}, these Skolem-hardness results are listed:
\vspace{.8ex}

\begin {table}[h]
\caption {Overview of the results} \label{tab:overview} 
\begin{tabularx}{\textwidth}{|l|l|L|}
\hline
optimization problem   & threshold problem  Positivity-& exponential-time algorithm \\
on MDPs &and hence Skolem-hard for &using a saturation point for\\
\hline
\hline
partial SSPP (1)  &  weights in $\mathbb{Z}$, \emph{Thm. \ref{thm:threshold_PE}} & weights in $\mathbb{N}$ \cite{chen2013}\\
 & & (PSPACE-hard, \emph{Prop. \ref{prop:CE_to_PE}}) \\
\hline
conditional SSPP (2) & weights in $\mathbb{Z}$, \emph{Thm. \ref{thm:threshold_CE}} & weights in $\mathbb{N}$ \cite{tacas2017}\\
& & (PSPACE-hard \cite{tacas2017})\\
\hline
conditional value-at-risk & weights in $\mathbb{Z}$, \emph{Thm. \ref{thm:Skolem_cvar}} & weights in $\mathbb{N}$, \emph{Thm. \ref{thm:cvar_pos}}  \\
 for the classical SSPP (3) &  & \\
 \hline
long-run probability (4) & regular co-safety properties, & constrained reachability $a\Until b$ \cite{lics2019} \\
&   \emph{Thm. \ref{thm:Skolem_LP}} & (NP-hard \cite{lics2019}) \\
\hline
model checking of   & $\Pr^{\max}_\cM (G^{>\vartheta}_{\inf} (\varphi))=1$? & $\Pr^{\max}_\cM (G^{>\vartheta}_{\inf} (a\Until b))=1$? \\
frequency-LTL (5) &for an LTL-formula $\varphi$, \emph{Thm. \ref{thm:fLTL}}& \emph{Cor. \ref{cor:frequency-LTL}}\\
\hline

\end{tabularx}
\vspace{-12pt}
\end{table}
\vspace{.3ex}

 To obtain these results, we construct an MDP-gadget in which a linear recurrence relation can be encoded. Together with different gadgets encoding initial values of a linear recurrence sequence, we use this gadget  to establish Positivity-hardness for problems (1)-(3). Afterwards, we introduce a notion of \emph{weighted long-run frequency} for constrained reachability properties that can be seen as a generalization of classical limit-average weights and serves here as a technical vehicle to provide a connective link to long-run probabilities and the model-checking problem of frequency-LTL. 
The Positivity-hardness for problems (4) and (5) is obtained via the Positivity-hardness of the threshold problem for weighted long-run frequencies by showing how to encode  integer weights  in terms of the satisfaction of a fixed  co-safety property.
The Positivity-hardness of (4) and (5) is somehow surprising: The non-probabilistic variant (4) is shown to be decidable in \cite{lics2019}, while our results show that Positivity-hardness of (4) holds even for a simple fixed co-safety property given by a very small counter-free non-deterministic finite automaton.
Likewise, Positivity-hardness of (5) is established already for the restriction to
 the almost-sure satisfaction problem of a simple fixed frequency-LTL formula.
 
For special cases of some of  the problems studied here it is known that optimal values can be computed in exponential time exploiting a saturation point. We extend this picture by showing  analogous results for problems (3) and (5) (see Table \ref{tab:overview}). In particular, we provide a simple exponential time algorithm for the computation of the optimal conditional value-at-risk for the classical SSPP. Further, we pinpoint where the Positivity-hardness of the model checking problem of frequency-LTL arises:
We observe that the techniques of \cite{lics2019} allow to solve the qualitative model-checking problem for a frequency-LTL formula with only one constrained reachability ($a\Until b$) property under a frequency-globally modality. 
Our Positivity-hardness result for  model checking  frequency-LTL uses an only slightly more complicated fixed formula where a Boolean combination of atomic propositions and constrained reachability properties occurs in the scope of the frequency-globally modality. In particular, the Positivity-hardness does not require deeper nesting of temporal operators.

\tudparagraph{.3ex}{Related work.}
Besides the above cited work that presents algorithms for special cases of 
the investigated problems, closest to our work is 
\cite{akshay2015} where Skolem-hardness for decision problems
for Markov chains have been established. The problems are to decide whether for given states $s$, $t$
and rational number
  $p$, there is a positive integer $n$ such that the 
probability to reach $t$ from $s$ in $n$ steps equals $p$
and
the model checking problem for a probabilistic variant of monadic logic 
and  a variant of LTL that treats Markov chains as linear transformers of
probability distributions.
These decision problems are of quite different nature than the
problems studied here, and so are the reductions from the Skolem problem.
In this context also the results of 
\cite{ChonOuakWor16} and \cite{MSS20} are remarkable
as they show the decidability (subject to Schanuel's conjecture)  of 
reachability problems in
continuous linear dynamical systems and continuous-time MDPs, respectively,
as  instances of the continuous Skolem problem.

A class of problems related to SSPPs concerns the optimization of 
 probabilities for weight-bounded reachability properties and also exhibits increasing algorithmic difficulty (for an overview see \cite{randour2015}):
For non-negative weights, schedulers optimizing the probability for reaching a target while the accumulated weight stays below a given bound are computable in pseudo-polynomial time and the corresponding probability-threshold problem is in P for qualitative probability thresholds (``${>}0$'' or ``${=}1$'') and PSPACE-hard in the general case \cite{UB13,haase2015}. For integer weights even in finite-state Markov chains, the probabilities for a weight-bounded reachability property can be irrational. Still, decidability  for analogous problems for integer-weighted MDPs have been established for certain cases. Examples are pseudo-polynomial algorithms for qualitative threshold problems in integer-weighted MDPs \cite{ChatDoy11,BKN16,MaySchTozWoj17,lics2018} or an exponential-time algorithm and a PSPACE lower bound for the almost-sure termination problem in one-counter MDPs~\cite{brazdil2010}. 

Switching to more expressive models typically leads to the undecidability of infinite-horizon verification problems. This applies, e.g., to recursive MDPs \cite{etessami2005}, MDPs with two or more weight functions \cite{BKKW14,randour2017} or partially observable MDPs \cite{madani1999,BaiGroeBer12}. However, we are not aware of natural decision problems for standard (finite-state) MDPs with a single weight function and single objective that are known to be undecidable.


\section{Preliminaries} \label{sec:preliminaries}


We give basic definitions and present our notation (for more details see, e.g., \cite{puterman1994}). We then formally define the quantitative objectives studied in this paper.
\tudparagraph{.3ex}{Notations for Markov decision processes.}
A \emph{Markov decision process} (MDP) is a tuple $\mathcal{M} = (S,\Act,P,\sinit,\wgt,\AP,L)$
where $S$ is a finite set of states,
$\Act$ a finite set of actions,
$\sinit \in S$ the initial state,
$P \colon S \times \Act \times S \to [0,1] \cap \Rational$ is the
transition probability function,
$\wgt \colon S \times \Act \to \mathbb{Z}$ the weight function,
$\AP$ a finite set of atomic propositions, and
$L \colon S \to 2^{\AP}$ a labeling function.
If not needed, we might drop the weight function or the labeling. 
We require that
$\sum_{t\in S}P(s,\act,t) \in \{0,1\}$
for all $(s,\alpha)\in S\times \Act$.
We say that action $\alpha$ is enabled in state $s$ iff $\sum_{t\in S}P(s,\act,t) =1$.
 We assume that for all states $s$ there is an enabled action  and 
that all states are reachable from $s_{\mathit{init}}$. 
We call a state absorbing if there is only one enabled action, returning to the state  with probability $1$ and weight $0$.
The paths of $\cM$ are finite or
infinite sequences $s_0 \, \act_0 \, s_1 \, \act_1  \ldots$
where states and actions alternate such that
$P(s_i,\act_i,s_{i+1}) >0$ for all $i\geq0$.
For $\fpath =
    s_0 \, \act_0 \, s_1 \, \act_1 \,  \ldots \act_{k-1} \, s_k$,
  $\wgt(\fpath)=
   \wgt(s_0,\act_0) + \ldots + \wgt(s_{k-1},\act_{k-1})$
  denotes the accumulated weight of $\pi$,
    $P(\fpath) =
   P(s_0,\act_0,s_1) 
   \cdot \ldots \cdot P(s_{k-1},\act_{k-1},s_k)$
   its probability, and
$\last(\fpath)=s_k$ its last state. Further, we also write  $\fpath$ to denote the word $L(s_0),L(s_1),\dots$.
The \emph{size} of $\cM$ 
is the sum of the number of states 
plus the total sum of the logarithmic lengths of the non-zero
probability values
$P(s,\alpha,s')$ as fractions of co-prime integers and the weight values $\wgt(s,\alpha)$.
An {end component} of $\cM$ is a strongly connected sub-MDP.


\tudparagraph{.3ex}{Scheduler.}
A \emph{scheduler} for $\cM$
is a function $\sched$ that assigns to each finite path $\fpath$ 
a probability distribution over $\Act(\last(\fpath))$.
If there is a finite set $X$ of memory modes and a memory update function $U:S\times \Act \times S \times X \to X$ such that the choice of $\sched$ only depends on the current state after a finite path and the memory mode obtained from updating the memory mode according to $U$ in each step, we say that $\sched$ is a finite-memory scheduler. 
If the choice depends only on the current state, we say that $\sched$ is memoryless.
A scheduler $\sched$ is called deterministic if $\sched(\fpath)$ is a Dirac distribution
for each path $\fpath$.
Given a scheduler $\sched$,
$\zeta \, = \, s_0 \, \act_0 \, s_1 \, \act_1 \ldots$
is a $\sched$-path iff $\zeta$ is a path and
$\sched(s_0 \, \act_0  \ldots \act_{k-1} \, s_k)(\act_k)>0$
for all $k \geq 0$.

\tudparagraph{.3ex}{Probability measure.}
We write $\Pr^{\sched}_{\cM,s}$ or briefly $\Pr^{\sched}_{s}$
to denote the probability measure induced by $\sched$ and $s$.
For details, see \cite{puterman1994}.
We will use LTL-like formulas to denote measurable sets of paths.
Given a measurable set $\psi$ of infinite paths, we define
$\Pr^{\min}_{\cM,s}(\psi) = \inf_{\sched} \Pr^{\sched}_{\cM,s}(\psi)$
and
$\Pr^{\max}_{\cM,s}(\psi) = \sup_{\sched} \Pr^{\sched}_{\cM,s}(\psi)$
where $\sched$ ranges over all schedulers for $\cM$.
For a random variable $X$ defined on infinte paths in $\cM$, we denote the expected value of $X$ under the probability measure induced by a scheduler $\sched$ and state $s$ by $\mathbb{E}^{\sched}_{\cM,s}(X)$. Furthermore, for a measurable set of paths $\psi$ with positive probability, $\mathbb{E}^{\sched}_{\cM,s}(X|\psi)$ denotes the conditional expectation of $X$ under $\psi$. If $s=\sinit$, we sometimes drop the subscript $s$.

\tudparagraph{.3ex}{Partial and conditional SSPP.}
Let $\cM$ be an MDP with an absorbing state $\goal$. On infinite paths $\zeta$, 
we define the random variable $\oplus \goal (\zeta)$ to be 
$ 
\wgt(\zeta)$ if $\zeta \vDash \Diamond \goal$, and to be $0$ otherwise.
 The \emph{partial expectation} ${\PE}^\sched_{\cM,s}$ of a scheduler $\sched$ is defined as $\mathbb{E}^\sched_{\cM,s}(\oplus \goal)$. The maximal partial expectation is ${\PE}^{\max}_{\cM,s}=\sup_\sched {\PE}^\sched_{\cM,s}$. 
 The \emph{conditional expectation} ${\CE}^\sched_{\cM,s}$ is defined as the conditional expected value $\mathbb{E}^\sched_{\cM,s}(\oplus \goal|\Diamond \goal)$ for all schedulers reaching $\goal$ with positive probability, 
 and the maximal conditional expectations is ${\CE}^{\max}_{\cM,s}=\sup_\sched {\CE}^\sched_{\cM,s}$ 
 where $\sched$ ranges over all schedulers $\sched$ with $\Pr^\sched_{\cM,s}(\Diamond \goal)>0$. 
 The \emph{partial SSPP} asks for the maximal partial expectations and the \emph{conditional SSPP} for the maximal conditional expectation. These problems were first considered in \cite{chen2013} and \cite{tacas2017}.
 For more details see \cite{tacas2017,fossacs2019}.

\tudparagraph{.3ex}{Conditional value-at-risk.}
Given an MDP $\cM$ with a scheduler $\sched$, a random variable $X$ defined on runs of the MDP with values in $\mathbb{R}$ and a value $p\in [0,1]$, we define the \emph{value-at-risk} as 
$\VaR^{\sched}_{p}(X) = \sup \{r\in \mathbb{R}| \Pr_{\cM}^\sched (X\leq r)\leq p\}$.
So, the value-at-risk is the point at which the cumulative distribution function of $X$ reaches or exceeds $p$. 
Denote $\VaR_p^\sched(X)$ by $v$. 
The \emph{conditional value-at-risk} is now the expectation of $X$ under the condition that the outcome belongs to the $p$ worst outcomes.
Following the treatment of  random variables that are not continuous in general in \cite{kretinsky2018}, we define the conditional value-at-risk as follows:
\[\CVaR_p^\sched(X) = 1/p (  \Pr_{\cM}^{\sched}(X<v) \cdot \mathbb{E}_{\cM}^\sched(X|X<v) + (p- \Pr_{\cM}^{\sched}(X<v))\cdot v  ).\]
Outcomes of $X$ which are less than $v$  are treated differently to outcomes equal to $v$ as it is possible that the outcome $v$ has positive probability and we only want to 
 account exactly for the $p$ worst outcomes. Hence,  we take only $p- \Pr_{\cM}^{\sched}(X<v)$ of the outcomes which are exactly $v$ into account as well.
 
Threshold problems for the conditional value-at-risk in weighted MDPs have been studied in \cite{kretinsky2018} for two random variables: the mean-payoff and weighted reachability where a set of final states is equipped with  terminal weights obtained when reaching these states while all other transitions have weight $0$. In this paper, we will address the conditional value-at-risk for the accumulated weight before reaching $\goal$ in MDPs with an absorbing state $\goal$:
Define  $\rawdiaplus\goal (\zeta)$ to be 
$\wgt(\zeta)$ if $\zeta\vDash \Diamond \goal$ and leave it undefined otherwise.
The optimization of the expectation of $\rawdiaplus \goal$ is known as the $\emph{classical SSPP}$. Note that the expectation of this random variable is only defined under schedulers reaching $\goal$ with probability $1$.

\tudparagraph{.3ex}{Long-run probability.}
 Let $\cM$ be an MDP with states labeled by atomic propositions from $\AP$. Let $\varphi$ be a path property, i.e., a measurable set of paths.
The \emph{long-run probability} for $\varphi$ 
of a path $\zeta$ under a
scheduler $\sched$ is
$
  \flrprob{\sched}{\varphi}(\zeta) 
    =   
  \liminf_{n \to \infty} \
       \frac{1}{n{+}1} \cdot \sum_{i=0}^{n}
       \Pr^{\sched \after{} \zeta [0\dots i]}_{\cM,\zeta [i]}
        (\varphi) 
$.  
Here, $\zeta [0\dots i]$ denotes the prefix from position $0$ to $i$ of $\zeta$,   $\zeta [i]$ denotes the state after $i$ steps, and $\sched \after{} \zeta [0\dots i]$  denotes the residual scheduler defined by $\sched \after{} \zeta [0\dots i](\pi) = \sched(\zeta [0\dots i] \circ \pi)$ for all finite paths $\pi$ starting in $\zeta[i]$. 
The long-run probability of $\varphi$ under scheduler $\sched$ is
$
  \LPr{\sched}{\cM}{\varphi}  = 
     \mathbb{E}^{\sched}_{\cM}(\flrprob{\sched}{\varphi} )
$.
The maximal
long-run probability for $\varphi$ is 
$
  \LPr{\max}{\cM}{\varphi}  = 
  \sup _ \sched 
     \mathbb{E}^{\sched}_{\cM}(\flrprob{\sched}{\varphi} ).
$
This notion was introduced in \cite{lics2019}.
In this paper, we are interested in two kinds of path properties: Constrained reachability, $a\Until b$, where $a$ and $b$ are atomic propositions and the more general regular co-safety properties given by a finite non-deterministic automaton (NFA) $\cA$ accepting ``good'' prefixes of a run. For a co-safety property given by an NFA $\cA$, we also write  $\LPr{\max}{\cM}{\cA}$.


\section{Skolem-hardness} \label{sec:linear_recurrence}


The \emph{Skolem problem} and the closely related \emph{Positivity problem} are outstanding problems in the fields of number theory and theoretical computer science (see, e.g., \cite{halava2005,ouaknine2014}). Their decidability has been open for many decades. 
We call a problem to which the Skolem problem is reducible \emph{Skolem-hard}. This is a hardness result in the sense that a decision procedure would imply a major breakthrough by settling the decidability of the Skolem problem and it shows that a problem possesses an inherent mathematical difficulty.

\tudparagraph{.3ex}{Skolem problem.}  Given a natural number $k\geq 2$, and rationals $\alpha_i$ and $\beta_j$ with $1\leq i \leq k$ and $0\leq j \leq k-1$, let $(u_n)_{n\geq0}$ be defined  by the initial values $u_0=\beta_0$, \dots, $u_{k-1}=\beta_{k-1}$ and the linear recurrence relation
$ u_{n+k} = \alpha_1 u_{n+k-1} + \dots + \alpha_k u_n $
for all $n\geq 0$. The Skolem problem is to decide whether there is an $n\in\mathbb{N}$ with $u_n = 0$.

A closely related problem is the \emph{Positivity problem}. It asks whether $u_n\geq 0$ for all $n$.  It is folklore that the 
 Skolem problem is polynomial-time reducible to the positivity problem (see, e.g., \cite{everest2003}). We will  use the Positivity problem for our reductions leading to the main result:

\begin{theorem*}[Theorems \ref{thm:threshold_PE}, \ref{thm:threshold_CE}, \ref {thm:Skolem_cvar}, \ref{thm:Skolem_LP}, \ref{thm:fLTL}] The Positivity problem and hence the Skolem problem are polynomial-time reducible to the threshold problems for the partial and  conditional SSPP,   the conditional value-at-risk in the classical SSPP, and long-run probabilities of regular co-safety properties, as well as to the qualitative model checking problem of frequency-LTL. 
\end{theorem*}

For this purpose, we will construct an MDP gadget depicted in Figure \ref{fig:linear_recurrence} that encodes a linear recurrence relation in terms of the optimal values  of  different quantitative objectives. 
For the different problems, we then provide  gadgets encoding the initial values of a linear recurrence sequence. We can  plug these gadgets together to obtain an MDP and a scheduler
 $\sched$  such that $\sched$ maximizes the respective objective iff the linear recurrence sequence has no negative member. By computing the optimal values under $\sched$ in the MDPs -- which turn out to be  rational -- we provide  reductions from the positivity problem to the respective threshold problems with strict inequality (see also Remark \ref{rem:strict}).

\subsection{Partial and Conditional SSPP}\label{sub:Skolem_partial}
\label{subsec:hardness_threshold_PE}

Given a linear recurrence sequence, we construct an MDP in which the sequence is encoded in terms of optimal partial expectations. 
So let $k$ be a natural number and let $(u_n)_{n\geq 0}$ be the linear recurrence sequence given by rationals $\alpha_i$ for $1\leq i \leq k$ and $\beta_j$ for $0\leq j \leq k{-}1$ as above. 
As
$u_{n+k} = \alpha_1 u_{n+k-1} + \dots + \alpha_k u_n $
for all $n$, we see that for any positive $\lambda\in \mathbb{Q}$ the sequence $(v_n)_{n\geq 0}$ defined by $v_n = \lambda^{n+1} u_n$ satisfies
$v_{n+k} = \lambda^1  \alpha_1 v_{n+k-1} + \dots + \lambda^k \alpha_k v_n$ for all $n$. Furthermore, $v_n$ is non-negative if and only if $u_n$ is.
W.l.o.g., we hence can assume that   $\sum_i |\alpha_i|<\frac{1}{4}$ and that $0\leq \beta_j< \frac{1}{4k^{2k+2}}$ for all $j$ (see Appendix \ref{app:hardness_threshold_PE}). 

\FloatBarrier

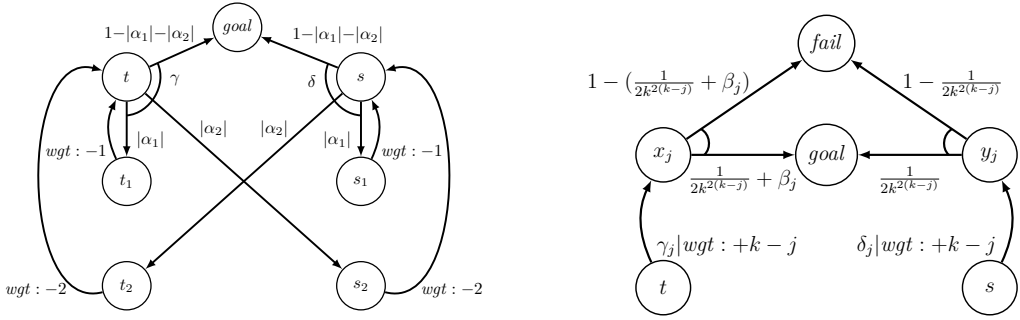
\begin{figure}[t]
     \centering
     \begin{subfigure}[b]{0.47\textwidth}
         \centering

          \resizebox{1\textwidth}{!}{
\begin{tikzpicture}[scale=1,auto,node distance=8mm,>=latex]
\large
    \tikzstyle{round}=[thick,draw=black,circle]

    \node[round,minimum size=30pt] (t) {$t$};
    \node[round,below=12mm of t, minimum size=30pt] (t1) {$t_1$};
    \node[round,below=12mm of t1, minimum size=30pt] (t2) {$t_2$};
    
    \node[round, above=0mm of t,xshift=24mm ,minimum size=30pt] (goal) {$\goal$};
        
    \node[round,right=40mm of t,minimum size=30pt] (s) {$s$};
    \node[round,below=12mm of s, minimum size=30pt] (s1) {$s_1$};
    \node[round,below=12mm of s1, minimum size=30pt] (s2) {$s_2$};

  \draw[color=black , very thick,->] (t) edge  node [ very near start, anchor=center] (h1) {} node [pos=0.0,above=11pt] {$1{-}|\alpha_1|{-}|\alpha_2|$} (goal) ;
  \draw[color=black , very thick,->] (t)  edge node [near start, anchor=center] (h2) {} node [pos=0.7,right=2pt] {$|\alpha_1|$} (t1) ;
  \draw[color=black , very thick,->] (t)  edge node [pos=0.2,right=5pt] {$|\alpha_2|$} (s2) ;
  \draw[color=black , very thick] (h1.center) edge [bend left=55] node [pos=0.25,right=2pt] {$\gamma$} (h2.center);
  
 \draw[color=black , very thick, ->] (t1) edge [bend left=25] node [pos=0.1,left=0pt] {$\wgt:-1$} (t);
  \draw[color=black , very thick, ->] (t2) edge [bend left=105] node [pos=0.1,left=4pt] {$\wgt:-2$} (t);

  \draw[color=black , very thick,->] (s) edge  node [ very near start, anchor=center] (g1) {} node [pos=0.0,above=11pt] {$1{-}|\alpha_1|{-}|\alpha_2|$} (goal) ;
  \draw[color=black , very thick,->] (s)  edge node [near start, anchor=center] (g2) {} node [pos=0.7,left=2pt] {$|\alpha_1|$} (s1) ;
  \draw[color=black , very thick,->] (s)  edge node [pos=0.2,left=5pt] {$|\alpha_2|$} (t2) ;
  \draw[color=black , very thick] (g1.center) edge [bend right=55] node [pos=0.25,left=2pt] {$\delta$} (g2.center);
  
 \draw[color=black , very thick, ->] (s1) edge [bend right=25] node [pos=0.1,right=0pt] {$\wgt:-1$} (s);
  \draw[color=black , very thick, ->] (s2) edge [bend right=105] node [pos=0.1,right=4pt] {$\wgt:-2$} (s);

\end{tikzpicture}
}
         
         \caption{In the depicted example, the recurrence depth is $2$, $\alpha_1>0$, and $\alpha_2<0$.
         }
         \label{fig:linear_recurrence}
     \end{subfigure}
     \hfill
     \begin{subfigure}[b]{0.47\textwidth}
         \centering
               \resizebox{.9\textwidth}{!}{
\begin{tikzpicture}[scale=1,auto,node distance=8mm,>=latex]
\Large
    \tikzstyle{round}=[thick,draw=black,circle]

    \node[round,minimum size=30pt] (t) {$t$};
    \node[round,above=15mm of t, minimum size=30pt] (xj) {$x_j$};

    \node[round, right=20mm of xj ,minimum size=30pt] (goal) {$\goal$};
         \node[round,right=20mm of goal, minimum size=30pt] (yj) {$y_j$};
    \node[round,below=15mm of yj,minimum size=30pt] (s) {$s$};
   
    \node[round,above=10mm of goal, minimum size=30pt] (fail2) {$\fail$};

  \draw[color=black , very thick,->] (xj) edge  node [ very near start, anchor=center] (h1) {} node [pos=0.7,left=5pt] {$1-(\frac{1}{2k^{2(k-j)}}+\beta_j)$} (fail2) ;
  \draw[color=black , very thick,->] (xj)  edge node [very near start, anchor=center] (h2) {} node [pos=0.5,below=2pt] {$\frac{1}{2k^{2(k-j)}}+\beta_j$} (goal) ;
  \draw[color=black , very thick] (h1.center) edge [bend left=55] node [pos=0.25,above=2pt] {} (h2.center);
  
  \draw[color=black , very thick, ->] (t) edge [bend left=25] node [pos=0.2,right=2pt] {$\gamma_j|\wgt:+k-j$} (xj);

    \draw[color=black , very thick,->] (yj) edge  node [ very near start, anchor=center] (g1) {} node [pos=0.7,right=5pt] {$1-\frac{1}{2k^{2(k-j)}}$} (fail2) ;
  \draw[color=black , very thick,->] (yj)  edge node [very near start, anchor=center] (g2) {} node [pos=0.5,below=2pt] {$\frac{1}{2k^{2(k-j)}}$} (goal) ;
  \draw[color=black , very thick] (g1.center) edge [bend right=55] node [pos=0.25,above=2pt] {} (g2.center);
  
  \draw[color=black , very thick, ->] (s) edge [bend right=25] node [pos=0.2,left=2pt] {$\delta_j|\wgt:+k-j$} (yj);

\end{tikzpicture}
}
         
         \caption{The gadget contains the depicted states and actions for each $0\leq j \leq k-1$.
        }
         \label{fig:initial_values}
     \end{subfigure}
      \vspace{12pt}
        \caption{The gadget (a) encoding  the linear recurrence relation in all reductions and (b) encoding the intial values in the reduction to the partial SSPP.    }
        \label{fig:gadgets}
        \vspace{-4pt}
\end{figure}

Now, we construct  an MDP-gadget with an example depicted in Figure \ref{fig:linear_recurrence}.
This gadget  contains states $\goal$, $s$, and $t$, as well as $s_1,\dots, s_k$ and $t_1,\dots, t_k$. In state $t$, an action $\gamma$ is enabled which has weight $0$ and leads to state $t_i$ with probability $\alpha_i$ if $\alpha_i>0$ and to state $s_i$ with probability $|\alpha_i|$ if $\alpha_i<0$ for all $i$. The remaining probability leads to $\goal$. From each state $t_i$, there is an action leading to $t$ with weight $-i$. The action $\delta$ enabled in $s$ as well as the actions leading from states $s_i$ to $s$ are constructed in the same way.
This gadget will be integrated into a larger MDP where there are no other outgoing edges from states $s_1,\dots,s_k,t_1,\dots,t_k$.
Now, for each state $q$ and each integer $w$, let $e(q,w)$ be the optimal partial expectation when starting in state $q$ with accumulated weight $w$. Further, let $d(w)=e(t,w)-e(s,w)$. The simple proof of the following lemma can be found in Appendix \ref{app:hardness_threshold_PE} and uses that optimal partial expectations satisfy that $e(q,w)=\sum_r P(q,\alpha,r) e(r,w{+}\wgt(q,\alpha))$ if an optimal scheduler chooses action $\alpha$ in state $q$ when the accumulated weight is $w$.
\begin{mylem} \label{lem:PE_recurrence}
 Let $w\in \mathbb{Z}$. If an optimal scheduler chooses action $\gamma$ in $t$ and $\delta$ in $s$ if the accumulated weight is $w$, then 
$d(w)=\alpha_1 d(w-1) + \dots + \alpha_k d(w-k)$.
\end{mylem}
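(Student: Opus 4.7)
The plan is to apply the Bellman-type optimality equation at $t$ and $s$ and then subtract. First I would record two easy observations that support the decomposition. Since $\goal$ is absorbing with weight $0$ and $\oplus\goal(\zeta) = \wgt(\zeta)$ whenever $\zeta \vDash \Diamond\goal$, the optimal partial expectation satisfies $e(\goal, w) = w$ for every $w \in \mathbb{Z}$. Moreover, the states $t_i$ and $s_i$ each have a single enabled action, which returns to $t$ respectively $s$ with weight $-i$; hence $e(t_i, w) = e(t, w-i)$ and $e(s_i, w) = e(s, w-i)$ for all $w$. Together with the hint just before the lemma -- that $e(q,w) = \sum_r P(q, \alpha, r)\, e(r, w + \wgt(q,\alpha))$ whenever an optimal scheduler picks $\alpha$ at $(q,w)$ -- these two facts are all I need.

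Next I would apply the decomposition at $(t,w)$ with the optimal action $\gamma$, whose weight is $0$, whose successor $\goal$ contributes $e(\goal, w) = w$ with probability $1 - \sum_i |\alpha_i|$, and whose remaining mass $|\alpha_i|$ lands in $t_i$ if $\alpha_i > 0$ and in $s_i$ if $\alpha_i < 0$. Substituting the identities for $e(t_i, w)$ and $e(s_i, w)$ yields
\[
e(t,w) \;=\; \bigl(1 - \textstyle\sum_i |\alpha_i|\bigr)\, w \;+\; \sum_{\alpha_i > 0} \alpha_i\, e(t, w-i) \;+\; \sum_{\alpha_i < 0} |\alpha_i|\, e(s, w-i).
\]
By the symmetry of the gadget, the analogous identity for $(s,w)$ under the optimal action $\delta$ reads
\[
e(s,w) \;=\; \bigl(1 - \textstyle\sum_i |\alpha_i|\bigr)\, w \;+\; \sum_{\alpha_i > 0} \alpha_i\, e(s, w-i) \;+\; \sum_{\alpha_i < 0} |\alpha_i|\, e(t, w-i).
\]

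Finally I would subtract the two equations. The affine term in $w$ cancels, and for each $i$ the contribution to $d(w) = e(t,w) - e(s,w)$ is $\alpha_i\bigl(e(t,w-i) - e(s,w-i)\bigr) = \alpha_i\, d(w-i)$ when $\alpha_i > 0$, while in the case $\alpha_i < 0$ the same difference becomes $-|\alpha_i|\, d(w-i) = \alpha_i\, d(w-i)$; either way, the $i$-th contribution is exactly $\alpha_i\, d(w-i)$. Summing over $i$ from $1$ to $k$ gives $d(w) = \alpha_1 d(w-1) + \dots + \alpha_k d(w-k)$, as required. I do not anticipate a real obstacle: the only thing that needs care is that the Bellman-style factorization is legitimate at optimal choices, and this is precisely what has been asserted just before the lemma; everything else is a bookkeeping step exploiting the mirror symmetry of the two branches of the gadget.
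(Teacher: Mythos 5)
Your proposal is correct and follows essentially the same route as the paper: the appendix proof writes down exactly the two Bellman-type identities for $e(t,w)$ under $\gamma$ and $e(s,w)$ under $\delta$ (with the $(1-\sum_i|\alpha_i|)w$ term from $\goal$ and the sign-dependent routing to $t_i$ or $s_i$) and subtracts them so that the affine term cancels and each $i$ contributes $\alpha_i\,d(w-i)$. Your explicit justification of $e(\goal,w)=w$ and $e(t_i,w)=e(t,w-i)$ is a harmless elaboration of steps the paper leaves implicit.
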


Now we construct a gadget that encodes the initial values $\beta_0,\dots,\beta_{k{-}1}$.
The gadget  is depicted in Figure \ref{fig:initial_values} and contains states $t$, $s$, $\goal$, and $\fail$. For each $0\leq j\leq k-1$, it additionally contains states $x_j$ and $y_j$. In state $x_j$, there is one action enabled that leads to $\goal$ with probability $\frac{1}{2k^{2(k-j)}}+\beta_j$ and to $\fail$ otherwise. From state $y_j$, $\goal$ is reached with probability $\frac{1}{2k^{2(k-j)}}$ and $\fail$ otherwise. In state $t$, there is an action $\gamma_j$ leading to $x_j$ with weight $+k-j$ for each $0\leq j\leq k-1$. Likewise, in state $s$ there is an action $\delta_j$ leading to $y_j$ with weight $k{-}j$ for each $0\leq j\leq k-1$.
We now glue together the two gadgets  at states $s$, $t$, and $\goal$. The  cumbersome choices of probability values lead to the following lemma  via straight-forward computations  presented in Appendix~\ref{app:hardness_threshold_PE}.

\begin{mylem}\label{lem:optimal_scheduler}
Let $0\leq j \leq k-1$. Starting with  weight $-(k{-}1){+}j$ in state $t$ or $s$,  action $\gamma_j$ and $\delta_j$  maximize the partial expectation.  For positive starting weight, $\gamma$ and $\delta$ are optimal.
\end{mylem}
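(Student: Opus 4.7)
My plan is to compute and compare the partial expectations of each possible first action available in states $t$ and $s$ directly. Because the actions $\gamma_{j'}$ and $\delta_{j'}$ lead deterministically out of the recurrence gadget with no further non-deterministic choice, their partial expectations admit the closed forms $e_{\gamma_{j'}}(t,w)=(w+k-j')\bigl(\tfrac{1}{2k^{2(k-j')}}+\beta_{j'}\bigr)$ and $e_{\delta_{j'}}(s,w)=(w+k-j')\tfrac{1}{2k^{2(k-j')}}$. The recurrent action $\gamma$ at $(t,w)$ decomposes, via a Bellman-style equation, into a direct contribution $w(1-\sum_i|\alpha_i|)$ for paths going straight to $\goal$ plus a recursive term $\sum_{\alpha_i>0}\alpha_i\, e(t,w-i)+\sum_{\alpha_i<0}|\alpha_i|\, e(s,w-i)$ for paths looping back to $t$ or $s$ with strictly smaller weight, and analogously for $\delta$ at $s$.

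For the weight $w=-(k-1)+j$, I would first rule out every competing initial-value action. If $j'>j$, then $w+k-j'=1-(j'-j)\leq 0$, so $e_{\gamma_{j'}}(t,w)\leq 0<e_{\gamma_j}(t,w)$. If $j'<j$, the factor is at most $k$, but both $\tfrac{1}{2k^{2(k-j')}}$ and $\beta_{j'}$ are smaller than the corresponding values at $j$ by a factor of at least $k^2$, using the preparatory normalisation $\beta_{j'}<\tfrac{1}{4k^{2k+2}}$; this yields $e_{\gamma_{j'}}(t,w)<\tfrac{1}{2k^{2(k-j)}}\leq e_{\gamma_j}(t,w)$. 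Ruling out $\gamma$ itself is the delicate step: by downward induction on $w$ I first establish $e(t,w),e(s,w)\leq 0$ for $w\leq -k$, since every initial-value action already yields non-positive partial expectation at such weights and the recursive term inherits this bound. For $-k<w\leq 0$, the recursive term of $e_\gamma(t,w)$ is therefore bounded by $\sum_i|\alpha_i|<\tfrac14$ times the maximum of the explicit initial-value contributions at weights $w-i$; the geometric decay in $k^{2(k-j')}$ makes this strictly smaller than $\tfrac{1}{2k^{2(k-j)}}+\beta_j$, and since the direct term $w(1-\sum|\alpha_i|)$ is non-positive, we conclude $e_\gamma(t,w)<e_{\gamma_j}(t,w)$.

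For positive starting weight $w\geq 1$ the comparison is much easier: by selecting any subsequent policy yielding non-negative partial expectation (for instance, one that always invokes the optimal initial-value action deeper down), action $\gamma$ already achieves at least $w(1-\sum_i|\alpha_i|)>\tfrac{3w}{4}$ purely from the direct $t\to\goal$ transition, while every $\gamma_{j'}$ contributes at most $(w+k)\bigl(\tfrac{1}{2k^2}+\tfrac{1}{4k^{2k+2}}\bigr)$, which is strictly less than $\tfrac{3w}{4}$ for $w\geq 1$ and $k\geq 2$. The analysis at $s$ is fully symmetric. The main obstacle is the $w\leq 0$ bound on $e_\gamma$, because a scheduler can in principle reach $\goal$ with a small positive weight by first performing several $\gamma$-loops and then invoking some $\gamma_{j'}$ at a deeper weight; controlling these compound contributions requires exploiting both the contraction $\sum_i|\alpha_i|<\tfrac14$ and the geometric decay of the initial-value contributions, precisely the normalisations that were put in place before constructing the gadget.
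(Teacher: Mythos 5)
Your proposal is correct and follows essentially the same route as the paper's proof: the same closed form $(w+k-j')\bigl(\tfrac{1}{2k^{2(k-j')}}+\beta_{j'}\bigr)$ for the initial-value actions, the same three-way case split on $j'$ versus $j$, and the same $\tfrac34 w$ versus $(w+k)(\tfrac{1}{2k^2}+\tfrac{1}{4k^{2k+2}})$ comparison for positive weights; you are in fact more explicit than the paper on the step excluding $\gamma,\delta$ at the initial-value weights, which the paper dispatches with a one-line remark. The only cosmetic wrinkle is the phrase ``downward induction'' for the bound $e(t,w),e(s,w)\le 0$ when $w\le -k$ — since the Bellman recursion at weight $w$ refers to strictly smaller weights, this is better organized either as a direct path argument (every path from weight $\le -k$ can gain at most $k$ before leaving $\{s,t\}$, so it reaches $\goal$ with non-positive weight) or as an upward induction over the finite weight range; the claim itself is correct.
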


Comparing action $\gamma_j$ and $\delta_j$ for starting weight $-(k{-}1){+}j$, we conclude that the difference between optimal values $d(-(k{-}1){+}j)$ is equal to $\beta_j$, for $0\leq j\leq k-1$,  
and hence  $d({-}(k{-}1)+n)=u_n$ for all $n$.
Finally, we equip the MDP with a simple initial gadget (see Appendix \ref{app:hardness_threshold_PE}): From the initial state $\sinit$, one action with weight $+1$ is enabled. This action leads to a state $c$ with probability $\frac{1}{2}$ and loops back to $\sinit$ with probability $\frac{1}{2}$. In $c$, the decision between action $\tau$  leading to state $t$ and action $\sigma$ leading to state $s$ has to be made. So for any $n>0$, state $c$ is reached with accumulated weight $n$ with positive probability. An optimal scheduler now has to decide whether the partial expectation when starting with weight $n$ is better in state $s$ or $t$: 
 Action $\tau$ is optimal in $c$ for accumulated weight $w$ if and only if $d(w)\geq 0$.
Further, the scheduler $\sched$ always choosing $\tau$ in $c$ and actions $\gamma, \gamma_0, \dots, \gamma_{k-1}, \delta, \dots$ as described in Lemma \ref{lem:optimal_scheduler} is optimal iff the given linear recurrence sequence is non-negative.
We can compute the partial expectation of scheduler $\sched$ in the constructed MDP. The partial expectation turns out to be a rational. Hence, using this partial expectation as the threshold $\vartheta$, we obtain the first main result. The technical proof computing the  value of $\sched$ in the constructed MDP 
 is given in Appendix \ref{app:hardness_threshold_PE}.

\begin{mythm}\label{thm:threshold_PE}
The Positivity problem is polynomial-time reducible to the following problem: Given an MDP $\cM$ and a rational $\vartheta$, decide whether ${\PE}^{\max}_{\cM}>\vartheta$.
\end{mythm}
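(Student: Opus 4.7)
The plan is to finalize the reduction by (i) verifying that the scheduler $\sched$ defined above (playing $\tau$ in $c$ for every accumulated weight, the actions $\gamma$ in $t$ and $\delta$ in $s$ at positive weights, and $\gamma_j$ resp.\ $\delta_j$ at weight $-(k{-}1){+}j$) is optimal iff the sequence $(u_n)_{n \geq 0}$ is non-negative, and (ii) computing $\PE^\sched_{\cM}$ explicitly as a rational $\vartheta$. Once this is done, $\PE^{\max}_{\cM} > \vartheta$ iff some strictly better scheduler exists iff $\sched$ is not optimal iff there is some $n$ with $u_n < 0$; equivalently, Positivity holds iff $\PE^{\max}_{\cM} \leq \vartheta$, which gives the claimed reduction (and the variant for strict inequalities as in Remark \ref{rem:strict}).

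For step~(i), I would argue as follows. The initial gadget guarantees that $c$ is entered with accumulated weight $n$ with probability $2^{-n}$ for every $n \geq 1$, and with no other weight. In $c$ the scheduler's only freedom is to pick $\tau$ or $\sigma$, and by definition of $d$ the choice $\tau$ is optimal at accumulated weight $n$ iff $d(n) \geq 0$. Inside the recurrence gadget Lemma~\ref{lem:optimal_scheduler} pins down the optimal action whenever a boundary weight $-(k{-}1){+}j$ is reached, and for all other (positive) weights $\gamma$ resp.\ $\delta$ is optimal, so by Lemma~\ref{lem:PE_recurrence} the function $d$ obeys the linear recurrence everywhere above the boundary. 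Comparing $\gamma_j$ with $\delta_j$ at the boundary weight $-(k{-}1){+}j$ yields $d(-(k{-}1){+}j) = \beta_j$ for $0 \le j \le k-1$ by the calibration of the probability values in Figure~\ref{fig:initial_values}; by induction along the recurrence one obtains $d(-(k{-}1){+}n) = u_n$ for all $n$. Thus $\tau$ is an optimal choice at every reachable weight iff every $u_n$ is non-negative.

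For step~(ii), I would compute $\PE^\sched_{\cM}$ as a convergent rational sum. Under $\sched$ only the $t$-branch contributes: from $t$ with entry weight $n \geq 1$, $\sched$ plays $\gamma$ repeatedly, performing a random walk on the weight axis whose steps take values $-1,\dots,-k$ with probabilities $|\alpha_1|,\dots,|\alpha_k|$ and that, at each step, either terminates in $\goal$ with probability $1 - \sum_i |\alpha_i|$ or eventually hits the boundary $\{-(k-1),\dots,0\}$ and takes one $\gamma_j$ transition to $\goal$ via $x_j$. The assumption $\sum_i |\alpha_i| < 1/4$ makes the walk geometrically bounded, so all sums converge. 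The hitting distribution on the boundary and the expected weight contribution starting from $t$ with each weight $n$ can be encoded by a finite system of linear equations indexed by the boundary weights (together with the geometric series over entry weights $n$), whose solution is rational and of polynomial bitlength in the input. Summing $2^{-n}$ times the per-entry contributions then yields a closed-form rational value for $\PE^\sched_{\cM}$ computable in polynomial time.

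The main obstacle is the bookkeeping in step~(ii): the accumulated-weight trajectory under $\sched$ crosses the boundary at different weights with different probabilities, so one has to set up the correct finite linear system to evaluate the overall partial expectation without enumerating paths. Once that value $\vartheta$ is in hand, the theorem follows.
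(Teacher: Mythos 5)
Your step~(i) follows the paper's argument exactly: the initial gadget makes $c$ reachable with weight $n$ and probability $2^{-n}$, Lemmas~\ref{lem:PE_recurrence} and~\ref{lem:optimal_scheduler} pin down the optimal actions, the comparison of $\gamma_j$ and $\delta_j$ gives $d(-(k{-}1){+}j)=\beta_j$, and hence $d(-(k{-}1){+}n)=u_n$, so that $\sched$ is optimal iff the sequence is non-negative. That part is fine and matches the paper.

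The gap is in step~(ii), which you yourself flag as ``the main obstacle'' but then dispose of by assertion. The reduction must \emph{output} the threshold $\vartheta=\PE^{\sched}_{\cM}$, so one must actually prove that $\sum_{n\geq 1}2^{-n}e(t,n)$ is a rational number computable in polynomial time, and this is where essentially all the work in the paper's proof lies. The difficulty is that $e(t,n)$ is not determined by ``a finite system of linear equations indexed by the boundary weights'': grouping the weight levels into blocks of size $k$ and writing $v_n$ for the vector of the $2k$ values $e(t,nk{+}i),e(s,nk{+}i)$, one gets an \emph{affine} recursion $v_n=Av_{n-1}+na+b$, where the term $na$ comes from the immediate transitions to $\goal$ that collect the current (unboundedly growing) accumulated weight. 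Consequently $\sum_n 2^{-n}e(t,n)$ is not a geometric series; after rewriting it as $c\cdot\sum_n(2^{-k})^n v_n$ one has to evaluate the three matrix series $\sum_n(2^{-k})^nA^{n+1}$, $\sum_n(2^{-k})^n\sum_{i=0}^nA^i$ and $\sum_n(2^{-k})^n\sum_{i=0}^n(n-i)A^i$ in closed form. The paper does this by observing that $\Vert A\Vert_\infty\leq|\alpha_1|+\dots+|\alpha_k|<1$ (the row sums of $A$ are the probabilities of returning to the boundary block rather than exiting to $\goal$), so that $I-A$ and $I-2^{-k}A$ are invertible and each series reduces, via the Neumann series, to an explicit rational matrix expression computable in polynomial time. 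Without this convergence argument and the explicit closed forms --- in particular without handling the linearly growing inhomogeneous term $na$ --- the claim that $\vartheta$ is a polynomial-time computable rational is unsubstantiated, and the reduction is not complete.
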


 \begin{myrem}\label{rem:strict}
 There is no obvious way to adjust the construction such that  the Skolem-hardness of the question whether  ${\PE}^{\max}_{\cM}\geq \vartheta$ would follow. One attempt would be to provide an $\varepsilon$ such that  ${\PE}^{\max}_{\cM}>\vartheta$ iff  ${\PE}^{\max}_{\cM}\geq \vartheta + \varepsilon$. This, however, probably requires a bound on the position at which the given linear recurrence sequence first becomes negative. But this question lies at the core of the positivity and the Skolem problem. All Skolem-hardness results in this paper hence concern only threshold problems with strict inequality.
 \end{myrem}


The Skolem-hardness of the threshold problem for the conditional SSPP is  obtained by a simple reduction showing that 
the threshold problems of the partial SSPP is polynomial-time reducible to the threshold problem of the  conditional SSPP (see Lemma \ref{lem_app:pe_ce_inter-reducible} in Appendix \ref{app:hardness_threshold_PE}).

\begin{mythm}\label{thm:threshold_CE}
The Positivity problem is reducible in polynomial time to the following problem: Given an MDP $\cM$ and a rational $\vartheta$, decide whether ${\CE}^{\max}_{\cM}>\vartheta$.
\end{mythm}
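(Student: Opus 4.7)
My plan is to deduce Theorem \ref{thm:threshold_CE} from the just-established Theorem \ref{thm:threshold_PE} together with a polynomial-time reduction from the PE-threshold problem to the CE-threshold problem (packaged in the paper as Lemma \ref{lem_app:pe_ce_inter-reducible}). Composing the two reductions yields Positivity- and hence Skolem-hardness of the threshold problem for maximal conditional expectations, without having to redo a bespoke linear-recurrence encoding for CE.

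For the PE-to-CE reduction itself, given an instance $(\cM,\vartheta)$ of the PE-threshold problem I would construct $(\cM',\vartheta')$ for CE such that $\PE^{\max}_\cM > \vartheta$ iff $\CE^{\max}_{\cM'} > \vartheta'$. The natural lever is the identity $\CE^\sched = \PE^\sched / \Pr^\sched(\Diamond \goal)$: if $\cM'$ is arranged so that every scheduler reaches $\goal$ almost surely, then $\CE^\sched_{\cM'}$ coincides with $\PE^\sched_{\cM'}$ for every $\sched$, and one only needs to preserve partial expectations across the transformation. Concretely, I would apply standard polynomial-time MDP preprocessing to detect non-$\goal$ end components together with absorbing non-$\goal$ states, collapse each of them into a single auxiliary state $\fail$, redirect $\fail$ deterministically to $\goal$ via a zero-weight transition, and tune the weights of the collapsed edges so that runs which previously did not reach $\goal$ now accrue net weight $0$ (matching the $\oplus\goal = 0$ convention on such runs in $\cM$). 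Under these modifications $\Pr^\sched_{\cM'}(\Diamond \goal) = 1$ and $\PE^\sched_{\cM'} = \PE^\sched_\cM$ for every scheduler, hence $\CE^{\max}_{\cM'} = \PE^{\max}_\cM$, and one may take $\vartheta' = \vartheta$.

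The main obstacle is the weight-compensation step: in $\cM$ a scheduler may accumulate arbitrarily large weight along a run that is eventually trapped in a non-$\goal$ end component, contributing $0$ to $\PE^\sched_\cM$, whereas in $\cM'$ the same run is forced to $\goal$ and would, without care, contribute its full accrued weight to $\PE^\sched_{\cM'}$. The standard end-component quotient, inside which memoryless play may be assumed without loss of generality, together with a careful choice of compensating weights on the new edges into $\goal$, resolves this in polynomial time. Plugging the resulting reduction into Theorem \ref{thm:threshold_PE} delivers the claim.
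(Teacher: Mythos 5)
Your high-level plan --- compose Theorem \ref{thm:threshold_PE} with a polynomial-time reduction from the PE-threshold problem to the CE-threshold problem --- is exactly the paper's route. However, the specific PE-to-CE reduction you propose does not work. You want to force $\Pr^\sched_{\cM'}(\Diamond\goal)=1$ for every scheduler by collapsing the non-$\goal$ end components into a $\fail$ state and redirecting $\fail$ to $\goal$, with ``compensating weights'' chosen so that runs which previously missed $\goal$ now contribute net weight $0$. The obstruction you yourself identify is fatal: a run that is eventually trapped away from $\goal$ may first accumulate an arbitrary, history-dependent weight, and a weight function in an MDP assigns a \emph{fixed} integer to each state-action pair. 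No static choice of weights on the finitely many new edges into $\goal$ can equal the negative of every possible accumulated weight, so $\PE^\sched_{\cM'}=\PE^\sched_{\cM}$ cannot be enforced this way. The end-component quotient does not rescue the argument, because the problematic weight is accrued \emph{before} the run enters the end component, not inside it. (Tracking the accumulated weight in the state space to cancel it would require unboundedly many states, since the weights in the hard instances are integers of both signs.)

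The paper's Lemma \ref{lem_app:pe_ce_inter-reducible} sidesteps this entirely by \emph{not} normalizing the reachability probability to $1$. It adds a fresh initial state $\sinit'$ with a single zero-weight action going to the old initial state and to a fresh $\goal$ with probability $1/2$ each (so $\Pr^\sched(\Diamond\goal)\geq 1/2$ and the conditional expectation is always defined), and inserts a weight-$\vartheta$ edge from the old goal into the new goal. A one-line computation then gives
\[
\CE^\sched_{\cN}=\frac{\PE^\sched_{\cM}+\vartheta\,\Pr^\sched_{\cM}(\Diamond\goal)}{1+\Pr^\sched_{\cM}(\Diamond\goal)},
\]
which exceeds $\vartheta$ if and only if $\PE^\sched_{\cM}>\vartheta$, scheduler by scheduler; the extra $\vartheta$-weight exactly offsets the denominator's dependence on $\Pr^\sched_{\cM}(\Diamond\goal)$. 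You would need to replace your weight-compensation gadget by this (or an equivalent) construction for the reduction to go through.
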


\subsection{Conditional value-at-risk for the classical SSPP}\label{sub:Skolem_cvar}

We reuse the gadget depicted in Figure \ref{fig:linear_recurrence} to prove the following result:

\begin{mythm}\label{thm:Skolem_cvar}
The Positivity problem is polynomial-time reducible to the following problem:
Given an MDP $\cM$ and  rationals $\vartheta$ and  $p\in(0,1)$, decide whether $\CVaR^{\max}_p (\rawdiaplus \goal)> \vartheta$.
\end{mythm}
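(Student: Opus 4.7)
The plan is to reuse the recurrence gadget of Figure~\ref{fig:linear_recurrence} and design a surrounding MDP that converts the partial-expectation analysis of Section~\ref{sub:Skolem_partial} into a CVaR question at some chosen level $p\in(0,1)$. The guiding idea is that if $\rawdiaplus\goal$ can be arranged so that its distribution splits cleanly at the $p$-quantile into a deterministic ``safe'' top part and an ``interesting'' bottom part carrying the recurrence encoding, then $\CVaR_p$ collapses to a conditional expectation on the bottom part; the scheduler's CVaR-maximisation then reduces to expectation-maximisation along the encoded recurrence, which is already known to be Positivity-hard by Theorem~\ref{thm:threshold_PE}.

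First I would modify the construction of Section~\ref{sub:Skolem_partial} so that every scheduler reaches $\goal$ with probability~$1$---the prerequisite for $\rawdiaplus\goal$ to be defined---by (i)~redirecting $\fail$ to $\goal$ along a fresh action carrying a large negative weight $-N$, and (ii)~re-balancing the initial-values gadget so that both $x_j$ and $y_j$ reach $\fail$ with the common probability $1-\tfrac{1}{2k^{2(k-j)}}$, with $\beta_j$ re-encoded as a weight on the transition $x_j\to\goal$. With $\Pr(\Diamond\fail)$ thereby rendered scheduler-independent, a computation paralleling Lemma~\ref{lem:PE_recurrence} shows that the differences $e(t,w)-e(s,w)$ of the \emph{ordinary} expected accumulated weights to $\goal$ still satisfy the given linear recurrence, with $d(-(k{-}1)+n)$ equal to $u_n$ up to a positive constant factor.

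Next I would prepend a probabilistic branching from a fresh initial state: with probability $1-p$ the run transits directly to $\goal$ picking up a large weight $W$, and with probability $p$ it enters the modified MDP. Provided $W$ dominates every risky-branch outcome, the definitions of $\VaR$ and $\CVaR$ yield $\VaR_p(\rawdiaplus\goal)=W$ and $\Pr(\rawdiaplus\goal<W)=p$, so that
\[
\CVaR_p(\rawdiaplus\goal) \;=\; \mathbb{E}^{\sched}\!\left[\rawdiaplus\goal \mid \text{risky branch}\right],
\]
which equals the scheduler's expected accumulated weight to $\goal$ along the risky branch. By the first step this is an affine function, with scheduler-independent coefficients, of the partial expectation in the MDP of Section~\ref{sub:Skolem_partial}. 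Letting $\sched^{*}$ denote the scheduler that always picks $\tau$ in state $c$ together with the actions of Lemma~\ref{lem:optimal_scheduler} elsewhere, and setting $\vartheta \eqdef \CVaR_p(\rawdiaplus\goal)$ under $\sched^{*}$, the argument of Section~\ref{sub:Skolem_partial} then shows $\sched^{*}$ to be CVaR-optimal iff $u_n\geq 0$ for every $n$; hence $\CVaR^{\max}_p(\rawdiaplus\goal) > \vartheta$ iff the Positivity answer is \emph{no}, yielding the desired polynomial-time reduction.

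The main technical hurdle is ensuring that $W$ actually dominates every risky outcome, since the geometric loop at $\sinit$ in the partial SSPP construction produces risky weights that are unbounded above. I would handle this by executing the geometric initial-weight loop \emph{before} the safe/risky branching, so that both branches inherit a common random offset $Z$, and by observing that the \emph{positive} contribution of the risky side on top of $Z$ is bounded by some $O(k)$ (the recurrence gadget adds only non-positive weight, and the initial-values gadget adds at most~$k$). Choosing $W$ above this $O(k)$ bound then gives a coupling in which, conditionally on every realisation of $Z$, the safe outcome strictly exceeds the risky outcome; a careful conditional analysis of $\VaR_p$ and $\CVaR_p$ with respect to $Z$ then recovers the identity above and completes the reduction to Positivity.
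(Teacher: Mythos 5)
Your overall strategy---arrange the distribution of $\rawdiaplus\goal$ so that the worst $p$ outcomes form a clean block on which $\CVaR_p$ degenerates to a conditional expectation, then import the Positivity-hardness of expectation maximisation---is the right one, and it is also the paper's. But the way you anchor the quantile does not work. You place weight $W$ on a ``safe'' branch taken with probability $1-p$ and claim $\VaR_p=W$ with the worst $p$ outcomes being precisely the risky branch. Since the geometric loop at $\sinit$ produces an offset $Z$ that is unbounded above, the safe outcomes $Z+W$ and the risky outcomes $Z+Y$ (with $Y$ bounded above by $O(k)$ but $Z$ unbounded) have overlapping, interleaved distributions for every finite $W$: a risky path with large $Z$ exceeds a safe path with small $Z$. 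The value-at-risk is a functional of the \emph{unconditional} distribution, so your coupling ``conditionally on every realisation of $Z$ the safe outcome exceeds the risky outcome'' does not locate the $p$-quantile; the worst-$p$ set will contain safe paths with small $Z$ and, crucially, will \emph{exclude} risky paths with large $Z$. The latter is fatal for the reduction: the encoding needs the scheduler's choice at state $c$ to matter for \emph{every} accumulated weight $n$ (this is how $u_n\geq 0$ is tested for all $n$), but outcomes above the quantile contribute nothing to $\CVaR_p$, so the objective becomes insensitive to the decisions at all sufficiently large $n$.

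The paper avoids this by anchoring the quantile at an outcome \emph{value} rather than at a branch: a fresh initial state sends probability $2/3$ directly to $\goal$ with weight $0$, forcing $\VaR^{\sched}_{1/2}(\rawdiaplus\goal)=0$ for every scheduler. The worst half of the outcomes then consists of all strictly negative outcomes (however large the offset that preceded them) padded with zero-weight paths, so $\CVaR^{\sched}_{1/2}(\rawdiaplus\goal)=2\,\mathbb{E}^{\sched}(\rawdiaminus\goal)$, where $\rawdiaminus\goal$ truncates positive outcomes to $0$. The price is that one must then prove Positivity-hardness for maximising $\mathbb{E}(\rawdiaminus\goal)$, and because of the truncation the initial-values gadget of the partial SSPP no longer encodes the $\beta_j$ correctly; the paper designs a new gadget (Figure~\ref{fig:cvar_initial_values}) and re-verifies which actions are optimal at each weight level. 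Your steps (i)--(ii) (redirecting $\fail$ and re-encoding $\beta_j$ as a transition weight) face an analogous re-verification burden, but that is secondary next to the quantile-anchoring flaw.
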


We begin by the following consideration: Given an MDP $\cM$ with initial state $\sinit$,
we construct a new MDP $\cN$.
We add a new initial state $\sinit^\prime$. In $\sinit^\prime$, there is only one action with weight $0$  enabled 
leading to $\sinit$ with probability $\frac{1}{3}$ and to $\goal$ with probability $\frac{2}{3}$. So, at least two thirds of the paths accumulate weight $0$ before reaching the goal. 
Hence, we can already say that $\VaR^\sched_{1/2}(\rawdiaplus \goal)=0$ in $\cN$ under any scheduler $\sched$. Note that schedulers for $\cM$ can be seen as schedulers for $\cN$ and vice versa.
This considerably simplifies the computation of the conditional value-at-risk in $\cN$. Define the random variable 
$
\rawdiaminus \goal (\zeta)
$
to be 
$\rawdiaplus \goal (\zeta)$ if $ \rawdiaplus \goal \leq 0$ and to be $0$ otherwise.
Now, the conditional value-at-risk for the probability value $1/2$ under a scheduler $\sched$ in $\cN$ is  given by 
$
\CVaR^\sched_{1/2}(\rawdiaplus \goal)= 2 \cdot \mathbb{E}^\sched_{\cN,\sinit}(\rawdiaminus \goal) = \frac{2}{3}\cdot \mathbb{E}^\sched_{\cM,\sinit}(\rawdiaminus \goal) 
$.
So, the result follows from the following lemma:

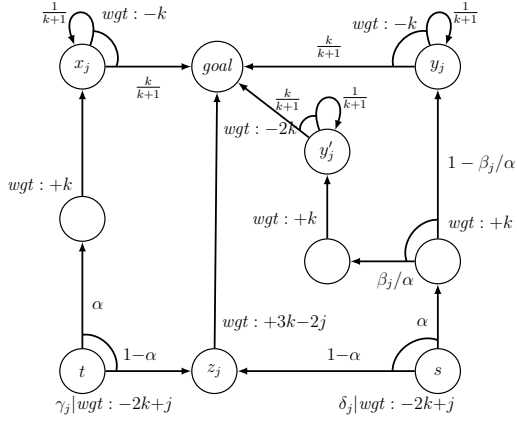
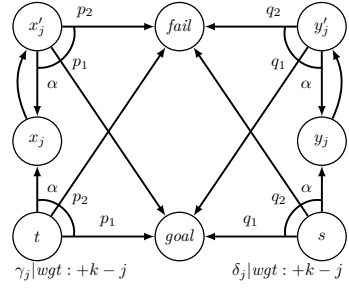
\begin{figure}[t]
     \centering
     \begin{subfigure}[b]{0.47\textwidth}
         \centering

          \resizebox{1.06\textwidth}{!}{
          
\begin{tikzpicture}[scale=1,auto,node distance=8mm,>=latex]
\Large
    \tikzstyle{round}=[thick,draw=black,circle]

    \node[round,minimum size=30pt] (t) {$t$};
      \node[round,above=25mm of t, minimum size=30pt] (xj1) {};
    \node[round,above=25mm of xj1, minimum size=30pt] (xj) {$x_j$};
 
  \node[round, right=20mm of t,minimum size=30pt] (zj) {$z_j$};

    \node[round, right=20mm of xj ,minimum size=30pt] (goal) {$\goal$};
         \node[round,right=40mm of goal, minimum size=30pt] (yj) {$y_j$};
         \node[round,below=35mm of yj,minimum size=30pt] (yj1) {};
         \node[round,left=15mm of yj1,minimum size=30pt] (yj3) {};
    \node[round,below=15mm of yj1,minimum size=30pt] (s) {$s$};
   
    \node[round,above=15mm of yj3, minimum size=30pt] (yj2) {$y_j^\prime$};

  \draw[color=black , very thick,->,loop, out=70, in=110, min distance=30pt] (xj) edge  node [ very near start, anchor=center] (h1) {} node [pos=0.4,right=5pt] {$\wgt: {-}k$}   node [pos=0.5,left=5pt] {$\frac{1}{k+1}$} (xj) ;
  \draw[color=black , very thick,->] (xj)  edge node [very near start, anchor=center] (h2) {} node [pos=0.5,below=2pt] {$\frac{k}{k{+}1}$} (goal) ;
  \draw[color=black , very thick] (h1.center) edge [bend left=55] node [pos=0.25,above=2pt] {} (h2.center);
  
  \draw[color=black , very thick, ->] (t) edge  node [pos=0.4,right=2pt] {$\alpha$} node [very near start, anchor=center] (w1) {} (xj1);
  \draw[color=black , very thick, ->] (t) edge  node [pos=0.4,above=2pt] {$1{-}\alpha$} node [very near start, anchor=center] (w2) {}   node [pos=0.1,below=12pt] {$\gamma_j|\wgt:{-}2k{+}j$}  (zj);
   \draw[color=black , very thick] (w2.center) edge [bend right=55] node [pos=0.25,above=2pt] {} (w1.center);

    \draw[color=black , very thick, ->] (xj1) edge  node [pos=0.1,left=2pt] {$\wgt:+k$} (xj);

      \draw[color=black , very thick, ->] (yj1) edge  node [pos=0.5,right=2pt] {$1-\beta_j/\alpha$} node [pos=0.1,right=2pt] {$\wgt:+k$} node [ very near start, anchor=center] (p1) {} (yj);
      \draw[color=black , very thick, ->] (yj1) edge  node [pos=0.3,below=2pt] {$\beta_j/\alpha$} node [very near start, anchor=center] (p2) {} (yj3);
        \draw[color=black , very thick] (p1.center) edge [bend right=55] node [pos=0.25,above=2pt] {} (p2.center);

   \draw[color=black , very thick, ->] (yj3) edge  node [pos=0.3,left=2pt] {$\wgt:+k$} (yj2);

    \draw[color=black , very thick, ->] (s) edge    node [pos=0.4,left=2pt] {$\alpha$} node [very near start, anchor=center] (v1) {} (yj1);
  \draw[color=black , very thick, ->] (s) edge  node [pos=0.4,above=2pt] {$1{-}\alpha$} node [very near start, anchor=center] (v2) {}  node [pos=0.1,below=12pt] {$\delta_j|\wgt:{-}2k{+}j$} (zj);
   \draw[color=black , very thick] (v1.center) edge [bend right=55] node [pos=0.25,above=2pt] {} (v2.center);

    \draw[color=black , very thick,->,loop, out=110, in=70, min distance=30pt] (yj) edge  node [ very near start, anchor=center] (g1) {}node [pos=0.2,left=2pt] {$\wgt: {-}k$}    node [pos=0.5,right=5pt] {$\frac{1}{k{+}1}$} (yj) ;
  \draw[color=black , very thick,->] (yj)  edge node [very near start, anchor=center] (g2) {} node [pos=0.5,above=1pt] {$\frac{k}{k{+}1}$} (goal) ;
  \draw[color=black , very thick] (g1.center) edge [bend right=55] node [pos=0.25,above=2pt] {} (g2.center);
  
     \draw[color=black , very thick,->,loop, out=110, in=70, min distance=30pt] (yj2) edge  node [ very near start, anchor=center] (q1) {} node [pos=0.0,left=10pt] {$\wgt: {-}2k$}   node [pos=0.5,right=5pt] {$\frac{1}{k{+}1}$} (yj2) ;
  \draw[color=black , very thick,->] (yj2)  edge node [very near start, anchor=center] (q2) {} node [pos=0.3,above=2pt] {$\frac{k}{k{+}1}$} (goal) ;
  \draw[color=black , very thick] (q1.center) edge [bend right=55] node [pos=0.25,above=2pt] {} (q2.center);

  \draw[color=black , very thick, ->] (zj) edge  node [pos=0.1,right=0pt] {$\wgt:{+}3k{-}2j$}  node [very near start, anchor=center] (v1) {} (goal);

\end{tikzpicture}
}
         
         \caption{The gadget contains the depicted states and actions for each $0\leq j \leq k-1$.
 $\alpha=\sum_{1= i}^{ k} |\alpha_i|$.
         }
         \label{fig:cvar_initial_values}
     \end{subfigure}
    \hfill
     \begin{subfigure}[b]{0.47\textwidth}
         \centering
               \resizebox{.7\textwidth}{!}{
\begin{tikzpicture}[scale=.9,auto,node distance=8mm,>=latex]
\large
    \tikzstyle{round}=[thick,draw=black,circle]

    \node[round,minimum size=30pt] (t) {$t$};
    \node[round,above=10mm of t, minimum size=30pt] (xj) {$x_j$};

    \node[round, right=20mm of t ,minimum size=30pt] (goal) {$\goal$};
     \node[round, right=20mm of goal ,minimum size=30pt] (s) {$s$};
     \node[round,above=10mm of s, minimum size=30pt] (yj) {$y_j$};

    \node[round,above=35mm of goal, minimum size=30pt] (fail2) {$\fail$};
    \node[round,left=20mm of fail2, minimum size=30pt] (xj2) {$x_j^\prime$};
    \node[round,right=20mm of fail2, minimum size=30pt] (yj2) {$y_j^\prime$};

  \draw[color=black , very thick,->] (t) edge   node [pos=0.10,right=2pt] {$p_2$} (fail2) ;
  \draw[color=black , very thick,->] (t)  edge node [very near start, anchor=center] (h2) {} node [pos=0.5,above=2pt] {$p_1$} node [pos=0.1,below=12pt] {$\gamma_j|\wgt:+k-j$} (goal) ;
   \draw[color=black , very thick, ->] (t) edge node [ near start, anchor=center] (h1) {} node [pos=0.5,right=2pt] {$\alpha$} (xj);
  \draw[color=black , very thick] (h1.center) edge [bend left=55] node [pos=0.25,above=2pt] {} (h2.center);
  
   \draw[color=black , very thick,->] (xj2) edge  node [very near start, anchor=center] (g2) {}  node [pos=0.25,above=2pt] {$p_2$} (fail2) ;
  \draw[color=black , very thick,->] (xj2)  edge node [pos=0.1,right=2pt] {$p_1$} (goal) ;
   \draw[color=black , very thick, ->] (xj2) edge node [ near start, anchor=center] (g1) {} node [pos=0.5,right=2pt] {$\alpha$} (xj);
  \draw[color=black , very thick] (g2.center) edge [bend left=55] node [pos=0.25,above=2pt] {} (g1.center);
  
  \draw[color=black , very thick, ->] (xj) edge [bend left=25] (xj2);

      \draw[color=black , very thick,->] (s) edge   node [pos=0.10,left=2pt] {$q_2$} (fail2) ;
  \draw[color=black , very thick,->] (s)  edge node [very near start, anchor=center] (j1) {} node [pos=0.5,above=2pt] {$q_1$}node [pos=0.1,below=12pt] {$\delta_j|\wgt:+k-j$} (goal) ;
   \draw[color=black , very thick, ->] (s) edge node [ near start, anchor=center] (j2) {} node [pos=0.5,left=2pt] {$\alpha$}  (yj);
  \draw[color=black , very thick] (j1.center) edge [bend left=55] node [pos=0.25,above=2pt] {} (j2.center);
  
   \draw[color=black , very thick,->] (yj2) edge  node [very near start, anchor=center] (k2) {}  node [pos=0.25,above=2pt] {$q_2$} (fail2) ;
  \draw[color=black , very thick,->] (yj2)  edge node [pos=0.1,left=2pt] {$q_1$} (goal) ;
   \draw[color=black , very thick, ->] (yj2) edge node [ near start, anchor=center] (k1) {} node [pos=0.5,left=2pt] {$\alpha$} (yj);
  \draw[color=black , very thick] (k1.center) edge [bend left=55] node [pos=0.25,above=2pt] {} (k2.center);
  
  \draw[color=black , very thick, ->] (yj) edge [bend right=25] (yj2);
    
\end{tikzpicture}

}
         
         \caption{The gadget contains the depicted states and actions for each $0\leq j \leq k-1$. The probabilities are:  $p_1=(1-\alpha) (\frac{1}{2k^{2(k-j)}}+\beta_j)$, $p_2=(1-\alpha)(1-(\frac{1}{2k^{2(k-j)}}+\beta_j))$,  $q_1=(1-\alpha) \frac{1}{2k^{2(k-j)}}$, $q_2=(1-\alpha)(1-\frac{1}{2k^{2(k-j)}})$.
 All actions except for $\gamma_j$ and $\delta_j$ have weight $0$.
        }
         \label{fig:wlf_initial_values}
     \end{subfigure}
     \vspace{12pt}
        \caption{The  gadgets encoding  initial values for (a) the conditional value-at-risk for the classical SSPP and (b) weighted long-run frequencies.}
        \label{fig:new_gadgets}
          \vspace{-4pt}
\end{figure}

\begin{mylem}\label{lem:Skolem_rawdiaminus}
The Positivity problem is polynomial-time reducible to the following problem:
Given an MDP $\cM$ and a rational $\vartheta$, decide whether $\mathbb{E}^{\max}_{\cM,\sinit} (\rawdiaminus \goal)> \vartheta$.
\end{mylem}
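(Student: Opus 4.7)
The plan is to mimic the reduction of Section \ref{sub:Skolem_partial} by reusing the linear recurrence gadget of Figure \ref{fig:linear_recurrence} and attaching the tailored initial-values gadget of Figure \ref{fig:cvar_initial_values}. Given a linear recurrence sequence $(u_n)_{n\geq 0}$ with parameters $\alpha_1, \ldots, \alpha_k$ and $\beta_0, \ldots, \beta_{k-1}$, I would again preprocess so that $\alpha \eqdef \sum_{i=1}^k |\alpha_i|$ and the $\beta_j$ are small enough to make all probabilities in the gadget valid. For each $q \in \{s,t\}$ and pre-accumulated weight $w \in \Integer$, let $e(q,w)$ denote the supremum of $\mathbb{E}^{\sched}(\rawdiaminus \goal)$ when starting at $q$ with weight $w$ already collected, and set $d(w) = e(t,w) - e(s,w)$. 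The analog of Lemma \ref{lem:PE_recurrence} is then straightforward: whenever an optimal scheduler chooses $\gamma$ at $t$ and $\delta$ at $s$ for pre-weight $w$, the common direct-to-$\goal$ branch (probability $1-\alpha$, contribution $\min(w,0)\cdot(1-\alpha)$) cancels in the difference $d(w)$, the signs of the $\alpha_i$ swap the $s$- and $t$-recursive terms, and one obtains $d(w) = \alpha_1 d(w{-}1) + \cdots + \alpha_k d(w{-}k)$ verbatim as in the partial-expectation case.

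Next, I would analyse the new initial-values gadget of Figure \ref{fig:cvar_initial_values}. Its role is twofold. The self-loops at $x_j$ and $y_j$ (weight $-k$, exit probability $k/(k{+}1)$) together with the incoming weight-$+k$ edges are calibrated so that, at the designated pre-weight $w_j = -(k{-}1)+j$, every path reaching $\goal$ through this branch does so with strictly negative weight; hence $\rawdiaminus$ on these paths reduces to $\rawdiaplus$ and is computable by elementary arithmetic on the geometric self-loop. The remaining probability $1-\alpha$ is routed via $z_j$ directly to $\goal$ with weight $+3k-2j$ so that this branch lands at $\goal$ with weight $+1$ and contributes nothing to $\rawdiaminus$. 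The only asymmetry between $\gamma_j$ and $\delta_j$ lies in a $\beta_j/\alpha$-probability detour in the $\delta_j$-branch into the ``doubled'' self-loop $y_j'$ of weight $-2k$; a direct calculation then yields $d(w_j) = \beta_j$ for all $0 \leq j \leq k-1$, and combined with Step~1 this gives $d(-(k{-}1)+n) = u_n$ for every $n \geq 0$.

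It remains to verify the analog of Lemma \ref{lem:optimal_scheduler}: that $\gamma_j$ (resp.\ $\delta_j$) is indeed the optimal action at $t$ (resp.\ $s$) for pre-weight $w_j$, while $\gamma$ and $\delta$ are optimal for sufficiently large positive pre-weights; the smallness of the $\beta_j$ makes this comparison quantitative and mechanical. Prepending the same geometric initial gadget as in Section \ref{sub:Skolem_partial} (weight $+1$ per step, probability $1/2$ of exit, followed by a choice between entering $t$ and entering $s$) then forces the optimal scheduler to decide, for every reachable pre-weight $w \geq 1$, whether $d(w) \geq 0$. The value of the canonical scheduler that chooses the predicted actions everywhere is a rational computable from the resulting finite-state MDP and serves as the threshold $\vartheta$, so strict improvement above $\vartheta$ is achievable iff some $u_n$ is negative. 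The principal obstacle is Step~2: the cutoff at $0$ in the definition of $\rawdiaminus$ forces one to guarantee that all analysed paths stay in the regime $\rawdiaplus \goal \leq 0$, and this is precisely what the self-loops of $x_j$ and the doubled self-loop of $y_j'$ are engineered to achieve while simultaneously injecting the initial value $\beta_j$ into the difference.
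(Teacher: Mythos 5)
Your proposal follows essentially the same route as the paper's proof in Appendix C: reuse the recurrence gadget to get $d(w)=\alpha_1 d(w{-}1)+\dots+\alpha_k d(w{-}k)$ for $d(w)=e(t,w)-e(s,w)$, use the self-loops of the new initial-values gadget to keep all relevant paths in the regime $\rawdiaplus\goal\leq 0$ so that the $\beta_j/\alpha$-detour into the doubled loop at $y_j'$ injects $d(\cdot)=\beta_j$ at the designated weight levels, verify optimality of $\gamma_j,\delta_j$ there and of $\gamma,\delta$ for non-negative weights, and take the canonical scheduler's (rational) value as the threshold. The only slip is the designated weight level: the paper uses $-k+j$ (shifted by one relative to the partial-SSPP reduction), whereas your $-(k{-}1)+j$ would place $j=k{-}1$ at weight $0$, where $\gamma$ rather than $\gamma_{k-1}$ is the optimal action; this is a calibration detail rather than a flaw in the approach.
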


We  adjust the MDP used for the Skolem-hardness proof for the partial SSPP. 
So, let $k$ be a natural number, $\alpha_1,\dots,\alpha_k$ be rational coefficients of a linear recurrence sequence, and $\beta_0,\dots, \beta_{k-1}\geq 0$ the rational initial values. W.l.o.g. we again assume these values to be small, namely: $\sum_{1\leq i\leq k} |\alpha_i|\leq \frac{1}{5(k+1)}$ and for all $j$, $\beta_j\leq \frac{1}{3}\alpha$ where $\alpha=\sum_{1\leq i\leq k} |\alpha_i|$.

The first important observation is that the optimal expectation of $\rawdiaminus \goal$ for different starting states and starting weights behaves very similar to optimal partial expectations:
For each state $q$ and each integer $w$, let $e(q,w)$ be the optimal expectation of $\rawdiaminus \goal$ when starting in state $q$ with accumulated weight $w$. If an optimal scheduler chooses $\alpha$ when in $q$ with accumulated weight $w$, then $e(q,w)=\sum_{r\in S} P(q,\alpha,r)\cdot e(r,w{+}\wgt(q,\alpha))$. Reusing the MDP-gadget depicted in \ref{fig:linear_recurrence}, we observe that
if we again let $d(w)=e(t,w)-e(s,w)$, the following holds as before:
For any $w\in \mathbb{Z}$, if an optimal scheduler chooses action $\gamma$ in $t$ and $\delta$ in $s$ if the accumulated weight is $w$, then 
$d(w)=\alpha_1 d(w-1) + \dots+ \alpha_k d(w-k)$.

Now, we construct a new gadget that encodes the initial values of a linear recurrence sequence. 
The new gadget  is depicted in Figure \ref{fig:cvar_initial_values}. Besides the actions $\gamma_j$ and $\delta_j$ for $0\leq j \leq k-1$ there are no non-deterministic choices. 
Again, we glue together the two gadgets in states $s$, $t$, and $\goal$.
The main idea is that for non-negative starting weights in state $s$ or $t$ actions $\gamma_j$ and $\delta_j$ lead to a larger expected tail loss than actions $\gamma$ and $\delta$.
For $0\leq j\leq k{-}1$ and an accumulated weight ${-}k{+}j$ in state $t$ or $s$, the actions $\gamma_j$ and $\delta_j$ are, however, optimal for maximizing the expectation of $\rawdiaminus \goal$ sinve the goal is reached with non-negative weights with high probability under these actions (details in  Appendix \ref{app:cvar}). The difference of optimal values satisfies $e(t,{-}k+j)-e(s,{-}k+j)=\beta_j$ for $0\leq j \leq k{-}1$ again.
Finally, we add the same initial component as in the previous section and see that the scheduler $\sched$ always choosing $\tau$ in state $c$ is optimal iff the linear recurrence sequence stays non-negative. As the expectation of $\rawdiaminus \goal$ under $\sched$ is again a rational (see Appendix \ref{app:cvar}), this finishes the proof analogously to the previous section.


\subsection{Long-run probability and frequency-LTL} \label{sec:long_run_probabilities}

In order to transfer the Skolem-hardness results to long-run probabilities and frequency-LTL, we introduce the auxiliary notion of \emph{weighted long-run frequency}.
Let $\cM$ be an MDP with a weight function $\wgt: S\times \Act \to \mathbb{Z}$  and two disjoint  sets of states $\Goal,\Fail \subseteq S$.
On an infinite paths $\pi=s_0,\alpha_0,s_1,\dots$, we define the random variable $\mathit{wlf}$  as follows:
\[ \mathit{wlf} (\pi) = \liminf_{n\to \infty} \frac{1}{n+1}  \sum\nolimits_{i=0}^n \wgt(s_i,\alpha_i)\cdot \mathds{1}_{\pi[i\dots]\vDash \neg \Fail \Until \Goal}\]
where $\mathds{1}_{\pi[i\dots]\vDash \neg \Fail \Until \Goal}$ is $1$ if the suffix $\pi[i\dots]=s_i,\alpha_i,s_{i+1},\dots$ satisfies $\neg\Fail \Until \Goal$, and 0 otherwise.
Given a scheduler $\sched$, we  define the  weighted long-run frequency ${\PMP}_\cM^\sched= \mathbb{E}^\sched_{\cM} ( \mathit{wlf} )$ and ${\PMP}_\cM^{\max}=\sup_\sched {\PMP}_\cM^\sched$. 
This can be seen as a long-run average version of partial expectations. Weights are only received  if afterwards $\Goal$ is visited before $\Fail$ and we measure the average weight received per step according to this rule.
Note that we only consider the path property $\neg\Fail \Until \Goal$ in this paper and hence do not include this property in our notation and terminology. 
An illustrating example can be found in Appendix \ref{app:wlf}.

We  modifiy the MDP that was constructed in Section \ref{sub:Skolem_partial} for the Skolem-hardness of the partial SSPP. We replace the gadget encoding the initial values with the gadget  depicted in Figure \ref{fig:wlf_initial_values}.  This gadget differs from the gadget used for partial expectations only in the expected time it takes to reach $\goal$ or $\fail$ under $\gamma_j$ or $\delta_j$. It is constructed in a way such that the expected time to reach $\goal$ or $\fail$ from $\sinit$ does not depend on the scheduler.
 Finally, we add a transition leading back to the initial state from $\goal$ and $\fail$.
An optimal scheduler for weighted long-run frequencies in the constructed MDP $\cK$ now just has to maximize the  partial expectation leading to the Skolem-hardness result (for more details see Appendix \ref{app:wlf}).

\begin{mythm}\label{thm:Skolem_wlf}
The Positivity problem is polynomial-time reducible to the following problem: Given an MDP $\cM$ and a rational $\vartheta$, decide whether ${\PMP}^{\max}_{\cM}>\vartheta$.
\end{mythm}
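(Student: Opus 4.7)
The plan is to adapt the construction of Section~\ref{sub:Skolem_partial} by replacing only the initial-values gadget (Figure~\ref{fig:initial_values}) with the gadget in Figure~\ref{fig:wlf_initial_values}, keeping the linear-recurrence gadget from Figure~\ref{fig:linear_recurrence} together with the initial component built from $\sinit$ and $c$. I then add transitions from $\goal$ and from $\fail$ back to $\sinit$ (with weight $0$), so the MDP $\cK$ cycles through ``epochs'' from $\sinit$ to $\{\goal,\fail\}$ and back. I declare $\Goal=\{\goal\}$ and $\Fail=\{\fail\}$, so that in the definition of $\mathit{wlf}$ the weight $\wgt(s_i,\alpha_i)$ collected at position $i$ contributes only if the current epoch ends in $\goal$ before $\fail$.

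First I would verify the key engineering property of the gadget in Figure~\ref{fig:wlf_initial_values}: the two probability branches $p_1,p_2$ after $\gamma_j$ and $q_1,q_2$ after $\delta_j$, together with the auxiliary states $x_j,x_j^\prime,y_j,y_j^\prime$, are tuned so that, regardless of which action $\gamma_j$ or $\delta_j$ is picked and regardless of the scheduler's choices inside the linear-recurrence gadget, the expected number of steps per epoch (i.e.\ between two successive visits of $\sinit$) is the same constant $T\in\Rational_{>0}$. A straightforward conditional-expectation calculation, in the spirit of the computations in Appendix~\ref{app:hardness_threshold_PE}, shows this; the auxiliary loops at $x_j,y_j,y_j^\prime$ with self-loop probability $\tfrac{1}{k+1}$ are precisely what compensates for the weight differences and make the expected epoch length independent of the scheduler.

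Next I would apply a renewal-reward argument to the Markov chain induced by any scheduler $\sched$. Because $\sinit$ is visited infinitely often almost surely and epochs between two consecutive visits of $\sinit$ are i.i.d.\ once a finite-memory (in fact weight-based) scheduler is fixed, the strong law of large numbers gives
\[
  {\PMP}^{\sched}_{\cK} \;=\; \frac{\Expected^{\sched}_{\cK}[R]}{T},
\]
where $R$ is the weight accumulated during one epoch conditioned-in on the event that $\goal$ is reached in that epoch, which is exactly the partial expectation ${\PE}^{\sched'}_{\cM',\sinit}$ of a single epoch in the subMDP $\cM'$ obtained by removing the loops back from $\goal,\fail$ to $\sinit$. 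A standard approximation by finite-memory schedulers (sufficient for the partial SSPP) lets me pass from the general case to this case.

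Consequently, maximising ${\PMP}^{\sched}_{\cK}$ over all schedulers is equivalent, up to the positive scheduler-independent factor $1/T$, to maximising the partial expectation ${\PE}^{\sched'}_{\cM'}$. By the analysis of Section~\ref{sub:Skolem_partial}, the latter is strictly above a certain explicit rational threshold iff the given linear recurrence sequence $(u_n)_{n\geq 0}$ is not positive. Setting $\vartheta\eqdef\vartheta_{\mathrm{PE}}/T$ with $\vartheta_{\mathrm{PE}}$ the rational threshold from Theorem~\ref{thm:threshold_PE}, I obtain the desired reduction. The main obstacle, and the part I would do most carefully, is the renewal-type calculation together with verifying the scheduler-independence of the expected epoch length $T$; once those are in place, the reduction inherits all remaining bookkeeping from Theorem~\ref{thm:threshold_PE}.
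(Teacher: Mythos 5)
Your proposal is correct and follows essentially the same route as the paper: reuse the partial-SSPP construction with the initial-values gadget swapped for the one in Figure~\ref{fig:wlf_initial_values}, add the loops from $\goal$ and $\fail$ back to $\sinit$, observe that the expected epoch length is a scheduler-independent constant (the paper computes it explicitly as $4+2/(1-|\alpha_1|-\cdots-|\alpha_k|)$), and conclude that the weighted long-run frequency is the partial expectation divided by that constant, so the threshold is the one from Theorem~\ref{thm:threshold_PE} rescaled. One cosmetic slip: the $\tfrac{1}{k+1}$ self-loops you cite belong to the conditional-value-at-risk gadget of Figure~\ref{fig:cvar_initial_values}; in the weighted-long-run-frequency gadget the epoch length is equalized by the $x_j\leftrightarrow x_j^\prime$ cycle with continuation probability $p_0=\sum_i|\alpha_i|$ matching the return probability of the recurrence gadget, but this does not affect the validity of your plan.
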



This result now serves as a tool to establish analogous results for long-run probabilities. 
The key idea is to encode integer weights via a labelling of states and to use a simple regular co-safety property to mimic the reception of weights in weighted long-run frequencies. 
\begin{mythm}\label{thm:Skolem_LP}
The Positivity problem is polynomial-time reducible to the following problem:
Given an MDP $\cM$, an NFA $\cA$, and a rational $\vartheta$, decide whether  $\LP^{\max}_\cM (\cA)>\vartheta$.
\end{mythm}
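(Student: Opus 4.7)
The plan is to reduce the threshold problem for weighted long-run frequencies, shown Positivity-hard in Theorem~\ref{thm:Skolem_wlf}, to the long-run-probability threshold problem. Starting from an MDP $\cK$ with integer weight function $\wgt$ and disjoint state sets $\Goal,\Fail$, I would construct an unweighted MDP $\cM$ over a small set of atomic propositions together with a fixed, small NFA $\cA$ so that $\LP^{\max}_{\cM}(\cA)>\vartheta'$ iff ${\PMP}^{\max}_{\cK}>\vartheta$ for some polynomial-time computable~$\vartheta'$.

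Since long-run probabilities are non-negative while $\wgt$ may be negative, I first shift to $\wgt'=\wgt+W$ where $W=\max_{s,\alpha}|\wgt(s,\alpha)|$, noting
\[
{\PMP}^{\sched}_{\cK}(\wgt')={\PMP}^{\sched}_{\cK}(\wgt)+W\cdot \mathbb{E}^{\sched}_{\cK}\Bigl[\liminf_{n\to\infty}\tfrac{1}{n+1}\sum_{i=0}^n \mathds{1}_{\pi[i\ldots]\vDash\neg\Fail\Until\Goal}\Bigr],
\]
where the additive correction is itself a long-run probability of a co-safety property of the unmodified $\cK$. Hence it suffices to reduce the non-negative case. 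For every enabled $(s,\alpha)$ with shifted weight $w'=\wgt'(s,\alpha)\in\{0,\ldots,2W\}$, $\cM$ replaces $\alpha$ by a two-step sub-gadget: first a deterministic move from $s$ to a fresh intermediate state, then a biased coin flip that reaches a fresh ``armed'' copy $a_{s,\alpha}$ with probability $w'/(2W)$ and an ``unarmed'' copy otherwise, from either of which a deterministic action restores the original transition $P_{\cK}(s,\alpha,\cdot)$. Fresh atomic propositions $\mathsf{arm},\mathsf{g},\mathsf{f}$ label the armed copies and the copies of $\Goal$ and $\Fail$, respectively, and $\cA$ is the small NFA accepting prefixes whose first letter carries $\mathsf{arm}$ and which visit $\mathsf{g}$ strictly before any $\mathsf{f}$.

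Schedulers for $\cM$ canonically correspond to schedulers for $\cK$ since non-determinism sits only at the original $s$-states. For such a scheduler $\sched$, positions of the expanded path split into two groups: non-$\mathsf{arm}$ positions contribute $0$ to the $\LP$-sum because $\cA$ requires the first letter to carry $\mathsf{arm}$, whereas at each armed position the contribution equals the residual probability of reaching $\mathsf{g}$ before $\mathsf{f}$, which mirrors the probability that the continuation of the $\cK$-path satisfies $\neg\Fail\Until\Goal$. Since the armed state after $(s,\alpha)$ is visited with probability $w'/(2W)$ and the gadget has a fixed step blow-up $c$, averaging along the expanded path should yield
\[
\LP^{\sched}_{\cM}(\cA)=\tfrac{1}{c\cdot 2W}\,{\PMP}^{\sched}_{\cK}(\wgt');
\]
combined with the shift identity and a suitable choice of $\vartheta'$ this completes the reduction in polynomial time.

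The main obstacle is this last displayed equality: the $\LP$ integrand is a running average of residual \emph{probabilities} of future events whereas the $\PMP$ integrand is a running average of \emph{weights multiplied by indicators} of the same events. To identify the two $\liminf$s path-wise, I would exploit the end-component structure of $\cM$, which is inherited from $\cK$. On each MEC both $\liminf$s are in fact ordinary limits, so by a law-of-large-numbers argument the two running averages converge to the same expected value under the stationary distribution of the relevant memoryless scheduler restriction, after which the tower property identifies them; the same ergodic argument is what makes the shift identity above an equality rather than a mere inequality between $\liminf$s. A secondary technicality is keeping $\cA$ small and fixed independently of the input, which is immediate from its three-state description.
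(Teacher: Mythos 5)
There is a genuine gap, and it sits in the very first step: the weight shift $\wgt'=\wgt+W$. Your identity ${\PMP}^{\sched}_{\cK}(\wgt')={\PMP}^{\sched}_{\cK}(\wgt)+W\cdot F^{\sched}$ with $F^{\sched}=\mathbb{E}^{\sched}_{\cK}\bigl[\liminf_{n}\tfrac{1}{n+1}\sum_{i\leq n}\mathds{1}_{\pi[i\ldots]\vDash\neg\Fail\Until\Goal}\bigr]$ has a scheduler-\emph{dependent} correction term, and the inference ``hence it suffices to reduce the non-negative case'' does not follow. In the hard instance $\cK$ from Theorem~\ref{thm:Skolem_wlf}, $F^{\sched}$ is essentially the long-run fraction of positions lying in rounds that end in $\goal$ rather than $\fail$, and this varies substantially with the scheduler (e.g.\ the actions $\gamma_j$ reach $\goal$ with probability $\frac{1}{2k^{2(k-j)}}+\beta_j$ while $\gamma$ reaches it with probability $1-\sum_i|\alpha_i|$; the round lengths differ too). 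Consequently $\sup_{\sched}\bigl({\PMP}^{\sched}_{\cK}(\wgt)+W F^{\sched}\bigr)$ is not $\bigl(\sup_{\sched}{\PMP}^{\sched}_{\cK}(\wgt)\bigr)+W\cdot\mathrm{const}$, the maximizers of the shifted and unshifted objectives need not coincide, and no polynomial-time computable $\vartheta'$ makes your biconditional hold. Your armed-copy gadget requires $w'/(2W)\in[0,1]$, so the shift is not removable from your construction; this is where it breaks. (A secondary, repairable weakness: the shift identity itself needs the $\liminf$s to be genuine limits, and the pathwise identification of the two integrands — residual probabilities versus weight-times-indicator — only holds in expectation under finite-memory schedulers; you gesture at ergodicity on MECs but the reduction from arbitrary schedulers to finite-memory ones needs the Fatou-type approximation argument made explicit on both sides.)

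The paper's construction avoids the shift precisely by making the per-position offset a \emph{constant} rather than $W$ times a scheduler-dependent frequency. It first normalizes $\cK$ to state weights in $\{-1,0,+1\}$ (a polynomial blow-up for this particular $\cK$), duplicates $\goal$ and $\fail$ with a fair ``coin-flip'' label $c$, and designs the NFA so that the acceptance probability of the suffix starting at a state with weight $w$ and residual probability $p$ of $\neg\Fail\Until\Goal$ equals $\tfrac{1}{2}+\tfrac{1}{2}\,w\,p$: weight $+1$ accepts on $g$ or on $f\land c$, weight $0$ accepts on $g\land c$ or $f\land c$, weight $-1$ accepts only on $f\land c$. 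Summing against the steady-state distribution, the offsets telescope to the constant $\tfrac{1}{2}$ because the steady-state probabilities sum to $1$, giving ${\PMP}^{\max}_{\cK}=\tfrac{1}{2}+\tfrac{1}{2}\LP^{\max}_{\cL}(\cA)$ uniformly over schedulers. If you want to keep your armed-copy idea, you would need an analogous symmetric treatment of negative weights inside the NFA (accepting on failure of the until-property, normalized by a coin flip) rather than a shift of the weight function.
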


\begin{wrapfigure}{R}{0.43\textwidth}

\scalebox{.65}{
\begin{tikzpicture}[scale=1,auto,node distance=8mm,>=latex]
    \tikzstyle{round}=[thick,draw=black,circle]

    \node[round,minimum size=15pt] (init) {};
    \node[color=white, left=10mm of init] (start) {};
    \node[round,right=30mm of init, minimum size=15pt] (a) {};
    
    \node[round, below=11mm of a,minimum size=15pt] (b) {};
    
     \node[round, below=11mm of b,minimum size=15pt] (c) {};
 
    \node[round, right=30mm of a, double,minimum size=15pt] (acc) {};

  \draw[color=black , very thick,->] (start) edge   (init) ;
  \draw[color=black , very thick,->] (init) edge node [pos=0.5, above] {$p\land \neg g \land \neg f$}  (a) ;
    \draw[color=black , very thick,->] (a) edge node [pos=0.4, above] {$g,  f \land c$}  (acc) ;
  
   \draw[color=black , very thick,->] (init) edge [bend right=25] node [pos=0.2, right=0mm] {$z\land \neg g\land \neg f$}  (b) ;
  \draw[color=black , very thick,->] (b) edge [bend right=25] node [pos=0.3, below=1mm] {$g \land c, f \land c$}  (acc) ;
  
    \draw[color=black , very thick,->] (init) edge [bend right=35] node [pos=0.2, left=0mm] {$n\land \neg g\land \neg f$}  (c) ;
  \draw[color=black , very thick,->] (c) edge [bend right=35] node [pos=0.2, below=2mm] {$ f \land c$}  (acc) ;
    
    \draw[color=black , very thick,->] (a) edge [loop above] node [pos=0.5, above=0mm] {$ \neg g \land \neg f$}  (a) ;
    \draw[color=black , very thick,->] (b) edge [loop above] node [pos=0.5, above=0mm] {$ \neg g \land \neg f$}  (b) ;
     \draw[color=black , very thick,->] (c) edge [loop above] node [pos=0.5, above=0mm] {$ \neg g \land \neg f$}  (c) ;

  \draw[color=black , very thick,->] (init) edge [bend left=45] node [pos=0.5, above=0mm] {$g \land p, g\land z \land c, f \land c$}  (acc) ;

\end{tikzpicture}
}
\caption{The NFA $\cA$ expressing a property of the form $d\lor \bigvee_{i=1}^3 (c_i\land (a \Until b_i))$. }\label{fig:NFA}
\end{wrapfigure}

In the sequel, we consider weighted states instead of weighted state-action pairs. Further, we assume that the weights are only $-1$, $0$, and $+1$. This assumption leads to a pseudo-polynomial blow-up in the general case. The weights in the MDP $\cK$ constructed for Theorem \ref{thm:Skolem_wlf} above are, however, at most $k$. As the MDP has more than $2k$ states, transforming $\cK$ to weights $-1$, $0$, and $+1$ only leads to a polynomial blow-up.
As this MDP has no non-trivial end-components,  $\{\goal,\fail\}$ is visited infinitely often with probability $1$ under any scheduler.
Let $\AP=\{n,z,p,c,g,f\}$ be a set of atomic propositions representing \emph{negative} ($-1$), \emph{zero} ($0$), and \emph{positive} ($+1$) weight,  \emph{coin flip},  $\goal$, and $\fail$, respectively. We construct an MDP $\cL$: The states $\goal$ and $\fail$ are duplicated while one copy of each is labeled with $c$ and whenever $\goal$ or $\fail$ are entered in the MDP $\cK$, both of the two copies in $\cL$ are equally likely. For a formal definition see Appendix \ref{app:LRP}.
In Figure \ref{fig:NFA}, we depict the NFA $\cA$ used for the encoding. The NFA $\cA$ is constructed such that in $\cL$ any run starting in a state labeled \emph{zero} or reaching \emph{fail} before \emph{goal} is accepted with probability $1/2$ due to the \emph{coin flips}. A run starting in a state labeled \emph{positive}  and reaching \emph{goal} before \emph{fail} is accepted while such a path starting in a state labeled \emph{negative} is not. This leads to the following lemma that proves 
Theorem \ref{thm:Skolem_LP}.

\begin{mylem}\label{lem:encoding}
For the MDPs $\cK$ and $\cL$ constructed above, we have  
${\PMP}^{\max}_{\cK} = \frac{1}{2}+\frac{1}{2}\LP^{\max}_\cL (\cA)$.
\end{mylem}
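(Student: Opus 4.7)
The plan is to establish a pointwise-in-scheduler affine identity linking $\PMP^{\sched}_\cK$ and $\LP^{\sched'}_\cL(\cA)$, from which the claimed equation follows by taking suprema and rearranging. Any scheduler $\sched$ for $\cK$ lifts canonically to a scheduler $\sched'$ for $\cL$: the $1/2$--$1/2$ choice between the $c$- and non-$c$-copies of $\goal$ and $\fail$ is probabilistic and not scheduler-controlled, so $\sched'$ replicates the choices of $\sched$ on the shared state space, and conversely every scheduler of $\cL$ descends to one of $\cK$ by ignoring the $c$-split. Under this correspondence the induced path distributions on the non-$c$ information agree exactly.

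The key one-step calculation determines the conditional acceptance probability of the NFA $\cA$ from each type of state. By inspection of Figure~\ref{fig:NFA}, for every non-$\goal$, non-$\fail$ state $s$ in $\cL$ with weight label $w(s)\in\{-1,0,+1\}$ (encoded by $n,z,p$) and probability $p_G(s):=\Pr^{\sched^{\mathit{res}}}_{\cL,s}(\neg\fail\,\Until\,\goal)$ under the residual scheduler,
\[
\Pr^{\sched^{\mathit{res}}}_{\cL,s}(\cA) \;=\; \tfrac12 + \tfrac12\, w(s)\, p_G(s).
\]
Case by case: a $p$-state routes $\cA$ to the branch accepting on $g$ (either copy of $\goal$) or $f\land c$, giving $p_G(s)+\tfrac12(1-p_G(s))$; an $n$-state routes to the branch accepting only on $f\land c$, giving $\tfrac12(1-p_G(s))$; a $z$-state routes to the branch accepting on $g\land c$ or $f\land c$, giving $\tfrac12$. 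All three collapse to the displayed formula thanks to the independent uniform $c$-coin at $\{\goal,\fail\}$. Positions $\zeta[i]\in\{\goal,\fail\}$ themselves occur only $O(1)$ times per regeneration cycle and contribute an $o(1)$ correction to Cesàro averages that I would bookkeep separately.

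Set $Y_i(\zeta):=w(\zeta[i])\cdot\mathds{1}_{\zeta[i\dots]\vDash\neg\fail\,\Until\,\goal}$ and let $\mathcal{F}_i$ denote the prefix $\sigma$-algebra of $\zeta[0\dots i]$. Since $w(\zeta[i])\,p_G(\zeta[i]) = \mathbb{E}[Y_i\mid\mathcal{F}_i]$, Cesàro-averaging the per-step identity and taking $\liminf_n$ yields
\[
\flrprob{\sched'}{\cA}(\zeta) \;=\; \tfrac12 + \tfrac12\,\liminf_{n}\tfrac{1}{n+1}\sum_{i=0}^n \mathbb{E}[Y_i\mid\mathcal{F}_i].
\]
To replace $\mathbb{E}[Y_i\mid\mathcal{F}_i]$ by $Y_i$ inside the $\liminf$, I would introduce $D_n:=\sum_{i=0}^n(Y_i-\mathbb{E}[Y_i\mid\mathcal{F}_i])$, a martingale with uniformly bounded increments $|Y_i|\leq 1$. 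Because $\cK$ regenerates at $\sinit$ via $\{\goal,\fail\}\to\sinit$ with uniformly bounded expected cycle length under every scheduler — a property preserved by the $\{-1,0,+1\}$-weight transformation — the martingale strong law of large numbers gives $D_n/(n{+}1)\to 0$ almost surely, so the Cesàro $\liminf$s of $Y_i$ and of $\mathbb{E}[Y_i\mid\mathcal{F}_i]$ coincide a.s. Taking expectations produces the per-scheduler affine relation between $\LP^{\sched'}_\cL(\cA)$ and $\PMP^{\sched}_\cK$; supping over corresponding schedulers and rearranging then delivers the stated identity $\PMP^{\max}_{\cK}=\tfrac12+\tfrac12\LP^{\max}_\cL(\cA)$.

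The main obstacle is the martingale exchange: $\PMP$ averages realized weighted indicators while $\LP$ averages conditional probabilities, so Fatou-type inequalities alone would give only an inequality. The SLLN for martingales is the natural tool, but its use hinges on the uniform regenerativity of $\cK$ across arbitrary (non-memoryless, non-stationary) schedulers, which must be extracted from the uniformly bounded expected return time to $\sinit$ after the weight transformation. This uniform bound is the load-bearing technical ingredient; the per-step calculation itself is purely combinatorial.
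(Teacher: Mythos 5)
Your strategy is genuinely different from the paper's: the paper first proves the identity only for finite-memory schedulers inducing a single BSCC, where both $\PMP^{\fsched}_{\cK}$ and $\LP^{\fsched}_{\cL}(\cA)$ can be written as sums over steady-state probabilities $x_{\mathfrak{s}}$ weighted by $p_{\mathfrak{s}}=\Pr(\neg\Fail\Until\Goal)$, and then separately shows that the suprema over all schedulers coincide with the suprema over such finite-memory schedulers on \emph{both} sides via Fatou-based approximation arguments (Lemma \ref{app_lem:FM-scheduler} and its analogue for long-run probabilities). You instead aim for a pointwise affine identity valid for every scheduler. Your one-step calculation of the acceptance probability of $\cA$ as $\tfrac12+\tfrac12\,w(s)\,p_G(s)$ is correct and is exactly the combinatorial core of the paper's BSCC computation. (One inaccuracy there: positions in $\Goal\cup\Fail$ occur with a constant positive frequency, namely once per regeneration cycle of bounded expected length, not with frequency $o(1)$; they cannot be dismissed as a vanishing correction, although they do satisfy the same identity once you average over the coin, since $p_G\in\{0,1\}$ there.)

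The genuine gap is the martingale step. With $\cF_i$ the prefix $\sigma$-algebra, $Y_i=w(\zeta[i])\cdot\mathds{1}_{\zeta[i\dots]\vDash\neg\Fail\Until\Goal}$ is \emph{not} $\cF_{i+1}$-measurable --- it depends on the entire future of the run up to the next visit to $\Goal\cup\Fail$, which occurs after an unbounded (geometrically tailed) number of steps. Consequently $D_n=\sum_{i=0}^n\bigl(Y_i-\mathbb{E}[Y_i\mid\cF_i]\bigr)$ is not a martingale with respect to any natural filtration (the increments for indices inside one excursion all share the same unresolved indicator and are strongly correlated), so the martingale SLLN cannot be invoked on $D_n$ as written. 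The statement you want is still true, but proving it requires a blocking/regeneration argument: group the indices between consecutive visits to $\Goal\cup\Fail$ (or to $\sinit$), observe that each block sum is resolved by the end of its block and has conditional mean zero given the history up to the previous block boundary, and apply a martingale SLLN to the block sums together with a law of large numbers for the (uniformly geometrically tailed) block lengths. That blocking argument is precisely the missing technical content; alternatively, you could sidestep it entirely by restricting to finite-memory single-BSCC schedulers, where your per-step identity reduces to the paper's steady-state computation, and then importing the Fatou-type approximation for general schedulers --- which is what the paper does.
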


\begin{proof}[Proof sketch]
It is quite easy to see that the claim holds for finite-memory schedulers as we can rely on steady state probabilities in the resulting Markov chain. That the supremum over all schedulers agrees with the supremum over finite-memory schedulers on both sides follows from Fatou's lemma. Details can be found in Appendix \ref{app:LRP}.
\end{proof}

 A consequence of this result is that model checking of frequency-LTL in MDPs is at least as hard as the Skolem problem. The decidability of the model-checking problem for the full logic frequency-LTL has been left open, but set as a goal in \cite{ForejtK15,ForejtKK15}. Obtaining this goal by proving the decidability of the model-checking problem hence would settle the decidability of the Skolem problem. The frequency-globally modality $G^{>\vartheta}_{\inf}(\varphi)$ is defined to hold on a path $\pi$ iff $\liminf_{n\to \infty} \frac{1}{n+1} \sum_{i=0}^n \mathds{1}_{\pi[i\dots]\vDash\varphi}>\vartheta$, i.e. iff the long-run average number of positions at which a suffix satisfying $\varphi$ starts exceeds $\vartheta$.

\begin{mythm}\label{thm:fLTL}
There is a polynomial-time reduction from the Positivity problem to the following qualitative model checking problem for frequency LTL  for a fixed LTL-formula $\varphi$: Given an MDP $\cM$ and a rational $\vartheta$, is $\Pr^{\max}_\cM (G^{>\vartheta}_{\inf} (\varphi))=1$?
\end{mythm}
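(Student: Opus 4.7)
The strategy is to transfer the Positivity-hardness of the long-run probability threshold problem (Theorem~\ref{thm:Skolem_LP} together with Lemma~\ref{lem:encoding}) to the qualitative frequency-LTL model-checking problem. The plan is to reuse, unchanged, the MDP $\cL$ and threshold $\vartheta$ produced in the reduction of Theorem~\ref{thm:Skolem_LP}, and to replace the fixed NFA $\cA$ of Figure~\ref{fig:NFA} with an equivalent fixed LTL formula $\varphi$, so that the Positivity instance is positive iff $\Pr^{\max}_{\cL}(G^{>\vartheta}_{\inf}(\varphi))=1$.

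First I would translate $\cA$ into LTL. Since $\cA$ is a small co-safety automaton that reads one letter to branch into one of three intermediate states and then performs a constrained-reachability search up to the first occurrence of $g$ or $f$, its good-prefix language is captured by the fixed formula
\begin{align*}
\varphi \;\equiv\;&\; (p\land g)\,\lor\,(z\land g\land c)\,\lor\,(f\land c)\\
&{}\lor (p\land \neg g\land \neg f\land((\neg g\land \neg f)\Until(g\lor(f\land c))))\\
&{}\lor (z\land \neg g\land \neg f\land((\neg g\land \neg f)\Until((g\land c)\lor(f\land c))))\\
&{}\lor (n\land \neg g\land \neg f\land((\neg g\land \neg f)\Until(f\land c))).
\end{align*}
A suffix $\pi[i\dots]$ of an infinite path then satisfies $\varphi$ iff $\cA$ accepts some prefix of $\pi[i\dots]$, so $\Pr^{\sched\after{}\pi[0\dots i]}_{\cL,\pi[i]}(\varphi)$ coincides with the conditional acceptance probability of $\cA$ at position~$i$. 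Note that $\varphi$ is the promised Boolean combination of atomic propositions and constrained-reachability properties placed under a single frequency-globally modality, without any deeper nesting of temporal operators.

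The central step is to relate $\LP^{\max}_{\cL}(\cA)>\vartheta$ to $\Pr^{\max}_{\cL}(G^{>\vartheta}_{\inf}(\varphi))=1$. For any finite-memory scheduler $\sched$, the induced Markov chain on $\cL$ extended by the memory has finitely many BSCCs $B_1,\dots,B_m$. Inside each $B_j$, the ergodic theorem ensures that both $\frac{1}{n+1}\sum_{i=0}^n \mathds{1}_{\pi[i\dots]\vDash\varphi}$ and $\frac{1}{n+1}\sum_{i=0}^n \Pr^{\sched\after{}\pi[0\dots i]}_{\cL,\pi[i]}(\varphi)$ converge almost surely to the same deterministic value $x_j$, namely the stationary probability of $\varphi$ in $B_j$. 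Consequently
\[
\LP^{\sched}_{\cL}(\cA)=\sum_{j}\Pr^{\sched}(\Diamond B_j)\cdot x_j \quad \text{and}\quad \Pr^{\sched}(G^{>\vartheta}_{\inf}(\varphi))=\sum_{j}\Pr^{\sched}(\Diamond B_j)\cdot \mathds{1}_{x_j>\vartheta}.
\]
Standard MEC-decomposition arguments for long-run objectives show that $\LP^{\max}_{\cL}(\cA)$ is already attained in the limit by schedulers concentrated on a single reachable end component of $\cL$.

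With both sides characterised through the same stationary numbers $x_j$, the equivalence follows. If $\LP^{\max}_{\cL}(\cA)>\vartheta$, some finite-memory scheduler attains $x_{B^\ast}>\vartheta$ in a reachable BSCC $B^\ast$; modifying it to drive the chain into $B^\ast$ with probability one produces a scheduler under which $\liminf_n \frac{1}{n+1}\sum_{i=0}^n\mathds{1}_{\pi[i\dots]\vDash\varphi}=x_{B^\ast}>\vartheta$ almost surely, hence $\Pr^{\max}_{\cL}(G^{>\vartheta}_{\inf}(\varphi))=1$. Conversely, a finite-memory scheduler witnessing $\Pr^{\sched}(G^{>\vartheta}_{\inf}(\varphi))=1$ must satisfy $x_j>\vartheta$ in every reachable BSCC, so $\LP^{\sched}_{\cL}(\cA)>\vartheta$. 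The lift from finite-memory to general schedulers is provided by the Fatou-lemma argument already used in the proof sketch of Lemma~\ref{lem:encoding}, applied to the uniformly bounded indicator- and probability-averages above. The principal obstacle is precisely this gap between an expectation exceeding $\vartheta$ and an almost-sure pathwise liminf exceeding $\vartheta$; it is closed by the BSCC decomposition, by the freedom to concentrate a scheduler on a single good end component, and by the coincidence of the two ergodic limits $x_j$ inside each BSCC. Since $\cL$ and $\vartheta$ are obtained from the Positivity instance in polynomial time and $\varphi$ is fixed, this yields the desired polynomial-time reduction.
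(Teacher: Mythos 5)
Your proposal is correct and follows essentially the same route as the paper's proof: the NFA $\cA$ is translated into the same fixed formula $\varphi$, the equivalence between $\LP^{\max}_\cL(\cA)>\vartheta$ and almost-sure satisfaction of $G^{>\vartheta}_{\inf}(\varphi)$ is established for finite-memory schedulers via steady-state/ergodic reasoning in the induced BSCCs, and the gap to arbitrary schedulers is closed by the Fatou-lemma argument from Lemma~\ref{lem:encoding}. Your explicit decomposition over several BSCCs is a mild generalization of the paper's single-BSCC normalization, but the substance of the argument is the same.
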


\begin{proof}[Proof sketch]
The proof uses the reduction to the threshold problem for the long-run probability of the co-safety property expressed by $\cA$.
This property  is captured by a simple LTL-formula $\varphi$ (see Figure \ref{fig:NFA}). For finite-memory schedulers $\sched$ inducing a single bottom strongly connected component, we see that $G^{>\vartheta}_{\inf}(\varphi)$ holds with probability $1$ iff the expected long-run probability of $\varphi$  is greater than $\vartheta$.
That it is enough to consider such schedulers follows from the argument using  Fatou's lemma again. For more details see Appendix \ref{app:LRP}.
\end{proof}


\section{Saturation points} \label{sec:partial_mp}

Despite the inherent mathematical difficulty shown by the Skolem-hardness results so far,
 all of the problems studied here are solvable in exponential time under a natural restriction. For the problems on weighted MDPs, this restriction  only allows non-negative weights while for the long-run notions the restriction to constrained reachability properties ($a\Until b$) leads to solvability.
For the  partial and the conditional SSPP \cite{tacas2017,chen2013} and for  long-run probabilities \cite{lics2019}, the computability of optimal values under these restrictions has been shown.
The algorithms exploit the existence of \emph{saturation points}, a bound on the accumulated weight or the consecutive visits to certain states before optimal schedulers can behave memorylessly. We will extend this picture by providing a simple saturation point  for the computation of the optimal conditional value-at-risk for the classical SSPP in MDPs with non-negative weights. Afterwards, we transfer the saturation-point algorithm from \cite{lics2019} to weighted long-run frequencies in the setting of non-negative weights. As a consequence, we obtain an exponential-time algorithm for the qualitative model-checking problem of a frequency-LTL formula for which no solutions were known.
To conclude the section, we provide accompanying PSPACE lower bounds for the partial SSPP and weighted long-run frequencies with non-negative weights.

\subsection{Conditional value-at-risk for the classical SSPP}\label{cvar_pos}

Let $\cM$ be an MDP with non-negative weights.
In the classical SSPP, it is decidable in polynomial time whether the optimal expected accumulated weight before reaching the goal is bounded. If this is the case, the usual preprocessing step  removes end components \cite{deAlfaro1999,lics2018} and transforms the MDP such that exactly the schedulers reaching the goal with probability $1$ can be mimicked in the transformed MDP. So in the sequel, we assume that the absorbing state $\goal$ forms the only end component.
Given  a rational probability value $p\in (0,1)$, we are interested in the value $\CVaR^{\max}_p (\rawdiaplus \goal)$.
Note that in our formulation the worst outcomes are the paths with the lowest accumulated weight before reaching the goal.  Below we will  sketch how to treat the case where  high outcomes are considered bad.

\begin{mythm}\label{thm:cvar_pos}
Given an MDP $\cM=(S,\sinit,\Act,P,\wgt,\goal)$ with non-negative weights and no end-components except for one absorbing state $\goal$ as well as a  rational probability value $p\in(0,1)$, the value  $\CVaR^{\max}_p (\rawdiaplus \goal)$ is computable  in pseudo-polynomial time.
\end{mythm}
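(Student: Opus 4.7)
The plan is to combine the Rockafellar--Uryasev dual characterization of $\CVaR$ with a pseudo-polynomial saturation bound on the value-at-risk. A direct computation from the formula given in Section~\ref{sec:preliminaries} shows that for any bounded-below random variable $X$,
$$\CVaR_p^\sched(X)\ =\ \sup_{v \geq 0}\,\Bigl(\,v\,-\,\tfrac{1}{p}\,\mathbb{E}^\sched\bigl[(v-X)^+\bigr]\,\Bigr),$$
with the supremum attained at $v=\VaR_p^\sched(X)$. Interchanging the two suprema (which is unproblematic since both are mere least upper bounds) and specializing to $X=\rawdiaplus\goal$, which takes only non-negative integer values, yields
$$\CVaR_p^{\max}(\rawdiaplus\goal)\ =\ \sup_{v \in \mathbb{N}}\,\Bigl(\,v\,-\,\tfrac{1}{p}\,\min_\sched\mathbb{E}^\sched\bigl[(v-\rawdiaplus\goal)^+\bigr]\,\Bigr).$$
This turns an inherently quantile-based objective into a parametric family of expected total-reward minimizations, indexed by $v$.

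Next, I would bound the range of $v$ that the outer supremum actually needs. Let $M^\ast:=\mathbb{E}^{\max}_\cM(\rawdiaplus\goal)$. Under the hypotheses of the theorem, $M^\ast$ is finite and computable in polynomial time as an instance of the classical SSPP. From the elementary inequality $(v-X)^+ \geq v-X$ we obtain, for every scheduler $\sched$ and every $v \geq 0$,
$$\mathbb{E}^\sched\bigl[(v-\rawdiaplus\goal)^+\bigr]\ \geq\ v-\mathbb{E}^\sched[\rawdiaplus\goal]\ \geq\ v-M^\ast,$$
so for $v \geq K:=\lceil M^\ast/(1-p)\rceil$ the expression in the outer supremum is at most $v-\tfrac{1}{p}(v-M^\ast)\leq 0$. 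Since the value at $v=0$ is already $0$, the supremum is attained on the pseudo-polynomial-size set $\{0,1,\dots,K\}$.

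For each fixed $v \in \{0,\dots,K\}$, the random variable $(v-\rawdiaplus\goal)^+$ depends on a run only up to the first moment at which either $\goal$ is reached or the accumulated weight hits $v$; beyond that point the scheduler's choices are irrelevant. I would therefore unfold $\cM$ by accumulated weight, using states $(s,w)$ with $s \in S$ and $w \in \{0,\dots,v\}$, collapsing every configuration with $w \geq v$ into a single absorbing zero-reward state, and assigning terminal reward $v-w$ on first arrival at $(\goal,w)$ with $w<v$. The inner minimum then becomes a standard minimum expected total-reward problem on an MDP with $O(|S|\cdot v)$ states, solvable by linear programming in time polynomial in its size. A single unfolding up to the saturation point $K$ is sufficient to extract the inner values for all relevant $v$ simultaneously, which gives the claimed pseudo-polynomial overall running time.

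The main conceptual step is the dualization in the first paragraph: once one has the Rockafellar--Uryasev identity, the bounded and monotone structure of $(v-X)^+$ naturally supplies the saturation point at weight $v$, and both the Markov-style bound on $v$ and the weight-unfolding are routine. The only technical points I would need to verify carefully are that the suprema may indeed be exchanged (immediate) and that the inner minimum is achieved by a weight-dependent memoryless scheduler of the unfolded MDP, which is standard for expected-total-reward MDPs with non-negative bounded rewards under the assumption that $\goal$ is the only end component.
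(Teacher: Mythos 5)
Your proof is correct, but it follows a genuinely different route from the paper's. The paper works directly with the primal definition: it chooses a saturation point $K$ so large that, under \emph{every} scheduler, the probability of accumulating more than $K$ is below $1-p$, whence $\VaR^{\sched}_p(\rawdiaplus\goal)<K$ for all $\sched$; it then truncates the accumulated weight at $K$ via the same weight-unfolding you describe and hands the resulting weighted-reachability instance to the $\CVaR$ algorithm of \cite{kretinsky2018}, which itself enumerates the pseudo-polynomially many integer candidates for the value-at-risk. You instead dualize first via the Rockafellar--Uryasev identity --- which does hold verbatim for the paper's atom-splitting definition of $\CVaR$; evaluating your objective at $v=\VaR^{\sched}_p$ reproduces their formula exactly --- exchange the two suprema, and reduce to a pseudo-polynomial family of ordinary minimum expected total-reward problems, with the range bound $K=\lceil M^{\ast}/(1-p)\rceil$ playing the role of the saturation point but derived from the maximal expectation rather than from a tail-probability estimate. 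What your route buys is self-containedness: no specialized weighted-reachability $\CVaR$ subroutine is needed, only standard SSPP linear programs, and the maximization over schedulers is absorbed cleanly by the sup--sup exchange rather than delegated to the correctness of \cite{kretinsky2018}. What it costs is one terminal-reward relabeling and one LP solve per value of $v$ (your ``single unfolding suffices'' remark does not remove the $K$-fold repetition, though it stays pseudo-polynomial), plus two small points you rightly flag but should spell out: the restriction of the outer supremum to integers follows from concavity and integer breakpoints of $v\mapsto v-\tfrac1p\mathbb{E}^{\sched}\bigl[(v-\rawdiaplus\goal)^{+}\bigr]$ for each \emph{fixed} scheduler (pushed through the exchange of suprema, since the pointwise supremum over schedulers need not be concave), and attainment of the inner infimum by a memoryless scheduler of the unfolding. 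Both saturation points are of the same order of magnitude, so the complexity claims coincide.
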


\begin{proof}[Proof sketch]
As there are no end components, we can provide a \emph{saturation point} $K\in\mathbb{N}$ such that paths accumulate a weight of more than $K$ with probability less than $1-p$. Then, paths reaching an accumulated weight of $K$  do  not belong to the worst $p$ outcomes. We  construct an MDP with the state space $S\times\{0,\dots, K\}$ that encodes the accumulated weight of a path up to $K$. Letting states of the form $(\goal,i)$ be terminal with weight $i$ and of the form $(s,K)$ be terminal with weight $K$, we can then rely on the algorithm computing the conditional value-at-risk for weighted reachability in \cite{kretinsky2018}. As $K$ can be chosen of pseudo-polynomial size and this algorithm runs in  time polynomial in the size of the constructed MDP, this leads to a pseudo-polynomial time algorithm. For details see Appendix \ref{cvar_pos}.
\end{proof}

Note that the behavior of a scheduler on paths with accumulated weight above $K$ does not matter at all for the conditional value-at-risk.
If we want to consider the case where long paths are considered as bad, we can multiply all weights by $-1$  and use the definitions as before. The idea here now is to compute a saturation point $-K$ such that the probability for a path to accumulate weight less than $-K$ is smaller than $p$. So, we know that a path with weight less than $-K$ belongs to the $p$ worst paths. On these paths, the best thing to do in order to maximize the conditional value-at-risk is to maximize the expected accumulated weight before reaching the goal. This can be done by a memoryless deterministic scheduler simultaneously for all states and the values are computable in polynomial time \cite{deAlfaro1999}. Then we construct the MDP $\cN$ as above but change the terminal weights as follows: states of the form $(\goal,i)$ get weight $-i$ and states of the form $(s,K)$ get weight $-K+\mathbb{E}^{\max}_{\cM,s}(\rawdiaplus \goal)$ where $\cM$ is the MDP in which all weights are already multiplied by $-1$. Afterwards the problem can be solved by the techniques for weighted reachability from \cite{kretinsky2018} again.

\subsection{Weighted long-run frequencies and frequency-LTL} \label{sec:PMP}

The existence of a saturation point for long-run probabilities of constrained reachability properties was shown in \cite{lics2019}. This result can easily be adapted to weighted long-run frequencies following the same arguments.
First, it is shown by an application of  Fatou's lemma that optimal weighted long-run frequency can be approximated by finite-memory schedulers. 
Afterwards, it is shown that the memory needed for the optimization can be restricted further: A {saturation point} $K\in \mathbb{N}$ is provided such that only scheduler keeping track of the accumulated weight up to $K$ have to be considered.
The adaptions necessary to the proof  in \cite{lics2019} are worked out in Appendix~\ref{app:PMP} and lead to the following result:

\begin{mythm}\label{thm:comp_wlf}
The maximal value ${\PMP}^{\max}_\cM$ in an MDP ${\cM}$ with non-negative weights is computable in pseudo-polynomial time.
\end{mythm}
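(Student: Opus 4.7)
The plan is to transfer the saturation-point methodology developed for long-run probabilities of constrained reachability in \cite{lics2019} to the weighted-long-run-frequency setting. The argument has two stages: first show that ${\PMP}^{\max}_\cM$ can be computed by searching only over a family of finite-memory deterministic schedulers, and second identify a pseudo-polynomial bound $K$ on the memory needed (a saturation point), which reduces the problem to a standard long-run-average computation on a product MDP of pseudo-polynomial size.

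For the first stage I would argue, as in the proof sketches of Lemma~\ref{lem:encoding} and Theorem~\ref{thm:fLTL}, via Fatou's lemma. Since $\mathit{wlf}$ is a $\liminf$ of Ces\`aro averages of the bounded integrable random variables $X_i(\pi) = \wgt(s_i,\alpha_i)\cdot\mathds{1}_{\pi[i\dots]\vDash\neg\Fail\Until\Goal}$, we have
\[
{\PMP}^\sched_\cM \;=\; \mathbb{E}^\sched_\cM(\liminf_n \tfrac{1}{n+1}\textstyle\sum_{i=0}^n X_i)\;\leq\;\liminf_n \tfrac{1}{n+1}\textstyle\sum_{i=0}^n \mathbb{E}^\sched_\cM(X_i).
\]
This turns the supremum over \emph{all} schedulers into a supremum of long-run expected step rewards, which is known to be attained in the limit by finite-memory deterministic schedulers. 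So it suffices to optimize over finite-memory schedulers that remember the accumulated weight since the last visit to $\Goal\cup\Fail$ (resetting on each such visit, since the indicator $\mathds{1}_{\neg\Fail\Until\Goal}$ only depends on the suffix starting at position $i$).

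For the second stage I would establish a saturation point $K \in \mathbb{N}$ whose magnitude is pseudo-polynomial in $|\cM|$ and the maximum weight $w_{\max}$: once the accumulated weight in the current "trial" (the sub-path since the last visit to $\Goal\cup\Fail$) exceeds $K$, an optimal scheduler can safely switch to a memoryless policy. The argument mirrors \cite{lics2019}: because weights are non-negative and each trial ends almost surely in $\Goal\cup\Fail$ or produces zero contribution, the marginal effect of remembering weights beyond $K$ on the long-run average vanishes. Concretely, one compares, for each state $s$ and each weight $w\ge K$, the optimal continuation value with the memoryless one, using the monotonicity in $w$ and uniform bounds on trial lengths (via exponentially small tail probabilities on end-component-free fragments). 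Given such a $K$, construct the product MDP $\cM\times\{0,1,\dots,K\}$ where the counter tracks the accumulated trial weight capped at $K$ and resets on entering $\Goal\cup\Fail$; this product has pseudo-polynomial size. On it, the WLF becomes a standard mean-payoff objective with step rewards equal to $\wgt(s,\alpha)$ exactly when the suffix still lies in $\neg\Fail\Until\Goal$, a property that is encoded by a pre-analysis of the product (classifying bottom components and reachability to $\Goal$). Optimal mean-payoff values on MDPs are computable in polynomial time via linear programming, giving an overall pseudo-polynomial procedure.

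The main obstacle will be the saturation-point argument: in \cite{lics2019} the quantity under a $\liminf$ is a frequency of satisfying $a\Until b$, whereas here it is a weighted sum, so the quantitative estimates that bound the loss of switching to a memoryless policy must be redone with weight $w_{\max}$ as an extra parameter. Fortunately, non-negativity preserves the monotonicity that drives the argument, and the only effect is that the bound $K$ scales linearly with $w_{\max}$ instead of being polynomial in $|\cM|$, which is exactly the source of pseudo-polynomiality. Everything else, including the Fatou step and the reduction to a mean-payoff computation on the product, transfers verbatim.
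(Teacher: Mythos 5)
Your proposal follows essentially the same route as the paper: a Fatou's-lemma argument reducing the optimization to finite-memory schedulers (the paper's Lemma~\ref{app_lem:FM-scheduler}, which makes your "attained by finite-memory schedulers" step precise by playing the given scheduler for $k$ steps and then switching to a policy maximizing $\Pr(\neg\Fail\Until\Goal)$ before returning to the reference state), followed by a saturation point $K$ of the form $W\cdot e/\delta$ beyond which one switches to a memoryless scheduler maximizing $\Pr(\neg\Fail\Until\Goal)$, and finally a reduction to mean-payoff on the weight-tracking product MDP. The quantitative comparison you defer — that for trial weight $w\ge K$ the probability loss $\delta$ from a non-optimal action costs at least $K\delta\ge eW$ in received weight, outweighing any gain bounded by $e\cdot W$ — is exactly the estimate the paper's Theorem~\ref{appendix:sat-point} carries out, adapting \cite{lics2019} as you describe.
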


\begin{mycor}\label{cor:frequency-LTL}
Given an MDP $\cM$ and a rational $\vartheta$, it can be checked in pseudo-polynomial time whether $\Pr^{\max}_{\cM} (G^{>\vartheta}_{\inf} (a\Until b))=1$.
\end{mycor}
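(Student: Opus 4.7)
The plan is to reduce the problem to computing maximal weighted long-run frequencies inside the MECs of $\cM$ and then to standard almost-sure reachability. Setting the weight of every state-action pair to $1$ and taking $\Goal = \{s : b \in L(s)\}$ and $\Fail = \{s : \neg a \land \neg b \in L(s)\}$, the property $\neg\Fail \Until \Goal$ coincides with $a \Until b$, so the random variable $\mathit{wlf}$ from Section \ref{sec:long_run_probabilities} equals the pathwise long-run frequency of positions $i$ at which $\pi[i\ldots] \vDash a \Until b$. In particular, a path $\pi$ satisfies $G^{>\vartheta}_{\inf}(a \Until b)$ iff $\mathit{wlf}(\pi) > \vartheta$, and all weights are non-negative so Theorem \ref{thm:comp_wlf} is applicable.

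Next I would exploit end-component analysis. Under any finite-memory scheduler $\sched$, the induced Markov chain has finitely many BSCCs, and within each BSCC $B$ the ergodic theorem gives $\mathit{wlf} \to v_B$ almost surely, where $v_B = \mathbb{E}^{\sched}_B(\mathit{wlf})$. Hence $\Pr^{\sched}(\mathit{wlf} > \vartheta) = 1$ iff every reachable BSCC has $v_B > \vartheta$. Every such BSCC lies inside some MEC of $\cM$, and conversely, inside any MEC $E$ a finite-memory scheduler attaining $v_E^{\max} := \sup_\sched \mathbb{E}^\sched_E(\mathit{wlf})$ creates a single BSCC with value $v_E^{\max}$. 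Calling $E$ \emph{good} iff $v_E^{\max} > \vartheta$, the answer is ``yes'' iff some scheduler almost surely reaches the union of good MECs.

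This yields the algorithm: (i) compute the MECs $E_1, \dots, E_k$ of $\cM$; (ii) for each $E_i$ treated as a sub-MDP, compute $v_{E_i}^{\max}$ in pseudo-polynomial time via Theorem \ref{thm:comp_wlf} (the weights are in $\{0,1\}$, hence non-negative); (iii) mark $E_i$ good iff $v_{E_i}^{\max} > \vartheta$; (iv) test in polynomial time whether $\Pr^{\max}_\cM(\Diamond \bigcup_{E_i \text{ good}} E_i) = 1$ via the standard almost-sure reachability algorithm for MDPs. The overall running time is pseudo-polynomial, matching the stated complexity.

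The main obstacle is to rule out that arbitrary history-dependent randomized schedulers achieve $\Pr^{\sched^\ast}(\mathit{wlf} > \vartheta) = 1$ when no finite-memory scheduler can. I would handle this via the same Fatou's lemma argument already used in the proof sketches of Lemma \ref{lem:encoding} and Theorem \ref{thm:fLTL}: a Fatou-type bound shows that the supremum of $\mathbb{E}^{\sched}(\mathit{wlf})$ over all schedulers is attained by a finite-memory scheduler, and combined with the fact that finite-memory schedulers within a MEC can be concatenated behind an almost-sure-reachability scheduler, any $\Pr = 1$ witness can be replaced by a finite-memory one of the form above. The strict-inequality boundary is the only subtle point: if $v_{E_i}^{\max} = \vartheta$ exactly, no scheduler staying in $E_i$ can achieve $\mathit{wlf} > \vartheta$ almost surely because $\mathit{wlf}$ converges a.s.\ to $v_{E_i}^{\max}$ on the corresponding BSCCs, which is precisely why the good-MEC test uses strict inequality.
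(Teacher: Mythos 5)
Your proposal is correct and follows essentially the same route as the paper's proof: reduce $G^{>\vartheta}_{\inf}(a\Until b)$ to the weighted long-run frequency with all weights set to $1$, compute ${\PMP}^{\max}$ per (maximal) end component via Theorem \ref{thm:comp_wlf}, and test almost-sure reachability of the good end components. Your additional remarks on the Fatou-based reduction to finite-memory schedulers and on the strict-inequality boundary make explicit what the paper's proof sketch leaves implicit, but they do not change the argument.
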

\begin{proof}
The semantics of $G^{>\vartheta}_{\inf} (\neg \Fail \Until \Goal)$ on a path $\pi$ agree with the semantics of $\mathit{wlf}(\pi)>\vartheta$ if all weights are set to $+1$. Now, we can check for each end component $\cE$ of $\cM$ whether ${\PMP}^{\max}_\cE>\vartheta$. If that is the case, there is a finite memory scheduler for $\cE$ inducing only one BSCC  achieving a weighted long-run frequency greater than $\vartheta$. Under this scheduler almost all paths $\pi$ satisfy $\mathit{wlf}(\pi)>\vartheta$. Afterwards, it remains to check whether end components with such a scheduler can be reached with probability $1$ in $\cM$.
\end{proof}

In \cite{ForejtKK15}, the fragment of frequency-LTL in which no until-operators occur in the scope of a globally operator has been studied. The formula in the corollary is hence of the simplest form of frequency-LTL formulas for which no solution to the qualitative model-checking problem has been known. Remarkably, the formula used in the Skolem-hardness proof (Theorem \ref{thm:fLTL}) is only slightly more complicated as it  contains a Boolean combination of constrained reachability properties and atomic propositions under the frequency-globally operator.

\subsection{PSPACE lower bounds}\label{sec:pe_ce}

For the conditional SSPP with non-negative weights \cite{tacas2017} and the long-run probability of constrained reachability properties \cite{lics2019}, PSPACE and NP lower bounds, respectively, are known indicating that the pseudo-polynomial time algorithms for the computation can probably not be significantly improved.
The threshold problem of the conditional SSPP is already PSPACE-hard in acyclic MDPs with non-negative weights as shown in \cite{tacas2017}.
In \cite{fossacs2019}, it has been shown 
that  the threshold problem of the conditional SSPP is polynomial-time reducible to the threshold problem for the partial SSPP. This reduction generates an MDP with  negative weights, even when all weights in the original MDP are non-negative.
Here, we provide a new polynomial reduction for acyclic MDPs from the threshold problem for the conditional SSPP to the threshold problem of the partial SSPP that preserves the non-negativity of weights (see Appendix~\ref{app:PE_CE}). 

\begin{myprop} \label{prop:CE_to_PE}
The threshold problem of the  partial SSPP is PSPACE-hard in acyclic MDPs with non-negative weights. {It is contained in PSPACE for acyclic MDPs with arbitrary integer weights.}
\end{myprop}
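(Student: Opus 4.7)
The plan is to establish the two halves of the proposition separately: PSPACE membership for arbitrary integer weights, and PSPACE-hardness for non-negative weights.

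For the PSPACE upper bound with arbitrary integer weights, I would use the DAG structure directly. Every path from $\sinit$ has length at most $|S|$, so unfolding $\cM$ into the tree of all its path prefixes gives a game-like tree of depth at most $|S|$. On this tree, the optimal partial expectation satisfies the recursion $f(s,w)=\max_{\alpha\in\Act(s)}\sum_{s'}P(s,\alpha,s')\cdot f(s',w+\wgt(s,\alpha))$, with base cases $f(\goal,w)=w$ and $f(t,w)=0$ for other absorbing states $t$. Evaluating this recursion in the standard alternating max/sum fashion requires only polynomial space per stack frame (the current state, the current accumulated weight, and iterators over $\Act(s)$ and over the transition distribution), and the recursion depth is polynomial. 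Comparing the returned value $f(\sinit,0)=\PE^{\max}_\cM$ with $\vartheta$ decides the threshold problem in PSPACE.

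For the PSPACE lower bound under non-negative weights, I would reduce from the threshold problem of the conditional SSPP in acyclic non-negative weighted MDPs, which is PSPACE-hard by \cite{tacas2017}. The starting point is the identity
\[ \CE^\sched_\cM > \vartheta \;\Longleftrightarrow\; \PE^\sched_\cM - \vartheta\cdot\Pr^\sched_\cM(\Diamond\goal) > 0, \]
whose naive realization in \cite{fossacs2019} adds a transition into $\goal$ with weight $-\vartheta$ and thus introduces negative weights. To preserve non-negativity I would instead add a fresh universal terminal $\goal^*$ to $\cM'$, redirect every former non-goal absorbing state $f$ to $\goal^*$ with weight $\vartheta$, and forward the former $\goal$ to $\goal^*$ with weight $0$. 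Because $\cM$ is acyclic, every scheduler reaches $\goal^*$ almost surely, and a direct expansion yields
\[ \PE^\sched_{\cM'} \;=\; \PE^\sched_\cM \;+\; B^\sched \;+\; \vartheta \;-\; \vartheta\cdot\Pr^\sched_\cM(\Diamond\goal), \]
with $B^\sched=\mathbb{E}^\sched_\cM[\wgt^\cM(\zeta)\cdot\mathds{1}_{\neg\Diamond\goal}]\ge 0$ the expected weight accumulated along fail-bound paths. Setting $\vartheta'=\vartheta$ would give the desired equivalence $\PE^\sched_{\cM'}>\vartheta'\iff \CE^\sched_\cM>\vartheta$ as long as $B^\sched=0$ uniformly in $\sched$.

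The main obstacle is a polynomial-time preprocessing of $\cM$ that zeros out the fail-bound contribution $B^\sched$ for every scheduler, without creating negative weights or breaking acyclicity. A naive delay-to-the-terminal strategy would carry the accumulated weight inside the state and hence be only pseudo-polynomial. My plan is instead to exploit the topological order of $\cM$ and redistribute each transition weight through a small branching gadget that channels the weight onto goal-leading continuations while leaving fail-leading continuations weight-free, in such a way that the joint distribution of $\Pr^\sched_\cM(\Diamond\goal)$ and $\PE^\sched_\cM$ over schedulers is preserved. Acyclicity ensures that the splitting stays of polynomial size and that all new weights remain non-negative. Once $B^\sched\equiv 0$ is achieved, the redirection argument above establishes the equivalence with $\vartheta'=\vartheta$, transferring the PSPACE-hardness from $\CE$ to $\PE$.
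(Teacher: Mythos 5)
Your PSPACE membership argument is fine and is, if anything, more self-contained than the paper's, which simply invokes the polynomial inter-reducibility with the conditional SSPP and the known PSPACE membership of the latter for acyclic MDPs; the only point to check in your version is that all intermediate values of the recursion are rationals over the common denominator $m$ (the product of the local denominators of the transition probabilities) and hence of polynomial bit length, which holds.

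The hardness half has a genuine gap. Your reduction stands or falls with the preprocessing step that forces $B^\sched=\mathbb{E}^\sched_\cM[\wgt(\zeta)\cdot\mathds{1}_{\neg\Diamond\goal}]$ to vanish for every scheduler, and the ``branching gadget that channels the weight onto goal-leading continuations'' is never constructed. I do not believe it can be: whether a continuation of the current prefix reaches $\goal$ or $\fail$ depends on future coin tosses and future scheduler choices, so at the moment a transition weight is collected there is no local information distinguishing goal-leading from fail-leading histories. Any exact fix either carries the accumulated weight in the state (pseudo-polynomial, as you note) or would have to calibrate branching probabilities against $\Pr^{\sched}(\Diamond\goal\mid\text{history})$, which is scheduler-dependent and unavailable to a static construction. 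The paper's proof does not zero out $B^\sched$; it tolerates it. It adds from $\fail$ a single transition to $\goal$ with a \emph{tiny} probability $p$ and a \emph{huge} weight $R$ chosen so that $pR=\vartheta$: the intended term $\vartheta(1-y)$ (with $y=\Pr^\sched_\cM(\Diamond\goal)$) is then added exactly, while the spurious contribution of the weight already accumulated on fail-bound paths enters only with the factor $p$ and is bounded by $pw=\delta/2$, where $w$ bounds path weights and $\delta=1/(bm)$ is the granularity of all achievable values of $\PE^\sched_\cM$ and of $\vartheta\cdot\Pr^\sched_\cM(\Diamond\goal)$ under deterministic schedulers. This discreteness then separates the two cases via the thresholds $\vartheta+\delta$ and $\vartheta+\delta/2$. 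That $\delta$-granularity argument is the missing idea; without it, or an actual construction of your redistribution gadget, the reduction is incomplete.
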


In an acyclic MDP, we can add intermediate states on transitions such that all paths have the same length $\ell$. If we additionally add transitions form $\goal$ and $\fail$ back to the initial state, the maximal weighted long-run frequency is just the maximal partial expectation divided by $\ell$. This allows us to conclude:

\begin{myprop}\label{prop:PE_to_WLF}
The threshold problem for weighted long-run frequencies, ``Does $\PMP^{\max}_\cM\bowtie \vartheta$ hold?'', in MDPs with non-negative weights is PSPACE-hard.
\end{myprop}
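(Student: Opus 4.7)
The plan is a polynomial-time reduction from the threshold problem for the partial SSPP on acyclic MDPs with non-negative weights, which is PSPACE-hard by Proposition~\ref{prop:CE_to_PE}. Given such an MDP $\cM$ with initial state $\sinit$ and terminal states $\goal,\fail$, I would first normalise path lengths by a standard product construction with a ``steps-so-far'' counter: the state space becomes $S\times\{0,1,\dots,L\}$, where $L$ is the length of a longest path from $\sinit$ in $\cM$, and the terminal states at intermediate layers are equipped with zero-weight self-steps that only advance the counter. This yields an acyclic MDP $\cM'$ of polynomial size in which every path from $(\sinit,0)$ to a terminal in layer $L$ has length exactly $L$, and $\PE^{\max}_{\cM'}=\PE^{\max}_\cM$. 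I then add a single zero-weight loop-back action at $(\goal,L)$ and $(\fail,L)$ returning to $(\sinit,0)$, obtaining an MDP $\cM''$ in which every \emph{epoch} between successive visits to $(\sinit,0)$ has the common length $\ell=L+1$ and faithfully simulates one run of $\cM$ followed by the reset step.

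The target identity is $\PMP^{\max}_{\cM''}=\PE^{\max}_\cM/\ell$, from which PSPACE-hardness of the threshold problem follows by scaling the threshold. The structural observation driving the identity is that within any epoch the first state from $\Goal$ or $\Fail$ encountered after any position is the terminal state of that very epoch. Hence $\mathds{1}_{\pi[i\dots]\vDash \neg\Fail\Until\Goal}$ is constantly $1$ at every position of an epoch terminating in $\goal$ and constantly $0$ at every position of an epoch terminating in $\fail$, so the total weighted contribution of one epoch equals $\wgt(\pi)\cdot\mathds{1}_{\pi\vDash\Diamond\goal}$ evaluated over the run $\pi$ of $\cM$ realised during that epoch.

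For the $\geq$ direction, I would lift a (near-)optimal finite-memory scheduler $\sched^*$ for the partial SSPP on $\cM$ to $\cM''$ by resetting its memory at every visit to $(\sinit,0)$. The induced Markov chain is irreducible and positive recurrent, successive epochs are i.i.d., and the strong law of large numbers yields $\mathit{wlf}(\pi)=\PE^{\sched^*}_\cM/\ell$ almost surely. For the $\leq$ direction, under any scheduler on $\cM''$ the expected weighted contribution of a single epoch, conditioned on the history at its start, is at most $\PE^{\max}_\cM$, because every epoch starts in the fixed state $(\sinit,0)$ and simulates a run of $\cM$; hence the expected Cesàro average over $n\ell$ steps is at most $\PE^{\max}_\cM/\ell$. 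Since weights are non-negative, Fatou's lemma lets me exchange expectation and $\liminf$ to conclude $\PMP^{\sched}_{\cM''}\le \PE^{\max}_\cM/\ell$.

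The main obstacle I anticipate is the upper bound: even though every epoch starts at the same state, a scheduler on $\cM''$ may use past-epoch information to shape later epochs, so the per-epoch bound has to hold uniformly over histories. This is precisely why the bound is phrased in terms of $\PE^{\max}_\cM$, a supremum over \emph{all} schedulers of $\cM$, and Fatou's lemma is the standard device for translating the resulting finite-horizon expectation bound into the $\liminf$ form appearing in the definition of $\mathit{wlf}$. Once the identity is established, the threshold query $\PE^{\max}_\cM>\vartheta$ is equivalent to $\PMP^{\max}_{\cM''}>\vartheta/\ell$, completing the reduction.
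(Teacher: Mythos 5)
Your proposal is correct and follows essentially the same route as the paper's proof: a reduction from the partial SSPP in acyclic MDPs with non-negative weights (PSPACE-hard by Proposition~\ref{prop:CE_to_PE}), padding all runs to a common length, looping $\goal$ and $\fail$ back to the initial state, and using the identity $\PMP^{\max} = \PE^{\max}/\ell$ with the epoch length $\ell$ as the scaling factor. The paper states this identity without justification, whereas you supply the missing details (the constancy of the until-indicator within an epoch, the SLLN lower bound, and the Fatou-based upper bound over history-dependent schedulers), all of which are sound.
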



\section{Conclusion} \label{sec:conclusion}

We identified a variety of optimization problems -- some of which seemed rather unrelated on first sight -- with a Skolem-hard threshold problem on MDPs.
The results show that an algorithm for the exact solution to these optimization problems would imply a major breakthrough. For the partial and conditional SSPP, however, approximation algorithms were provided in \cite{fossacs2019}.   Investigating the possibility to approximate optimal values might lead to  algorithms useful in practice  for the other objectives studied here.
Further,  the problems  have a pseudo-polynomial solution under  natural restrictions. The key result, the existence of a saturation point, has been established  in the setting of stochastic multiplayer games for partial expectations \cite{chen2013}. This raises the question to which extend the saturation point results for the other problems can be transferred to stochastic multiplayer games.

To the best of our knowledge, the  conditional value-at-risk for accumulated weights has not been addressed before. While we showed Skolem-hardness in the general setting,  the computation of the optimal value is possible in exponential time in the setting of non-negative weights. Studying 
lower bounds for the complexity of the threshold problem and the combination of constraints on the expected accumulated weight before reaching the goal, the value-at-risk, and the conditional value-at-risk in this setting are left as future work.


\bibliographystyle{abbrv}
\bibliography{references}



\newpage
\appendix
\noindent
\begin{Huge}\bf Appendix \end{Huge}


\section{Skolem-hardness: partial and conditional SSPP} \label{app:hardness_threshold_PE}

\begin{figure}[ht]
\begin{center}
\resizebox{1\textwidth}{!}{%

\begin{tikzpicture}[scale=1,auto,node distance=8mm,>=latex]
    \tikzstyle{round}=[thick,draw=black,circle]

    \node[round,minimum size=30pt] (t) {$t$};
    \node[round,below=15mm of t, minimum size=30pt] (t1) {$t_1$};
    \node[round,below=10mm of t1, minimum size=30pt] (t2) {$t_2$};
    
    \node[round, above=15mm of t,xshift=29mm ,minimum size=30pt] (goal) {$\goal$};
        
    \node[round,right=50mm of t,minimum size=30pt] (s) {$s$};
    \node[round,below=15mm of s, minimum size=30pt] (s1) {$s_1$};
    \node[round,below=10mm of s1, minimum size=30pt] (s2) {$s_2$};
    
\node[round,above=15mm of t, minimum size=30pt] (xj) {$x_j$};
    \node[round,above=15mm of goal, minimum size=30pt] (fail1) {$\fail$};

    \node[round,above=15mm of s, minimum size=30pt] (yj) {$y_j$};

     \node[round,below=80mm of goal, minimum size=30pt] (c) {$c$};
          \node[round,below=15mm of c, minimum size=30pt] (sinit) {$\sinit$};

  \draw[color=black , very thick,->] (t) edge  node [ very near start, anchor=center] (h1) {} node [pos=0.1,right=5pt] {$1-|\alpha_1|-|\alpha_2|$} (goal) ;
  \draw[color=black , very thick,->] (t)  edge node [near start, anchor=center] (h2) {} node [pos=0.7,right=2pt] {$|\alpha_1|$} (t1) ;
  \draw[color=black , very thick,->] (t)  edge node [pos=0.2,right=5pt] {$|\alpha_2|$} (s2) ;
  \draw[color=black , very thick] (h1.center) edge [bend left=55] node [pos=0.25,right=2pt] {$\gamma$} (h2.center);
  
 \draw[color=black , very thick, ->] (t1) edge [bend left=25] node [pos=0.1,left=2pt] {$\wgt:-1$} (t);
  \draw[color=black , very thick, ->] (t2) edge [bend left=75] node [pos=0.1,left=2pt] {$\wgt:-2$} (t);

  \draw[color=black , very thick,->] (s) edge  node [ very near start, anchor=center] (g1) {} node [pos=0.4,left=5pt] {$1-|\alpha_1|-|\alpha_2|$} (goal) ;
  \draw[color=black , very thick,->] (s)  edge node [near start, anchor=center] (g2) {} node [pos=0.7,left=2pt] {$|\alpha_1|$} (s1) ;
  \draw[color=black , very thick,->] (s)  edge node [pos=0.2,left=5pt] {$|\alpha_2|$} (t2) ;
  \draw[color=black , very thick] (g1.center) edge [bend right=55] node [pos=0.25,left=2pt] {$\delta$} (g2.center);
  
 \draw[color=black , very thick, ->] (s1) edge [bend right=25] node [pos=0.1,right=2pt] {$\wgt:-1$} (s);
  \draw[color=black , very thick, ->] (s2) edge [bend right=75] node [pos=0.1,right=2pt] {$\wgt:-2$} (s);  

  \draw[color=black , very thick,->] (xj) edge  node [ very near start, anchor=center] (h1) {} node [pos=0.5,left=5pt] {$1-(\frac{1}{2k^{2(k-j)}}+\beta_j)$} (fail1) ;
  \draw[color=black , very thick,->] (xj)  edge node [very near start, anchor=center] (h2) {} node [pos=0.4,below=2pt] {$\frac{1}{2k^{2(k-j)}}+\beta_j$} (goal) ;
  \draw[color=black , very thick] (h1.center) edge [bend left=55] node [pos=0.25,above=2pt] {} (h2.center);
  
  \draw[color=black , very thick, ->] (t) edge [bend left=25] node [pos=0.1,left=2pt] {$\gamma_j|\wgt:+k-j$} (xj);

    \draw[color=black , very thick,->] (yj) edge  node [ very near start, anchor=center] (g1) {} node [pos=0.5,right=5pt] {$1-\frac{1}{2k^{2(k-j)}}$} (fail1) ;
  \draw[color=black , very thick,->] (yj)  edge node [very near start, anchor=center] (g2) {} node [pos=0.2,below=2pt] {$\frac{1}{2k^{2(k-j)}}$} (goal) ;
  \draw[color=black , very thick] (g1.center) edge [bend right=55] node [pos=0.25,above=2pt] {} (g2.center);
  
  \draw[color=black , very thick, ->] (s) edge [bend right=25] node [pos=0.1,right=2pt] {$\delta_j|\wgt:+k-j$} (yj);

 \draw[color=black , very thick,->] (sinit) edge  node [  near start, anchor=center] (f1) {} node [pos=0.5,left=5pt] {$\frac{1}{2}$} (c) ;
  \draw[color=black , very thick,->] (sinit)  edge [loop, min distance=20mm, out=60, in=0] node [pos=.1, anchor=center] (f2) {} node [pos=0.4,right=7pt] {$\frac{1}{2}$} (sinit) ;
  \draw[color=black , very thick] (f1.center) edge [bend left=55] node [pos=1,xshift=2pt,above=10pt] {$\wgt: +1$} (f2.center);
  
   \draw[color=black , very thick, ->] (c) edge [loop, in=180, out=180, min distance=60mm] node [pos=0.1,above=5pt] {$\tau$} (t);
   
   \draw[color=black , very thick, ->] (c) edge [loop, in=0, out=0, min distance=60mm] node [pos=0.1,above=5pt] {$\sigma$} (s);

\end{tikzpicture}

 }
   \end{center}

\caption{The MDP contains the upper part for all $0\leq j \leq k-1$. The middle part is depicted for $k=2$, $\alpha_1\geq 0$, and $\alpha_2<0$.}\label{fig:MDP}
\end{figure}

We provide proofs to Section \ref{sub:Skolem_partial}. 
Given a linear recurrence sequence, we construct the MDP depicted in Figure \ref{fig:MDP} in which the sequence is encoded in terms of optimal partial expectations.
So let $k$ be a natural number and let $(u_n)_{n\geq 0}$ be the linear recurrence sequence given by $\alpha_i$ for $1\leq i \leq k$ and $\beta_j$ for $0\leq j \leq k{-}1$ be rational numbers. 
As
$u_{n+k} = \alpha_1 u_{n+k-1} + \dots + \alpha_k u_n $
for all $n$, we see that for any positive $\lambda\in \mathbb{Q}$ the sequence $(v_n)_{n\geq 0}$ defined by $v_n = \lambda^{n+1} u_n$ satisfies
$v_{n+k} = \lambda^1  \alpha_1 v_{n+k-1} + \dots + \lambda^k \alpha_k v_n$ for all $n$. Furthermore, $v_n$ is non-negative if and only if $u_n$ is.
W.l.o.g., we hence can assume that   $\sum_i |\alpha_i|<\frac{1}{4}$ and that $0\leq \beta_j< \frac{1}{4k^{2k+2}}$ for all $j$. 
The necessary rescaling is polynomial: Let $\mu$ be the maximum of $|\alpha_1|,\dots, |\alpha_k|,|\beta_0|,\dots, |\beta_{k{-}1}|$. Then choosing  $\lambda=\frac{1}{\mu\cdot 4k^{2k+2}}$ for the rescaling involves $\lambda^k$ as a factor. But this factor still has a polynomial size binary representation as the original input consists of $2k$ numbers one of which is $\mu$.

For each state $q$ and each integer $w$, let $e(q,w)$ be the optimal partial expectation when starting in state $q$ with accumulated weight $w$. Further, let $d(w)=e(t,w)-e(s,w)$. We use that optimal partial expectations satisfy that $e(q,w)=\sum_r P(q,\alpha,r) e(r,w{+}\wgt(q,\alpha))$ if an optimal scheduler chooses action $\alpha$ in state $q$ when the accumulated weight is $w$.
\begin{mylem} [Lemma \ref{lem:PE_recurrence}]\label{lem_app:PE_recurrence}
 Let $w\in \mathbb{Z}$. If an optimal scheduler chooses action $\gamma$ in $t$ and $\delta$ in $s$ if the accumulated weight is $w$, then 
$d(w)=\alpha_1 d(w-1) + \dots \alpha_k d(w-k)$.
\end{mylem}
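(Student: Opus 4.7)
The plan is to exploit the Bellman-style recursion satisfied by optimal partial expectations together with the symmetric, sign-encoding construction of the gadget. Since the states $t_1,\dots,t_k$ and $s_1,\dots,s_k$ have no outgoing transitions other than the one deterministic action back to $t$, resp.\ $s$, with weight $-i$, the first observation is that for every accumulated weight $w$ one has
\[
e(t_i,w)=e(t,w-i) \qquad \text{and} \qquad e(s_i,w)=e(s,w-i)
\]
for all $1\le i\le k$. This reduces everything to computing $e(t,\cdot)$ and $e(s,\cdot)$.

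Next I would write out the one-step decomposition at $t$ and $s$ under the hypothesis of the lemma. Since action $\gamma$ is assumed optimal in $t$ at weight $w$, has weight $0$, and moves to $t_i$ with probability $\alpha_i$ when $\alpha_i>0$, to $s_i$ with probability $|\alpha_i|$ when $\alpha_i<0$, and to $\goal$ with probability $1-\sum_i|\alpha_i|$ (where it contributes $w$), the optimality of $\gamma$ yields
\[
e(t,w)=\sum_{\alpha_i>0}\alpha_i\, e(t,w-i)+\sum_{\alpha_i<0}|\alpha_i|\, e(s,w-i)+\Bigl(1-\sum_i|\alpha_i|\Bigr)\, w.
\]
Because $\delta$ is structurally the mirror image of $\gamma$ (positive $\alpha_i$ enters the $s$-branch, negative $\alpha_i$ enters the $t$-branch), the analogous identity at $s$ reads
\[
e(s,w)=\sum_{\alpha_i>0}\alpha_i\, e(s,w-i)+\sum_{\alpha_i<0}|\alpha_i|\, e(t,w-i)+\Bigl(1-\sum_i|\alpha_i|\Bigr)\, w.
\]

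Subtracting the two identities, the constant $\goal$-contributions cancel, and after grouping terms by index $i$ the positive-$\alpha_i$ terms produce $\alpha_i\, d(w-i)$ directly, while the negative-$\alpha_i$ terms produce $|\alpha_i|\,\bigl(e(s,w-i)-e(t,w-i)\bigr)=-|\alpha_i|\, d(w-i)=\alpha_i\, d(w-i)$. Combining both cases yields
\[
d(w)=\sum_{i=1}^k \alpha_i\, d(w-i)=\alpha_1 d(w-1)+\dots+\alpha_k d(w-k),
\]
which is the claim. No step is really an obstacle here; the only subtle point worth emphasising is that the construction is designed precisely so that the sign flip when crossing from the $t$-side to the $s$-side (caused by negative coefficients steering into the ``wrong'' column) is exactly compensated by the factor $-1$ coming from the definition $d=e(t,\cdot)-e(s,\cdot)$, which is what allows $|\alpha_i|$ to be replaced by $\alpha_i$ uniformly in the final sum.
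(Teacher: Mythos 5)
Your proposal is correct and follows essentially the same route as the paper: write the one-step Bellman decomposition of $e(t,w)$ and $e(s,w)$ under the assumed optimality of $\gamma$ and $\delta$, note that the $\goal$-contribution $(1-\sum_i|\alpha_i|)w$ cancels on subtraction, and observe that the cross-wiring for negative coefficients turns $|\alpha_i|\bigl(e(s,w-i)-e(t,w-i)\bigr)$ into $\alpha_i\,d(w-i)$. The only addition is your explicit remark that $e(t_i,w)=e(t,w-i)$, which the paper leaves implicit.
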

\begin{proof}
If an optimal scheduler chooses action $\gamma$ and $\delta$, respectively, if the accumulated weight is $w$, we get
\begin{align*}
e(t,w)&=(1{-}|\alpha_1|{-}\ldots{-}|\alpha_k|)w+ \sum_{i:\alpha_i\geq 0} \alpha_i e(t,w-i) + \sum_{i:\alpha_i< 0} -\alpha_i e(s,w-i) \text{ and }\\
e(s,w)&=(1{-}|\alpha_1|{-}\ldots{-}|\alpha_k|)w+ \sum_{i:\alpha_i\geq 0} \alpha_i e(s,w-i) + \sum_{i:\alpha_i< 0} -\alpha_i e(t,w-i) .
\end{align*}
Now, it is straight-forward to compute
\begin{align*}
d(w) &= e(t,w)-e(s,w) \\
&= \sum_{i:\alpha_i\geq 0} \alpha_i (e(t,w{-}i)-e(s,w{-}i)) + \sum_{i:\alpha_i< 0} -\alpha_i (e(s,w{-}i)-e(t,w{-}i))\\
&= \sum_{1\leq i\leq k} \alpha_i d(w-i). \qedhere
\end{align*}
\end{proof}

\begin{mylem}[Lemma \ref{lem:optimal_scheduler}]\label{lem_app:optimal_scheduler}
Let $0\leq j \leq k-1$. Starting with accumulated weight $-(k-1)+j$ in state $t$, the action $\gamma_j$ maximizes the partial expectation. Likewise, $\delta_j$ is optimal when starting in $s$ with weight $-(k-1)+j$. For positive starting weight, $\gamma$ and $\delta$ are optimal.
\end{mylem}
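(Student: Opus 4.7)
My plan is to directly compare the partial expectations of every available action in state $t$; the analysis at $s$ is symmetric. The starting observation is that under $\gamma_j$ from $t$ with any starting weight $w$, the unique trajectory to $\goal$ passes through $x_j$ with accumulated weight $w+k-j$ and success probability $\frac{1}{2k^{2(k-j)}}+\beta_j$, so its partial expectation is exactly $(w+k-j)\big(\frac{1}{2k^{2(k-j)}}+\beta_j\big)$.

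For the first claim, fix $w=-(k-1)+j$, so that $\gamma_j$ yields partial expectation $\frac{1}{2k^{2(k-j)}}+\beta_j>0$. For $j'>j$ the arrival weight $1+j-j'$ at $x_{j'}$ is non-positive, so $\gamma_{j'}$ has non-positive value. For $j'<j$ the arrival weight lies in $[2,k]$ but the reach probability is at most $\frac{1}{2k^{2(k-j')}}+\beta_{j'}\leq\frac{1}{k^2}\cdot\frac{1}{2k^{2(k-j)}}+\frac{1}{4k^{2k+2}}$; multiplying by the arrival weight (at most $k$) and using the assumed smallness of $\sum|\alpha_i|$ and of all $\beta_j$ gives a value strictly below $\frac{1}{2k^{2(k-j)}}+\beta_j$. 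The subtle case is $\gamma$: every transition inside the $\gamma,\delta$-subgadget has non-positive weight, so any trajectory that reaches $\goal$ through continued $\gamma,\delta$ play has accumulated weight $\le w\le 0$ and non-positive contribution. A positive partial expectation can arise only if the scheduler later switches to some $\gamma_{j''}$ or $\delta_{j''}$ at a state with weight $w'\le w$, but the best one-step value of any such switch is bounded by $f(w'):=\max_{j''}(w'+k-j'')^+\big(\frac{1}{2k^{2(k-j''))}}+\beta_{j''}\big)$, and $f$ is monotone non-decreasing in its argument, so the value available after a detour via $\gamma$ is at most $f(w)=\frac{1}{2k^{2(k-j)}}+\beta_j$, i.e.\ at most what $\gamma_j$ already achieves at $w$.

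For the second claim, fix $w>0$. Playing $\gamma$ in $t$ and continuing with $\gamma,\delta$ thereafter reaches $\goal$ almost surely with expected accumulated weight at least $w-L$, where $L=\sum_i i|\alpha_i|/(1-\sum_i|\alpha_i|)\le k/3$ is the expected weight loss of the induced random walk on $\{t,s,t_1,\dots,s_k\}$. Whenever the running weight drops to some $w-i\in[-(k-1),0]$, the scheduler can switch to a suitably chosen $\gamma_{j''}/\delta_{j''}$ (namely, one landing at $x_{j''}$ or $y_{j''}$ with weight $+1$) to guarantee a non-negative rather than negative tail contribution; consequently the value under $\gamma$ is at least $(1-\sum|\alpha_i|)w\ge\tfrac{3}{4}w$. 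On the other hand, every $\gamma_{j'}$ contributes at most $(w+k)\big(\tfrac{1}{2k^2}+\tfrac{1}{4k^{2k+2}}\big)$, and an elementary rearrangement shows that this is strictly less than $\tfrac{3}{4}w$ for every integer $w\ge 1$ and $k\ge 2$. Hence $\gamma$ dominates.

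The main obstacle is the comparison with $\gamma$ in the first claim, because the scheduler may interleave $\gamma,\delta$ steps with later $\gamma_{j''},\delta_{j''}$ choices, so one cannot merely compare the immediate step. I would handle this by a value-iteration induction establishing the uniform bound $e(t,w'),e(s,w')\le f(w')$ for $w'\le 0$: in the base case $w'\le -k$ all paths to $\goal$ have non-positive weight, and the inductive step bounds the $\gamma$-step by $(1-\sum|\alpha_i|)w'+f(w')\sum|\alpha_i|\le f(w')$ (using $w'\le 0$) while every $\gamma_{j''}$-step is bounded by $f(w')$ by definition. Monotonicity of $f$ then caps the value achievable after any $\gamma$-detour by $f(w)=\frac{1}{2k^{2(k-j)}}+\beta_j$.
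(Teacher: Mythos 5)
Your proposal is correct and follows essentially the same route as the paper's proof: you compute the value of each $\gamma_i$ explicitly as $(w+k-i)\bigl(\tfrac{1}{2k^{2(k-i)}}+\beta_i\bigr)$, split into the cases $i>j$, $i=j$, $i<j$, and compare $\gamma$ against the $\gamma_{j'}$ via the bounds $(1-\sum_i|\alpha_i|)w\ge\tfrac34 w$ versus $(w+k)\bigl(\tfrac{1}{2k^2}+\tfrac{1}{4k^{2k+2}}\bigr)$, exactly as in Appendix~\ref{app:hardness_threshold_PE}. The one place you go beyond the paper is the value-iteration induction $e(t,w'),e(s,w')\le f(w')$ for $w'\le 0$, which rigorously justifies the step the paper dispatches with ``it is easy to see that $\gamma$ or $\delta$ are not optimal'' for non-positive starting weight; that addition is sound and welcome.
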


\begin{proof}
Suppose action $\gamma_i$ is chosen in state $t$ when starting with weight $-(k-1)+j$. Then the partial expectation achieved from this situation is 
\[(1+j-i)(\frac{1}{2k^{2(k-i)}}+\beta_i).\]
For $i>j$ this value is $\leq 0$ and hence $\gamma_i$ is certainly not optimal. For $i=j$, we obtain a partial expectation of
 \[\frac{1}{2k^{2(k-j)}}+\beta_j.\]

For $i<j$, we know that state $x_i$ is reached with weight  $-(k-1)+j+(k-i)=1+j-i\leq k$. Further,  $\beta_i \leq \frac{1}{4k^{2k+2}}$ and  $\frac{1}{2k^{2(k-i)}}\leq \frac{1}{2k^{2(k-j)}\cdot k^2}$. So, the partial expectation obtained via $\gamma_i$ is at most
\[ \frac{k}{2k^{2(k-j)} \cdot k^2}  + \frac{k}{4k^{2k+2}}<\frac{1}{2k^{2(k-j)}}.\]

The argument for state $s$ is the same with $\beta_i=0$ for all $i$.

It is easy to see that for accumulated weight $-(k-1)+j$ with $0\leq j\leq k-1$ actions $\gamma$ or $\delta$ are not optimal in state $t$ or $s$: If $\goal$ is reached immediately, the weight is not positive and otherwise states $t$ or $s$ are reached with lower accumulated weight again. The values $\beta_j$ are chosen small enough such that also a switch from state $t$ to $s$ while accumulating negative weight does not lead to  a higher partial expectation.

For positive accumulated weight $w$, the optimal partial expectation when choosing $\gamma$ first is at least $\frac{3}{4}w$ by construction and the fact that a positive value can be achieved from any possible successor state. Choosing $\gamma_j$ on the other hands results in a partial expectation of at most $(k+w)\cdot ( \frac{1}{4k^{2k+2}} + \frac{1}{2k^2})$ which is easily seen to be less.

\end{proof}

In the proof, we see that the difference between optimal values $d(-(k{-}1){+}j)$ is equal to $\beta_j$, for $0\leq j\leq k-1$ and hence  $d({-}(k{-}1)+n)=u_n$ for all $n$.
So, the scheduler $\sched$ always choosing $\tau$ in $c$ and actions $\gamma, \gamma_0, \dots, \gamma_{k-1}, \delta, \dots$ as described in
 Lemma \ref{lem_app:optimal_scheduler} is optimal iff the given linear recurrence sequence is non-negative.

\begin{mythm}[Theorem \ref{thm:threshold_PE}]\label{app_thm:threshold_PE}
The positivity problem is reducible in polynomial time to the following problem: Given an MDP $\cM$ and a rational $\vartheta$, decide whether ${\PE}^{\max}_{\cM}>\vartheta$.
\end{mythm}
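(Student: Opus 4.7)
The plan is to package the gadget already constructed, together with Lemmas~\ref{lem:PE_recurrence} and~\ref{lem:optimal_scheduler}, into a polynomial-time algorithm that, on input a Positivity instance $(\alpha_1,\ldots,\alpha_k,\beta_0,\ldots,\beta_{k-1})$, outputs a pair $(\cM,\vartheta)$ such that ${\PE}^{\max}_{\cM} > \vartheta$ if and only if the induced linear recurrence sequence $(u_n)_{n\ge 0}$ contains a negative member, which yields the polynomial-time reduction from Positivity (up to complement, which is absorbed by the threshold predicate). First I would apply the polynomial-size rescaling justified at the start of Section~\ref{sub:Skolem_partial} so that $\sum_i|\alpha_i|<1/4$ and $0\le \beta_j < 1/(4k^{2k+2})$ for all $j$; since this rescaling preserves the sign of every $u_n$, it preserves positivity. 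I would then assemble $\cM$ from the recurrence gadget (Figure~\ref{fig:linear_recurrence}), the initial-values gadget (Figure~\ref{fig:initial_values}), and the initial component at $\sinit$ and $c$, and single out the scheduler $\sched$ that always chooses $\tau$ at $c$, uses $\gamma,\delta$ at $t,s$ for positive accumulated weight, and uses $\gamma_j,\delta_j$ whenever the accumulated weight equals $-(k{-}1){+}j$ for some $0\le j\le k{-}1$.

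The next step is to combine the two lemmas to tie $\sched$ to the recurrence. By Lemma~\ref{lem:optimal_scheduler}, for $0\le j\le k{-}1$ the difference $d(-(k{-}1){+}j) := e(t,-(k{-}1){+}j) - e(s,-(k{-}1){+}j)$ equals $\beta_j = u_j$, matching the initial values, and by Lemma~\ref{lem:PE_recurrence} the relation $d(w) = \sum_{i=1}^{k} \alpha_i d(w-i)$ propagates this identification, yielding $d(-(k{-}1){+}n) = u_n$ for all $n\ge 0$. The only remaining non-trivial decision under $\sched$ is at state $c$: with accumulated weight $w\ge 1$ the gain of $\tau$ over $\sigma$ is precisely $d(w) = u_{w+k-1}$. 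Because every positive integer weight is reached at $c$ with positive probability through the geometric loop at $\sinit$, the scheduler $\sched$ is globally optimal iff $d(w)\ge 0$ for every $w\ge 1$; combined with the non-negativity of $\beta_0,\ldots,\beta_{k-1}$ ensured by the rescaling, this is precisely positivity of $(u_n)_{n\ge 0}$. If positivity fails, switching from $\tau$ to $\sigma$ at the first offending weight yields a strictly better scheduler, so ${\PE}^{\max}_{\cM} > {\PE}^{\sched}_{\cM}$ strictly.

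Finally, I would choose the threshold $\vartheta := {\PE}^{\sched}_{\cM}$ and show that it is rational and computable in polynomial time. Although $\sched$ is not memoryless, its weight-dependent decisions all occur within a finite band around weight $0$: for every accumulated weight $w\ge k$ the scheduler acts uniformly using $\gamma$ and $\delta$, so the values $e(t,w)$ and $e(s,w)$ for $w\ge 0$ satisfy a finite-dimensional linear system whose solution I would express as an affine part linear in $w$ plus a bounded correction. Integrating $e(t,w)$ against the geometric distribution of weights at $c$ then yields $\vartheta$ as a polynomial-size rational via the closed form of a geometric sum. Combining the three steps gives the desired reduction. The main obstacle I foresee is the careful bookkeeping needed to show that $\vartheta$ has polynomial bit size: one must set up the correct finite-dimensional representation of the stable behavior of $\sched$ at large weights and verify that each step (solving the linear system, evaluating the geometric sum) stays polynomial. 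Everything else follows directly from the gadget analysis and the two lemmas already proved.
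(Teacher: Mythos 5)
Your proposal is correct and follows essentially the same route as the paper: the same rescaled gadgets, the same scheduler $\sched$, the identification $d(-(k{-}1){+}n)=u_n$ via Lemmas~\ref{lem:PE_recurrence} and~\ref{lem:optimal_scheduler}, the threshold $\vartheta={\PE}^{\sched}_{\cM}$, and the optimality-at-$c$ criterion. The "affine part plus decaying correction and geometric sum" you sketch for computing $\vartheta$ is precisely what the paper carries out explicitly, via the recursion $v_n = Av_{n-1}+na+b$ with $\Vert A\Vert_\infty<1$ and convergent Neumann-type matrix series yielding a polynomial-size rational.
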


\begin{proof}
We will compute the partial expectation of scheduler $\sched$ always choosing $\tau$ in $c$ and actions $\gamma, \gamma_0, \dots, \gamma_{k-1}, \delta, \dots$ as described in
 Lemma \ref{lem_app:optimal_scheduler} in the constructed MDP $\cM$ depicted in Figure \ref{fig:MDP}: The scheduler $\sched$ chooses $\gamma$ and $\delta$, respectively, as long as the accumulated weight is positive.  For an accumulated weight of $-(k-1)+j$ for $0\leq j \leq k-1$, it chooses actions $\gamma_j$ and $\delta_j$, respectively. 

We want to recursively express the partial expectations under $\sched$ starting from $t$ or $s$ with some positive accumulated weight $n\in \mathbb{N}$ which we again denote by $e(t,n)$ and $e(s,n)$, respectively.
In order to do so, we consider the following Markov chain $\cC$ for $n\in \mathbb{N}$ (see Figure \ref{fig:Markov_chain}):

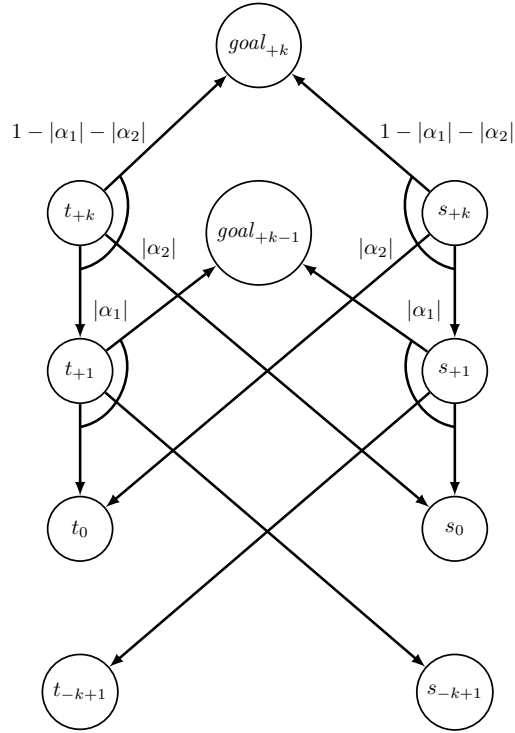
\begin{figure}[ht]
\begin{center}
\resizebox{.5 \textwidth}{!}{
\begin{tikzpicture}[scale=1,auto,node distance=8mm,>=latex]
    \tikzstyle{round}=[thick,draw=black,circle, minimum size= 6em]

    \node[round,minimum size=30pt] (t) {$t_{+k}$};
    \node[round,below=15mm of t, minimum size=30pt] (t1) {$t_{+1}$};
    \node[round,below=15mm of t1, minimum size=30pt] (t2) {$t_{0}$};
     \node[round,below=15mm of t2, minimum size=30pt] (t3) {$t_{-k+1}$};
    
    \node[round, above=15mm of t,xshift=29mm ,minimum size=30pt] (goal) {$\goal_{+k}$};
    \node[round,below=15mm of goal, minimum size=30pt] (goal2) {$\goal_{+k-1}$};
        
    \node[round,right=50mm of t,minimum size=30pt] (s) {$s_{+k}$};
    \node[round,below=15mm of s, minimum size=30pt] (s1) {$s_{+1}$};
    \node[round,below=15mm of s1, minimum size=30pt] (s2) {$s_{0}$};
     \node[round,below=15mm of s2, minimum size=30pt] (s3) {$s_{-k+1}$};

  \draw[color=black , very thick,->] (t) edge  node [ very near start, anchor=center] (h1) {} node [pos=0.5,left=5pt] {$1-|\alpha_1|-|\alpha_2|$} (goal) ;
  \draw[color=black , very thick,->] (t)  edge node [near start, anchor=center] (h2) {} node [pos=0.7,right=2pt] {$|\alpha_1|$} (t1) ;
  \draw[color=black , very thick,->] (t)  edge node [pos=0.05,right=5pt] {$|\alpha_2|$} (s2) ;
  \draw[color=black , very thick] (h1.center) edge [bend left=55] node [pos=0.25,right=2pt] {} (h2.center);
  
    \draw[color=black , very thick,->] (t1) edge  node [ very near start, anchor=center] (h11) {} node [pos=0.5,left=5pt] {} (goal2) ;
  \draw[color=black , very thick,->] (t1)  edge node [near start, anchor=center] (h21) {} node [pos=0.7,right=2pt] {} (t2) ;
  \draw[color=black , very thick,->] (t1)  edge node [pos=0.05,right=5pt] {} (s3) ;
  \draw[color=black , very thick] (h11.center) edge [bend left=55] node [pos=0.25,right=2pt] {} (h21.center);

  \draw[color=black , very thick,->] (s) edge  node [ very near start, anchor=center] (g1) {} node [pos=0.5,right=5pt] {$1-|\alpha_1|-|\alpha_2|$} (goal) ;
  \draw[color=black , very thick,->] (s)  edge node [near start, anchor=center] (g2) {} node [pos=0.7,left=2pt] {$|\alpha_1|$} (s1) ;
  \draw[color=black , very thick,->] (s)  edge node [pos=0.05,left=5pt] {$|\alpha_2|$} (t2) ;
  \draw[color=black , very thick] (g1.center) edge [bend right=55] node [pos=0.25,left=2pt] {} (g2.center);
  
      \draw[color=black , very thick,->] (s1) edge  node [ very near start, anchor=center] (g11) {} node [pos=0.5,left=5pt] {} (goal2) ;
  \draw[color=black , very thick,->] (s1)  edge node [near start, anchor=center] (g21) {} node [pos=0.7,right=2pt] {} (s2) ;
  \draw[color=black , very thick,->] (s1)  edge node [pos=0.05,right=5pt] {} (t3) ;
  \draw[color=black , very thick] (g11.center) edge [bend right=55] node [pos=0.25,right=2pt] {} (g21.center);

\end{tikzpicture}
}
\end{center}

\caption{The Markov chain $\cC$ depicted for $k=2$ with $\alpha_1\geq 0$ and $\alpha_2<0$.}\label{fig:Markov_chain}
\end{figure}

The Markov chain $\cC$ has $5k$ states named $t_{-k+1}$, \dots, $t_{+k}$, $s_{-k+1}$, \dots, $s_{+k}$, and $\goal_{+1}$, \dots, $\goal_{+k}$. 
States $t_{-k+1}$, \dots, $t_{0}$, $s_{-k+1}$, \dots, $s_{0}$, and $\goal_{+1}$, \dots, $\goal_{+k}$ are absorbing.
 For $0<i,j\leq k$, there are transitions from $t_{+i}$ to $t_{+i-j}$ with probability $\alpha_j$ if $\alpha_j>0$, to $s_{+i-j}$ 
 with probability $|\alpha_j|$ if $\alpha_j<0$, and to $\goal_{+i}$ with probability $1-|\alpha_1|-\ldots-|\alpha_k|$. Transitions from $s_{+i}$ are defined analogously.
 
 The idea behind this Markov chain is that the reachability probabilities describe how, for arbitrary $n\in \mathbb{N}$ and $1\leq i \leq k$, 
 the values $e(t,nk+i)$ and $e(s,nk+i)$ depend on $n$ and  the values $e(t,(n-1)k+j)$ and $e(s,(n-1)k+j)$ for $1\leq j \leq k$. 
 The transitions in $\cC$ behave as $\gamma$ and $\delta$ in $\cM$, but the decrease in the accumulated weight is explicitly encoded into the state space. 
 Namely, for  $n\in\mathbb{N}$ and $0<i\leq k$, we have
 \begin{align}\label{eqn:cC}
  e(t,nk+i) = \sum_{j=1}^k &\big(\Pr_{\cC, t_{+i}} (\Diamond t_{-k+j})\cdot e(t,(n{-}1)k+j) \nonumber \\
  &+\Pr_{\cC, t_{+i}} (\Diamond s_{-k+j})\cdot e(s,(n{-}1)k+j) \big) \\
  + \sum_{j=1}^k & \Pr_{\cC, t_{+i}} (\Diamond \goal_{+j}) \cdot (nk+j) \nonumber
\end{align}
and analogously for $ e(s,nk+i)$.
We now group the optimal values together in the following vectors
\[v_n = (e(t,nk+k) , e(t,nk+k-1), \ldots, e(t,nk+1), e(s,nk+k) , \ldots, e(s,nk+1))^t \]
for $n\in \mathbb{N}$. In other words, this vector contains the optimal values for the partial expectation when starting in $t$ or $s$ with an accumulated weight from  $\{nk+1,\dots,nk+k\}$. 
Further, we define the vector containing the optimal values for weights in $\{-k+1,\dots,0\}$ which are the least values of accumulated weight reachable under scheduler $\sched$.
\[v_{-1} = (e(t,0) , e(t,-1), \ldots, e(t,-k+1), e(s,0) , e(s,-1), \ldots, e(s,-k+1))^t. \] 
As we have seen, these values are given as follows:
\[e(t,-k+1+j)=\frac{1}{2k^{2(k-j)}}+\beta_j \text{ and }e(s,-k+1+j)=\frac{1}{2k^{2(k-j)}}\] for $0\leq j \leq k-1$.

As the reachability probabilities in $\cC$ are rational and computable in polynomial time, we conclude from \eqref{eqn:cC} 
that there are a matrix $A\in \mathbb{Q}^{2k\times2k}$, and vectors $a$ and $b$ in $\mathbb{Q}^{2k}$ computable in polynomial time such that 
\[ v_n = A v_{n-1} + n a + b,\]
for all $n\in\mathbb{N}$. We claim that the following explicit representation for $n\geq -1$ satisfies this recursion:
\[ v_n = A^{n+1} v_{-1} + \sum_{i=0}^n (n-i) A^i a + \sum_{i=0}^n A^i b.\]
We show this by induction:
Clearly, this representation yields the correct value for $v_{-1}$. So, assume $v_n = A^{n+1} v_{-1} + \sum_{i=0}^n (n-i) A^i a + \sum_{i=0}^n A^i b$. Then,
\begin{align*}
v_{n+1} & = A (A^{n+1} v_{-1} +  \sum_{i=0}^n (n-i) A^i a +  \sum_{i=0}^n A^i b) +(n+1) a + b \\
& = A^{n+2} v_{-1} + \left(\sum_{i=0}^{n} (n-i) A^{i+1} a\right) + (n+1) A^0 a + \left(\sum_{i=1}^{n+1} A^i b \right) + A^0 b \\
& = A^{n+2} v_{-1} + \sum_{i=0}^{n+1} (n+1-i) A^i a + \sum_{i=0}^{n+1} A^i b.
\end{align*}
So, we have an explicit representation for $v_n$.
The value we are interested in is 
\[\PE^\sched_{\cM} = \sum_{\ell=1}^\infty (1/2)^\ell e(t,\ell).\]

\noindent
Let $c=(\frac{1}{2^k} , \frac{1}{2^{k-1}} , \dots, \frac{1}{2^1} , 0 ,\dots, 0)$. 
Then, 
\[(\frac{1}{2^k})^n c\cdot v_n= \sum_{i=1}^{k}  \frac{1}{2^{nk+i}} e(t,nk+i).\]

\noindent
Hence, we can write 
\begin{align*}
&\PE^\sched_{\cM} = \sum_{n=0}^\infty (\frac{1}{2^k})^n c \cdot v_n = c \cdot \sum_{n=0}^\infty (\frac{1}{2^k})^n v_n \\
= {} & c \cdot \sum_{n=0}^\infty (\frac{1}{2^k})^n (A^{n+1} v_{-1} + \sum_{i=0}^n (n-i) A^i a + \sum_{i=0}^n A^i b) \\
= {} & c \cdot \big(  (\sum_{n=0}^\infty (\frac{1}{2^k})^n A^{n+1}) v_{-1}  + (\sum_{n=0}^\infty (\frac{1}{2^k})^n \sum_{i=0}^n (n-i) A^i) a + (\sum_{n=0}^\infty (\frac{1}{2^k})^n \sum_{i=0}^n A^i) b              \big).
\end{align*}
We claim that all of the matrix series involved converge to rational matrices. A key observation is that the maximal row sum in $A$ is at most $|\alpha_1|{+}\ldots{+}|\alpha_k|<1$ because the rows of the matrix contain exactly the probabilities to reach $t_0$, \dots $t_{-k+1}$, $s_0$, \dots, and  $s_{-k+1}$ from a state $t_{+i}$ or $s_{+i}$ in $\cC$ for $1\leq i \leq k$. But the probability to reach $\goal_{+i}$ from these states is already $1{-}|\alpha_1|{-}\ldots{-}|\alpha_k|$. Hence, $\Vert A \Vert_{\infty}$, the operator norm induced by the maximum norm $\Vert \cdot \Vert_\infty$, which equals $\max_{i} \sum_{j=1}^{2k} |A_{ij}|$, is less than $1$.

So, of course also $\Vert \frac{1}{2^k} A \Vert_{\infty}<1$ and hence the  Neumann series $\sum_{n=0}^\infty (\frac{1}{2^k} A)^{n}$ converges to $(I_{2k}-\frac{1}{2^k} A)^{-1}$ where $I_{2k}$ is the identity matrix of size $2k{\times }2k$. So,
\begin{align}
 \sum_{n=0}^\infty (\frac{1}{2^k})^n A^{n+1} = A \sum_{n=0}^\infty (\frac{1}{2^k} A)^{n} = A(I_{2k}-\frac{1}{2^k} A)^{-1}.
\end{align}
Note that $\Vert A \Vert_\infty <1$ also implies that $I_{2k}-A$ is invertible. We observe that for all $n$,
\[\sum_{i=0}^n A^i = (I_{2k}-A)^{-1} (I_{2k} - A^{n+1}) \]
which is shown by straight-forward induction.
Therefore,
\begin{align*}
\sum_{n=0}^\infty (\frac{1}{2^k})^n \sum_{i=0}^n A^i & =  (I_{2k}-A)^{-1} \left( \sum_{n=0}^\infty (\frac{1}{2^k})^n I_{2k} - A  \sum_{n=0}^\infty (\frac{1}{2^k}A)^n   \right) \\
 & =  (I_{2k}-A)^{-1} \left(\frac{2^k}{2^k{-}1} I_{2k} - A(I_{2k}-\frac{1}{2^k} A)^{-1}\right).
\end{align*}
Finally, we show by induction that \[\sum_{i=0}^n (n-i) A^i = (I_{2k} - A)^{-2} (A^{n+1}-A+n(I_{2k}-A)).\]
This is equivalent to \[ (I_{2k} - A)^2 \sum_{i=0}^n (n-i) A^i = A^{n+1}-A+n(I_{2k}-A).\]
For $n=0$, both sides evaluate to $0$. So, we assume the claim holds for $n$.
\begin{align*}
(I_{2k} - A)^2 &\sum_{i=0}^{n+1} (n+1-i) A^i = (I_{2k} - A)^2 \sum_{i=0}^{n} (n-i) A^i + (I_{2k} - A)^2 \sum_{i=0}^{n}A^i \\
 \overset{\mathrm{IH}}{=}{} &  A^{n+1}-A+n(I_{2k}-A) + (I_{2k} - A)^2 \sum_{i=0}^{n}A^i \\
 = {}&  A-A^{n+1}+n(I_{2k}-A) + (I_{2k} - A)^2 (I_{2k}-A)^{-1} (I_{2k} - A^{n+1})\\
 = {}& A-A^{n+1}+n(I_{2k}-A) + I_{2k}-A - A^{n+1} + A^{n+2} \\
 = {}& A^{n+2}- A + (n+1) (I_{2k}-A) .
\end{align*}
The remaining series is the following:
\begin{align*}
&\sum_{n=0}^\infty (\frac{1}{2^k})^n \sum_{i=0}^n (n-i) A^i  \\
={} &  \sum_{n=0}^\infty (\frac{1}{2^k})^n (I_{2k} - A)^{-2} (A^{n+1}-A+n(I_{2k}-A)) \\
={}& (I_{2k} - A)^{-2} \left(\sum_{n=0}^\infty (\frac{1}{2^k})^n A^{n+1} -  \sum_{n=0}^\infty (\frac{1}{2^k})^nA + \sum_{n=0}^\infty (\frac{1}{2^k})^n n(I_{2k}-A) \right) \\
={}&(I_{2k} - A)^{-2} \left(  A(I_{2k}-\frac{1}{2^k} A)^{-1} - \frac{2^k}{2^k{-}1}A+ \frac{2^k}{(2^k{-}1)^2}(I_{2k}-A) \right) .
\end{align*}

We conclude that all expressions in the representation of ${\PE}^\sched_{\cM}$ above are rational and computable in polynomial time. As we have seen, the originally given linear recurrence sequence is non-negative if and only if ${\PE}^{\max}_{\cM} \leq {\PE}^\sched_{\cM}$ for the MDP $\cM$ constructed from the linear recurrence sequence in polynomial time in the previous sections.
\end{proof}

To conclude the Skolem-hardness of the threshold problem of the conditional SSPP, we need the following lemma:

\begin{mylem}\label{lem_app:pe_ce_inter-reducible}
The threshold problems for the partial SSPP is polynomial-time reducible to the threshold problem of the  conditional SSPP. 
\end{mylem}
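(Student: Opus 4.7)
My plan is to reduce the PE threshold problem to the CE threshold problem with a fixed threshold of $0$ on the CE side, via a simple probabilistic ``splitter'' gadget prepended to $\cM$. I first handle the rationality of the threshold as a preprocessing step: if $\vartheta = p/q$ with $q > 0$, I scale every weight in $\cM$ by $q$ to obtain $\cM'$ with integer weights; since $\PE^\sched_{\cM'} = q\cdot\PE^\sched_\cM$, the original question becomes ``$\PE^{\max}_{\cM'} > p$?'' with integer threshold $p$. This step is polynomial, so without loss of generality I may assume $\vartheta$ is already an integer.

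Next, I construct $\cN$ by adjoining two fresh states $\sinit_\cN$ and $s^{\ast}$ to $\cM$: at $\sinit_\cN$ a single enabled action of weight $0$ transitions with probability $1/2$ to the old initial state $\sinit$ and with probability $1/2$ to $s^{\ast}$; at $s^{\ast}$ a single enabled action of weight $-\vartheta$ transitions deterministically to $\goal$. All other states, actions, probabilities and weights of $\cM$ are inherited verbatim. Because the two new states carry no genuine choice, schedulers for $\cM$ and for $\cN$ correspond bijectively, and a one-line calculation yields, for every scheduler $\sched$,
\[
  \Pr^\sched_\cN(\Diamond\goal) \;=\; \tfrac{1}{2}\bigl(1+\Pr^\sched_\cM(\Diamond\goal)\bigr),
  \qquad
  \PE^\sched_\cN \;=\; \tfrac{1}{2}\bigl(\PE^\sched_\cM - \vartheta\bigr),
\]
whence
\[
  \CE^\sched_\cN \;=\; \frac{\PE^\sched_\cM - \vartheta}{1+\Pr^\sched_\cM(\Diamond\goal)}.
\]

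To conclude I observe that $\Pr^\sched_\cN(\Diamond\goal)\geq 1/2 > 0$ for every scheduler, so every scheduler of $\cN$ is admissible for $\CE$; and since the denominator $1+\Pr^\sched_\cM(\Diamond\goal)$ is strictly positive, the sign of $\CE^\sched_\cN$ coincides with the sign of $\PE^\sched_\cM - \vartheta$. Using the elementary fact that $\sup_x f(x) > 0$ holds iff $f(x) > 0$ for some $x$, one obtains $\CE^{\max}_\cN > 0$ iff $\PE^{\max}_\cM > \vartheta$, giving the desired reduction with output $(\cN,0)$, which is evidently polynomial-size. I do not expect any real obstacle; the only point requiring care is the preprocessing that clears the denominator of $\vartheta$, so that the new transition weight $-\vartheta$ stays an integer as required by the MDP model of the paper.
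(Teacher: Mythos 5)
Your reduction is correct and is essentially the paper's: both prepend a probability-$\tfrac{1}{2}$ initial branch that jumps straight to $\goal$, so that $\CE^{\sched}_{\cN}$ becomes $\bigl(\PE^{\sched}_{\cM}+\text{offset}\bigr)/\bigl(1+\Pr^{\sched}_{\cM}(\Diamond\goal)\bigr)$ and the comparison reduces scheduler-wise to $\PE^{\sched}_{\cM}>\vartheta$; the only (immaterial) difference is that the paper puts weight $+\vartheta$ on the exit from the old goal and keeps the threshold $\vartheta$, while you put $-\vartheta$ on the direct branch and test against $0$. Your explicit denominator-clearing preprocessing is a detail the paper glosses over here and only addresses elsewhere, so it is a welcome addition.
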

\begin{proof}

Let $\cM$ be an MDP and let $\vartheta$ be a rational number. We construct an MDP $\cN$ such that ${\PE}^{\max}_{\cM}>\vartheta$ if and only if ${\CE}^{\max}_{\cN}>\vartheta$.
We obtain $\cN$ by adding a new initial state $\sinit^\prime$, renaming the state $\goal$ to $\goal^\prime$, and adding a new state $\goal$ to $\cM$. In $\sinit^\prime$, one action with weight $0$ is enabled leading to the old initial state $\sinit$ and to $\goal$ with probability $1/2$ each. From $\goal^\prime$ there is one new action leading to $\goal$ with probability $1$ and weight $+\vartheta$. 

Each scheduler $\sched$ for $\cM$ can be seen as a scheduler for $\cN$ and vice versa.
Now, we observe that  for  any scheduler $\sched$,  
\[\CE^\sched_\cN= \frac{1/2(\PE^\sched_\cM+\Pr_\cM^\sched(\Diamond \goal) \vartheta)}{1/2+1/2\Pr^\sched_\cM(\Diamond \goal)}=\frac{\PE^\sched_\cM+\Pr_\cM^\sched(\Diamond \goal) \vartheta}{1+\Pr_\cM^\sched(\Diamond \goal)}.\]
Hence, ${\PE}^{\max}_{\cM}>\vartheta$ if and only if ${\CE}^{\max}_{\cN}>\vartheta$.
\end{proof}

\section{Skolem-hardness: conditional value-at-risk for the classical SSPP}\label{app:cvar}

This section provides the proofs for Section \ref{thm:Skolem_cvar}. We modify the MDP constructed in the previous sections.

\begin{mythm}[Theorem \ref{thm:Skolem_cvar}]
The positivity problem is polynomial-time reducible to the following problem:
Given an MDP $\cM$ and  rationals $\vartheta$ and  $p\in(0,1)$, decide whether $\CVaR^{\max}_p (\rawdiaplus \goal)> \vartheta$.
\end{mythm}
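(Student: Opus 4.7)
The plan is to reduce the Positivity problem to the threshold problem for $\CVaR^{\max}_{1/2}(\rawdiaplus\goal)$ by first collapsing the $\VaR$ computation to a fixed value. First I would prepend a new initial state $\sinit'$ with a single action of weight $0$ that goes to the old initial state $\sinit$ with probability $1/3$ and directly to $\goal$ with probability $2/3$. Since at least two thirds of the probability mass is carried by paths of accumulated weight $0$, any scheduler satisfies $\VaR^{\sched}_{1/2}(\rawdiaplus\goal)=0$. Defining $\rawdiaminus\goal(\zeta)=\rawdiaplus\goal(\zeta)$ when this value is $\leq 0$ and $0$ otherwise, the conditional value-at-risk collapses into
\[
\CVaR^{\sched}_{1/2}(\rawdiaplus\goal)=\tfrac{2}{3}\,\mathbb{E}^{\sched}_{\cM,\sinit}(\rawdiaminus\goal),
\]
so it suffices to reduce Positivity to the threshold problem $\mathbb{E}^{\max}_{\cM,\sinit}(\rawdiaminus\goal)>\vartheta$ (Lemma~\ref{lem:Skolem_rawdiaminus}).

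Second, I would reuse the recurrence-encoding gadget from Figure~\ref{fig:linear_recurrence}. The crucial observation is that the optimal values $e(q,w)$ for $\mathbb{E}^{\max}(\rawdiaminus\goal)$, parameterized by starting state $q$ and starting weight $w$, satisfy exactly the same local Bellman identity $e(q,w)=\sum_r P(q,\alpha,r)\,e(r,w+\wgt(q,\alpha))$ that partial expectations do, because $\rawdiaminus\goal$ is again an additive functional of the path up to reaching $\goal$. A rescaling of the coefficients $\alpha_i$ and initial values $\beta_j$ (polynomial in the input) keeps their signs but forces $\sum_i|\alpha_i|$ and the $\beta_j$'s to be sufficiently small. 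With this in place, the proof of Lemma~\ref{lem:PE_recurrence} transfers verbatim, yielding that under optimal choices of $\gamma$ in $t$ and $\delta$ in $s$ the difference $d(w)=e(t,w)-e(s,w)$ satisfies $d(w)=\alpha_1 d(w-1)+\dots+\alpha_k d(w-k)$.

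Third, I would design the new initial-values gadget of Figure~\ref{fig:cvar_initial_values}. Unlike the partial-SSPP case, here the random variable only charges non-positive accumulated weights, so the gadget must make $\gamma_j,\delta_j$ optimal precisely at the seed weights $w=-k+j$ for $0\le j\le k-1$, while leaving $\gamma,\delta$ optimal for all positive weights. The branching through the intermediate looping components (reaching $\goal$ with weight $-k$ repeatedly with geometric probability $\frac{1}{k+1}$) tunes the expected contribution of $\rawdiaminus$ so that the difference $e(t,-k+j)-e(s,-k+j)$ equals exactly $\beta_j$, while ensuring that at non-negative weights the gadget actions $\gamma_j,\delta_j$ are strictly dominated (since they would produce mostly positive final weights, contributing $0$ to $\rawdiaminus$). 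Adding the same coin-flip initial component from Section~\ref{sub:Skolem_partial} (a Bernoulli loop on $\sinit$ followed by a choice between $\tau\to t$ and $\sigma\to s$), the deterministic scheduler $\sched$ that always picks $\tau$ and the gadget actions identified above is optimal iff $d(-(k-1)+n)=u_n\geq 0$ for all $n\geq 0$.

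The main obstacle, as in the partial-SSPP case, is to show that the value $\mathbb{E}^{\sched}(\rawdiaminus\goal)$ is rational and computable in polynomial time, so that it can serve as the threshold $\vartheta$ witnessing Positivity. I expect this to work by the same Markov-chain/matrix-series analysis used in the proof of Theorem~\ref{thm:threshold_PE}: group the optimal values for weight blocks $\{nk+1,\dots,nk+k\}$ into vectors $v_n$, establish a linear recursion $v_n=Av_{n-1}+na+b$ from the reachability structure, use $\|A\|_\infty<1$ (since the row sums of $A$ are bounded by $\sum_i|\alpha_i|<1$) to conclude that the weighted series $\sum_n (1/2^k)^n c\cdot v_n$ converges to a rational, and compute it in closed form via $(I-\tfrac{1}{2^k}A)^{-1}$ and $(I-A)^{-1}$. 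The secondary obstacle is the detailed case analysis needed to verify the optimality claims in the new gadget (analogous to Lemma~\ref{lem:optimal_scheduler}), where one must check that bumping the weight out of the seed range by the decreasing gadgets never pays off; this is where the smallness assumption on $\sum_i|\alpha_i|$ and $\beta_j$ is used.
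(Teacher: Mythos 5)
Your proposal is correct and follows essentially the same route as the paper: collapse $\VaR_{1/2}$ to $0$ via the $1/3$--$2/3$ initial split, reduce to the threshold problem for $\mathbb{E}^{\max}(\rawdiaminus\goal)$, reuse the recurrence gadget with the same Bellman identity, build a new initial-values gadget whose actions $\gamma_j,\delta_j$ are optimal exactly at weights $-k+j$, and compute the scheduler's value by the matrix-series argument. The only (harmless) deviation is that you carry over the full recursion $v_n=Av_{n-1}+na+b$, whereas the paper observes that for $\rawdiaminus\goal$ the inhomogeneous terms vanish (paths reaching $\goal$ with positive weight contribute $0$), so the recursion simplifies to $v_n=Av_{n-1}$.
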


Define the random variable 
\[
\rawdiaminus \goal (\zeta)=
\begin{cases}
\rawdiaplus \goal (\zeta) & \text{if } \rawdiaplus \goal \leq 0, \\
0 & \text{otherwise.}
\end{cases}
\]

As we have shown in Section \ref{sub:Skolem_cvar}, it is sufficient to show that the positivity problem is reducible to the threshold problem ``Is $\mathbb{E}^{\max}_{\cM,\sinit}(\rawdiaminus \goal) > \vartheta$?''.

\begin{mylem}[Lemma \ref{lem:Skolem_rawdiaminus}]\label{app_lem:Skolem_rawdiaminus}
The positivity problem is polynomial-time reducible to the following problem:
Given an MDP $\cM$ and a rational $\vartheta$, decide whether $\mathbb{E}^{\max}_{\cM,\sinit} (\rawdiaminus \goal)> \vartheta$.
\end{mylem}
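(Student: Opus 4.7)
I would mirror the structure of the proof of Theorem~\ref{thm:threshold_PE}: reuse the recurrence gadget of Figure~\ref{fig:linear_recurrence} to encode the linear recurrence relation in the differences of optimal expectations, replace the initial-values gadget by the one depicted in Figure~\ref{fig:cvar_initial_values} so that the starting weights $-k+j$ encode $\beta_j$ in the difference for the random variable $\rawdiaminus\goal$, and finally plug the result into the same coin-flip initial component so that a fixed candidate scheduler $\sched$ is optimal if and only if the sequence $(u_n)_{n\geq 0}$ is non-negative. As before, I would first rescale the $\alpha_i,\beta_j$ by a suitable positive $\lambda$ to make $\alpha:=\sum_i|\alpha_i|$ and each $\beta_j$ sufficiently small (for instance $\alpha\leq \frac{1}{5(k+1)}$ and $0\leq \beta_j\leq \alpha/3$), which is a polynomial-time transformation that preserves non-negativity of the sequence.

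\textbf{Recurrence relation.} Let $e(q,w)$ denote the optimal value of $\mathbb{E}(\rawdiaminus\goal)$ when starting in state $q$ with accumulated weight $w$, and set $d(w)=e(t,w)-e(s,w)$. Since the Bellman identity $e(q,w)=\sum_r P(q,\act,r)\,e(r,w+\wgt(q,\act))$ still holds at any choice $\act$ realised by an optimal scheduler, and the action $\gamma$ in $t$ and $\delta$ in $s$ have the exact symmetric structure of the PE-gadget, the computation of Lemma~\ref{lem_app:PE_recurrence} carries over verbatim: whenever $\gamma$ and $\delta$ are optimal at weight $w$, we obtain
\[ d(w)=\alpha_1 d(w-1)+\ldots+\alpha_k d(w-k). \]
The only non-routine point here is that $\rawdiaminus\goal$ is not linear in the accumulated weight, but this is irrelevant for the derivation, which only uses linearity of conditional expectation along the successor distribution.

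\textbf{Initial-values gadget.} The core technical step, and the main obstacle, is to analyse Figure~\ref{fig:cvar_initial_values} and verify that for starting weight $-k+j$ in $t$ (resp. $s$) the optimal choice is $\gamma_j$ (resp. $\delta_j$) and the resulting difference equals $\beta_j$. The design idea is that under $\gamma_j$ and $\delta_j$ the $(1-\alpha)$-branch via $z_j$ reaches $\goal$ with accumulated weight exactly $0$ (since $-k+j-2k+j+3k-2j=0$), contributing $0$ to $\rawdiaminus\goal$, while the $\alpha$-branch enters $x_j$ (resp. $y_j$) with a non-positive weight and accumulates further non-positive weight through the geometric self-loop, so that $\rawdiaminus\goal$ coincides with $\rawdiaplus\goal$ on this branch. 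The asymmetry between $x_j$ (self-loop weight $-k$) and the $y_j$/$y'_j$ construction with the $\beta_j/\alpha$-branch to $y'_j$ (self-loop weight $-2k$) is calibrated so that the expected accumulated weight along the $\alpha$-branches differs exactly by $\beta_j$; I would carry this out by computing the expectation of the geometric series $\sum_{n\geq 0}(1/(k+1))^n\,(k/(k+1))\,(w-k n)$ and its variant at $y'_j$. To rule out other actions, one shows, as in Lemma~\ref{lem_app:optimal_scheduler}, that $\gamma$ and $\delta$ give a strictly smaller value at weight $-k+j$ because any return to $t$ or $s$ costs a probabilistic drift while $\gamma_j,\delta_j$ deliver a deterministic non-positive outcome. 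A symmetric case analysis shows $\gamma$ and $\delta$ are optimal for non-negative starting weights, since then $\gamma_j,\delta_j$ cap the achievable value by something of order $\alpha$ while $\gamma,\delta$ preserve the chance of a weight-$0$ contribution.

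\textbf{Completing the reduction.} With $d(-k+j)=\beta_j$ established, the recurrence from the second paragraph yields $d(-k+n)=u_n$ for all $n\geq 0$. Attaching the coin-flip initial component from Section~\ref{sub:Skolem_partial} (state $\sinit$ looping back to itself with probability $1/2$ and going to the choice state $c$ with weight $+1$ with probability $1/2$, then $\tau/\sigma$ leading to $t/s$), the scheduler $\sched$ that always selects $\tau$ in $c$ and uses $\gamma,\delta,\gamma_0,\ldots,\gamma_{k-1},\delta_0,\ldots,\delta_{k-1}$ according to the weight rules just proved is optimal iff $u_n\geq 0$ for all $n$. Finally, I would compute $\mathbb{E}^\sched_{\cM,\sinit}(\rawdiaminus\goal)$ explicitly, using the same linear-algebraic approach as in the proof of Theorem~\ref{app_thm:threshold_PE}: set up a vector recursion $v_n=A v_{n-1}+n a + b$ for the optimal values at weights in the block $\{nk{+}1,\dots,nk{+}k\}$, and evaluate the series $\sum_{n\geq 0}(1/2^k)^n c\cdot v_n$ using the Neumann series for $(I-\tfrac{1}{2^k}A)^{-1}$ (which converges since $\Vert A\Vert_\infty\leq \alpha<1$). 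This yields a rational $\vartheta$ computable in polynomial time such that $\mathbb{E}^{\max}_{\cM,\sinit}(\rawdiaminus\goal)>\vartheta$ iff some $u_n<0$, completing the reduction from Positivity.
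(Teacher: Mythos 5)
Your proposal follows the paper's own proof essentially step for step: the same rescaling, the same reuse of the recurrence gadget via the Bellman equations for $e(q,w)$ and the difference $d(w)$, the same initial-values gadget of Figure~\ref{fig:cvar_initial_values} with optimality of $\gamma_j,\delta_j$ at weight $-k+j$ and of $\gamma,\delta$ at non-negative weights yielding $d(-k+j)=\beta_j$, and the same coin-flip initial component followed by a matrix-series computation of the candidate scheduler's value. The only (harmless) deviation is that you carry over the inhomogeneous recursion $v_n=Av_{n-1}+na+b$ from the partial-expectation case, whereas here all paths reaching $\goal$ with positive weight contribute $0$ to $\rawdiaminus\goal$, so the inhomogeneous terms vanish and the recursion collapses to $v_n=Av_{n-1}$ --- a simplification the paper exploits explicitly but which your more general form subsumes.
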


The rest of this section is devoted to the proof of this lemma and we will adjust the MDP used for the Skolem-hardness proof of the threshold problem of the partial SSPP. The main change we have to make  concerns the encoding of the initial values.

So, let $k$ be a natural number, $\alpha_1,\dots,\alpha_k$ be rational coefficients of a linear recurrence sequence, and $\beta_0,\dots, \beta_{k-1}\geq 0$ the rational initial values. W.l.o.g. we again assume these values to be small, namely: $\sum_{1\leq i\leq k} |\alpha_i|\leq \frac{1}{5(k+1)}$ and for all $j$, $\beta_j\leq \frac{1}{3}\alpha$ where $\alpha=\sum_{1\leq i\leq k} |\alpha+i|$.

\begin{figure}[h]
\begin{center}
\resizebox{.6\textwidth}{!}{

\begin{tikzpicture}[scale=1,auto,node distance=8mm,>=latex]
\large
    \tikzstyle{round}=[thick,draw=black,circle]

    \node[round,minimum size=30pt] (t) {$t$};
    \node[round,below=15mm of t, minimum size=30pt] (t1) {$t_1$};
    \node[round,below=15mm of t1, minimum size=30pt] (t2) {$t_2$};
    
    \node[round, above=15mm of t,xshift=29mm ,minimum size=30pt] (goal) {$\goal$};
        
    \node[round,right=50mm of t,minimum size=30pt] (s) {$s$};
    \node[round,below=15mm of s, minimum size=30pt] (s1) {$s_1$};
    \node[round,below=15mm of s1, minimum size=30pt] (s2) {$s_2$};

  \draw[color=black , very thick,->] (t) edge  node [ very near start, anchor=center] (h1) {} node [pos=0.5,left=5pt] {$1-|\alpha_1|-|\alpha_2|$} (goal) ;
  \draw[color=black , very thick,->] (t)  edge node [near start, anchor=center] (h2) {} node [pos=0.7,right=2pt] {$|\alpha_1|$} (t1) ;
  \draw[color=black , very thick,->] (t)  edge node [pos=0.2,right=5pt] {$|\alpha_2|$} (s2) ;
  \draw[color=black , very thick] (h1.center) edge [bend left=55] node [pos=0.25,right=2pt] {$\gamma$} (h2.center);
  
 \draw[color=black , very thick, ->] (t1) edge [bend left=25] node [pos=0.1,left=0pt] {$\wgt:-1$} (t);
  \draw[color=black , very thick, ->] (t2) edge [bend left=105] node [pos=0.1,left=4pt] {$\wgt:-2$} (t);

  \draw[color=black , very thick,->] (s) edge  node [ very near start, anchor=center] (g1) {} node [pos=0.5,right=5pt] {$1-|\alpha_1|-|\alpha_2|$} (goal) ;
  \draw[color=black , very thick,->] (s)  edge node [near start, anchor=center] (g2) {} node [pos=0.7,left=2pt] {$|\alpha_1|$} (s1) ;
  \draw[color=black , very thick,->] (s)  edge node [pos=0.2,left=5pt] {$|\alpha_2|$} (t2) ;
  \draw[color=black , very thick] (g1.center) edge [bend right=55] node [pos=0.25,left=2pt] {$\delta$} (g2.center);
  
 \draw[color=black , very thick, ->] (s1) edge [bend right=25] node [pos=0.1,right=0pt] {$\wgt:-1$} (s);
  \draw[color=black , very thick, ->] (s2) edge [bend right=105] node [pos=0.1,right=4pt] {$\wgt:-2$} (s);

\end{tikzpicture}
}
\end{center}
\caption{In the depicted example the recurrence depth is $2$, $\alpha_1>0$ and $\alpha_2<0$.}\label{fig:linear_recurrence2}
\end{figure}
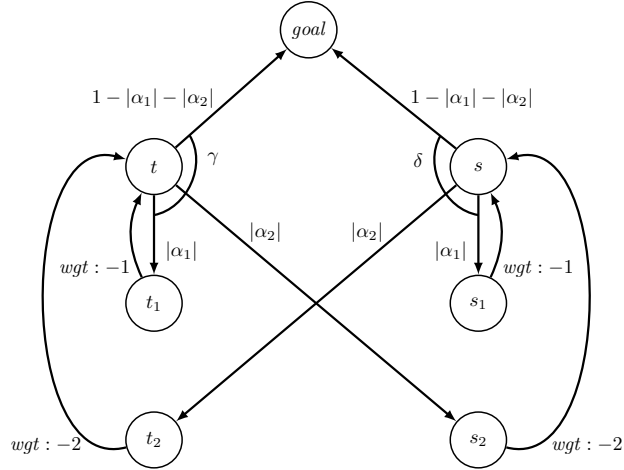

We reuse the MDP-gadget encoding the linear recurrence relation in the case of partial expectations.
This part of an MDP is again depicted in Figure \ref{fig:linear_recurrence2}. 

The first important observation is that the optimal expectation of $\rawdiaminus \goal$ for different starting states and starting weights behaves very similar to optimal partial expectations:
For each state $q$ and each integer $w$, let $e(q,w)$ be the optimal expectation of $\rawdiaminus \goal$ when starting in state $q$ with accumulated weight $w$. If an optimal scheduler chooses $\alpha$ when in $q$ with accumulated weight $w$, then $e(q,w)=\sum_{r\in S} P(q,\alpha,r)\cdot e(r,w{+}\wgt(q,\alpha))$.
If we again let $d(w)=e(t,w)-e(s,w)$, the following lemma  follows as before (see Lemma \ref{lem_app:PE_recurrence}).
\begin{mylem} \label{lem_app:recurrence}
 Let $w\in \mathbb{Z}$. If an optimal scheduler chooses action $\gamma$ in $t$ and $\delta$ in $s$ if the accumulated weight is $w$, then 
$d(w)=\alpha_1 d(w-1) + \dots \alpha_k d(w-k)$.
\end{mylem}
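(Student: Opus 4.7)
The plan is to mimic the proof of Lemma \ref{lem:PE_recurrence} almost verbatim, observing that the symmetry between states $t$ and $s$ in the recurrence gadget (Figure \ref{fig:linear_recurrence2}) causes the contribution coming from the direct transition to $\goal$ to cancel in the difference $d(w)$, regardless of whether we are computing an expectation of $\oplus \goal$ or of $\rawdiaminus \goal$.

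First, I would use the Bellman-type equation noted just above the lemma. Under the assumption that an optimal scheduler picks $\gamma$ in $t$ (and $\delta$ in $s$) at accumulated weight $w$, the transitions out of $\gamma$ send us to $\goal$ with probability $1-\sum_i|\alpha_i|$, to $t_i$ with probability $\alpha_i$ when $\alpha_i\geq 0$, and to $s_i$ with probability $|\alpha_i|$ when $\alpha_i<0$; from $t_i$ (resp.\ $s_i$) the only available action returns to $t$ (resp.\ $s$) adding weight $-i$. Writing $c(w)=w$ if $w\leq 0$ and $c(w)=0$ otherwise, the reward contributed by the direct step to $\goal$ is exactly $c(w)$ in both states. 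Hence
\begin{align*}
e(t,w) &= (1-\textstyle\sum_i|\alpha_i|)\,c(w) + \sum_{i:\alpha_i\geq 0}\alpha_i\,e(t,w-i) + \sum_{i:\alpha_i< 0}|\alpha_i|\,e(s,w-i),\\
e(s,w) &= (1-\textstyle\sum_i|\alpha_i|)\,c(w) + \sum_{i:\alpha_i\geq 0}\alpha_i\,e(s,w-i) + \sum_{i:\alpha_i< 0}|\alpha_i|\,e(t,w-i).
\end{align*}

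Subtracting, the $c(w)$-terms cancel and we obtain
\[
d(w) = \sum_{i:\alpha_i\geq 0}\alpha_i\bigl(e(t,w-i)-e(s,w-i)\bigr) + \sum_{i:\alpha_i<0}|\alpha_i|\bigl(e(s,w-i)-e(t,w-i)\bigr),
\]
which simplifies to $d(w)=\sum_{i=1}^{k}\alpha_i\,d(w-i)$ once we replace $|\alpha_i|\cdot(-1)$ by $\alpha_i$ for negative $\alpha_i$. This is the claimed recurrence.

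There is no real obstacle here: the only thing one must be careful about is that the Bellman equation holds even though $\rawdiaminus\goal$ is a non-standard random variable. This is justified by the fact that $\rawdiaminus\goal$ depends only on $\wgt(\zeta)$ and on the event $\{\zeta\models\Diamond\goal\}$, both of which decompose additively along the one-step transition used by an optimal action, so the standard Bellman-style argument used for partial expectations carries over unchanged. The $c(w)$ term encodes precisely the truncation $\min(\cdot,0)$ that distinguishes $\rawdiaminus$ from $\oplus$, and its symmetric appearance in both equations is what makes the proof go through without modification.
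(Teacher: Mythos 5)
Your proof is correct and follows essentially the same route as the paper, which simply invokes the computation of Lemma~\ref{lem_app:PE_recurrence}: write the Bellman equation for $e(t,w)$ and $e(s,w)$ under $\gamma$ and $\delta$, observe that the symmetric contribution of the direct transition to $\goal$ (here the truncated value $\min(w,0)$ rather than $w$) cancels in the difference, and collect the signs. Your explicit introduction of $c(w)$ just makes precise the one adaptation the paper leaves implicit.
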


Now, we construct a new gadget that encodes the initial values of a linear recurrence sequence. 
The new gadget  is depicted in Figure \ref{fig:initial_values2}. Besides the actions $\gamma_j$ and $\delta_j$ for $0\leq j \leq k-1$ there are no non-deterministic choices. Recall that $\alpha=\sum_{1\leq i \leq k} |\alpha_i|$.

\begin{figure}[ht]
\begin{center}

\scalebox{.8}{
\begin{tikzpicture}[scale=1,auto,node distance=8mm,>=latex]
\Large
    \tikzstyle{round}=[thick,draw=black,circle]

    \node[round,minimum size=30pt] (t) {$t$};
      \node[round,above=30mm of t, minimum size=30pt] (xj1) {};
    \node[round,above=30mm of xj1, minimum size=30pt] (xj) {$x_j$};
 
  \node[round, right=30mm of t,minimum size=30pt] (zj) {$z_j$};

    \node[round, right=30mm of xj ,minimum size=30pt] (goal) {$\goal$};
         \node[round,right=30mm of goal, minimum size=30pt] (yj) {$y_j$};
         \node[round,below=30mm of yj,minimum size=30pt] (yj1) {};
         \node[round,right=30mm of yj,minimum size=30pt] (yj3) {};
    \node[round,below=30mm of yj1,minimum size=30pt] (s) {$s$};
   
    \node[round,above=30mm of yj, minimum size=30pt] (yj2) {$y_j^\prime$};

  \draw[color=black , very thick,->,loop, out=70, in=110, min distance=30pt] (xj) edge  node [ very near start, anchor=center] (h1) {} node [pos=0.4,right=5pt] {$\wgt: {-}k$}   node [pos=0.5,left=5pt] {$\frac{1}{k+1}$} (xj) ;
  \draw[color=black , very thick,->] (xj)  edge node [very near start, anchor=center] (h2) {} node [pos=0.5,below=2pt] {$\frac{k}{k{+}1}$} (goal) ;
  \draw[color=black , very thick] (h1.center) edge [bend left=55] node [pos=0.25,above=2pt] {} (h2.center);
  
  \draw[color=black , very thick, ->] (t) edge  node [pos=0.1,left=2pt] {$\gamma_j|\wgt:{-}2k{+}j$}  node [pos=0.4,right=2pt] {$\alpha$} node [very near start, anchor=center] (w1) {} (xj1);
  \draw[color=black , very thick, ->] (t) edge  node [pos=0.3,below=2pt] {$1-\alpha$} node [very near start, anchor=center] (w2) {} (zj);
   \draw[color=black , very thick] (w2.center) edge [bend right=55] node [pos=0.25,above=2pt] {} (w1.center);

    \draw[color=black , very thick, ->] (xj1) edge  node [pos=0.1,left=2pt] {$\wgt:+k$} (xj);

      \draw[color=black , very thick, ->] (yj1) edge  node [pos=0.5,left=2pt] {$1-\beta_j/\alpha$} node [pos=0.1,left=2pt] {$\wgt:+k$} node [ very near start, anchor=center] (p1) {} (yj);
      \draw[color=black , very thick, ->] (yj1) edge  node [pos=0.3,right=2pt] {$\beta_j/\alpha$} node [very near start, anchor=center] (p2) {} (yj3);
        \draw[color=black , very thick] (p2.center) edge [bend right=55] node [pos=0.25,above=2pt] {} (p1.center);

   \draw[color=black , very thick, ->] (yj3) edge  node [pos=0.3,right=2pt] {$\wgt:+k$} (yj2);

    \draw[color=black , very thick, ->] (s) edge  node [pos=0.1,right=2pt] {$\delta_j|\wgt:{-}2k{+}j$}  node [pos=0.4,left=2pt] {$\alpha$} node [very near start, anchor=center] (v1) {} (yj1);
  \draw[color=black , very thick, ->] (s) edge  node [pos=0.3,below=2pt] {$1-\alpha$} node [very near start, anchor=center] (v2) {} (zj);
   \draw[color=black , very thick] (v1.center) edge [bend right=55] node [pos=0.25,above=2pt] {} (v2.center);

    \draw[color=black , very thick,->,loop, out=110, in=70, min distance=30pt] (yj) edge  node [ very near start, anchor=center] (g1) {}node [pos=0.2,left=2pt] {$\wgt: {-}k$}    node [pos=0.5,right=5pt] {$\frac{1}{k{+}1}$} (yj) ;
  \draw[color=black , very thick,->] (yj)  edge node [very near start, anchor=center] (g2) {} node [pos=0.5,below=2pt] {$\frac{k}{k{+}1}$} (goal) ;
  \draw[color=black , very thick] (g1.center) edge [bend right=55] node [pos=0.25,above=2pt] {} (g2.center);
  
     \draw[color=black , very thick,->,loop, out=110, in=70, min distance=30pt] (yj2) edge  node [ very near start, anchor=center] (q1) {} node [pos=0.2,left=2pt] {$\wgt: {-}2k$}   node [pos=0.5,right=5pt] {$\frac{1}{k{+}1}$} (yj2) ;
  \draw[color=black , very thick,->] (yj2)  edge node [very near start, anchor=center] (q2) {} node [pos=0.3,above=2pt] {$\frac{k}{k{+}1}$} (goal) ;
  \draw[color=black , very thick] (q1.center) edge [bend right=55] node [pos=0.25,above=2pt] {} (q2.center);

  \draw[color=black , very thick, ->] (zj) edge  node [pos=0.1,right=2pt] {$\wgt:{+}3k{-}2j$}  node [very near start, anchor=center] (v1) {} (goal);

\end{tikzpicture}
}
\end{center}
\caption{The gadget contains the depicted states and actions for each $0\leq j \leq k-1$.
The probability $\alpha=\sum_{1\leq i \leq k} |\alpha_i|$.
}\label{fig:initial_values2}
\end{figure}
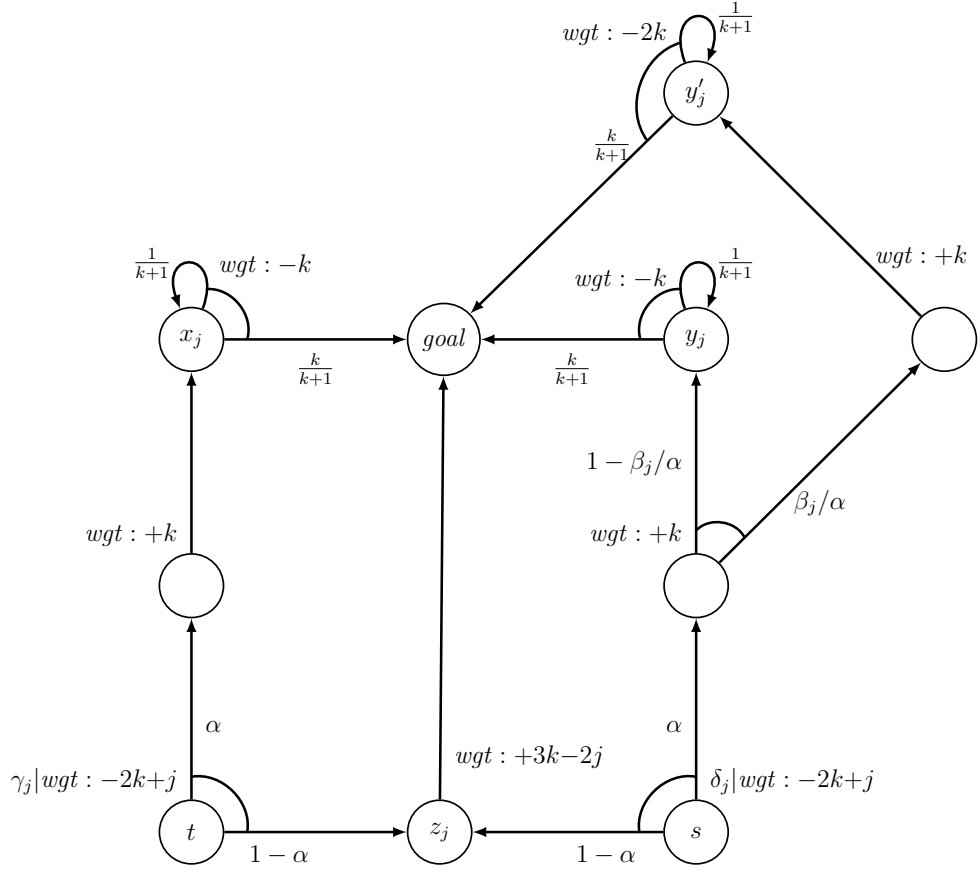

Again, we glue together the two gadgets in states $s$, $t$, and $\goal$.
The key observation is that for $0\leq j\leq k{-}1$ and an accumulated weight ${-}k{+}j$ in state $t$ or $s$ the actions $\gamma_j$ and $\delta_j$, respectively, are optimal for maximizing the expectation of $\rawdiaminus \goal$. For non-negative accumulated weights $\gamma$ and $\delta$ are optimal:

\begin{mylem}\label{lem:initial_values2}
Let $0\leq j \leq k-1$. Starting with accumulated weight ${-}k{+}j$ in state $t$, the action $\gamma_j$ maximizes the partial expectation among the actions $\gamma_0,\dots,\gamma_{k-1}$. Likewise, $\delta_j$ is optimal when starting in $s$ with weight ${-}k{+}j$. If the accumulated weight is non-negative in state $s$ or $t$, then $\gamma$ or $\delta$ are optimal.
\end{mylem}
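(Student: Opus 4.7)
The proof will follow the template of Lemma \ref{lem:optimal_scheduler} closely, adapted to the filtered random variable $\rawdiaminus \goal$. The key structural observation is that the gadget in Figure \ref{fig:initial_values2} is tuned so that every run reaching $\goal$ through the ``left'' branch of $\gamma_{j'}$ or $\delta_{j'}$ (via $x_{j'}$, respectively $y_{j'}$ or $y_{j'}^\prime$) terminates with strictly negative accumulated weight: at $x_{j'}$ the weight is at most $-2k+j+j' \le -2$ for all relevant $j,j' \in \{0,\dots,k-1\}$, and each subsequent self-loop only decreases it further. Hence on these branches the $\rawdiaminus$-filter is inactive and the expected contribution equals the expected accumulated weight at $\goal$, which is easily read off using $\mathbb{E}[N] = 1/k$ for the geometric loop counter. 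The ``right'' branch through $z_{j'}$ ends at $\goal$ with exact weight $j - j'$, hence contributes $(1-\alpha)(j-j') \mathds{1}_{j'\ge j}$ after filtering.

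First I will collect these two terms to write, for a scheduler starting in $t$ at weight $-k+j$ and choosing $\gamma_{j'}$,
\[
 e_t(\gamma_{j'}) \;=\; (1-\alpha)(j-j') \mathds{1}_{j'\ge j} \;+\; \alpha \big(\! -\! 2k + j + j' - 1\big).
\]
Computing the discrete first difference in $j'$ and using the hypothesis $\alpha \le 1/(5(k+1)) < 1/2$ shows that $e_t(\gamma_{j'})$ is strictly increasing in $j'$ for $j' \le j$ and strictly decreasing for $j' \ge j$, so $\gamma_j$ is the unique maximizer among $\gamma_0,\dots,\gamma_{k-1}$. For $\delta_{j'}$ in $s$ the $\alpha$-branch splits further, with probability $\alpha - \beta_{j'}$ entering $y_{j'}$ (giving exactly the same contribution as the $x_{j'}$-term above) and probability $\beta_{j'}$ entering $y_{j'}^\prime$. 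I will verify that the resulting $y_{j'}^\prime$-term, controlled in magnitude by $\beta_{j'}$ times a linear function of $k$, remains small enough under the assumption $\beta_{j'} \le \alpha/3$ that the strict monotonicity of the dominant $y_{j'}$-term is preserved and $\delta_j$ is again optimal.

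For non-negative starting weight $w$ in $t$ or $s$, I will compare $\gamma,\delta$ to the family $\gamma_{j'},\delta_{j'}$ as follows. Under $\gamma$, the recursion of Lemma \ref{lem:PE_recurrence} combined with the bound $\sum_i |\alpha_i| \le 1/(5(k+1))$ gives a crude estimate $|e(t,w)|, |e(s,w)| \le C$ for a small constant $C$ independent of $w\ge 0$: with probability $1 - \sum_i|\alpha_i|$ the run terminates in $\goal$ with non-negative weight, contributing $0$, and otherwise returns to $t$ or $s$ with weight reduced by at most $k$, which a geometric sum absorbs. In contrast, every $\gamma_{j'}$ or $\delta_{j'}$ sends mass $\alpha$ through the left branch which always terminates at $\goal$ with weight of order $-k$, so its $\rawdiaminus$-contribution is at most $-\alpha(k+1)/2$; since $\alpha$ is constant, a direct comparison shows $\gamma$ and $\delta$ strictly dominate.

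The main obstacle will be the careful bookkeeping of the $\rawdiaminus$-filter on the $y_{j'}^\prime$-branch, because for $j'$ close to $k-1$ the entry weight $-k+2j'$ at $y_{j'}^\prime$ may itself be positive; only after at least one self-loop of weight $-2k$ is the accumulated weight guaranteed negative. I will therefore split the geometric sum at $N=0$, compute the two pieces separately, and use the joint bounds on $\alpha$ and $\beta_{j'}$ to show that the $N=0$ correction is $o(1)$ in $k$ and in particular too small to reverse any of the strict inequalities established above.
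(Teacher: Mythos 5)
Your treatment of the starting weight $-k+j$ is essentially the paper's argument: compute the expectation of $\rawdiaminus\goal$ exactly for each $\gamma_{j'}$ and $\delta_{j'}$, observe that the $z_{j'}$-branch is filtered to $0$ precisely when $j-j'\geq 0$, and conclude by monotonicity in $j'$. One bookkeeping point: weights live on state-action pairs, so the $-k$ (resp.\ $-2k$) at $x_{j'}$, $y_{j'}$, $y_{j'}^\prime$ is charged every time the action is taken, \emph{including} the exit step to $\goal$; the expected number of charges is $\tfrac{k+1}{k}$, not $\mathbb{E}[N]=1/k$. This turns your constant $-2k+j+j'-1$ into $-3k+j+j'-1$, and it dissolves your ``main obstacle'': every run through $y_{j'}^\prime$ reaches $\goal$ with strictly negative weight because the $-2k$ is applied at least once, so no case split at $N=0$ is needed. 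None of this affects the monotonicity conclusion, so the first half of your plan is sound and matches the paper.

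The genuine gap is in the case of non-negative starting weight $w$. Your claim that every $\gamma_{j'}$ sends mass $\alpha$ to $\goal$ with weight of order $-k$, yielding a contribution at most $-\alpha(k+1)/2$, fails once $w>k$: the left branch then enters $x_{j'}$ with weight $w-k+j'>0$, the loop terminates after expectedly $\tfrac{k+1}{k}$ steps, and with overwhelming probability $\goal$ is reached with \emph{positive} weight, so the contribution is filtered to nearly $0$. Since both $e_\gamma(t,w)$ and $e_{\gamma_{j'}}(t,w)$ tend to $0$ as $w\to\infty$, no comparison of a uniform constant bound $-C$ against a uniform strictly negative bound can establish the claim for all $w\geq 0$. (Even at $w=0$ the geometric estimate you propose gives $C\approx k\alpha/(1-\alpha)$, which exceeds $\alpha(k+1)/2$, so the constants do not separate there either.) The paper instead argues by domination: under $\gamma$ or $\delta$ the run survives each round with probability $\alpha$ and loses at most $k$ per round, whereas after $\gamma_{j'}$ or $\delta_{j'}$ the run immediately loses at least $k-j'\geq 1$ net and then survives each round with probability $\tfrac{1}{k+1}>\alpha$ while losing at least $k$ per round; hence the terminal weight under $\gamma_{j'}$ is stochastically dominated by that under $\gamma$, and the expectation of $\min(0,\cdot)$ is lower. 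You would need to replace the uniform-bound comparison by such a coupling (or an induction on $w$) to complete the second half of the lemma.
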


\begin{proof}
First, we estimate the expectation of $\rawdiaminus \goal$ when choosing $\delta_i$ while the accumulated weight is ${-}k{+}j$. 
If $i> j$, then $\delta_i$ and $\delta$ lead to $\goal$ directly with probability $1{-}\alpha$ and weight $\leq -1$. 
So, the expectation is less than ${-}(1-\alpha)\leq {-}1{+}\frac{1}{4(k{+}1)}$. 

If $i\leq j$, then with probability $1{-}\alpha$ $\goal$ is reached with positive weight, hence $\rawdiaminus \goal$ is $0$ on these paths. 
With probability $\beta_i$, goal is reached via $y_j^\prime$. In this case all runs reach $\goal$ with negative weight. On the way to $y_j^\prime$ weight $2k$ is added, but afterwards subtracted again at least once. 
In expectation weight $2k$ is substracted $\frac{k{+}1}{k}$ many times. Furthermore, ${-}2k{+}i$ is added to the starting weight of ${-}k{+}j$. So, these paths contribute $\beta_i\cdot(2k-2k\frac{k{+}1}{k}{-}3k{+}j{+}i)=({-}3k{+}j{+}i{-}2)\cdot\beta_i$ to the expectation of $\rawdiaminus \goal$. 
With analogous reasoning, we see that the remaining paths contribute $({-}3k{+}j{+}i{-}1)\cdot(\alpha-\beta_i)$. So, all in all the expectation of $\rawdiaminus \goal$ in this situation is $\alpha {\cdot} ({-}3k{+}j{+}i{-}1){-}\beta_i$.
Now, as $\alpha\leq \frac{1}{5(k{+}1)}$ and $\beta_i\leq \frac{\alpha}{3}$ for all $i$, indeed $\delta_j$ is the optimal action. For $\gamma_j$ the same proof with $\beta_i=0$ for all $i$ leads to the same result.

Now assume that the accumulated weight in $t$ or $s$ is $\ell\geq 0$. 
Then, all actions lead to $\goal$ with a positive weight with probability $1-\alpha$. 
In this case $\rawdiaminus \goal$ is $0$. 
However, a scheduler $\sched$ which always chooses $\gamma$ and $\delta$ is better than a scheduler choosing $\gamma_j$ or $\delta_j$ for any $j\leq k{-}1$. 
Under scheduler $\sched$ starting from $s$ or $t$ a run returns to $\{s,t\}$ with probability $\alpha$ while accumulating weight 
$\geq {-}k$ and the process is repeated. After choosing $\gamma_j$ or $\delta_j$ the run moves to $x_j$, $y_j$ or $y_j^\prime$ while accumulating a negative weight. From then on, in each step it will stay in that state with probability greater than $\alpha$ and accumulate weight $\leq {-}k$. Hence, the expectation of $\rawdiaminus \goal$ is lower under $\gamma_j$ or $\delta_j$ than under $\sched$. Therefore indeed $\gamma$ and $\delta$ are the best actions for non-negative accumulated weight in states $s$ and $t$.
\end{proof}

From the proof we also learn the following:

\begin{mycor}
The difference $d(-k{+}j)=e(t,-k{+}j){-}e(s,-k{+}j)$ is equal to $\beta_j$, for $0\leq j\leq k-1$ in the combination of the gadgets presented above.
\end{mycor}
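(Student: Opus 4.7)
The claim is essentially an arithmetic consequence of the structure of the gadget combined with Lemma~\ref{lem:initial_values2}. The plan is: (i) identify that the optimal initial-value actions at accumulated weight $-k+j$ are $\gamma_j$ at $t$ and $\delta_j$ at $s$, and (ii) compute the two resulting expectations, observing that the gadgets for the two actions are structurally identical apart from a single detour whose expected contribution is exactly $-\beta_j$.

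For (i), Lemma~\ref{lem:initial_values2} already establishes that $\gamma_j$ beats every other $\gamma_i$ and $\delta_j$ beats every other $\delta_i$; I only need to add that at the strictly negative weight $-k+j$ the ``recurrence'' actions $\gamma$ and $\delta$ are also worse, since under $\gamma$ the $(1-\alpha)$-fraction of runs reaches $\goal$ immediately with weight $-k+j$, contributing roughly $(1-\alpha)(-k+j)$, whereas under $\gamma_j$ the analogous fraction is routed via $z_j$ to weight $0$ and contributes $0$ to $\rawdiaminus \goal$; the assumption $\alpha \leq 1/(5(k+1))$ makes this comparison immediate, in exactly the same spirit as the smallness arguments used in Section~\ref{subsec:hardness_threshold_PE}.

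For (ii), taking $\gamma_j$ in $t$ brings the accumulated weight to $-3k+2j$; the $(1-\alpha)$-branch terminates at $\goal$ with weight $0$ (contributing $0$ to $\rawdiaminus \goal$), while the $\alpha$-branch enters $x_j$ at weight $-2k+2j$ and then performs a geometric self-loop of weight $-k$ with survival probability $1/(k+1)$, adding expected weight $-1$ before absorption. Hence $e(t,-k+j) = \alpha(-2k+2j-1)$. The computation for $\delta_j$ in $s$ differs only in the second branch, which now splits: with conditional probability $1-\beta_j/\alpha$ the run follows a copy of the $x_j$-branch, and with conditional probability $\beta_j/\alpha$ it is routed through $y_j^\prime$, whose heavier self-loop of weight $-2k$ adds expected weight $-2$ instead of $-1$. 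Combining gives
\[ e(s,-k+j) = \alpha \bigl[(1-\beta_j/\alpha)(-2k+2j-1) + (\beta_j/\alpha)(-2k+2j-2)\bigr] = \alpha(-2k+2j-1) - \beta_j, \]
so that $d(-k+j) = e(t,-k+j) - e(s,-k+j) = \beta_j$, as claimed.

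The main obstacle is not any single computation but the supplementary optimality check in (i), needed because Lemma~\ref{lem:initial_values2} only compares $\gamma_j$ against the other $\gamma_i$ and leaves the recurrence actions $\gamma$ and $\delta$ outside its scope; the remaining arithmetic is purely a verification that the $y_j^\prime$-detour contributes precisely $\alpha \cdot (\beta_j/\alpha) \cdot (-1) = -\beta_j$ and nothing more.
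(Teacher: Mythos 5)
Your proof reaches the right conclusion by essentially the same route as the paper: the corollary is extracted from the explicit computation of $e(t,-k{+}j)$ and $e(s,-k{+}j)$ under $\gamma_j$ and $\delta_j$ carried out in the proof of Lemma~\ref{lem:initial_values2}, and your observation that the $y_j^\prime$-detour occurs with total probability $\beta_j$ and shifts the expected terminal weight by exactly $-1$ is precisely the designed invariant. Two remarks. First, your supplementary check in (i) is a legitimate addition: Lemma~\ref{lem:initial_values2} as stated only compares $\gamma_j$ against the other $\gamma_i$ at weight $-k{+}j$ and treats $\gamma,\delta$ only for non-negative weights, so ruling out $\gamma$ and $\delta$ at weight $-k{+}j$ does require the extra argument you give (under $\gamma$ the $(1-\alpha)$-mass reaches $\goal$ with weight $-k{+}j\leq -1$, costing at least $-(1-\alpha)\leq -\tfrac{9}{10}$, while $\gamma_j$ loses at most $\alpha(3k{+}1)\leq\tfrac{3}{5}$). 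Second, your intermediate values are not the paper's: since weights are attached to state-action pairs, the action at $x_j$ (resp.\ $y_j^\prime$) incurs its weight $-k$ (resp.\ $-2k$) on every execution including the exiting one, i.e.\ $\tfrac{k+1}{k}$ times in expectation, giving an expected loss of $-(k{+}1)$ (resp.\ $-2(k{+}1)$) rather than your $-1$ (resp.\ $-2$); moreover the path to $y_j^\prime$ passes two $+k$ edges, so it enters $y_j^\prime$ at weight $-k{+}2j$, not $-2k{+}2j$. The paper accordingly obtains $e(t,-k{+}j)=\alpha(-3k{+}2j{-}1)$ and $e(s,-k{+}j)=\alpha(-3k{+}2j{-}1)-\beta_j$. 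These discrepancies are symmetric between the $t$- and $s$-gadgets and between the $y_j$- and $y_j^\prime$-branches, so they cancel in $d(-k{+}j)$ and your final answer $\beta_j$ is correct --- but the values $\alpha(-2k{+}2j{-}1)$ themselves would be wrong if reused elsewhere.
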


Put together this shows that $d({-k}+\ell) = u_\ell$ where $(u_n)_{n\in\mathbb{N}}$ is the linear recurrence sequence specified by the $\alpha_i$, $\beta_j$, $1\leq i\leq k$, and $0\leq j \leq k{-}1$. Further, we know the optimal behaviour for all accumulated weights $\geq {-}k$ in states $s$ and $t$.


\begin{figure}[ht]
\begin{center}
\resizebox{1\textwidth}{!}{%

\begin{tikzpicture}[scale=1,auto,node distance=8mm,>=latex]
    \tikzstyle{round}=[thick,draw=black,circle]

    \node[round,minimum size=30pt] (t) {$t$};
    \node[round,below=15mm of t, minimum size=30pt] (t1) {$t_1$};
    \node[round,below=10mm of t1, minimum size=30pt] (t2) {$t_2$};
    
    \node[round, above=15mm of t,xshift=29mm ,minimum size=30pt] (goal) {$\goal$};
        
    \node[round,right=50mm of t,minimum size=30pt] (s) {$s$};
    \node[round,below=15mm of s, minimum size=30pt] (s1) {$s_1$};
    \node[round,below=10mm of s1, minimum size=30pt] (s2) {$s_2$};
    
\node[above=15mm of t, minimum size=30pt] (xj) {\vdots};

    \node[above=15mm of s, minimum size=30pt] (yj) {\vdots};

     \node[round,below=80mm of goal, minimum size=30pt] (c) {$c$};
          \node[round,below=15mm of c, minimum size=30pt] (sinit) {$\sinit$};

  \draw[color=black , very thick,->] (t) edge  node [ very near start, anchor=center] (h1) {} node [pos=0.1,right=5pt] {$1-|\alpha_1|-|\alpha_2|$} (goal) ;
  \draw[color=black , very thick,->] (t)  edge node [near start, anchor=center] (h2) {} node [pos=0.7,right=2pt] {$|\alpha_1|$} (t1) ;
  \draw[color=black , very thick,->] (t)  edge node [pos=0.2,right=5pt] {$|\alpha_2|$} (s2) ;
  \draw[color=black , very thick] (h1.center) edge [bend left=55] node [pos=0.25,right=2pt] {$\gamma$} (h2.center);
  
 \draw[color=black , very thick, ->] (t1) edge [bend left=25] node [pos=0.1,left=2pt] {$\wgt:-1$} (t);
  \draw[color=black , very thick, ->] (t2) edge [bend left=75] node [pos=0.1,left=2pt] {$\wgt:-2$} (t);

  \draw[color=black , very thick,->] (s) edge  node [ very near start, anchor=center] (g1) {} node [pos=0.4,left=5pt] {$1-|\alpha_1|-|\alpha_2|$} (goal) ;
  \draw[color=black , very thick,->] (s)  edge node [near start, anchor=center] (g2) {} node [pos=0.7,left=2pt] {$|\alpha_1|$} (s1) ;
  \draw[color=black , very thick,->] (s)  edge node [pos=0.2,left=5pt] {$|\alpha_2|$} (t2) ;
  \draw[color=black , very thick] (g1.center) edge [bend right=55] node [pos=0.25,left=2pt] {$\delta$} (g2.center);
  
 \draw[color=black , very thick, ->] (s1) edge [bend right=25] node [pos=0.1,right=2pt] {$\wgt:-1$} (s);
  \draw[color=black , very thick, ->] (s2) edge [bend right=75] node [pos=0.1,right=2pt] {$\wgt:-2$} (s);

  \draw[color=black , very thick, ->] (t) edge [bend left=25] node [pos=0.1,left=2pt] {$\gamma_j$} (xj);

  \draw[color=black , very thick, ->] (s) edge [bend right=25] node [pos=0.1,right=2pt] {$\delta_j$} (yj);

 \draw[color=black , very thick,->] (sinit) edge  node [  near start, anchor=center] (f1) {} node [pos=0.5,left=5pt] {$\frac{1}{2}$} (c) ;
  \draw[color=black , very thick,->] (sinit)  edge [loop, min distance=20mm, out=60, in=0] node [pos=.1, anchor=center] (f2) {} node [pos=0.4,right=7pt] {$\frac{1}{2}$} (sinit) ;
  \draw[color=black , very thick] (f1.center) edge [bend left=55] node [pos=1,xshift=2pt,above=10pt] {$\wgt: +1$} (f2.center);
  
   \draw[color=black , very thick, ->] (c) edge [loop, in=180, out=180, min distance=60mm] node [pos=0.1,above=5pt] {$\tau$} (t);
   
   \draw[color=black , very thick, ->] (c) edge [loop, in=0, out=0, min distance=60mm] node [pos=0.1,above=5pt] {$\sigma$} (s);

\end{tikzpicture}

 }
   \end{center}

\caption{The MDP contains the upper part as depicted in Figure \ref{fig:initial_values2} for all $0\leq j \leq k-1$. The middle part is depicted for $k=2$, $\alpha_1\geq 0$, and $\alpha_2<0$.}\label{fig:MDP2}
\end{figure}
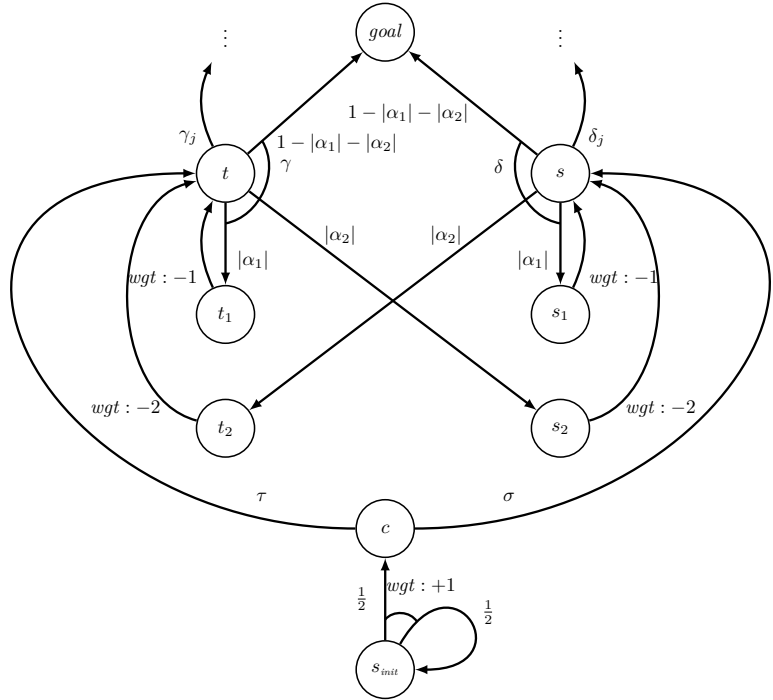

Finally, we again equip the MDP with an initial component as depicted in Figure~\ref{fig:MDP2}: From the initial state $\sinit$, one action with weight $+1$ is enabled. This action leads to a state $c$ with probability $\frac{1}{2}$ and loops back to $\sinit$ with probability $\frac{1}{2}$. In $c$, the decision between action $\tau$  leading to state $t$ and action $\sigma$ leading to state $s$ has to be made. 
We have shown that action $\tau$ is optimal in $c$ for accumulated weight $w$ if and only if $u_{w+k-1}\geq 0$.
Further, the scheduler $\sched$ always choosing $\tau$ in $c$ and actions $\gamma, \gamma_0, \dots, \gamma_{k-1}, \delta, \dots$ as described in Lemma \ref{lem:optimal_scheduler} is optimal if and only if $u_n\geq 0$ for all $n$.

In order to complete the proof of the theorem, we compute the expectation $\vartheta=\mathbb{E}^{\sched}_{\sinit}(\rawdiaminus \goal)$. 
This is done analogously to the computation in the proof of Theorem \ref{app_thm:threshold_PE}. We describe the necessary modifications here:
The weight levels are shifted by $1$ compared to the proof for the partial SSPP. Hence, we again define a vector containing the optimal values in $s$ and $t$ for the weight levels encoding the intial values:
\[v_{-1} = (e(t,{-}1) , e(t,{-}2), \ldots, e(t,{-}k), e(s,{-}1) , e(s,{-}2), \ldots, e(s,-k))^t. \] 
Then, the optimal values on higher weight levels can again be computed in terms of this vector. We define the vectors $v_n$ for all $n$ as
\[v_n = (e(t,nk{+}k{-}1) , e(t,nk{+}k{-}2), \ldots, e(t,nk), e(s,nk{+}k{-}1) , \ldots, e(s,nk))^t. \]
Again the weight levels are shifted by $1$ compared to the proof of Theorem \ref{app_thm:threshold_PE}.
Using the
Markov chain in Figure \ref{fig:Markov_chain}, we obtain a matrix $A\in \mathbb{Q}^{2k\times 2k}$ as before (in fact the same matrix as in the proof of Theorem \ref{app_thm:threshold_PE}) such that $v_{n}=Av_{n{-}1}$ for all $n\geq 0$. As $\rawdiaminus$ evaluates to $0$ on all paths reaching $\goal$ with positive weight, this is considerably simpler than in the case of the partial SSPP.
In particular, this time the explicit representation for $v_n$ takes the simple form $v_n=A^{n+1}v_{-1}$. The remaining argument is now completely analogous to the proof of Theorem \ref{app_thm:threshold_PE}.
So, $\vartheta=\mathbb{E}^{\sched}_{\sinit}(\rawdiaminus \goal)$ is a rational computable in polynomial time.
 We conclude that $\mathbb{E}^{\max}_{\sinit}(\rawdiaminus \goal)>\vartheta$ if and only if there is an $n$ such that $u_n<0$. This finishes the proof of Lemma \ref{app_lem:Skolem_rawdiaminus}.

\section{Skolem-hardness: weighted long-run frequency}\label{app:wlf}

We illustrate the notion of weighted long-run frequency in the following example, which already shows that memoryless schedulers are not sufficient to solve the optimization problem.

\begin{myex} \label{ex:intro_PMP}
Consider the  example MDP $\cM$ depicted in Figure \ref{fig:ex-MDP}. The only non-deterministic choice is the choice between actions $\alpha$ and $\beta$ in state $\sinit$. So, there are two memoryless deterministic schedulers, $\sched_\alpha$ choosing $\alpha$, and $\sched_\beta$ choosing $\beta$. We compute their weighted long-run frequencies by taking the quotient of the expected accumulated weight on paths satisfying $\neg\Fail \Until \Goal$ and the expected return time from the initial situation to the initial situation, i.e. starting from $\sinit$ until $\sinit$ is reached from $\Goal \cup \Fail$. 
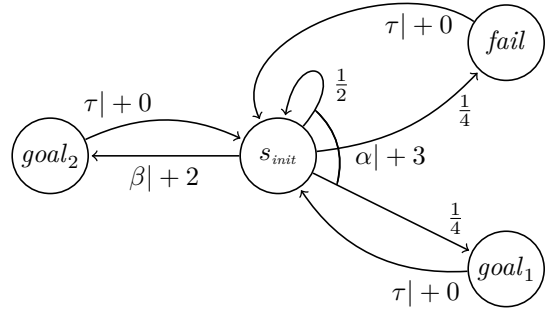
\begin{figure}[ht]
\vspace{-12pt}
\begin{center}
\scalebox{1}{
\begin{tikzpicture}
  [shorten >=1pt,node distance=12mm and 1.5cm,on grid,auto,semithick]
  \node[state,inner sep=1pt,minimum size=10mm] (s) {$\sinit$};
\node[state,inner sep=1pt,minimum size=10mm] (t) [right = 30mm of s,yshift=15mm] {$\fail$};
\node[state,inner sep=1pt,minimum size=10mm] (u) [right = 30mm of s,yshift=-15mm] {$\goal_1$};
\node[state,inner sep=1pt,minimum size=10mm] (w) [left = 30mm of s] {$\goal_2$};

 \path[->]
(s) edge [loop above, distance= 10mm, out=50, in=80 ]   node [pos= 0.1, anchor=center]  (h1) {}	 node [pos=.3,right] {$\frac{1}{2}$}  (s) 
 (s) edge [bend right=20] node [pos=.4,below] {$\alpha | +3$} node [pos=.9,below] {$\frac{1}{4}$} (t)
 (s) edge  node  [pos=0.15, anchor=center] (h2) {} node [pos=.9,above] {$\frac{1}{4}$}  (u)
(t) edge [bend right, out=-60, in =-90] node [pos=.2, below] {$\tau|+0$} (s)
 (u) edge[bend left] node [pos=.2,below] {$\tau|+0$} (s)
 (s) edge node [below] {$\beta | +2$} (w)
 (w) edge[bend left,out=25,in=155]   node [pos=.2,above] {$\tau|+0$} (s)
 ;
 \path
 (h1.center) edge  [bend left] (h2.center)
 (h2.center) edge  [bend right] (h1.center)
;

\end{tikzpicture}
}
\end{center}

\caption{Example MDP $\cM$: The weights associated to the actions are stated after the bar and non-trivial probability values are denoted as fractions next to the arrows. $\Goal=\{\goal_1,\goal_2\}$ and $\Fail=\{\fail\}$.}\label{fig:ex-MDP}
\end{figure}

Under $\sched_\alpha$ it takes $2$ steps in expectation to reach $\goal_1$ or $\fail$. So, the expected return time is $3$. The expected accumulated weight when reaching $\goal_1$ or $\fail$ is $6$. As the path only satisfies $\neg\Fail \Until \Goal$ if $\goal_1$ is reached and as reaching $\goal_1$ and reaching $\fail$ are equally likely, the expected accumulated weight for our calculation is $3$. Hence, $\PMP^{\sched_\alpha}_{\cM} = 1$.
For $\sched_\beta$ the calculation is simple: We always receive weight $2$ in $2$ steps. Hence, $\PMP^{\sched_\beta}_{\cM}=1$ as well.

However, the following scheduler $\sched$ using memory achieves a higher value:
The scheduler $\sched$ chooses $\alpha$ in the initial situation. Only in case the self-loop to $\sinit$ is taken, it afterwards chooses $\beta$. Under this scheduler, there are only three paths from the initial situation to the initial situation. Therefore, the computation of the weighted long-run frequency is easy:
\[ \PMP^\sched_{\cM} = \frac{\text{expected accumulated weight}}{\text{expected return time}}= \frac{1/4 \cdot 0 + 1/4 \cdot 3 + 1/2 \cdot 5}{1/4\cdot 2 + 1/4\cdot 2 +1/2 \cdot 3} =\frac{13}{10} .\]

This example already demonstrates that  memoryless schedulers are not sufficient for the optimization of weighted long-run frequencies. Further, it indicates that the weight already accumulated since the last visit to $\Goal$ or $\Fail$ is an important information from the history of a run. Also for partial and conditional expectations and long-run probabilities, similar examples show  the analogous results (cf. \cite{tacas2017,fossacs2019,lics2019}). {\hfill $\blacksquare$}
\end{myex}

\begin{figure}[h]
\begin{center}
\scalebox{.9}{
\begin{tikzpicture}[scale=.9,auto,node distance=8mm,>=latex]
\large
    \tikzstyle{round}=[thick,draw=black,circle]

    \node[round,minimum size=30pt] (t) {$t$};
    \node[round,above=10mm of t, minimum size=30pt] (xj) {$x_j$};

    \node[round, right=20mm of t ,minimum size=30pt] (goal) {$\goal$};
     \node[round, right=20mm of goal ,minimum size=30pt] (s) {$s$};
     \node[round,above=10mm of s, minimum size=30pt] (yj) {$y_j$};

    \node[round,above=35mm of goal, minimum size=30pt] (fail2) {$\fail$};
    \node[round,left=20mm of fail2, minimum size=30pt] (xj2) {$x_j^\prime$};
    \node[round,right=20mm of fail2, minimum size=30pt] (yj2) {$y_j^\prime$};

      \draw[color=black , very thick,->] (t) edge   node [pos=0.10,right=2pt] {$p_2$} (fail2) ;
  \draw[color=black , very thick,->] (t)  edge node [very near start, anchor=center] (h2) {} node [pos=0.25,below=2pt] {$p_1$} (goal) ;
   \draw[color=black , very thick, ->] (t) edge node [ near start, anchor=center] (h1) {} node [pos=0.5,right=2pt] {$p_0$} node [pos=0.1,left=2pt] {$\gamma_j|\wgt:+k-j$} (xj);
  \draw[color=black , very thick] (h1.center) edge [bend left=55] node [pos=0.25,above=2pt] {} (h2.center);
  
   \draw[color=black , very thick,->] (xj2) edge  node [very near start, anchor=center] (g2) {}  node [pos=0.25,above=2pt] {$p_2$} (fail2) ;
  \draw[color=black , very thick,->] (xj2)  edge node [pos=0.1,right=2pt] {$p_1$} (goal) ;
   \draw[color=black , very thick, ->] (xj2) edge node [ near start, anchor=center] (g1) {} node [pos=0.5,right=2pt] {$p_0$} (xj);
  \draw[color=black , very thick] (g2.center) edge [bend left=55] node [pos=0.25,above=2pt] {} (g1.center);
  
  \draw[color=black , very thick, ->] (xj) edge [bend left=25] (xj2);

      \draw[color=black , very thick,->] (s) edge   node [pos=0.10,left=2pt] {$q_2$} (fail2) ;
  \draw[color=black , very thick,->] (s)  edge node [very near start, anchor=center] (j1) {} node [pos=0.25,below=2pt] {$q_1$} (goal) ;
   \draw[color=black , very thick, ->] (s) edge node [ near start, anchor=center] (j2) {} node [pos=0.5,left=2pt] {$q_0$} node [pos=0.1,right=2pt] {$\delta_j|\wgt:+k-j$} (yj);
  \draw[color=black , very thick] (j1.center) edge [bend left=55] node [pos=0.25,above=2pt] {} (j2.center);
  
   \draw[color=black , very thick,->] (yj2) edge  node [very near start, anchor=center] (k2) {}  node [pos=0.25,above=2pt] {$q_2$} (fail2) ;
  \draw[color=black , very thick,->] (yj2)  edge node [pos=0.1,left=2pt] {$q_1$} (goal) ;
   \draw[color=black , very thick, ->] (yj2) edge node [ near start, anchor=center] (k1) {} node [pos=0.5,left=2pt] {$q_0$} (yj);
  \draw[color=black , very thick] (k1.center) edge [bend left=55] node [pos=0.25,above=2pt] {} (k2.center);
  
  \draw[color=black , very thick, ->] (yj) edge [bend right=25] (yj2);
    
\end{tikzpicture}

}
\end{center}

\caption{The new gadget contains the depicted states and actions for each $0\leq j \leq k-1$. The probabilities are: $p_0=q_0=|\alpha_1|+\dots+|\alpha_k|  $, $p_1=(1-p_0) (\frac{1}{2k^{2(k-j)}}+\beta_j)$, $p_2=(1-p_0)(1-(\frac{1}{2k^{2(k-j)}}+\beta_j))$, and $q_1=(1-q_0) \frac{1}{2k^{2(k-j)}}$, $q_2=(1-q_0)(1-\frac{1}{2k^{2(k-j)}})$.
 All actions except for $\gamma_j$ and $\delta_j$ have weight $0$.
 }\label{app_fig:gadget_wlf}
\end{figure}
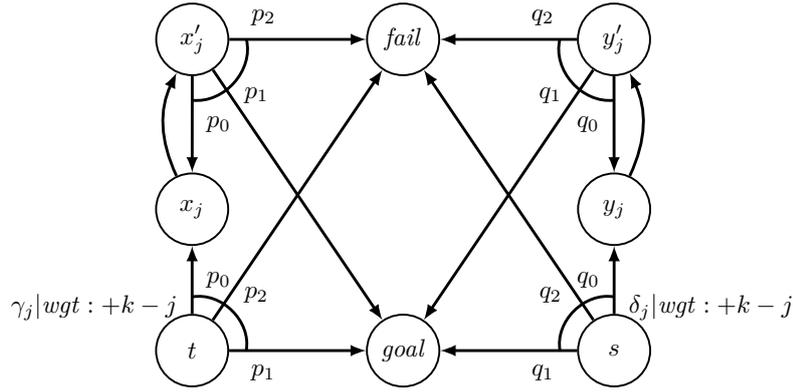


\begin{mythm}[Theorem \ref{thm:Skolem_wlf}]\label{thm_app:Skolem_wlf}
The positivity problem is polynomial-time reducible to the following problem: Given an MDP $\cM$ and a rational $\vartheta$, decide whether ${\PMP}^{\max}_{\cM}>\vartheta$.
\end{mythm}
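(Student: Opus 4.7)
My plan is to reuse the linear recurrence gadget from Figure~\ref{fig:linear_recurrence} verbatim and only replace the initial-values gadget, as suggested by the excerpt. The goal is to arrange the new gadget in Figure~\ref{fig:wlf_initial_values} so that, starting from $s$ or $t$, the expected number of steps until $\{\goal,\fail\}$ is reached is the same constant $T_0$ under \emph{every} choice among $\gamma, \delta, \gamma_0,\dots,\gamma_{k-1},\delta_0,\dots,\delta_{k-1}$. The states $x_j,x_j',y_j,y_j'$ together with their self-loops of probability $\tfrac{1}{k{+}1}$ serve exactly to inflate the hitting time by a scheduler-independent amount that compensates for the shorter/longer paths offered by $\gamma$ vs $\gamma_j$. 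Moreover, by keeping all weight outside of the $\gamma_j,\delta_j$ actions equal to $0$, the expected accumulated weight on paths from $\sinit$ to $\goal$ (i.e.\ the partial expectation up to the first hit of $\{\goal,\fail\}$) is unchanged compared to the construction in Section~\ref{sub:Skolem_partial}: the probabilities of the branches reaching $\goal$ with the various weights are exactly the same as in the partial-SSPP gadget.

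Next, I would connect $\goal$ and $\fail$ back to $\sinit$ by a deterministic zero-weight edge, turning the MDP into a recurrent structure with regeneration at $\sinit$. Every scheduler now induces a sequence of i.i.d.-like regenerative cycles from $\sinit$ back to $\sinit$ (under a memoryless restart, or, for a general scheduler, at least a renewal structure in terms of visits to $\sinit$). By the renewal-reward theorem (equivalently, the ergodic theorem applied to the Markov chain induced by a finite-memory scheduler), the weighted long-run frequency satisfies
\[
\PMP^{\sched}_{\cK} \;=\; \frac{\mathbb{E}^{\sched}_{\cK,\sinit}\!\left(\wgt(\pi)\cdot \mathds{1}_{\neg\fail\,\Until\,\goal}\right)}{\mathbb{E}^{\sched}_{\cK,\sinit}(\text{return time to }\sinit)}
\;=\;\frac{\PE^{\sched}_{\cK'}}{T},
\]
where $\cK'$ is the MDP without the back-edges and $T$ is the fixed expected return time, which by construction is a scheduler-independent rational. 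Hence maximizing $\PMP^{\sched}_{\cK}$ is equivalent to maximizing the partial expectation $\PE^{\sched}_{\cK'}$.

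With this equivalence in hand, the argument from Section~\ref{sub:Skolem_partial} transfers directly. The same analysis of Lemmas~\ref{lem:PE_recurrence} and~\ref{lem:optimal_scheduler} shows that (i) at weight level $-(k{-}1)+j$ the optimal local action is $\gamma_j$ (resp.\ $\delta_j$) in $t$ (resp.\ $s$) and the difference $d(-(k{-}1)+j)$ equals $\beta_j$, and (ii) at positive weight $\gamma,\delta$ are optimal. With the initial coin-flip gadget choosing $\tau$ or $\sigma$ in $c$ at each weight level $n$, the fixed scheduler $\sched$ which always plays $\tau$ and the listed $\gamma$-style actions is optimal for $\PE$ on $\cK'$ iff the sequence $(u_n)$ is non-negative, hence optimal for $\PMP$ on $\cK$ iff $(u_n)$ is non-negative.

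Finally, I set the threshold $\vartheta := \PMP^{\sched}_{\cK} = \PE^{\sched}_{\cK'}/T$. By exactly the computation carried out in the proof of Theorem~\ref{app_thm:threshold_PE} (the Neumann-series evaluation of a $\tfrac{1}{2}$-discounted sum of vectors $v_n = A^{n+1}v_{-1} + \ldots$), $\PE^{\sched}_{\cK'}$ is a rational number computable in polynomial time; dividing by the rational $T$ keeps $\vartheta$ polynomial-time computable. Thus $\PMP^{\max}_{\cK} > \vartheta$ iff some $u_n$ is negative, giving the polynomial reduction from Positivity. I expect the main obstacle to be the verification that the hitting time $T_0$ really is scheduler-independent: the self-loop probabilities $\tfrac{1}{k{+}1}$ and the branching probabilities in Figure~\ref{fig:wlf_initial_values} must be chosen so that each alternative action leaving $s$ or $t$ contributes the same expected remaining time to $\{\goal,\fail\}$, and the computation must be carried out carefully. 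Once that is verified, the reduction of $\PMP$ to $\PE$ via the renewal-reward identity is routine.
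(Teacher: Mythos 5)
Your proposal matches the paper's own proof essentially step for step: the same replacement of the initial-values gadget by one with identical reachability probabilities but a scheduler-independent expected hitting time of $\{\goal,\fail\}$, the same back-edges from $\goal$ and $\fail$ to $\sinit$, and the same renewal-reward identity $\PMP^{\sched}_{\cK}=\PE^{\sched}_{\cK'}/T$ with a constant rational $T$, so that the rational threshold from the partial-SSPP reduction divided by $T$ completes the argument. The only point to make explicit (as the paper does via its Fatou's-lemma lemma on finite-memory schedulers) is that the identification of the two suprema is justified beyond finite-memory schedulers, but this is the same gap-filling the paper itself relies on.
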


\begin{proof}
Given the parameters of a linear recurrence sequence, we again construct the MDP depicted in Figure \ref{fig:MDP}. To obtain $\cM$, we then replace the gadget encoding
the initial values of the linear recurrence sequence depicted in  Figure \ref{fig:initial_values} by  the gadget depicted in Figure \ref{app_fig:gadget_wlf} in which the probabilities to reach $\goal$ are the same as before, but the expected number of steps changes. Further, we add  transitions from states $\goal$ and $\fail$ to $\sinit$ with probability $1$. In the MDP $\cM$, the expected time from the moment $\sinit$ is entered from $\goal$ or $\fail$ to the next time this happens does not depend on the scheduler. In fact, it takes $3$ steps in expectation until $t$ or $s$ is reached and from there on it takes $2\frac{1}{1{-}|\alpha_1|{-} \ldots {-} |\alpha_k|}$ many steps until $\goal$ or $\fail$ is reached no matter which actions are chosen. So the expected return time is $4+2\frac{1}{1{-}|\alpha_1|{-} \ldots {-} |\alpha_k|}$. This however means that a scheduler $\sched$ achieving ${\PMP}^\sched_{\cM}>\vartheta$ also achieves ${\PE}^\sched_{\cM}>\vartheta(4+2\frac{1}{1{-}|\alpha_1|{-} \ldots {-} |\alpha_k|})$.
 So, we can use the rational  threshold computed in Section \ref{subsec:hardness_threshold_PE} and divide it by this constant expected return time in order to establish the Skolem-hardness of the threshold problem for partial mean pay-offs as well.
 \end{proof}

The rational weight introduced in the reduction can easily be transformed to an integer weight by multiplying all weights with the denominator of $\vartheta$. Partial and conditional expectations simply scale accordingly.

\section{Skolem-hardness: long-run probabilities and frequency-LTL}\label{app:LRP}

Suppose that $\cM$ is a strongly connected MDP
with state space $S$ and two designated set of states $\Goal$ and $\Fail$.
An important result we will use in the sequel states that optimal weighted long-run frequencies can be approximated by finite memory schedulers.

\begin{mylem}
\label{app_lem:FM-scheduler}
  For each scheduler $\tsched$ for $\cM$, each $\varepsilon>0$ and 
  each state $s$ of $\cM$,
  there is a finite memory scheduler $\fsched$ for $\cM$ such that: 
  \[
   \PMP^{{\fsched}}_{{\cM,s}} \ \geqslant \ 
    \PMP^{{\tsched}}_{{\cM,s}} -\varepsilon
  \]
\end{mylem}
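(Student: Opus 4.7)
The plan is to combine Fatou's lemma with a truncation-and-reset construction that exploits the strong connectivity of $\cM$. Let $w_{\max} = \max_{s,\alpha}|\wgt(s,\alpha)|$ and write $W_i = \wgt(s_i,\alpha_i)\cdot\mathds{1}_{\pi[i\ldots]\vDash\neg\Fail\Until\Goal}$ and $F_n = \frac{1}{n+1}\sum_{i=0}^n W_i$. Since $|F_n|\leq w_{\max}$ uniformly, Fatou's lemma applied to the non-negative sequence $F_n + w_{\max}$ yields
\[
\PMP^{\tsched}_{\cM,s} \;=\; \mathbb{E}^{\tsched}_{\cM,s}\bigl(\liminf\nolimits_n F_n\bigr) \;\leq\; \liminf_n \mathbb{E}^{\tsched}_{\cM,s}(F_n).
\]
Hence there is a finite horizon $N$ with $\mathbb{E}^{\tsched}_{\cM,s}(F_N) \geq \PMP^{\tsched}_{\cM,s} - \varepsilon/3$.

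Next I would replace each indicator by its bounded-horizon version. Define $W_i^{(M)}$ just like $W_i$ but with the indicator that $\Goal$ is reached before $\Fail$ within at most $M$ steps after position $i$. Since $\cM$ is strongly connected, a standard argument shows that under any scheduler the probability that a run does \emph{not} hit $\Goal\cup\Fail$ within $M$ further steps decays exponentially in $M$, uniformly in the starting state. Consequently, for $M$ large enough, $|\mathbb{E}^{\tsched}_{\cM,s}(F_N) - \mathbb{E}^{\tsched}_{\cM,s}(F_N^{(M)})| \leq \varepsilon/3$, where $F_N^{(M)} = \frac{1}{N+1}\sum_{i=0}^N W_i^{(M)}$. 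The quantity $\mathbb{E}^{\tsched}_{\cM,s}(F_N^{(M)})$ depends only on the distribution over path prefixes of length $N+M$ induced by $\tsched$ from $s$, and therefore is reproduced exactly by any scheduler that mimics $\tsched$ over the first $N+M$ steps.

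The scheduler $\fsched$ is then built as a periodic scheduler with two phases. In the \emph{mimic phase} $\fsched$ follows $\tsched$ for $N+M$ steps; because there are only finitely many path prefixes of that length, this phase requires only finite memory. In the \emph{reset phase} $\fsched$ switches to a memoryless scheduler that steers the chain back to $s$ in bounded expected time $R$, which exists by strong connectivity. Upon arrival at $s$, the two phases restart. The induced Markov chain has a single bottom strongly connected component containing $s$, so by the renewal/ratio characterisation of long-run averages in finite Markov chains,
\[
\PMP^{\fsched}_{\cM,s} \;=\; \frac{\mathbb{E}^{\tsched}_{\cM,s}\bigl(\sum_{i=0}^{N-1} W_i\bigr) + \Delta_{\text{boundary}} + \Delta_{\text{reset}}}{N+M+R},
\]
where $\Delta_{\text{boundary}}$ collects weights contributed at positions $N,\ldots,N+M-1$ of the mimic phase and $\Delta_{\text{reset}}$ collects weights contributed during the reset phase, both bounded in absolute value by $(M+R)w_{\max}$.

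Now choose $N$ additionally large enough that $(M+R)w_{\max}/(N+M+R) \leq \varepsilon/6$ and that $(M+R)/N \cdot |\PMP^{\tsched}_{\cM,s}| \leq \varepsilon/6$; this is always possible since the liminf is achieved infinitely often. The three error sources — Fatou truncation, indicator truncation, and the boundary/reset contributions together with the denominator adjustment — each bounded by $\varepsilon/3$, combine to give $\PMP^{\fsched}_{\cM,s} \geq \PMP^{\tsched}_{\cM,s} - \varepsilon$. The main obstacle is the second step: the indicators in $W_i$ reach arbitrarily far into the future under $\tsched$, so replacing them by bounded-horizon versions and then by their values under $\fsched$ must be done carefully, and the uniform exponential bound on hitting $\Goal\cup\Fail$ is the key ingredient that makes the truncation error manageable.
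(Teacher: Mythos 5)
Your overall architecture (Fatou's lemma to pick a finite horizon, mimic $\tsched$ for that horizon, reset to $s$ by strong connectivity, renewal--reward ratio) is the same as the paper's. The genuine difference, and the gap, lies in how you handle the indicators $\mathds{1}_{\pi[i\ldots]\vDash\neg\Fail\Until\Goal}$ that are still unresolved at the end of the mimic phase. You truncate them to a lookahead of $M$ steps and justify this by the claim that ``under any scheduler the probability that a run does not hit $\Goal\cup\Fail$ within $M$ further steps decays exponentially in $M$, uniformly.'' This is false for a general strongly connected MDP: $S\setminus(\Goal\cup\Fail)$ may contain end components, and a scheduler can keep the run there forever with positive probability. (The paper itself flags exactly this possibility in a footnote of the saturation-point proof.) The uniform bound does hold in the particular MDPs built in the reductions, where every end component meets $\Goal\cup\Fail$, but the lemma is stated and used for arbitrary schedulers of a general strongly connected $\cM$.

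This is not a cosmetic overclaim, because it is load-bearing for your order of quantifiers. Without a uniform rate, the $M$ needed to make $|\mathbb{E}^{\tsched}(F_N)-\mathbb{E}^{\tsched}(F_N^{(M)})|\le\varepsilon/3$ exists for each fixed $N$ only by continuity of measure over the finitely many positions $i\le N$, and it may grow with $N$ (the residual schedulers of $\tsched$ at those positions vary and admit no uniform tail bound). Your final step then demands $(M+R)w_{\max}/(N+M+R)\le\varepsilon/6$, i.e.\ $N$ large relative to $M$ --- but $M$ was chosen after, and as a function of, $N$, so ``take $N$ larger'' does not close the loop. The paper sidesteps truncation entirely: after the mimic phase its scheduler first switches to a memoryless scheduler that \emph{maximizes} $\Pr(\neg\Fail\Until\Goal)$ before returning to $s$, so every indicator at a position $i\le k$ that is unresolved at the switch is satisfied with probability at least as large as under $\tsched$; with non-negative weights this yields $\mathbb{E}^{\tsched}\bigl(\sum_{i\le k}W_i\bigr)\le\mathbb{E}^{\fsched}\bigl(\sum_{i\le k}W_i\bigr)$ exactly, with no extra parameter $M$ and no tail estimate. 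If you adopt that device (or explicitly assume every end component intersects $\Goal\cup\Fail$, which is what your exponential bound actually requires), the rest of your renewal computation goes through.
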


\begin{proof}
Let $\tsched$ be an arbitrary scheduler for $\cM$.
By Fatou's lemma, we have:
\begin{eqnarray*}
   \PMP^{\tsched}_{\cM,s} 
   & = & 
   \mathbb{E}^{\tsched}_{\cM,s}
     \left(  \liminf_{n\to \infty} \frac{1}{n+1} \sum_{i=0}^n \wgt(s_i,\alpha_i)\cdot \mathds{1}_{\pi[i\dots]\vDash \neg \Fail \Until \Goal}
     \right) 
   \\
   & \leqslant &
   \liminf_{n\to \infty} \
   \mathbb{E}^{\tsched}_{\cM,s} 
   \left( 
       \frac{1}{n+1} \sum_{i=0}^n \wgt(s_i,\alpha_i)\cdot \mathds{1}_{\pi[i\dots]\vDash \neg \Fail \Until \Goal}
   \right)
\end{eqnarray*}
So, there exists $k_0\in \Nat$ such that for all $k\geqslant k_0$:
\[
  \mathbb{E}^{\tsched}_{\cM,s} 
  \left( 
     \frac{1}{k{+}1} 
     \sum_{i=0}^{k}
         \wgt(s_i,\alpha_i)\cdot \mathds{1}_{\pi[i\dots]\vDash \neg \Fail \Until \Goal}
  \right) 
  \ \geqslant \ 
  \PMP^{\tsched}_{\cM,s} - \frac{\varepsilon}{2}
\]

Let $\qsched_s$ be the following finite memory scheduler with two modes. 
If the current state is not in $\Goal$ or $\Fail$, it starts in the first mode, 
in which it behaves like an MD-scheduler maximizing the 
probability of $\neg \Fail \Until \Goal$. 
As soon as  a state in $\Goal \cup \Fail$ has been reached,
scheduler $\qsched_s$ operates in the second mode, 
in which it memorylessly minimizes 
the expected number of steps until reaching $s$. 
Let $f_{t,s} = \mathbb{E}^{\min}_{\cM,t}(\text{``steps until $s$''})$ 
denote the expected number of steps 
this scheduler $\qsched_s$ needs to reach $s$ 
in the second mode starting from state $t$. 
We then define $f_s=\max_{t\in S} f_{t,s}$ and  $f=\max_{s\in S} f_s$.

We now construct a finite-memory scheduler $\fsched$ satisfying the claim of the lemma. 
First, choose a natural number $k$ with $k \geqslant k_0$ and
$k{+}1> \frac{2\cdot W\cdot f_s}{\varepsilon}$ where $W$ is the maximal weight appearing in $\cM$. 
The behavior of scheduler $\fsched$ is as follows.
In its first mode, it starts in $s$ and behaves like $\tsched$ 
in the first $k$ steps. 
Then, it switches to the second mode and behaves like $\qsched_s$ 
until it reaches $s$ (in the second mode of $\qsched_s$). 
Afterwards, it switches back to the first mode.

As $\qsched_s$ maximizes the probability of $\neg \Fail \Until \Goal$
whenever it starts in a state not in $\Goal$ or $\Fail$, we obtain:
\begin{eqnarray*}
 && \frac{1}{k{+}1} \cdot \mathbb{E}^\tsched_{\cM,s}( \sum_{i=0}^{k}
     \wgt(s_i,\alpha_i)\cdot \mathds{1}_{\pi[i\dots]\vDash \neg \Fail \Until \Goal}) \\
  & \leqslant &  
  \frac{1}{k{+}1} \cdot \mathbb{E}^\fsched_{\cM,s}( \sum_{i=0}^{k}
     \wgt(s_i,\alpha_i)\cdot \mathds{1}_{\pi[i\dots]\vDash \neg \Fail \Until \Goal}).
\end{eqnarray*}
Furthermore, the expected number of steps which $\fsched$ 
takes to follow $\tsched$ for $k{+}1$ steps and 
to return to $s$ via $\qsched_s$ 
is at most $k{+}1{+}f_s$.

Expressing the weighted long-run frequency of $\fsched$ as a quotient, we obtain:
\begin{eqnarray*}
 \PMP^{\fsched}_{\cM,s}
  &\geqslant & 
  \frac { \mathbb{E}^\fsched_{\cM,s}( \sum_{i=0}^{k}
     \wgt(s_i,\alpha_i)\cdot \mathds{1}_{\pi[i\dots]\vDash \neg \Fail \Until \Goal})}
    {k+1+f_s} 
  \\
  \\[-1ex]
  & \geqslant & 
  \frac { \mathbb{E}^\tsched_{\cM,s}( \sum_{i=0}^{k}
     \wgt(s_i,\alpha_i)\cdot \mathds{1}_{\pi[i\dots]\vDash \neg \Fail \Until \Goal})}
  {(k+1)\cdot (1+\varepsilon/2W)} 
  \\
  \\[-1ex]
  & \geqslant & 
  \frac {  \mathbb{E}^\tsched_{\cM,s}( \sum_{i=0}^{k}
     \wgt(s_i,\alpha_i)\cdot \mathds{1}_{\pi[i\dots]\vDash \neg \Fail \Until \Goal})}
  {k+1} \cdot (1-\varepsilon/2W) 
  \\
  \\[-1ex]
  & \geqslant &
  ( \PMP^{\tsched}_{\cM,s} -\varepsilon/2) \cdot (1-\varepsilon/2W) 
\end{eqnarray*}
by the choice of $k$. Using the fact that 
$\PMP^{\tsched}_{\cM,s}$ is bounded by W
we obtain:
\[
 \begin{array}{lcl}
   \multicolumn{3}{l}{\PMP^{\fsched}_{\cM,s}}
   \\[1ex]
   & \geqslant &
   ( \PMP^{\tsched}_{\cM,s} -\varepsilon/2) \cdot (1-\varepsilon/2W) 
   \\[1ex]
   & \geqslant &
   \PMP^{\tsched}_{\cM,s} -\varepsilon.
 \end{array}
\]
This completes the proof.
\end{proof}

The analogous result for long-run probabilities was shown in \cite{lics2019}.

We now provide the proof to Lemma \ref{lem:encoding}. The key idea is to encode integer weights via a labelling of states and to use a simple regular co-safety property to mimic the reception of weights in weighted long-run frequencies. 
In the sequel, we will work with weighted states instead of weighted state-action pairs. Further, we assume that the weights are only $-1$, $0$, and $+1$. This assumption leads to a pseudo-polynomial blow-up in the general case. The weights in the MDP $\cK$ constructed for Theorem \ref{thm:Skolem_wlf} above are, however, at most $k$. As the MDP has more than $2k$ states, transforming $\cK$ to weights $-1$, $0$, and $+1$ only leads to a polynomial blow-up.
As this MDP has no non-trivial end-components,  $\{\goal,\fail\}$ is visited infinitely often with probability $1$ under any scheduler.
Let $\AP=\{n,z,p,c,g,f\}$ be a set of atomic propositions representing \emph{negative}, \emph{zero}, and \emph{positive} weight, as well as  \emph{coin flip},  $\goal$, and $\fail$, respectively. 

We construct a new labelled MDP $\cL=\langle S^\prime, \Act, \Pr^\prime, \sinit, L \rangle$ with a labeling function $L:S^\prime \to 2^\AP$.
The state space $S^\prime=S\setminus (\Goal \cup \Fail) \cup (\Goal \cup \Fail) \times \{0,1\}$. The set of actions stays the same. 
For any action $\alpha\in\Act$, states $s,t \in S\setminus (\Goal\cup\Fail)$  and $(u,i),(v,j) \in  (\Goal\cup\Fail)\times \{0,1\}$, we define
$
\Pr^\prime(s,\alpha,t)=\Pr(s,\alpha,t)$, 
$\Pr^\prime(s,\alpha,(u,i))=\frac{1}{2}\Pr(s,\alpha,u)$,
$\Pr^\prime((u,i),\alpha,s)=\Pr(u,\alpha,s)$, and
$\Pr^\prime((u,i),\alpha,(v,j))=\frac{1}{2}\Pr(u,\alpha,v)$.
So, intuitively the only change is that states in $\Goal$ or $\Fail$ are duplicated and whenever they are entered each of the copies is visited with probability $\frac{1}{2}$.
The labeling function $L$ does the following: For a state $s\in S\setminus (\Goal\cup \Fail)$, we have $L(s)=\{n\}$ iff $\wgt(s)=-1$,  $L(s)=\{z\}$ iff $\wgt(s)=0$, and $L(s)=\{p\}$ iff $\wgt(s)=+1$.
For states $(t,i)\in \Fail\times\{0,1\}$, we have $L((t,i))=\{f\}$ iff $i=0$ and  $L((t,i))=\{f,c\}$ iff $i=1$. For states $(u,j)\in \Goal\times\{0,1\}$, we have $L((u,j))=\{g,x\}$ iff $j=0$ and $L((u,j))=\{g,x,c\}$ iff $i=1$ where $x$ is $n$, $z$, or $p$ depending on $\wgt(u)$ as above.

\begin{mylem}[Lemma \ref{lem:encoding}]
For the MDPs $\cK$ and $\cL$ constructed above, we have  
\[{\PMP}^{\max}_{\cK} = \frac{1}{2}+\frac{1}{2}\LP^{\max}_\cL (\cA).\]
\end{mylem}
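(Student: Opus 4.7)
The plan is to establish a per-scheduler affine identity relating $\LP^{\sched}_\cL(\cA)$ and $\PMP^{\sched}_{\cK}$, from which the stated equality follows by rearrangement, and then to transfer the identity from individual schedulers to the suprema using Lemma~\ref{app_lem:FM-scheduler} for $\PMP$ together with the analogous Fatou-style finite-memory approximation for $\LP$ from~\cite{lics2019}. A scheduler $\sched$ for $\cK$ lifts canonically to $\cL$ by making identical choices in both copies of any $u\in\Goal\cup\Fail$, and this lift preserves all quantities depending only on the $\cK$-projection of the path, so the identification of scheduler classes on the two MDPs is unproblematic.

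The heart of the argument is a local computation. Fix a residual scheduler $\sched'$ at a state $s\notin\Goal\cup\Fail$, and let $q=\Pr^{\sched'}_{\cK,s}(\neg\Fail\Until\Goal)$. Since $\cK$ has no non-trivial end components, $\Goal\cup\Fail$ is reached almost surely; tracing $\cA$ (Figure~\ref{fig:NFA}) then yields the following. If $s$ has label $p$, the NFA enters state $a$, and since the edge labelled $g$ fires on either $\goal$-copy (contributing $q$) while the edge $f\land c$ fires only on the $c$-copy of $\fail$ (contributing $\tfrac12(1-q)$), the acceptance probability is $\tfrac12+\tfrac12 q$. If $s$ has label $z$, the NFA enters $b$ and accepts exactly on reaching a $c$-copy, for probability $\tfrac12$. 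If $s$ has label $n$, the NFA enters $c$ and accepts only on the $c$-copy of $\fail$, for probability $\tfrac12-\tfrac12 q$. Reading $\wgt(s)\in\{-1,0,+1\}$ off the label $n/z/p$ unifies all three cases as $\Pr^{\sched'}_{\cL,s}(\cA)=\tfrac12+\tfrac12\wgt(s)\,q$, and a short case analysis shows that the $\tfrac12/\tfrac12$ coin-flip average over the two copies of each $u\in\Goal\cup\Fail$ obeys the same affine formula with $q\in\{0,1\}$. For any single-BSCC finite-memory scheduler, the ergodic theorem then makes $\mathit{wlf}$ and $\flrprob{\sched}{\cA}$ almost-sure constants equal to the $\mu_\cK$- and $\mu_\cL$-integrals of the per-state quantities; since $\mu_\cL$ evenly splits $\mu_\cK$-mass across the two copies of each $u\in\Goal\cup\Fail$, integrating the per-state identity gives the scheduler-level identity, and the two finite-memory approximation lemmas transfer equality to the suprema.

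The main obstacle I expect is the case analysis for states in $\Goal\cup\Fail$: because $\cK$ loops $\goal$ and $\fail$ back to $\sinit$, these states carry non-trivial steady-state mass, and one must verify that for each possible weight-label of $\goal$ (and the unique label of $\fail$) the averaged acceptance probability over the two copies reproduces the same affine rule---this is precisely what pins down the particular set of accepting transitions out of the initial NFA state. A secondary subtlety is to justify that distinguishing the two copies of a $\Goal\cup\Fail$-state cannot help an $\cL$-scheduler beat the supremum over lifts of $\cK$-schedulers, which follows from the shared successor structure of the two copies together with the observation that NFA acceptance on subsequent visits depends only on the $\cK$-projection.
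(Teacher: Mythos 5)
Your proof takes essentially the same route as the paper's: the same case analysis of the NFA's acceptance probability per weight label (your unified formula $\tfrac12+\tfrac12\wgt(s)q$ is exactly the paper's four explicit sums for $p$-, $z$-, $n$- and $f$-labeled states), the same steady-state aggregation over the single BSCC of a finite-memory scheduler, and the same Fatou-lemma transfer of the identity to the suprema on both sides. One remark: both your aggregation and the paper's own displayed computation actually yield $\LP^{\fsched}_{\cL}(\cA)=\tfrac12+\tfrac12\PMP^{\fsched}_{\cK}$, i.e.\ the affine relation with the two quantities exchanged relative to the lemma statement, so ``follows by rearrangement'' should give $\PMP^{\max}_{\cK}=2\,\LP^{\max}_{\cL}(\cA)-1$ --- this appears to be an issue with the statement itself rather than with either proof, and since any affine relation with positive slope suffices it does not affect the Positivity-hardness reduction.
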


\begin{proof}
Let $\fsched$ be a finite memory scheduler for $\cK$ and $\cL$ which induces a single BSCC. It is clear that it is enough to consider such scheduler for the maximization as in a strongly connected MDP a scheduler could always move to the best of multiple BSCCs.

In this single BSCC $\cB^\fsched$ where states are again enriched with memory modes of $\fsched$, we can compute the probability to satisfy $\neg \Fail \Until \Goal$ from each state. For $\mathfrak{s}\in\cB^{\fsched}$, let $p_{\mathfrak{s}}$ be this probability. Furthermore, let $x_{\mathfrak{s}}$ be the steady state probability of state $\mathfrak{s}$ in this single BSCC.
Then,
\[{\PMP}^{\fsched}_{\cK} = \sum_{\mathfrak{s}\in \cB^\fsched} \wgt(\mathfrak{s}) \cdot p_\mathfrak{s} \cdot x_\mathfrak{s}.\]

By the assumption that $\Goal \cup \Fail$ intersects all end components, we can conclude that the probability to satisfy $\neg \Goal \Until \Fail$ is $1-p_\mathfrak{s}$ in each state $\mathfrak{s}$.
So, we can compute the long-run probability of $\cA$ in $\cL$ as follows. We  use that $c$ holds with probability $1/2$ whenever  a state in $\Goal$ or $\Fail$ is reached. 
\begin{eqnarray*}
\LP^{\fsched}_\cL (\cA) & = & \sum_{p\in L(\mathfrak{s})} x_\mathfrak{s} \cdot (p_\mathfrak{s} + 1/2 (1- p_{\mathfrak{s}})) \\
 && + \sum_{z\in L(\mathfrak{s})} x_\mathfrak{s} \cdot (1/2 \cdot p_\mathfrak{s} + 1/2 (1- p_{\mathfrak{s}})) \\
 && + \sum_{n\in L(\mathfrak{s})} x_\mathfrak{s} \cdot  1/2 (1- p_{\mathfrak{s}}) \\
 && + \sum_{f\in L(\mathfrak{s})} x_\mathfrak{s} \cdot 1/2
\end{eqnarray*}
Now, it is easy to conclude that ${\PMP}^{\fsched}_{\cK} = \frac{1}{2}+\frac{1}{2}\LP^{\fsched}_\cL (\cA)$.

That the maximum agrees with the supremum over finite-memory schedulers on the left-hand side was shown in Lemma \ref{app_lem:FM-scheduler} using Fatou's lemma. We sketch the proof for the long-run frequency of the co-safety property $\phi$ given by $\cA$ in the MDP $\cL$ following the ideas of \cite{lics2019}. 
Note that by the fact that there are no non-trivial end components, we can conclude that the states labelled with $g$ or $f$ are reached infinitely often with probability $1$. Further, there is a bound $d$ on the expected time to the next visit to $f$ pr $g$ under any scheduler and from any starting point.
Note that in the automaton $\cA$ any run is accepted or rejected as soon as $g$ or $f$ is read. Furthermore, there is a bound $r$ on the expected time to return to the initial state $\sinit$ form any other state under a scheduler $\rsched$ minimizing this time.
Now, let $\tsched$ be any scheduler for $\cL$ and $\varepsilon>0$. By Fatou's lemma, we have:
\begin{eqnarray*}
   \LP^{\tsched}_{\cL,\sinit} (\cA)
   & = & 
   \mathbb{E}^{\tsched}_{\cL,\sinit}
     \left(  \liminf_{n\to \infty} \frac{1}{n+1} \sum_{i=0}^n \mathds{1}_{\pi[i\dots]\vDash \phi}
     \right) 
   \\
   & \leqslant &
   \liminf_{n\to \infty} \
   \mathbb{E}^{\tsched}_{\cL,\sinit} 
   \left( 
     \frac{1}{n+1} \sum_{i=0}^n \mathds{1}_{\pi[i\dots]\vDash \phi}
   \right)
\end{eqnarray*}
So, there exists $k_0\in \Nat$ such that for all $k\geqslant k_0$:
\[
  \mathbb{E}^{\tsched}_{\cL,\sinit} 
\left(  \frac{1}{k+1} \sum_{i=0}^k \mathds{1}_{\pi[i\dots]\vDash \phi}  \right) 
  \ \geqslant \ 
\LP^{\tsched}_{\cL,\sinit} (\cA) - \frac{\varepsilon}{2}
\]

Pick $N\geq k_0$ such that $N+1>\frac{2\cdot d \cdot r}{\varepsilon}$.
We now provide a finite memory scheduler $\sched$ with $\LP^{\sched}_{\cL,\sinit} (\cA) \geq \LP^{\tsched}_{\cL,\sinit} (\cA) - \varepsilon$:
The scheduler $\sched$ behaves like $\tsched$ for the first $N+1$ steps. Then, it maximizes the probability for $\neg f \Until g$ if more states labelled $p$ than states labelled $n$ have been visited since the last visit to a state labelled $f$ or $g$. Otherwise, it maximizes the probability for $\neg g \Until f$. As there have only been $N+1$ steps since the beginning this can be tracked with finite memory.
As soon as it reaches a state labelled $f$ or $g$ now, it returns to $s$ using the choices of $\rsched$. Then, it restarts behaving like $\tsched$ for $N+1$ steps and so on.

First, we see that 
\[  \mathbb{E}^{\tsched}_{\cL,\sinit} 
\left(  \frac{1}{N+1} \sum_{i=0}^N \mathds{1}_{\pi[i\dots]\vDash \phi}  \right) \leq 
\mathbb{E}^{\sched}_{\cL,\sinit} 
\left(  \frac{1}{N+1} \sum_{i=0}^N \mathds{1}_{\pi[i\dots]\vDash \phi}  \right)
\]
due to the optimization of the probabilities of  $\neg f \Until g$ or $\neg g \Until f$ depending on the number of states labelled $p$ or $n$ since the last visit to $f$ or $g$:
If $\ell$ suffixes which have not yet been accepted or rejected by $\cA$  started with $p$ and $m$ started with $n$, then the expected number of those runs which will be  accepted  under some $\qsched$ is:
$\Pr^{\qsched} (\neg f \Until g) \cdot \ell + \Pr^{\qsched} (\neg g \Until f) \cdot (1/2 (\ell + m))$. And, $\sched$ behaves such that this value is maximized as $\Pr^{\qsched} (\neg f \Until g)=1-\Pr^{\qsched} (\neg g \Until f)$.
The expected return time to the initial state in the initial memory mode under $\sched$ is at most $N+1+d+r$.
So, the long-run probability under $\sched$ satisfies:
\begin{eqnarray*}
\LP^{\sched}_{\cL,\sinit} (\cA) & \geq & \frac{\mathbb{E}^{\sched}_{\cL,\sinit} 
\left(  \sum_{i=0}^N \mathds{1}_{\pi[i\dots]\vDash \phi}  \right)}{N+1+d+r} \\
& \geq & \frac{\mathbb{E}^{\sched}_{\cL,\sinit} 
\left(  \sum_{i=0}^N \mathds{1}_{\pi[i\dots]\vDash \phi}  \right)}{(N+1)\cdot (1+\varepsilon/2)}\\
&\geq & \frac{\mathbb{E}^{\sched}_{\cL,\sinit} 
\left(  \sum_{i=0}^N \mathds{1}_{\pi[i\dots]\vDash \phi}  \right)}{(N+1)} (1-\varepsilon/2) \\
&\geq & (\LP^{\tsched}_{\cL,\sinit} (\cA) - \frac{\varepsilon}{2})(1-\varepsilon/2)\\
& \geq & \LP^{\tsched}_{\cL,\sinit} (\cA) - \varepsilon.
\end{eqnarray*}

Now, it follows that the maximal long-run probability of $\phi$ is obtained by taking the supremum over all finite memory schedulers as well and this finishes the proof.
\end{proof}

\begin{mythm}[Theorem \ref{thm:fLTL}]
There is a polynomial-time reduction from the positivity problem to the following qualitative model checking problem for frequency LTL  for a fixed LTL-formula $\phi$: Given an MDP $\cM$ and a rational $\vartheta$, is $\Pr^{\max}_\cM (G^{>\vartheta}_{\inf} (\varphi))=1$?
\end{mythm}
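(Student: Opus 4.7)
The plan is to reduce from the threshold problem for long-run probabilities of the regular co-safety property recognized by the NFA $\cA$ from Figure~\ref{fig:NFA}, whose Positivity-hardness is established in Theorem~\ref{thm:Skolem_LP} via Lemma~\ref{lem:encoding}. Take the MDP $\cL$ and rational $\vartheta$ produced by that reduction from a given instance of the Positivity problem. Because $\cA$ accepts good prefixes of a simple shape (a disjunct $d$ of atomic propositions, plus three disjuncts of the form $c_i \land (a_i \Until b_i)$), the co-safety property it defines can be written as a fixed LTL-formula $\varphi$ using only next-free until over a Boolean combination of atomic propositions. In particular, no deeper temporal nesting is needed, so $\varphi$ is of the promised simple form. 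The claim I will establish is that $\LP^{\max}_\cL(\cA)>\vartheta$ if and only if $\Pr^{\max}_\cL(G^{>\vartheta}_{\inf}(\varphi))=1$.

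For the forward direction, I would first argue that whenever $\LP^{\max}_\cL(\cA)>\vartheta$, Lemma~\ref{app_lem:FM-scheduler} (and its analogue for long-run probabilities used in the proof of Lemma~\ref{lem:encoding}) provides a finite-memory scheduler $\fsched$ with $\LP^{\fsched}_\cL(\varphi)>\vartheta$. Refining $\fsched$ if necessary, one may assume that the induced Markov chain $\MC{\cL^{\fsched}}$ has only one bottom strongly connected component (otherwise pick the BSCC with the largest long-run probability). Inside a single BSCC the process is ergodic, so by the pointwise ergodic theorem applied to the bounded measurable function $\mathds{1}_{\pi[i\dots]\vDash \varphi}$, the Cesàro averages $\frac{1}{n+1}\sum_{i=0}^n \mathds{1}_{\pi[i\dots]\vDash \varphi}$ converge almost surely to the constant $\LP^{\fsched}_\cL(\varphi)$. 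Since this constant exceeds $\vartheta$, the path property $G^{>\vartheta}_{\inf}(\varphi)$ holds with probability $1$ under $\fsched$, giving $\Pr^{\max}_\cL(G^{>\vartheta}_{\inf}(\varphi))=1$.

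For the converse direction, suppose $\Pr^{\sched}_\cL(G^{>\vartheta}_{\inf}(\varphi))=1$ for some scheduler $\sched$. Taking expectations of $\liminf_{n\to\infty}\frac{1}{n+1}\sum_{i=0}^n \mathds{1}_{\pi[i\dots]\vDash\varphi}$, which on almost every path exceeds $\vartheta$ strictly, and applying Fatou's lemma exactly as in Lemma~\ref{app_lem:FM-scheduler}, one obtains
\[
 \vartheta \;<\; \Expected^{\sched}_\cL\!\left(\liminf_n \frac{1}{n{+}1}\sum_{i=0}^n \mathds{1}_{\pi[i\dots]\vDash\varphi}\right) \;\leq\; \liminf_n \Expected^{\sched}_\cL\!\left(\frac{1}{n{+}1}\sum_{i=0}^n \mathds{1}_{\pi[i\dots]\vDash\varphi}\right) \;=\; \LP^{\sched}_\cL(\varphi),
\]
so $\LP^{\max}_\cL(\varphi)>\vartheta$ and thus $\LP^{\max}_\cL(\cA)>\vartheta$. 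Combining the two directions yields the equivalence and hence a polynomial-time reduction from Positivity to the qualitative model-checking problem for frequency-LTL with the fixed formula $G^{>\vartheta}_{\inf}(\varphi)$.

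The main obstacle I anticipate is the ergodic argument in the forward direction: one must be careful that $G^{>\vartheta}_{\inf}(\varphi)$ is a \emph{strict} long-run threshold, so merely having the long-run probability equal to $\vartheta$ would not suffice. This is handled by the strict inequality produced by the reduction to $\LP^{\max}$ and by restricting to a BSCC where the Cesàro average converges almost surely to exactly $\LP^{\fsched}_\cL(\varphi)$; measurability of the set $\{\pi : \liminf_n \frac{1}{n+1}\sum_i \mathds{1}_{\pi[i\dots]\vDash\varphi}>\vartheta\}$ is immediate because $\varphi$ defines a Borel-measurable set of paths. Everything else is bookkeeping on top of Theorem~\ref{thm:Skolem_LP}.
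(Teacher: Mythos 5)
Your overall strategy --- reducing via Theorem~\ref{thm:Skolem_LP} and proving $\LP^{\max}_\cL(\cA)>\vartheta$ iff $\Pr^{\max}_\cL(G^{>\vartheta}_{\inf}(\varphi))=1$ --- is exactly the paper's, and your forward direction is correct: after passing to a finite-memory scheduler with a single BSCC, the almost-sure convergence of the Ces\`aro averages of $\mathds{1}_{\pi[i\dots]\vDash\varphi}$ to the constant $\sum_{\mathfrak{s}}x_{\mathfrak{s}}p_{\mathfrak{s}}$ is the same ergodic fact the paper invokes in the guise of ``the mean-payoff of almost every path in a strongly connected Markov chain equals its expectation.''

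The converse direction, however, has a genuine gap. Your chain terminates in the equality $\liminf_n \Expected^{\sched}_\cL\bigl(\frac{1}{n+1}\sum_{i=0}^n\mathds{1}_{\pi[i\dots]\vDash\varphi}\bigr)=\LP^{\sched}_\cL(\varphi)$, but that is not the definition of the long-run probability: $\LP^{\sched}_\cL(\varphi)$ is the \emph{expectation of the pathwise} $\liminf$ (of residual-scheduler probabilities), and Fatou's lemma only yields $\LP^{\sched}_\cL(\varphi)\leq\liminf_n\Expected^{\sched}_\cL(\cdots)$. The inequality points the wrong way, so from $\liminf_n\Expected^{\sched}_\cL(\cdots)>\vartheta$ you cannot conclude $\LP^{\sched}_\cL(\varphi)>\vartheta$ for the \emph{same} scheduler $\sched$. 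The missing step --- which is precisely what the paper supplies --- is to use the bound $\liminf_n\Expected^{\sched}_\cL(\cdots)>\vartheta$ as input to the explicit finite-memory construction of Lemma~\ref{app_lem:FM-scheduler} (and its long-run-probability analogue in the proof of Lemma~\ref{lem:encoding}): pick $k$ with $\Expected^{\sched}_\cL\bigl(\frac{1}{k+1}\sum_{i=0}^k\mathds{1}_{\pi[i\dots]\vDash\varphi}\bigr)>\vartheta$, build a scheduler that mimics $\sched$ for $k{+}1$ steps, then resolves the pending until-obligations optimally and returns to the initial state, and repeat; for this finite-memory scheduler the expected long-run frequency, and hence (single BSCC) the long-run probability, exceeds $\vartheta$ once $\varepsilon$ is chosen below the slack. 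You name-drop the lemma but replace the construction by an incorrect identification, so as written the converse does not go through.
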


\begin{proof}
We provided a polynomial reduction from the positivity problem to the following problem: given a strongly connected MDP $\cN$ in which each end component contains a state labelled $f$ or $g$ and a rational $\vartheta$, decide whether there is a scheduler $\sched$ such that $\LP_{\cN}^{\sched}(\cA)>\vartheta$. The property expressed by $\cA$ is captured by the following LTL-formula $\phi$:
\begin{eqnarray*}
((g \land p) \lor (g\land z \land c )\lor (f \land c)) &\lor& (p\land ((\neg g \land \neg f)\Until (g\lor (f\land c)))) \\
&\lor& (z\land ((\neg g \land \neg f)\Until ((g\land c)\lor (f\land c)))) \\
&\lor& (n\land ((\neg g \land \neg f)\Until (f\land c))).
\end{eqnarray*}
We claim that there is such a scheduler $\sched$ if and only if there is a scheduler $\tsched$ such that $G^{>\vartheta}_{\inf}(\phi)$ holds with probability $1$ under $\tsched$ in $\cN$.

The semantics of $G^{>\vartheta}_{\inf}(\phi)$ as given in \cite{ForejtKK15} is the following: An infinite path $\infpath$ satisfies $G^{>\vartheta}_{\inf}(\phi)$ if 
\[
\liminf_{n\to \infty } \frac{1}{n+1} \sum_{i=0}^n \mathds{1}_{\infpath[i\dots]\vDash \phi} > \vartheta.
\]

Suppose there is  a scheduler with $\sched$ with $\LP_{\cN}^{\sched}(\cA)>\vartheta$.
We have seen in the previous proof, that we can assume that $\sched$ is a finite memory scheduler. As $\cN$ is strongly connected, we can further assume that $\sched$ induces only one BSCC. We claim that under this  scheduler $\sched$ also $G^{>\vartheta}_{\inf}(\phi)$ holds with probaility $1$. For finite memory schedulers it is easy to check that the expected long-run probability equals the expected long-run frequency: As above let $x_\mathfrak{s}$ be the steady state probability of states $\mathfrak{s}$ enriched with memory modes in the single BSCC $\cB^\sched$ induced by $\sched$.
Further, let $p_\mathfrak{s}$ be the probability that a run starting in $\mathfrak{s}$ under $\sched$ satisfies $\phi$.
Then, $\LP_{\cN}^{\sched}(\cA) = \sum_{\mathfrak{s}\in \cB^{\sched}} x_\mathfrak{s}\cdot p_{\mathfrak{s}}$. But the same expression also computes the expected frequency with which $\phi$ holds on suffixes, i.e. the expected value $\mathbb{E}_{\cN}^{\sched}(\liminf_{n\to \infty } \frac{1}{n+1} \sum_{i=0}^n \mathds{1}_{\infpath[i\dots]\vDash \phi})$. Furthermore, this can be seen as a mean-payoff in a strongly connected Markov chain where the weights are $p_\mathfrak{s}$ in each state. But, for a mean-payoff in a strongly connected Markov chain it is well known that the mean pay-off of almost all paths agrees with the expected value. So, $\liminf_{n\to \infty } \frac{1}{n+1} \sum_{i=0}^n \mathds{1}_{\infpath[i\dots]\vDash \phi} > \vartheta$ almost surely.

Conversely, If there is a scheduler $\tsched$ such that $G^{>\vartheta}_{\inf}(\phi)$ holds with probability $1$ under $\tsched$ in $\cN$, the expected value $\mathbb{E}_{\cN}^{\sched}(\liminf_{n\to \infty } \frac{1}{n+1} \sum_{i=0}^n \mathds{1}_{\infpath[i\dots]\vDash \phi})>\vartheta$. By an ananlogue  Fatou's lemma argument, we can find a finite memory scheduler with expected long-run frequency, and hence long-run probability, greater than $\vartheta$.
\end{proof}

\section{Saturation points: conditional value-at-risk for the classical SSPP}\label{cvar_pos}

\begin{mythm}[Theorem \ref{thm:cvar_pos}]
Given an MDP $\cM=(S,\sinit,\Act,P,\wgt,\goal)$ with non-negative weights and no end-components except for one absorbing state $\goal$ as well as a  rational probability value $p\in(0,1)$, the value  $\CVaR^{\max}_p (\rawdiaplus \goal)$ is computable  in pseudo-polynomial time.
\end{mythm}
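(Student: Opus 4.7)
The plan is to establish a pseudo-polynomial \emph{saturation point} $K\in\mathbb{N}$ such that, under any scheduler, the probability that the accumulated weight upon reaching $\goal$ exceeds $K$ is strictly less than $1-p$. Since the worst $p$-fraction of outcomes are precisely the paths with lowest $\rawdiaplus\goal$, any path whose accumulated weight surpasses $K$ automatically sits strictly above the value-at-risk and therefore does not contribute to $\CVaR_p^{\sched}(\rawdiaplus\goal)$. Consequently the scheduler's behavior on the ``high-weight tail'' is irrelevant, and we can truncate the dynamics at weight $K$ without changing the optimal CVaR.

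To obtain such a $K$, I would exploit the assumption that $\goal$ is the unique end component: standard arguments show that $\Pr^{\max}_{\cM,s}(\neg\goal\,\Until\,n\text{-th step})$ decays geometrically in the number of steps, and since all weights lie in $\{0,\dots,W\}$ with $W=\max\wgt$, the probability of accumulating weight $>K$ before reaching $\goal$ is bounded above by $c\cdot\lambda^{K/W}$ for constants $c,\lambda$ with $\lambda<1$ depending polynomially on the transition probabilities of $\cM$. Solving $c\lambda^{K/W}<1-p$ gives a $K$ whose binary encoding is polynomial in the size of $\cM$ and in $\log(1/(1-p))$, hence pseudo-polynomial in the input.

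Next I would build the product MDP $\cN$ on state space $S\times\{0,\dots,K\}$ that tracks the accumulated weight, capped at $K$. States $(\goal,i)$ for $i<K$ are declared terminal with terminal weight $i$, and all states of the form $(s,K)$ are declared terminal with terminal weight $K$ (in fact, any fixed value $\geq K$ would do, since such outcomes are never among the $p$ worst). Schedulers for $\cM$ and $\cN$ correspond bijectively up to behavior after weight exceeds $K$, and by the saturation argument the two induced CVaRs coincide. The random variable $\rawdiaplus\goal$ in $\cM$ therefore matches, in distribution restricted to the worst $p$ outcomes, the weighted-reachability random variable in $\cN$ in the sense of \cite{kretinsky2018}.

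Invoking the algorithm of \cite{kretinsky2018} for maximizing the conditional value-at-risk of weighted reachability yields $\CVaR_p^{\max}(\rawdiaplus\goal)$ in time polynomial in $|\cN|=|S|\cdot(K+1)$, hence pseudo-polynomial in the original input. The main obstacle I anticipate is giving a clean and tight bound on $K$: one must confirm that the geometric-decay estimate produces a $K$ of pseudo-polynomial bit-length and verify carefully that truncating at $K$ does not alter the value-at-risk (which requires ruling out mass exactly at weight $K$ interfering with the $p$-quantile); the remaining steps are routine reductions.
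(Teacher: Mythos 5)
Your proposal follows essentially the same route as the paper's proof: a saturation point $K$ of pseudo-polynomial magnitude obtained from the geometric decay of the probability of avoiding $\goal$ (the paper takes $K=\ell\cdot N\cdot W$ with $(1-\delta^N)^\ell\leq 1-p$), followed by the product construction on $S\times\{0,\dots,K\}$ with terminal weights and an appeal to the weighted-reachability CVaR algorithm of \cite{kretinsky2018}, whose value-at-risk guess ranges over only pseudo-polynomially many integer candidates. The point you flag about mass exactly at weight $K$ is handled automatically by the CVaR definition's separate treatment of outcomes equal to the value-at-risk, so no further argument is needed.
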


\begin{proof}
Let $N$ be the number of states of $\cM$, $\delta$ be the minimal non-zero transition probability, and $W$ the maximal weight occuring in $\cM$. As there are no end components except for $\goal$, the state $\goal$ is reached within $N$ steps from any other state under any scheduler with probability at least $\delta^N$. Let $\ell$ be such that $(1-\delta^N)^\ell \leq 1-p$. Note that $\ell$ simply has to be chosen bigger than $\frac{\log(1-p)}{\log(1-\delta^N)}$ and hence can be computed in polynomial time. So, its numerical value is  at most of pseudo-polynomial size. Then, the probability that a path accumulates a weight higher that $K=\ell\cdot N\cdot W$ is less than $1-p$ under any scheduler. So, the value-at-risk $\VaR^\sched_p(\rawdiaplus \goal)$ is less than $K$ under any scheduler $\sched$.
This means that the value-at-risk and hence also the conditional value-at-risk are not affected if we simply assign weight $K$ to all paths accumulating weight at least $K$.
We can achieve this by explicitly encoding the accumulated weight into the state space:

We define a new MDP $\cN$ with a set of weighted target states as follows: The state space $S^\prime$ is $S\times \{0,\dots,K\}$. The initial state $\sinit^\prime$ is $(\sinit,0)$. The set of actions stays the same. The transition probability function $P^\prime$ is defined by $P^{\prime}((s,i),\alpha,(t,j)=P(s,\alpha,t)$ if $i+\wgt(s,\alpha)=j<K$ or $i+\wgt(s,\alpha)\geq j=K$, and  $P^{\prime}((s,i),\alpha,(t,j)=0$ otherwise. There is no weight function in $\cN$, but instead a set of weighted target states. The target states are $(\goal,i)$ with weight $i$ for all $i<K$ and $(s,K)$ with weight $K$ for all $s\in S$.
In this way, each path $\zeta$ reaching $\goal$ in $\cM$ (i.e. almost all paths) corresponds to a path $\zeta^\prime$ in $\cN$ and if $\wgt(\zeta)<K$, then $\zeta^\prime$ reaches a terminal state with weight $\wgt(\zeta)$. If $\wgt(\zeta)\geq K$, then $\zeta^\prime$ reaches a terminal state with weight $K$.

Now, we can compute the optimal  conditional value-at-risk with the probability value $p$ for the random variable assigning the terminal weight to a path  with the methods for weighted reachability presented in \cite{kretinsky2018} in polynomial time in the size of $\cN$ to obtain the value $\CVaR^{\max}_p (\rawdiaplus \goal)$ in $\cM$. 
The linear program presented there requires a guess of the value-at-risk. However, the value-at-risk in our setting is a natural number between $0$ and $K$, so there are only pseudo-polynomially many candidates.
This results in an exponential (pseudo-polynomial) time algorithm for our problem.
\end{proof}




\section{Computation of maximal weighted long-run frequency}\label{app:PMP}

In this section, we present the full proof of Theorem \ref{thm:comp_wlf}.
The proof modifies the proof of \cite[Lemma IV.7]{lics2019} stating the existence of a saturation point for long-run probabilities of constrained reachability properties to account for the weights.

Let us first recall the relevant notations. Let $\cM=\langle S, \Act, \Pr, \sinit, \wgt, \Goal, \Fail \rangle$ be a strongly connected MDP with non-negative weights.

The saturation point is computed as follows: Let $S^\prime=S\setminus (\Goal \cup \Fail)$.
For each state $s$ let $p_s^{\max}=\Pr^{\max}_{\cM,s} (\neg \Fail \Until \Goal)$. 
Further, let \[p_{s,\alpha}=\sum_{t\in S} \Pr(s,\alpha,t) \cdot p^{\max}_t\] for all states $s\in S^\prime$. 
We write $\Act(s)$ for the set of actions that are enabled in $s$,
i.e., $\act \in \Act(s)$ iff $\sum_{t\in S}P(s,\act,t) =1$.
We define $\Act^{\max}(s)= \{ s\in \Act(s) | p_{s,\alpha}^{\max}=p_s^{\max}\}$ for all $s\in S^\prime$. Further, define
\[\delta = \min \{ p_s^{\max}-p_{s,\alpha}^{\max} | s\in S^\prime, \alpha \in \Act(s)\setminus \Act^{\max}(s)\}.\]
If this set is empty, we set $\delta=1$.

Further, we fix a memoryless deterministic scheduler $\qsched$ which maximizes the probability of $\neg \Fail \Until \Goal$ from all states in $S^\prime$.
Then, we define $e_{s,t}$ for all states $s,t\in \Goal\cup \Fail$: The value $e_{s,t}=\min_\rsched \mathbb{E}_{\cM,s} (\text{``steps until $t$''})$ where $\rsched$ ranges over all schedulers which behave like $\qsched$ whenever $|S^\prime|+1$ states in $|S^\prime|$ have been visited consecutively. We let $e=\min_{s,t\in S^{\prime}} e_{s,t}$.

Finally, we let $W$ be the maximal weight occuring in $\cM$ and define

\[ K = \max (W\cdot e / \delta, W\cdot (|S^\prime|+1)). \qedhere\]

The value $e_{t,s}$ for some $t,s\in \Goal \cup \Fail$ can be computed as follows: We construct an MDP $\cN$, by taking $N$-many copies of each state in $S^\prime$. So, the state space of $\cN$ is $\Goal \cup \Fail \cup S^\prime \times \{1,\dots, N\}$. From states in $\Goal\cup\Fail$, the transitions are as in $\cM$ and if a state $s$ in $S^\prime$ would be reached in $\cM$, the copy $(s,1)$ is reached in $\cN$ instead. From a copy $(s,i)$ with $i<N$, the same actions as in $s$ are enabled and if a state $t\in S^\prime$ would be  reached in $\cM$, we move to the copy $(t,i+1)$ in $\cN$ instead. Finally, in states of the form $(s,N)$, only the action that $\qsched$ chooses in $s$ is enabled and if the process moves to another state in $S^\prime$ we also move to the $N$th copy of this state.

The MDP $\cN$ is of polynomial size in the size of $\cM$ and 
\[e_{s,t}=\min_\rsched \mathbb{E}_{\cN,s} (\text{``steps until $t$''})\]
 where $\rsched$ now ranges over all schedulers for $\cN$. For each $t$, fix a scheduler $\rsched_{\cN,t}$ minimizing this expected value.  So, $e_{t,s}$ can be computed by standard techniques for stochastic shortest path problems in time polynomial in the size of $\cN$.


Let $ \FM(\SatPoint)$ be the class of all finite memory schedulers which choose the actions according to some memoryless  scheduler maximizing the probability of $\neg \Fail \Until \Goal$  whenever the accumulated weight since the last visit to $\Goal$ or $\Fail$ is at least $K$.
\begin{mythm}
 \label{appendix:sat-point}
 For each finite-memory scheduler $\tsched$, there is a
 scheduler $\sched \in \FM(\SatPoint)$
 with 
 $\PMP^{\sched}_{\cM}\ \geqslant \
     \max_{s\in S} \PMP^{\tsched}_{\cM,s}$.
\end{mythm}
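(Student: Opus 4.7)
The plan is to extend the saturation-point argument of \cite{lics2019} for constrained-reachability long-run probabilities to the weighted setting. First I would reduce to the case where $\tsched$ induces a single BSCC in the product with its finite memory: since $\cM$ is strongly connected, a new finite-memory scheduler can be built that first drives the process almost surely into the BSCC attaining $\max_{s\in S}\PMP^{\tsched}_{\cM,s}$ and then executes $\tsched$ from there, losing nothing asymptotically. Then I define $\sched$ as the following modification of $\tsched$: $\sched$ simulates $\tsched$ (including its memory updates) as long as the weight accumulated since the last visit to $\Goal\cup\Fail$ stays below $\SatPoint$; once that accumulated weight reaches $\SatPoint$, $\sched$ switches to $\qsched$ and follows it until the next state in $\Goal\cup\Fail$ is entered, at which point the simulation of $\tsched$ resumes in the appropriate memory mode. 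By construction $\sched\in\FM(\SatPoint)$.

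Next I would express both $\PMP^{\sched}_{\cM}$ and $\PMP^{\tsched}_{\cM}$ as renewal-style ratios by decomposing every infinite path into successive \emph{cycles} delimited by consecutive visits to $\Goal\cup\Fail$. By the definition of $\mathit{wlf}$, a weight collected inside a cycle contributes to the long-run frequency precisely when the cycle terminates in $\Goal$ rather than in $\Fail$. Since $\tsched$ and $\sched$ are finite-memory and $\cM$ is strongly connected, the ergodic theorem for positive recurrent Markov chains gives
\[
  \PMP^{\sched}_{\cM} \;=\; \frac{\mathbb{E}^{\sched}\bigl[\wgt(\mathcal{C})\cdot \mathds{1}_{\{\mathcal{C}\text{ ends in }\Goal\}}\bigr]}{\mathbb{E}^{\sched}\bigl[\length(\mathcal{C})\bigr]},
\]
where $\mathcal{C}$ denotes a generic cycle of the induced Markov chain, and analogously for $\tsched$.

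The core step is a cycle-by-cycle comparison. Condition on a cycle in which the accumulated weight first reaches $\SatPoint$ at some state $s'\in S\setminus(\Goal\cup\Fail)$ with weight $w\geq \SatPoint$. Switching from $\tsched$ to $\qsched$ at that moment changes the expected contribution of the remainder of the cycle to the numerator by
\[
  w\bigl(p^{\max}_{s'}-\Pr^{\tsched}_{s'}(\neg\Fail\,\until\,\Goal)\bigr) + \bigl(\mathbb{E}^{\qsched}_{s'}-\mathbb{E}^{\tsched}_{s'}\bigr)\bigl[\text{future weight}\cdot\mathds{1}_{\neg\Fail\,\until\,\Goal}\bigr],
\]
while the expected remaining cycle length changes by a quantity of absolute value at most a constant multiple of $e$ (via the definition of $e_{s,t}$ and the fact that after $|S\setminus(\Goal\cup\Fail)|+1$ consecutive visits outside $\Goal\cup\Fail$ the $\qsched$-tail governs the remainder). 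The choice $\SatPoint\geq W\cdot e/\delta$ is made precisely so that the first summand above dominates the second whenever $\tsched$ had selected a sub-optimal action at $s'$: in that case $p^{\max}_{s'}-\Pr^{\tsched}_{s'}\geq\delta$, so the first summand is at least $W\cdot e$, which bounds the second. If $\tsched$'s choice already lay in $\Act^{\max}(s')$, both continuations have the same probability of reaching $\Goal$ and the argument is immediate.

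The hard part will be the ratio form of $\PMP$: a local improvement in the numerator for individual cycles could in principle be offset by a simultaneous increase of the denominator. To handle this cleanly, I would reformulate the desired inequality $\PMP^{\sched}_{\cM}\geq \PMP^{\tsched}_{\cM}$ as the non-negativity of $\mathbb{E}[\wgt(\mathcal{C})\mathds{1}_{\Goal}-\vartheta\cdot\length(\mathcal{C})]$ for $\vartheta=\PMP^{\tsched}_{\cM}$, which turns the target into a single linear functional and makes the cycle-wise estimate above directly applicable. The second component $W\cdot(|S\setminus(\Goal\cup\Fail)|+1)$ of $\SatPoint$ enters exactly to guarantee that once the threshold is crossed the cycle has already spent enough steps outside $\Goal\cup\Fail$ for this $\qsched$-tail estimate to be in force. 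Aside from the extra bookkeeping needed to carry the weight factor through the ratio comparison, the argument then follows the structure of the unweighted case in \cite{lics2019}.
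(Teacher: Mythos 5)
Your quantitative core is the right one and matches the paper: the choice $\SatPoint\geq W\cdot e/\delta$ is used exactly so that the probability gain $w\,(p^{\max}_{s'}-p^{\max}_{s',\alpha})\geq \SatPoint\cdot\delta\geq e\cdot W$ obtained by switching to a $\Pr(\neg\Fail\Until\Goal)$-maximizing scheduler after weight $\SatPoint$ dominates the bounded change in expected cycle length and in future collected weight. However, the way you assemble these local estimates into the global ratio inequality has a genuine gap. You modify $\tsched$ at \emph{every} cycle simultaneously (switching to $\qsched$ until the next $\Goal\cup\Fail$-state and then resuming $\tsched$), and then argue ``cycle by cycle.'' The problem is that the cycles of $\sched$ and of $\tsched$ are delimited by visits to $\Goal\cup\Fail$ in the product with the memory, and the stationary distribution over cycle-start states $(u,z)\in(\Goal\cup\Fail)\times X$ changes when you change the scheduler; moreover your linearized functional $\wgt(\mathcal{C})\mathds{1}_{\{\mathcal{C}\text{ ends in }\Goal\}}-\vartheta\cdot\length(\mathcal{C})$ with $\vartheta=\PMP^{\tsched}_{\cM}$ has mean zero only under the \emph{stationary} cycle distribution of $\tsched$, not conditionally on each cycle-start state (conditionally its mean is a difference of bias values). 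Hence showing that switching improves each cycle's conditional contribution does not directly give non-negativity of the stationary average under $\sched$; you would need to carry the bias/differential value function through the argument, i.e.\ a full average-reward policy-improvement step, which the phrase ``makes the cycle-wise estimate directly applicable'' skips.

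The paper avoids this by a different device: it fixes a \emph{single} reference state $\fmstate{s}{x}=(s,x)\in(\Goal\cup\Fail)\times X$ in the BSCC, modifies $\tsched$ only on excursions from that state that accumulate weight $\geq\SatPoint$ inside $S\setminus(\Goal\cup\Fail)$ and then reach a state where a non-maximal action is taken, and --- crucially --- after the switch follows a scheduler $\rsched_{N,s}$ that returns to exactly the state $s$ (resuming $\tsched$ in mode $x$). Because both schedulers are then compared via the renewal ratio anchored at the \emph{same} recurrent state $(s,x)$, the numerator and denominator differ only by explicitly computable terms of the form $q\cdot(e\cdot W-\eap^{\tsched}_{\cdot})$ and $q\cdot(e-\ens^{\tsched}_{\cdot})$, and elementary mediant inequalities close the argument. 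Since only one reference state is repaired at a time, the paper completes the proof by induction on the number of offending $(\Goal\cup\Fail)$-states. To repair your proof you should either adopt this anchoring-plus-induction structure, or replace the cycle-by-cycle comparison by a genuine policy-improvement argument using the bias of $\tsched$; as written, the step from local cycle improvements to $\PMP^{\sched}_{\cM}\geq\PMP^{\tsched}_{\cM}$ is not justified.
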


\begin{proof}
Let $\tsched$ be an finite-memory scheduler for $\cM$
with modes (memory cells) in the finite set $X$. 
Let $\cC^\tsched$ denote the Markov chain induced by $\tsched$. 
We can think of the states in $\cC^{\tsched}$ as pairs $(s,x)$ consisting of
a state $s$ in $\cM$ and a mode $x\in X$.
We may assume w.l.o.g. that $\cC^{\tsched}$ has a single BSCC, say 
$\cB^{\tsched}$. This yields that all states of $\cC^{\tsched}$ 
have the same weighted long-run frequency.
Let us simply write $\PMP^{\sched}_{\cM}$ for this value.
From now on, we refer to the set of  states $\Goal\times X$ in $\cC^{\tsched}$  as $\Goal$,  to $\Fail\times X$ as $\Fail$, and $S^\prime \times X$ as $S^\prime$.

If $\cB^{\tsched}$ contains no state from $\Goal$ then
$\PMP^{\tsched}_{\cM}=0$ and the claim is trivial as
we can deal with any $\FM(\SatPoint)$-scheduler.

Suppose now that $\cB^{\tsched}$ contains at least one goal state.
Then, almost all $\tsched$-paths visit 
infinitely often some  goal state.

We now explain how to modify $\tsched$'s decision for generating a scheduler
in $\FM(\SatPoint)$ with the desired property.
Our procedure works by induction on the number $k^{\tsched}$ of 
$(\Goal \cup \Fail)$-states
$\fmstate{s}{x}=(s,x)$ in $\cB^{\tsched}$ 
where 
\[\Pr_{\cB^{\tsched},\fmstate{s}{x}}
               (\neXt (S^\prime \Until^{\geqslant \SatPoint} D^{\tsched}))>0.
\]
Here, $D^{\tsched}$ denotes the set of states $\fmstate{t}{y}=(t,y)$ in $S^\prime$ in
the BSCC $\cB^{\tsched}$ where $\tsched(\fmstate{t}{y})(\alpha) >0$
for some action $\alpha$ with $p^{\max}_t\not = p^{\max}_{t,\alpha}$. The formula $S^\prime \Until^{\geqslant \SatPoint} D^{\tsched}$ holds if  a path only visits states in $S^\prime$ as long as the accumulated weight is less than $\SatPoint$ and afterwards satisfies $S^\prime \Until D^{\tsched}$.

If $k^{\tsched}=0$ then for every path 
$\finpath = \fmstate{s_0}{}\alpha_0\ldots \alpha_{n-1}\fmstate{s_n}{}$ in 
$\cB^{\tsched}$ with accumulated weight  $n \geqslant \SatPoint$ where all but the
first state are in $S^\prime$, the states
$\fmstate{s_{\text{\tiny $\SatPoint$}}}{}, 
 \fmstate{s_{\text{\tiny $\SatPoint{+}1$}}}{},\ldots, \fmstate{s_n}{}$
do not belong to $D^{\tsched}$.
That is, $\tsched$ schedules only actions in $\Act^{\max}$ for these states.
But then the probability for $\neg \Fail \Until \Goal$ in $\cB^{\tsched}$ from 
each of the states $\fmstate{s_i}{}=(s_i,x_i)$ with $i\geqslant \SatPoint$
equals $p_{s_i}^{\max}$.%
\footnote{For general MDPs, schedulers that
  only select actions in $\Act^{\max}$ might not achieve the maximal
  probability for $\neg \Fail \until \Goal$. 
  This, however, is only possible if the schedulers
  under consideration realize an end component consisting of state in $S^\prime$.
  As $\cB^{\tsched}$ is a BSCC with at least one $\Goal$-state,
  this case does not apply to scheduler $\tsched$.}
This implies that 
if $k^{\tsched}=0$ then $\tsched$ is an $\FM(\SatPoint)$-scheduler
and we can deal with $\sched=\tsched$.

Suppose now that $k^{\tsched}\geqslant 1$.
We show how to transform $\tsched$ into a new finite-memory scheduler $\sched$
with a single BSCC  such that
$\PMP^{\sched}_{\cM} \geqslant \PMP^{\tsched}_{\cM}$
and $k^{\sched} < k^{\tsched}$.

Given states $\fmstate{s}{x}=(s,x)$ and $\fmstate{t}{y}=(t,y)$ 
in $\cB^{\tsched}$, 
where $\fmstate{s}{x}$ is in $\Goal \cup \Fail$
and $\fmstate{t}{y}$ in $S^\prime$,
let $\Gamma_{\fmstate{s}{x},\fmstate{t}{y}}$ 
denote the set of of finite $\tsched$-paths 
$\finpath=\fmstate{s_0}{}\alpha_0\ldots \alpha_{n-1}\fmstate{s_n}{}$ 
such that 
\begin{itemize}
\item 
  $\wgt(\finpath) \geqslant \SatPoint$,
\item
  $\fmstate{s_0}{}=\fmstate{s}{x}$,
  $\fmstate{s_n}{}=\fmstate{t}{y}$,
\item
  $\fmstate{s_1}{},\ldots,\fmstate{s_n}{}$ are in $S^\prime$, and 
\item
  $\tsched(\fmstate{s_n}{})(\alpha) > 0$ for some
  action $\alpha \notin \Act^{\max}(t)$.
\end{itemize}
Let $\Pi_{\fmstate{s}{x},\fmstate{t}{y}}$ denote the set of
paths
$\finpath \in \Gamma_{\fmstate{s}{x},\fmstate{t}{y}}$ such that
no proper prefix of $\finpath$ belongs to 
$\Gamma_{\fmstate{s}{x},\fmstate{t}{y}}$,
and let
$\Pi_{\fmstate{s}{x}}$ denote the union of the sets
$\Pi_{\fmstate{s}{x},\fmstate{t}{y}}$.
As $k^{\tsched}$ is positive, we can pick some state  $\Goal \cup \Fail$-state
$\fmstate{s}{x}=(s,x)$ in $\cB^{\tsched}$ where
$\Pi_{\fmstate{s}{x}}$ is nonempty.

The definition of finite-memory scheduler $\sched$ is as follows.
 Scheduler $\sched$ operates in two phases.
 Its first phase starts in $\fmstate{s}{x}=(s,x)$ and
 uses additional memory cells to keep track of the accumulated weight since the last visit of
 $\fmstate{s}{x}$.
 As long as this accumulated weight is smaller than $\SatPoint$ 
 or if a $\Goal \cup \Fail$-state has been reached along a path
 where the counter value is always smaller than $\SatPoint$,
 scheduler $\sched$ just behaves like $\tsched$. 
 As soon as the counter value exceeds $\SatPoint$, 
 scheduler $\sched$ switches to the second phase and behaves as 
 scheduler $\rsched_{N,s}$. The scheduler $\rsched_{N,s}$ has been defined above where also the values $e_{t,s}$ have been defined.

  More precisely, if $\sched$'s current state $t$ in $\cM$ belongs to
  $S^\prime$ then
  $\sched$ mimics the behavior of $\rsched_{N,s}$ from 
  state $(t,N)$ in $\cN$ described above as the current accumulated weight exceeds $N\cdot W$ because $\SatPoint\geq N\cdot W$, and so at least $N$ steps have been taken since the last visit to $\Goal$ or $\Fail$.
 Thus,
 by following
 $\rsched_{N,s}$'s decisions, $\sched$ 
 will only choose actions in $\Act^{\max}$ until $S^\prime$ is left.
 As soon as state $s$ is reached in $\sched$'s second phase
 (this will happen with probability 1 as $\rsched_{N,s}$ minimizes
  the expected number of steps to $s$ from every state in 
  the strongly connected MDP $\cN$),
 $\sched$ switches back to the first phase 
 and restarts to mimic $\tsched$ from state $s$ in mode $x$, 
 i.e., from state $\fmstate{s}{x}$ in $\cB^{\tsched}$.
 For all states that are not reachable from $\fmstate{s}{x}$ in this way,
  $\sched$ behaves as $\tsched$.

As $\tsched$ has a single BSCC, so does $\sched$, although
  the BSCC $\cB^{\sched}$ induced by $\sched$ can be different
  from $\cB^{\tsched}$.
As $\fmstate{s}{x}$ belongs to both $\cB^{\tsched}$ and $\cB^{\sched}$,
$\fmstate{s}{x}$ is visited infinitely often almost surely 
with finite expected return time under both schedulers 
$\sched$ and $\tsched$.

Let us first observe that we indeed have $k^{\sched} < k^{\tsched}$.
This is thanks to the fact that (1) $\rsched_{N,s}$ maximizes the probability
for $a \Until b$ whenever $N$ or more consecutive $S^\prime$-states have
been visited, so in particular if the accumulated weight since the last visit to $\Goal$ or $\Fail$ is at least $\SatPoint$, and
(2) the reference state $\fmstate{s}{x}$ is not an $S^\prime$-state.
Thus, for each $(\Goal\cup \Fail)$-state $\fmstate{u}{z}$ 
visisted by $\rsched_{N,s}$ in the return (second)
phase of $\sched$ we have:
\[
  \Pr_{\cB^{\sched},\fmstate{u}{z}}
               (\neXt (S^\prime \Until^{\geqslant \SatPoint} D^{\sched}))=0
\]
Hence, whenever $\fmstate{u}{z}$ is a $(\Goal \cup \Fail)$-state in
$\cB^{\sched}$ where
$\Pr_{\cB^{\sched},\fmstate{u}{z}}
               (\neXt (S^\prime \Until^{\geqslant \SatPoint} D^{\sched}))$
is positive then
\begin{itemize}
\item
   $\fmstate{u}{z}\not=\fmstate{s}{x}$,
\item
   $\fmstate{u}{z}$ also belongs to $\cB^{\tsched}$ and
\item
  the
  $\sched$-paths  from $\fmstate{u}{z}$ satisfying
  $\neXt (S^\prime \Until^{\geqslant \SatPoint} D^{\sched})$
  are also $\tsched$-paths
  and satisfy $\neXt (S^\prime \Until^{\geqslant \SatPoint} D^{\tsched})$.
\end{itemize}
The last item yields 
$\Pr_{\cB^{\tsched},\fmstate{u}{z}}
               (\neXt (S^\prime \Until^{\geqslant \SatPoint} D^{\tsched}))>0$.
This completes the proof that $k^{\sched}$ is smaller than $k^{\tsched}$.

We now show that 
$\PMP^{\sched}_{\cM} \geqslant \PMP^{\tsched}_{\cM}$. 
To simplify the calculations, we present the proof for the
case where $\Pi_{\fmstate{s}{x}}$ is a singleton, say
$\Pi_{\fmstate{s}{x}}=\{\finpath\}$.

Furthermore, let $\lgth$ be the length of $\finpath$, and
$\fmstate{t}{y}=(t,y)=\last(\finpath)$. Further, we know that the accumulated weight on $\finpath$ is $\geqslant \SatPoint$.
Again, to simplify the calculations, let us suppose that
there is a single action $\alpha \in \Act(t)\setminus \Act^{\max}(t)$
that $\tsched$ schedules for $\fmstate{t}{y}$ with
positive probability $p$.%
\footnote{%
  At the end of the proof, we briefly explain how to treat the general case
  where $\Pi_{\fmstate{s}{x}}$ is a (prefix-free) countable set of paths,
  for which $\tsched$ can schedule multiple actions not in
  $\Act^{\max}$ with positive probability.}
So,
\[
  p \ = \ 
    \tsched(\fmstate{t}{y})(\alpha) \ > \ 0
\]
(Note that $p=1$ if $\tsched$ is a deterministic finite-memory scheduler.)

The long run probabilities of the two schedulers $\sched$ and $\tsched$
can be expressed as follows.

Given a state $\fmstate{u}{z}$ in $\cB^{\tsched}$, let
$\ens^\tsched_{\fmstate{u}{z},\fmstate{s}{x}}$ be the expected 
number of steps from $\fmstate{u}{z}$ to $\fmstate{s}{x}$ under $\tsched$
(via paths from $\fmstate{u}{z}$ to $\fmstate{s}{x}$
of length at least 1 where all intermediate states are different from
$\fmstate{s}{x}$).
Let $\eap^{\tsched}_{\fmstate{u}{z},\fmstate{s}{x}}$ 
denote the expected received weight
that $\tsched$ gains during this period. This is to be understood as follows: The weight at some step is received only if the suffix of the path starting there satisfies $\neg\Fail \Until \Goal$.
The value $\ens^\tsched_{\fmstate{s}{x},\fmstate{s}{x}}$ can be understood as 
the expected return time from and to $\fmstate{s}{x}$ under $\tsched$.
Then: 
\begin{equation}
  \label{lrp-tsched}
  \PMP^{\tsched}_{\cM} \ = \ 
  \frac{\eap^{\tsched}_{\fmstate{s}{x},\fmstate{s}{x}}}
       {\ens^{\tsched}_{\fmstate{s}{x},\fmstate{s}{x}}}
  \tag{$\dagger$}
\end{equation}
For the scheduler $\sched$ we express 
$\PMP^{\sched}_{\cM}$ 
as the fraction of the expected accumulated weight along return paths 
and 
the expected return time from $\fmstate{s}{x}$ to $\fmstate{s}{x}$ as well:
\[
  \LPr{\sched}{\cM}{a\Until b} \ = \ 
  \frac{\eap^{\sched}_{\fmstate{s}{x},\fmstate{s}{x}}}
       {\ens^{\sched}_{\fmstate{s}{x},\fmstate{s}{x}}}
\]
To provide 
an upper bound for $\ens^{\sched}_{\fmstate{s}{x},\fmstate{s}{x}}$
and a lower bound for
$\eap^{\sched}_{\fmstate{s}{x},\fmstate{s}{x}}$,
we need several auxiliary notations.

Recall that $e_{t,s}$ is the expected number of steps that 
$\rsched_{N,s}$ needs from $(t,N)$ to $s$ in $\cN$.
Hence, $e_{t,s}$ is an upper bound for the expected number
of steps $\ens^{\sched}_{\fmstate{t}{y},\fmstate{s}{x}}$
that $\sched$ needs from state $\fmstate{t}{y}=\last(\finpath)$
to the reference state $\fmstate{s}{x}$.
The value $e$ has been defined as the maximum of the values
$e_{t,s}$. Hence, we obtain:
\[
  e \ \ \geqslant \ \ 
  e_{t,s} \ \geqslant \ \ens^{\sched}_{\fmstate{t}{y},\fmstate{s}{x}}
\]
Let $\ens^\tsched_{\fmstate{t}{y},\alpha,\fmstate{s}{x}}$ denote the
the expected number of steps 
that $\tsched$ needs from $\fmstate{t}{y}$ 
to $\fmstate{s}{x}$, under the assumption that action
$\alpha$ is scheduled in $\fmstate{t}{y}$ (which happens with
probability $p$).
So, if
$(u_1,z_1),\ldots,(u_\ell,z_\ell)$ denote the $\alpha$-successors
of $\fmstate{t}{y}$ in $\cB^{\tsched}$
then:
\[
  \ens^\tsched_{\fmstate{t}{y},\alpha,\fmstate{s}{x}}
  \ \ = \ \
  1 +
  \sum_{i=1}^{\ell} P(t,\alpha,u_i)
      \cdot \ens^{\tsched}_{(u_i,z_i),\fmstate{s}{x}}
\]
Similarly, we define: 
\[
  \eap^{\tsched}_{\fmstate{t}{y},\alpha,\fmstate{s}{x}}
  \ =  \
  \Pr^{\tsched}_{\fmstate{t}{y}}(\neg \Fail \until \Goal) \cdot \wgt(t,\alpha)+
  \sum_{i=1}^{\ell} P(t,\alpha,u_i)
      \cdot \eap^{\tsched}_{(u_i,z_i),\fmstate{s}{x}}
\]
For $0\leqslant i \leqslant \lgth$, let $\fragment{\wp}{0}{i}$ denote the 
probability under $\tsched$ for generating the path fragment 
$\prefix{\finpath}{i}$, the first $i$ steps, from state $s$ in mode $x$. 
So, $\fragment{\wp}{0}{\lgth}$ is the probability under $\tsched$ for
generating the full path $\finpath$ from $\fmstate{s}{x}$.

For the expected number of steps 
$\ens^{\sched}_{\fmstate{s}{x},\fmstate{s}{x}}$ that $\sched$ needs from 
$\fmstate{s}{x}$ to $\fmstate{s}{x}$ along paths of length at least 1,
we get:
\begin{equation}
  \label{ens-sched}
  \ens^{\sched}_{\fmstate{s}{x},\fmstate{s}{x}}
  \ \ \leqslant  \ \
  \ens^{\tsched}_{\fmstate{s}{x},\fmstate{s}{x}} + 
    \fragment{\wp}{0}{\lgth} 
    \cdot p \cdot 
      (e-\ens^{\tsched}_{\fmstate{t}{y},\alpha,\fmstate{s}{x}})
  \tag{*}
\end{equation}
The reason is that $\sched$ and $\tsched$ only differ when $\finpath$ has been generated and $\alpha$ has been chosen by $\tsched$. This happens with probability $ \fragment{\wp}{0}{\lgth}  \cdot p  $. From there on, $\sched$ takes at most $e$ steps in expectation to return to $\fmstate{s}{x}$ while $\tsched$ needs $\ens^{\tsched}_{\fmstate{t}{y},\alpha,\fmstate{s}{x}}$ steps in expectation.

The next goal is to provide
 a lower bound for the
expected accumulated probability 
$\eap^{\sched}_{\fmstate{s}{x},\fmstate{s}{x}}$. 
The claim is:
\begin{equation}
  \label{eap-sched}
  \eap^{\sched}_{\fmstate{s}{x},\fmstate{s}{x}} 
  \ \ \geqslant \ \
   \eap^{\tsched}_{\fmstate{s}{x},\fmstate{s}{x}} + 
    \fragment{\wp}{0}{\lgth} \cdot p    
      \cdot (e\cdot W-\eap^{\tsched}_{\fmstate{t}{y},\alpha,\fmstate{s}{x}})
  \tag{**}
\end{equation}
{\it Proof of \eqref{eap-sched}.}

As the schedulers $\sched$ and $\tsched$ agree on all paths except for extensions of $\finpath$ in case $\tsched$ chooses $\alpha$ after $\finpath$, we only have to compare the expected received weight on these extensions to get an estimation for $ \eap^{\sched}_{\fmstate{s}{x},\fmstate{s}{x}} 
-
   \eap^{\tsched}_{\fmstate{s}{x},\fmstate{s}{x}} $.
   This situation occurs with probability $ \fragment{\wp}{0}{\lgth} \cdot p $. The accumulated weight when reaching this situation is $\wgt(\finpath)\geq \SatPoint$. As $\sched$ maximizes the probability of $\neg \Fail \Until \Goal$ in this situation, these extensions of $\finpath$ contribute to the expected received weight under $\sched$ by at least $ \fragment{\wp}{0}{\lgth} \cdot p \cdot \wgt(\finpath) \cdot p_t^{\max}$. Under $\tsched$ if the scheduler chooses $\alpha$, however, these extensions contribute to the partial expectation by at most $\fragment{\wp}{0}{\lgth} \cdot p \cdot \wgt(\finpath) \cdot p_{t,\alpha}^{\max} + \eap^{\tsched}_{\fmstate{t}{y},\alpha,\fmstate{s}{x}}$. Therefore,
   \begin{align*}
    \eap^{\sched}_{\fmstate{s}{x},\fmstate{s}{x}} -    \eap^{\tsched}_{\fmstate{s}{x},\fmstate{s}{x}} & \geq \fragment{\wp}{0}{\lgth} \cdot p \cdot (\wgt(\finpath) \cdot (p_t^{\max}-p_{t,\alpha}^{\max} ) - \eap^{\tsched}_{\fmstate{t}{y},\alpha,\fmstate{s}{x}}) \\
    & \geq \fragment{\wp}{0}{\lgth} \cdot p \cdot ( \SatPoint \cdot \delta - \eap^{\tsched}_{\fmstate{t}{y},\alpha,\fmstate{s}{x}}) \\
    & \geq \fragment{\wp}{0}{\lgth} \cdot p \cdot ( e\cdot W - \eap^{\tsched}_{\fmstate{t}{y},\alpha,\fmstate{s}{x}}).
   \end{align*}

%
%

With $q=\fragment{\wp}{0}{\lgth} \cdot p$,
we obtain by \eqref{ens-sched} and \eqref{eap-sched}:
\[
 \begin{array}{rcl}
  \ens^{\sched}_{\fmstate{s}{x},\fmstate{s}{x}}
  &  \leqslant &
  \ens^{\tsched}_{\fmstate{s}{x},\fmstate{s}{x}} + 
    q \cdot
      (e-\ens^{\tsched}_{\fmstate{t}{y},\alpha,\fmstate{s}{x}})
  \\[1ex]
  \eap^{\sched}_{\fmstate{s}{x},\fmstate{s}{x}}
    &  \geqslant &
   \eap^{\tsched}_{\fmstate{s}{x},\fmstate{s}{x}} + 
        q \cdot (e\cdot W-\eap^{\tsched}_{\fmstate{t}{y},\alpha,\fmstate{s}{x}})
 \end{array} 
\]
and therefore:
\begin{equation}
 \label{lrp-sched}
 \PMP^{\sched}_{\cM}
 \ \geqslant  \
 \begin{array}{r@{\hspace*{0.1cm}}c@{\hspace*{0.15cm}}l}
       \eap^{\tsched}_{\fmstate{s}{x},\fmstate{s}{x}} & + & 
       q\cdot (e\cdot W- 
        \eap^{\tsched}_{\fmstate{t}{y},\alpha,\fmstate{s}{x}}) 
       \\
       \hline
       \\[-2.2ex]
       \ens^{\tsched}_{\fmstate{s}{x},\fmstate{s}{x}} & + & 
        q \cdot (e-\ens^{\tsched}_{\fmstate{t}{y},\alpha,\fmstate{s}{x}})
  \end{array} 
  \tag{$\ddagger$}
\end{equation}
We now use \eqref{lrp-tsched} and \eqref{lrp-sched}
to show that $\PMP^{\sched}_{\cM} \geqslant
  \PMP^{\tsched}_{\cM}$.

As $W$ is the maximal weight occurring in $\cM$, we get
\[
  \eap^{\tsched}_{\fmstate{t}{y},\alpha,\fmstate{s}{x}}
  \ \ \leqslant \ \ 
 W\cdot  \ens^{\tsched}_{\fmstate{t}{y},\alpha,\fmstate{s}{x}}
\]
Hence, if $e=\ens^{\tsched}_{\fmstate{t}{y},\alpha,\fmstate{s}{x}}$ then
$e \cdot W -  \eap^{\tsched}_{\fmstate{t}{y},\alpha,\fmstate{s}{x}} \geqslant 0$ 
and therefore:
\[
 \LPr{\sched}{\cM}{a\Until b}
 \ \geqslant  \
 \frac{\eap^{\tsched}_{\fmstate{s}{x},\fmstate{s}{x}}}
      {\ens^{\tsched}_{\fmstate{s}{x},\fmstate{s}{x}}}
 \ = \ 
 \LPr{\tsched}{\cM}{a\Until b}
\]
Suppose now that 
$e\not=\ens^{\tsched}_{\fmstate{s}{x},\alpha,\fmstate{s}{x}}$.

If $e > \ens^{\tsched}_{\fmstate{s}{x},\alpha,\fmstate{s}{x}}$
then
\[
  \frac{e\cdot W-\eap^{\tsched}_{\fmstate{s}{x},\alpha,\fmstate{s}{x}}}
       {e-\ens^{\tsched}_{\fmstate{s}{x},\alpha,\fmstate{s}{x}}}
  \ \geqslant \ 
  \frac{e\cdot W-\ens^{\tsched}_{\fmstate{s}{x},\alpha,\fmstate{s}{x}}\cdot W}
       {e-\ens^{\tsched}_{\fmstate{s}{x},\alpha,\fmstate{s}{x}}}
  \ = \ W
  \ \geqslant \ 
  \frac{\eap^{\tsched}_{\fmstate{s}{x},\fmstate{s}{x}}}
       {\ens^{\tsched}_{\fmstate{s}{x},\fmstate{s}{x}}}
\]
We now use the fact that if $x,y,z,w$ are non-negative real numbers
with $w,y>0$ then:
\[
 \begin{array}{lcl}
  \frac{x+z}{y+w} \geqslant \frac{x}{y} 
  & \text{iff} &
  \frac{z}{w} \geqslant \frac{x}{y} 
 \end{array}
\]
This yields:
\[
  \PMP^{\sched}_{\cM} 
  \ \geqslant \ 
  \frac{\eap^{\tsched}_{\fmstate{s}{x},\fmstate{s}{x}}}
       {\ens^{\tsched}_{\fmstate{s}{x},\fmstate{s}{x}}}
  \ = \  
  \PMP^{\tsched}_{\cM}
\]
It remains to consider the case 
$e < \ens^{\tsched}_{\fmstate{s}{x},\alpha,\fmstate{s}{x}}$.
Here, we use the fact for that all non-negative rational numbers $x,y,z,w$ 
with $y > w>0$ then:
\[
 \begin{array}{lcl}
  \frac{x-z}{y-w} \geqslant \frac{x}{y} 
  & \text{iff} &
  \frac{z}{w} \leqslant \frac{x}{y} 
 \end{array}
\]
In particular, if $x \leqslant y$ and $0< z < y$ then
$(x{-}z)/(y{-}z) \leqslant x/y$. Hence:
\[
  \frac{\eap^{\tsched}_{\fmstate{s}{x},\alpha,\fmstate{s}{x}}-e\cdot W}
       {\ens^{\tsched}_{\fmstate{s}{x},\alpha,\fmstate{s}{x}}-e}
       =
     W   \frac{\eap^{\tsched}_{\fmstate{s}{x},\alpha,\fmstate{s}{x}}-e\cdot W}
       {\ens^{\tsched}_{\fmstate{s}{x},\alpha,\fmstate{s}{x}}\cdot W-e\cdot W}
  \ \leqslant \ 
 W \frac{\eap^{\tsched}_{\fmstate{s}{x},\fmstate{s}{x}}}
       {\ens^{\tsched}_{\fmstate{s}{x},\fmstate{s}{x}}\cdot W}
       =
       \frac{\eap^{\tsched}_{\fmstate{s}{x},\fmstate{s}{x}}}
       {\ens^{\tsched}_{\fmstate{s}{x},\fmstate{s}{x}}}
\]
which again yields:
\[
 \begin{array}{lcl}
 \PMP^{\sched}_{\cM}
 & \geqslant  &
 \begin{array}{r@{\hspace*{0.1cm}}c@{\hspace*{0.1cm}}l}
       \eap^{\tsched}_{\fmstate{s}{x},\fmstate{s}{x}} & - & 
       q\cdot 
        (\eap^{\tsched}_{\fmstate{t}{y},\alpha,\fmstate{s}{x}}-e\cdot W) 
       \\
       \hline
       \\[-2.2ex]
       \ens^{\tsched}_{\fmstate{s}{x},\fmstate{s}{x}} & - & 
        q \cdot (\ens^{\tsched}_{\fmstate{t}{y},\alpha,\fmstate{s}{x}}-e)
  \end{array} 
  \\
  \\[-1ex]
  & \geqslant  &
 \begin{array}{c}
       \eap^{\tsched}_{\fmstate{s}{x},\fmstate{s}{x}} 
       \\
       \hline
       \ens^{\tsched}_{\fmstate{s}{x},\fmstate{s}{x}} 
  \end{array} 
  \ = \ 
  \PMP^{\tsched}_{\cM}
  \end{array}
\]
For the general case where $\Pi_{\fmstate{s}{}}$ is a 
(countable and prefix-free) set 
of paths $\finpath$, for which $\tsched$ schedules several actions 
not in $\Act^{\max}$, the argument is fairly the same.
The essential difference is that we have to consider
all state-action pairs $(\fmstate{t}{y},\alpha)$ in $\cB^{\tsched}$
such that state $\fmstate{t}{y}=(t,y)$ is reachable from $\fmstate{s}{x}$ via
a path in $\Pi_{\fmstate{s}{}}$ and 
$\alpha \in \Act(t)\setminus \Act^{\max}(t)$
such that
$\tsched$ schedules action $\alpha$ for $\fmstate{t}{y}$ 
with positive probability.
So, the lower bound for $\PMP^{\sched}_{\cM,s}$
in \eqref{lrp-sched} then has the form:
\[
 \PMP^{\sched}_{\cM}
 \ \geqslant  \
 \begin{array}{r@{\hspace*{0.1cm}}c@{\hspace*{0.15cm}}l}
       \eap^{\tsched}_{\fmstate{s}{x},\fmstate{s}{x}} & + & 
       \sum\limits_{\fmstate{t}{y},\alpha} 
          q_{\fmstate{t}{y},\alpha} \cdot 
          (e \cdot W- \eap^{\tsched}_{\fmstate{t}{y},\alpha,\fmstate{s}{x}}) 
       \\
       \hline
       \\[-2.2ex]
       \ens^{\tsched}_{\fmstate{s}{x},\fmstate{s}{x}} & + & 
       \sum\limits_{\fmstate{t}{y},\alpha} 
          q_{\fmstate{t}{y},\alpha} \cdot 
        (e-\ens^{\tsched}_{\fmstate{t}{y},\alpha,\fmstate{s}{x}})
  \end{array} 
\]
where $q_{\fmstate{t}{y},\alpha}$ is the product of the
probability in $\cB^{\tsched}$ 
of reaching $\fmstate{t}{y}$ from $\fmstate{s}{x}$ along
a path in $\Pi_{\fmstate{s}{x}}$ and
the probability for taking action $\alpha$ in state $\fmstate{t}{y}$ 
of $\cB^{\tsched}$. 
\end{proof}

The existence of a saturation point can now be used to show the following result. As all arguments from now on are completely analogous to the proof in \cite{lics2019}, we do not repeat them here.

\begin{mythm}
The maximal value ${\PMP}^{\max}_{\cM}$ in an MDP ${\cM}$ with non-negative weights is equal to the maximal mean-payoff in an at most exponentially (pseudo-polynomially) bigger MDP.
\end{mythm}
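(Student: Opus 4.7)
The plan is to combine Theorem~\ref{appendix:sat-point} with Lemma~\ref{app_lem:FM-scheduler} to reduce the optimization to the class $\FM(K)$, and then to realise $\FM(K)$-schedulers explicitly as ordinary schedulers of a larger MDP $\cM'$ whose mean-payoff coincides with the weighted long-run frequency of $\cM$. Fix once and for all the memoryless maxprob scheduler $\qsched$ used in the definition of $\FM(K)$, and write $p_s^{\max}$ for the probability of $\neg\Fail\Until\Goal$ from $s$ under $\qsched$.

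First I would construct $\cM'$ on the augmented state space $S\times\{0,1,\ldots,K\}$, the second component tracking the weight accumulated since the last visit of $\Goal\cup\Fail$, capped at $K$. Transitions mirror those of $\cM$ with the obvious bookkeeping: from $(s,w)$ under action $\alpha$ the process moves to $(t,0)$ whenever $t\in\Goal\cup\Fail$ and otherwise to $(t,\min(K,w+\wgt(s,\alpha)))$, with probability $P(s,\alpha,t)$. At each saturated state $(s,K)$ with $s\notin\Goal\cup\Fail$ I allow only the single action $\qsched(s)$, so that the schedulers of $\cM'$ are in bijection with the $\FM(K)$-schedulers of $\cM$. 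Since $K$ is pseudo-polynomial in $|\cM|$ and the maximal weight, $\cM'$ is at most exponentially (pseudo-polynomially) larger than $\cM$.

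The heart of the proof is the choice of rewards on $\cM'$. I would install three reward channels: (i) a \emph{retroactive} reward of $w+\wgt(s,\alpha)$ on transitions from $(s,w)$ with $w<K$ directly into a $\Goal$-state, which collects the full segment weight at the moment it becomes certified as contributing to $\mathit{wlf}$; (ii) a \emph{crossing} reward of $(w+\wgt(s,\alpha))\cdot p_t^{\max}$ on transitions from $(s,w<K)$ into $(t,K)$ with $t\notin\Goal\cup\Fail$, accounting in expectation for the contribution of the already-accumulated prefix once the $\qsched$-continuation is fixed; and (iii) a \emph{per-step} reward $\wgt(s,\qsched(s))\cdot p_s^{\max}$ at every saturated state $(s,K)$, with reward $0$ on all other transitions. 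Working inside the single BSCC induced by any finite-memory scheduler, the ergodic theorem reduces the identity ``mean-payoff equals $\mathit{wlf}$'' to matching expected rewards with expected weight contributions at the stationary state-action distribution. Segments that stay in the non-saturated region are handled directly by~(i); for segments that cross the saturation threshold, the identity $\mathbb{E}[\wgt(s_i,\qsched(s_i))\cdot\mathds{1}_{\text{Goal before Fail}}\mid s_i]=\wgt(s_i,\qsched(s_i))\cdot p_{s_i}^{\max}$, which follows from the memoryless $\qsched$-continuation by the strong Markov property, shows that~(ii) and~(iii) together exactly reproduce the expected $\mathit{wlf}$-contribution of the saturated portion. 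Together with the scheduler bijection and Theorem~\ref{appendix:sat-point}, this yields that $\PMP^{\max}_\cM$ equals the maximal mean-payoff in $\cM'$.

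The main obstacle will be the bookkeeping at the non-saturated/saturated boundary: the construction must guarantee that no segment weight is either double counted or lost when a run crosses from $w<K$ into $(t,K)$, despite the fact that the retroactive reward is a deterministic amount whereas the per-step rewards in the saturated region are only expectations. The three channels above are engineered precisely to balance these contributions, and the remaining boundary terms arising from the initial segment and from the current unfinished segment are bounded by a scheduler-independent constant and therefore vanish in the $\liminf$ defining $\mathit{wlf}$. The non-strongly-connected case of $\cM$ reduces to the strongly connected case by a standard maximal end-component decomposition, exactly as carried out in~\cite{lics2019}.
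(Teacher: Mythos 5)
Your overall route is the one the paper intends: Theorem~\ref{appendix:sat-point} together with Lemma~\ref{app_lem:FM-scheduler} reduces the optimization to the class $\FM(\SatPoint)$, and the theorem is then obtained by unfolding the capped weight counter into the state space and converting the objective into a mean-payoff. The paper itself does not spell this construction out but delegates it to \cite{lics2019}, so your proposal is essentially an explicit version of the intended argument; your three reward channels do correctly reproduce, in expectation at the stationary distribution, the $\mathit{wlf}$-contributions of segments that end before saturation, of the already-accumulated prefix at the moment of crossing into the saturated region, and of the individual saturated steps under the fixed maxprob continuation $\qsched$.

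There is, however, one concrete error in the bookkeeping at the $\Goal\cup\Fail$ boundary. By the definition of $\mathit{wlf}$, the summand at a position $i$ with $s_i\in\Goal$ is \emph{always} $\wgt(s_i,\alpha_i)$, since the suffix $\pi[i\dots]$ satisfies $\neg\Fail\Until\Goal$ trivially, and at a position with $s_i\in\Fail$ it is always $0$; only the weights of positions with $s_i\notin\Goal\cup\Fail$ are certified by the \emph{next} visit to $\Goal\cup\Fail$. Your transition rule resets the counter to $0$ upon entering $\Goal\cup\Fail$ and then adds the weight $\wgt(s,\alpha)$ of the action taken \emph{at} that state to the counter of the following segment, so this weight is paid out by channel~(i) or~(ii) only if the following segment happens to end in $\Goal$, and is paid out even when the source was a $\Fail$-state. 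Whenever actions enabled in $\Goal$- or $\Fail$-states carry nonzero weight (the theorem is stated for arbitrary non-negative weights), the resulting mean-payoff therefore differs from $\PMP^{\sched}_{\cM}$. The repair is local: give an immediate reward of $\wgt(g,\alpha)$ to every action $\alpha$ taken in a state $g\in\Goal$, reward $0$ to actions taken in $\Fail$-states, and start the successor's counter at $0$ rather than at $\wgt(g,\alpha)$, so that only weights of positions outside $\Goal\cup\Fail$ are deferred to the retroactive and crossing channels. With this correction (and with the target-dependent rewards of channels~(i) and~(ii) replaced by their expectations over successors, since mean-payoff rewards are attached to state-action pairs), your argument goes through.
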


This theorem now immediately implies Theorem \ref{thm:comp_wlf}.

\section{Reductions between threshold problems}\label{app:PE_CE}


\newcommand{\CEmax}[1]{\mathit{CE}^{\max}_{#1}}
\newcommand{\PEmax}[1]{\mathit{PE}^{\max}_{#1}}
\newcommand{\CEmin}[1]{\mathit{CE}^{\min}_{#1}}
\newcommand{\PEmin}[1]{\mathit{PE}^{\min}_{#1}}

This section provides the proofs to Section \ref{sec:pe_ce}.

We briefly sketch the proof that the threshold problem for the conditional SSPP is polynomial-time reducible to the threshold problem for the partial SSPP which was given in \cite{fossacs2019}.

\begin{myprop}[\cite{fossacs2019}]
The threshold problem for the conditional SSPP is polynomial time reducible to the threshold problem for the partial SSPP.
\end{myprop}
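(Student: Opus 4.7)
The plan is to exploit the elementary identity $\CE^\sched_{\cM} = \PE^\sched_{\cM}/\Pr^\sched_{\cM}(\Diamond \goal)$ for any scheduler $\sched$ with $\Pr^\sched_{\cM}(\Diamond \goal) > 0$. Given a threshold $\vartheta = p/q$ with $p \in \mathbb{Z}$, $q \in \mathbb{N}_{>0}$, the statement $\CE^{\max}_\cM > \vartheta$ is equivalent to the existence of a scheduler $\sched$ with $\Pr^\sched_{\cM}(\Diamond \goal) > 0$ satisfying
\[
   q \cdot \PE^\sched_{\cM} \; - \; p \cdot \Pr^\sched_{\cM}(\Diamond \goal) \; > \; 0.
\]
So it suffices to construct, in polynomial time, an MDP $\cM'$ whose partial expectations realise precisely the left-hand side.

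First I would define $\cM'$ from $\cM$ by two small modifications. Multiply every weight $\wgt(s,\alpha)$ of $\cM$ by $q$, then turn the absorbing goal state $\goal$ of $\cM$ into an intermediate state $g$ with a single outgoing action of weight $-p$ that leads deterministically to a new absorbing state $\goal'$, which will play the role of the goal in $\cM'$. Since $g$ offers only one action, schedulers of $\cM$ and $\cM'$ are in an obvious bijective correspondence (I would phrase this formally by treating $g$'s action as a forced choice that contributes nothing beyond the constant weight $-p$). By construction, for a scheduler $\sched$ and its counterpart $\sched'$,
\[
  \Pr^{\sched'}_{\cM'}(\Diamond \goal') \; = \; \Pr^\sched_{\cM}(\Diamond \goal)
  \quad\text{and}\quad
  \PE^{\sched'}_{\cM'} \; = \; q\cdot \PE^\sched_{\cM} \; - \; p \cdot \Pr^\sched_{\cM}(\Diamond \goal).
\]

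The main verification is then that $\PE^{\max}_{\cM'} > 0$ iff $\CE^{\max}_\cM > \vartheta$. For the forward direction, a witness $\sched$ with $\CE^\sched_{\cM} > \vartheta$ necessarily has $\Pr^\sched_{\cM}(\Diamond \goal) > 0$, and one computes $\PE^{\sched'}_{\cM'} = q \cdot \Pr^\sched_{\cM}(\Diamond \goal)\cdot(\CE^\sched_{\cM} - \vartheta) > 0$. The converse requires a bit of care, which is the only real subtlety: any scheduler $\sched'$ with $\PE^{\sched'}_{\cM'} > 0$ must have $\Pr^{\sched'}_{\cM'}(\Diamond \goal') > 0$, because otherwise $\PE^{\sched'}_{\cM'} = 0$; once that is observed, dividing by $q\cdot \Pr^\sched_\cM(\Diamond \goal)$ yields $\CE^\sched_\cM > \vartheta$. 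Schedulers that never reach the goal therefore contribute exactly $0$ to the supremum and cannot spoil the equivalence.

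Finally, I would note that the construction adds only one state and one action and multiplies weights by $q$, so its size is polynomial in $|\cM|$ and the bit-length of $\vartheta$, giving the desired polynomial-time reduction. No obstacle of a technical nature is expected; the slight asymmetry between the strict inequality ``$>\vartheta$'' on the $\CE$ side and ``$>0$'' on the $\PE$ side is exactly what makes the treatment of zero-reachability schedulers benign.
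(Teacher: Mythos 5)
Your construction is essentially the paper's: you append a single deterministic transition of weight $-\vartheta$ (in your integer-scaled form, weight $-p$ after multiplying all other weights by $q$) from the old goal to a fresh absorbing goal and exploit the identity $\PE^{\sched'}_{\cM'}=q\cdot\PE^{\sched}_{\cM}-p\cdot\Pr^{\sched}_{\cM}(\Diamond\goal)$; the rescaling you fold in is exactly the remark the paper makes separately about turning the rational weight $-\vartheta$ into an integer one. The one substantive difference is scope. The paper proves $\CE^{\max}_{\cM}\bowtie\vartheta$ iff $\PE^{\max}\bowtie 0$ for all four operators ${\bowtie}\in\{<,\leq,\geq,>\}$, and for that it needs two ingredients you avoid: the w.l.o.g.\ assumption that \emph{every} scheduler reaches $\goal$ with positive probability, and the existence of optimal schedulers (both imported from the cited reference) to pass from the scheduler-wise equivalence to the equivalence of suprema. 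Your argument is self-contained for the strict threshold ``$>$'': a witness for $\CE^{\sched}_{\cM}>\vartheta$ yields $\PE^{\sched'}_{\cM'}>0$, and conversely any scheduler with positive partial expectation must reach the goal with positive probability, so goal-avoiding schedulers (which contribute exactly $0$) are harmless, and no appeal to attainment of the supremum is needed. This is cleaner where it applies, but it genuinely does not extend to the non-strict variants: for ``$\geq$'' a scheduler that never reaches $\goal$ gives $\PE^{\sched'}_{\cM'}=0\geq 0$ irrespective of whether $\CE^{\max}_{\cM}\geq\vartheta$, which is precisely why the paper's proof carries the positive-reachability assumption. Since the threshold problems in this paper are posed with strict inequality, your version suffices for the intended use; just be explicit that you are proving the ``$>$'' instance only.
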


\begin{proof}
Let $\cM$ be an MDP and $\vartheta$ a rational number. W.l.o.g. we assume that any scheduler for $\cM$ reaches the goal state with positive probability. In \cite{fossacs2019}, it is shown why we can make this assumption.
We construct a new MDP $\cN$ by adding a new state $\goal^\prime$ which is the new goal state in $\cN$ and a transition with probability $1$ from the old goal state $\goal$ to $\goal^\prime$ with weight $-\vartheta$. 
We claim that $\CE^{\max}_{\cM}\bowtie \vartheta$ if and only if $\PE^{\max}_{\cN} \bowtie 0$ for ${\bowtie} \in \{<,\leq,\geq,>\}$.

In fact, we show that the claim even holds scheduler-wise and by the existence of optimal schedulers which is shown in \cite{fossacs2019} the claim then follows.
Clearly any scheduler for $\cM$ can be seen as a scheduler for $\cN$ and vice versa. 
Let $\sched$ be such  a scheduler. Then, we have
\begin{align*}
& \PE^\sched_{\cN} = \PE^\sched_{\cM} - \vartheta \cdot \Pr^{\sched}_{\cM}(\Diamond \goal) \bowtie 0 \\
\text{iff } \ & \frac{\PE^\sched_{\cM}}{\Pr^{\sched}_{\cM}(\Diamond \goal)} \bowtie \vartheta \
\text{ iff } \  \CE^\sched_{\cM}\bowtie \vartheta . \qedhere
\end{align*}

\end{proof}

This reduction, however, introduces a negative weight to the MDP. We show in the sequel that for acyclic MDPs with non-negative weights, there is a reduction from the threshold problem of the conditional SSPP to the partial SSPP that does not introduce a negative weight.

\tudparagraph{1ex}{PSPACE-hardness of the  
             threshold problems for the partial SSPP in non-negative acyclic MDPs (see Proposition \ref{prop:CE_to_PE})}

In this section, we address the threshold problems for maximal and minimal partial expectations 
  with strict or non-strict inequalities. These are the following four problems:
Given a weighted MDP $\cM$ and
a rational threshold $\threshold$, decide whether
\begin{center}   
  \begin{tabular}{l@{\hspace*{0.2cm}}c@{\hspace*{0.2cm}}l@{\hspace*{1cm}}l}
    $\PEmax{\cM}$ & $>$ & $\threshold$ , \\[0.7ex]
    $\PEmax{\cM}$ & $\geqslant$ & $\threshold$ , \\[0.7ex]
    $\PEmin{\cM}$ & $<$ & $\threshold$ , \\[0.7ex]
    $\PEmin{\cM}$ & $\leqslant$ & $\threshold$. 
  \end{tabular}
\end{center}
We now show that the four  threshold problems are PSPACE-complete for
acyclic MDPs with non-negative integer weights.
The proof goes via a polynomial-time reduction from the threshold problem for conditional expectations with
strict threshold conditions:

  \begin{tabular}{ll}
    given: & 
    an acyclic MDP $\cM$ with non-negative integer weights and \\ 
    & a positive rational threshold $\threshold$,
    \\[0ex]
    question: & does $\CEmax{\cM} > \threshold$ hold?
 \end{tabular}
 
and the analogous problem where the task is to check whether
$\CEmin{\cM} < \threshold$.
These threshold problems are known to be PSPACE-complete for acyclic MDPs with
non-negative integer (or rational) weights \cite{tacas2017}.

In the sequel, we assume $\cM$ has distinguished states $s_0$ (initial state)
and two traps 
$\goal$ (target state) and $\fail$ such that
(i) all states of $\cM$ are reachable from $s_0$ and
(ii) $\goal$ is reachable from
all non-trap states in $\cM$.

\tudparagraph{1ex}{Additional assumption.}
For the threshold problem for the conditional SSPP with strict threshold conditions, 
it is no restriction to assume that 
$\Pr^{\min}_{\cM,s_0}(\Diamond \goal)>0$.

To see this, let us suppose that the given MDP $\cM$ has schedulers
under which $\goal$ is not reachable from $s_0$.
Let $\cN$ denote the (rational weighted) 
MDP that extends $\cM$ by a fresh initial state
$s_0'$ with a single enabled action $\tau$ where
$P_{\cN}(s_0',\tau,s_0)= P_{\cN}(s_0',\tau,\goal)=\frac{1}{2}$
and $\wgt_{\cN}(s_0',\tau) = \threshold$ (see Figure \ref{fig:ipad1}).

\begin{figure}[ht]
\begin{center}
\scalebox{1}{
\begin{tikzpicture}
  [shorten >=1pt,node distance=12mm and 1.5cm,on grid,auto,semithick]
  
  \node[state,inner sep=1pt,minimum size=10mm] (s) {$s_0^\prime$};
\node[state,inner sep=1pt,minimum size=10mm] (t) [right = 30mm of s] {$s_0$};
\node[state,inner sep=1pt,minimum size=10mm] (u) [below = 30mm of t] {$\goal$};
\node[state,inner sep=1pt,minimum size=10mm] (w) [right = 30mm of u] {$\fail$};
\node (label) [right=30mm of t] {MDP $\cM$};

 \path[->]
 (s) edge [bend right=20] node [pos=.4,below] {$\tau | +\vartheta$} node [pos=.2,above] {$\frac{1}{2}$}  node [pos= 0.1, anchor=center]  (h1) {} (t)
 (s) edge [bend right=20] node  [pos=0.15, anchor=center] (h2) {} node [pos=.3,below] {$\frac{1}{2}$}  (u)
 ;
 \path
 (h1.center) edge  [bend left] (h2.center)
 (h2.center) edge  [bend right] (h1.center)
;
\draw[draw=black,rounded corners,fill= gray, opacity=0.2] (t.135)+(-.5,.5) rectangle ++(4.5,-4.5);

\end{tikzpicture}
}
\end{center}

\caption{Construction of the MDP $\cN$.}\label{fig:ipad1}
\end{figure}

Obviously, each scheduler for $\cM$  can be viewed as
a scheduler for $\cN$, and vice versa. Moreover, for each scheduler $\sched$
we have:
\[
\begin{array}{l}
  \Pr^{\sched}_{\cN,s_0}(\Diamond \goal)
  \ \ = \ \ 
  \frac{1}{2} + \frac{1}{2} \Pr^{\sched}_{\cM,s_0}(\Diamond \goal)
  \\[1ex]
  \PE^{\sched}_{\cN} \ \ = \ \
  \frac{1}{2} \threshold \, + \, \frac{1}{2}  \PE_{\cM}^{\sched}
\end{array}
\]
With $y=\Pr^{\sched}_{\cM,s_0}(\Diamond \goal)$ we get:
\[
  \CE^{\sched}_{\cN} \ \ = \ \ 
  \frac{\, \frac{1}{2} \threshold \, + \, \frac{1}{2} \PE_{\cM}^{\sched} \, }
       {\frac{1}{2} + \frac{1}{2} y}
\]
Note that 
$\CE^{\sched}_{\cN} \not= \threshold$ implies $y>0$. Likewise,
$\PE^{\sched}_{\cM}>0$ implies $y>0$.
But then:
\[
\begin{array}{lll}
    & & \CE^{\sched}_{\cN} \ > \ \threshold
    \\[1ex]
    \text{iff} & & 
    \frac{1}{2} \threshold \, + \, \frac{1}{2} \PE_{\cM}^{\sched} 
    \ \ > \ \ (\frac{1}{2}  + \frac{1}{2} y)\threshold
    \ \ = \ \ \frac{1}{2} \threshold\, + \, \frac{1}{2} \threshold y
    \\[1ex]
    \text{iff} & & 
    \PE_{\cM}^{\sched} \ > \  \threshold y
    \\[1ex]
    \text{iff} & & 
    \CE_{\cM}^{\sched} \ > \  \threshold 
\end{array}
\]
and similary, $\CE^{\sched}_{\cN} < \threshold$ iff
$\CE_{\cM}  < \threshold$.
This yields: 
\begin{center}
   $\CEmax{\cN} > \threshold$ \ iff \ $\CEmax{\cM} > \threshold$
   \qquad and \qquad
   $\CEmin{\cN} < \threshold$ \ iff \ $\CEmin{\cM} < \threshold$
\end{center}

An integer weighted MDP enjoying an analogous property can be obtained
as follows. Let $\threshold = a/b$ where $a,b$ are co-prime positive integers.
and let $\cK$ denote the MDP arising from $\cN$ by multiplying all weights
with $b$. (Thus, $\wgt_{\cK}(s_0',\tau)=a$ and 
$\wgt_{\cK}(s,\alpha)=b \cdot \wgt_{\cM}(s,\alpha)$ for all state-action pairs
$(s,\alpha)$ in $\cM$.)
Then, $\PE^{\sched}_{\cK} = b\cdot \PE^{\sched}_{\cN}$ for each scheduler
$\sched$. Hence, $\CEmax{\cK} > a$ iff $\CEmax{\cM} > \threshold$,
and similarly $\CEmax{\cK} < a$ iff $\CEmax{\cM} < \threshold$.

\tudparagraph{1ex}{Reduction.}
We now describe a polynomial reduction from the threshold problem for the conditional SSPP 
with strict
threshold condition in acyclic MDPs under the additional assumption that
the given acyclic MDP $\cM$ enjoys the property
$\Pr^{\min}_{\cM,s_0}(\Diamond \goal)>0$.
We are going to construct an acyclic MDP $\cN$ and threshold values
$\threshold_1$ and $\threshold_2$ such that
$\CEmax{\cM} > \threshold$ iff
$\PEmax{\cN} > \threshold_1$ iff
$\PEmax{\cN} \geqslant \threshold_2$ (see Lemma \ref{PEmax})
and an analogous statement for minimal conditional/partial expectations
(see Lemma \ref{PEmin}).

The structure of this MDP $\cN$ is sketched in the following Figure \ref{fig:ipad2}:

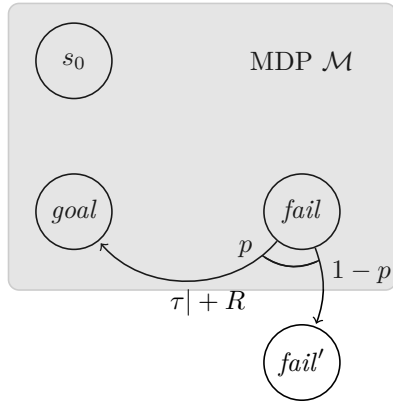
\begin{figure}[h]
\begin{center}
\scalebox{1}{
\begin{tikzpicture}
  [shorten >=1pt,node distance=12mm and 1.5cm,on grid,auto,semithick]

\node[state,inner sep=1pt,minimum size=10mm] (t) {$s_0$};
\node[state,inner sep=1pt,minimum size=10mm] (u) [below = 20mm of t] {$\goal$};
\node[state,inner sep=1pt,minimum size=10mm] (w) [right = 30mm of u] {$\fail$};
 \node[state,inner sep=1pt,minimum size=10mm] (s) [below = 20mm of w] {$\fail^\prime$};
\node (label) [right=30mm of t] {MDP $\cM$};

 \path[->]
 (w) edge [bend left=50] node [pos=.4,below] {$\tau | +R$} node [pos=.2,above] {$p$}  node [pos= 0.1, anchor=center]  (h1) {} (u)
 (w) edge [bend left=20] node  [pos=0.15, anchor=center] (h2) {} node [pos=.3,right] {$1-p$}  (s)
 ;
 \path
 (h2.center) edge  [bend left] (h1.center)
 (h1.center) edge  [bend right] (h2.center)
;
\draw[draw=black,rounded corners,fill= gray, opacity=0.2] (t.135)+(-.5,.4) rectangle ++(4.6,-3.4);

\end{tikzpicture}
}
\end{center}

\caption{Construction of the MDP $\cN$. The probability $p$ and the weight $R$ are chosen such that $pR=\vartheta$.}\label{fig:ipad2}
\end{figure}

So, the essential task is to find appropriate values for $p$ and $R$
and corresponding threshold values for $\cN$.
For this we need to tackle the problem that different
paths from $s_0$ to $\fail$ in $\cM$ might have different
accumulated weights.

For each non-trap state $s$
in $\cM$, let
 $m_s$ denote the least common multiple
of the denominators of the transition probabilities $P(s,\alpha,t)$
and let $m$ be the product of the values $m_s$ for all non-trap states $s$.
The number of digits of an binary (or decimal) representation 
of $m$ is polynomially bounded in the size of $\cM$.
To see this, we observe that $m \leqslant D^{|S|}$ where 
$D=\max_{s,\alpha,t} \mathit{denom}(s,\alpha,t)$
and $\mathit{denom}(s,\alpha,t)=1$ if $P(s,\alpha,t)=0$,
while for $P(s,\alpha,t)>0$,
$\mathit{denom}(s,\alpha,t)$ denotes the denominator of 
the unique representation
of $P(s,\alpha,t)$ as the quotient of co-prime positive integers.
This yields
$\log m \leqslant d \cdot |S|$ where $d$ is 
the maximal number of digits of the binary representations of 
the weights in $\cM$, i.e., $d$ is the least natural number with
$D < 2^d$.

As $\cM$ is acyclic, the probability of each path $\pi$ from $s_0$ to
$\goal$ or $\fail$ is a rational number of the form $\ell/m$ for
for some natural number $\ell$.
The same applies to the probabilities
for reaching $\goal$ from $s_0$ 
and the partial expectations 
under deterministic schedulers.
That is, if $\sched$ is a deterministic scheduler then
\[
  \Pr^{\sched}_{\cM,s_0}(\Diamond \goal), \ \
  \PE^{\sched}_{\cM,s_0}
  \ \in \ \Bigl\{\, \frac{\ell}{m} \, : \, \ell \in \Nat \, \Bigr\}
\]
Consider a representation of the threshold $\threshold$ as the quotient
$a/b$ of two positive integers $a,b$.
Let
\[
   \delta \ \ \eqdef \ \ \frac{1}{bm}
\]
Then, for each deterministic scheduler $\sched$,
the value
$\threshold \cdot \Pr^{\sched}_{\cM,s_0}(\Diamond \goal)$
and the partial expectation $\PE^{\sched}_{\cM,s_0}$
are integer-multiples of $\delta$.
This yields:
\begin{equation}
  \label{magic}
  \text{If $y=\Pr^{\sched}_{\cM,s_0}(\Diamond \goal)$ and
         $\PE^{\sched}_{\cM} > 
            \threshold y$ 
        then
        $\PE^{\sched}_{\cM} \geqslant
         \threshold y + \delta$.}
  \tag{*}
\end{equation}
Let now 
\[
  w \ \ \eqdef \ \ 
  1 + \max \, \bigl\{ \, \wgt(\pi) \, : \, 
    \text{$\pi$ is a path from $s_0$ to $\fail$ in $\cM$} \, \bigr\}
\]
and define
\[
   p \ \eqdef \ \frac{\delta}{2w}
   \quad \text{and} \quad
   R \ \eqdef \ \frac{2w\threshold}{\delta}
\]
Note that $w$ is finite (recall that $\cM$ is acyclic) and
computable in polynomial time 
and that the logarithmic length of the numerator and denominator of the
rational numbers $p$ and $R$ is polynomial in the sizes 
of the given MDP $\cM$ and the threshold value $\threshold$.
Obviously: 
\begin{equation}
  \label{p-and-R}
  pw \ = \ \frac{\delta}{2} 
  \quad \text{and} \quad
  pR = \threshold 
  \tag{$\dagger$}
\end{equation} 
We now construct a new MDP $\cN$ that extends $\cM$ by a fresh state $\fail'$
and an action $\tau$ that is enabled in 
state $\fail$ 
with weight $\wgt_{\cN}(\fail,\tau_i)=R$ and 
the transition probabilities 
$P_{\cN}(\fail,\tau,\goal)=p$,
$P_{\cN}(\fail,\tau,\fail')=1{-}p$.
For all other states, the enabled actions and their transition probabilities
and weights are the same as in $\cM$.


\begin{mylem}
\label{PEmax}
$
  \CEmax{\cM} > \threshold 
  \quad \ \text{iff} \ \quad
  \PEmax{\cN} \geqslant \threshold +\delta
  \quad \ \text{iff} \ \quad
  \PEmax{\cN} > \threshold + \frac{\delta}{2}
$
\end{mylem}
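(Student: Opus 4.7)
The plan is to express $\PE^\sched_\cN$ in terms of quantities over $\cM$ and then exploit the discretization property \eqref{magic} to separate the threshold $\vartheta$ from the attainable values by a gap of at least $\delta$.

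First I would observe that any scheduler $\sched$ can be viewed as a scheduler for both $\cM$ and $\cN$, since the only new choice in $\cN$ (the forced action $\tau$ in $\fail$) is deterministic. Writing $y = \Pr^\sched_{\cM,s_0}(\Diamond \goal)$ and $E_\fail = \mathbb{E}^\sched_\cM\bigl[\wgt(\pi)\cdot \mathds{1}_{\pi \vDash \Diamond \fail}\bigr]$, decomposing a path in $\cN$ according to whether it goes through $\goal$ or $\fail$ in the embedded copy of $\cM$, and using $pR = \vartheta$, I would derive the key identity
\[
  \PE^\sched_\cN \ = \ \PE^\sched_\cM \, + \, p \cdot E_\fail \, + \, \vartheta (1-y).
\]
The ``error term'' $p E_\fail$ is non-negative (weights are non-negative) and satisfies $p E_\fail \leq p w (1-y) \leq pw = \delta/2$ by the choice of $p$ in \eqref{p-and-R}.

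Next, for the forward direction ``$\CEmax{\cM} > \vartheta \Rightarrow \PEmax{\cN} \geq \vartheta + \delta$'', I would take a deterministic scheduler $\sched$ witnessing $\CE^\sched_\cM > \vartheta$ (which exists in the acyclic non-negative setting by prior work), so that $\PE^\sched_\cM > \vartheta y$. Applying \eqref{magic} upgrades this to $\PE^\sched_\cM \geq \vartheta y + \delta$, and plugging into the identity above yields
\[
  \PE^\sched_\cN \ \geq \ \vartheta y + \delta + p E_\fail + \vartheta(1-y) \ = \ \vartheta + \delta + p E_\fail \ \geq \ \vartheta + \delta.
\]
The middle implication ``$\PEmax{\cN} \geq \vartheta + \delta \Rightarrow \PEmax{\cN} > \vartheta + \delta/2$'' is immediate since $\delta > 0$.

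For the backward direction ``$\PEmax{\cN} > \vartheta + \delta/2 \Rightarrow \CEmax{\cM} > \vartheta$'', I would take a scheduler $\sched$ with $\PE^\sched_\cN > \vartheta + \delta/2$ (again one may assume $\sched$ is deterministic by acyclicity and non-negativity). Rearranging the identity:
\[
  \PE^\sched_\cM - \vartheta y \ = \ \PE^\sched_\cN - \vartheta - p E_\fail \ > \ \tfrac{\delta}{2} - \tfrac{\delta}{2} \ = \ 0,
\]
so $\PE^\sched_\cM > \vartheta y$. Since the standing assumption guarantees $y = \Pr^\sched_{\cM,s_0}(\Diamond\goal) \geq \Pr^{\min}_{\cM,s_0}(\Diamond\goal) > 0$, dividing gives $\CE^\sched_\cM > \vartheta$, closing the chain.

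The only genuinely delicate point is the calibration: one has to ensure the additive slack $p E_\fail$ introduced by paths through $\fail$ is strictly smaller than the arithmetic gap $\delta$ guaranteed by \eqref{magic}. This is exactly what \eqref{p-and-R} achieves (with a comfortable factor of two), so the argument collapses to the algebraic manipulation above. The analogue for $\PEmin{\cN}$ and $\CEmin{\cM}$ (Lemma \ref{PEmin}) should follow by the symmetric argument, reversing all inequalities.
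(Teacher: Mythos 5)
Your proof is correct and follows essentially the same route as the paper: the exact identity $\PE^\sched_\cN = \PE^\sched_\cM + pE_\fail + \vartheta(1-y)$ together with $0 \leq pE_\fail \leq \delta/2$ is just a repackaging of the paper's two-sided bound $\PE^\sched_\cM + \vartheta(1-y) \leq \PE^\sched_\cN < \PE^\sched_\cM + \vartheta(1-y) + \delta/2$, and the use of the discretization step \eqref{magic} on a deterministic witness plus the calibration $pR=\vartheta$, $pw=\delta/2$ is identical. No gaps.
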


\begin{proof}
Let us first observe that $\cM$ and $\cN$ have the same schedulers.
Moreover, if $\sched$ is a scheduler and
$y=\Pr^{\sched}_{\cM,s_0}(\Diamond \goal)$ then:
\begin{enumerate}
\item [(1)]
  $\Pr^{\sched}_{\cN,s_0}(\Diamond \goal) \ = \ y + p(1{-}y)$
\item [(2)]
  $
     \PE^{\sched}_{\cM} +  \threshold(1{-}y) 
     \ \ \ \leqslant \ \ \
     \PE^{\sched}_{\cN}
     \ \ \ < \ \ \
     \PE^{\sched}_{\cM}  +  \threshold(1{-}y)
         + \frac{\delta}{2}
  $

  Proof:
  The claim is obvious if $y=1$ because then
  $\CE^{\sched}_{\cM}=\PE^{\sched}_{\cM}=\PE^{\sched}_{\cN}$.

  Suppose now that $y<1$.
  As the accumulated weight of all paths from $s_0$ to $\fail$
  is at most $w{-}1$ we have:
  \[
     \PE^{\sched}_{\cM} + pR(1{-}y) 
     \ \ \leqslant \ \
     \PE^{\sched}_{\cN}
     \  \ <  \ \
     \PE^{\sched}_{\cM} + p(R{+}w)(1{-}y)
  \]
  The claim then follows from \eqref{p-and-R}.

\end{enumerate}
Suppose now that $\CEmax{\cM} > \threshold$.
Pick a deterministic scheduler $\sched$ such that
$\CE^{\sched}_{\cM} > \threshold$.
Thus, with $y=\Pr^{\sched}_{\cM,s_0}(\Diamond \goal)$ we have $y>0$ and
\[
  \PE^{\sched}_{\cM} \ \ > \ \ 
  \threshold y
\]
By \eqref{magic} we have 
\[
  \PE^{\sched}_{\cM} \ \ \geqslant \ \ 
  \threshold y + \delta
\]
Using the first inequality of statement (2) we obtain:
\[
  \PE^{\sched}_{\cN}
  \ \ \ \stackrel{\text{(2)}}{\geqslant} \ \ \
  \PE^{\sched}_{\cM} +  \threshold(1{-}y) 
  \ \ \ \geqslant \ \ \
  \threshold y \, + \, \delta \, + \, \threshold \, - \, \threshold y
  \ \ = \ \ 
  \threshold +\delta
\]
Hence, $\PEmax{\cN} \geqslant \threshold + \delta$.

Suppose now that $\PEmax{\cN} > \threshold + \frac{\delta}{2}$.
Pick a deterministic scheduler $\sched$ such that
$\PE^{\sched}_{\cN} > \threshold + \frac{\delta}{2}$.
Let
\[
  y \ \ \eqdef \ \ 
  \Pr^{\sched}_{\cM,s_0}(\Diamond \goal) 
\]
The assumption $\Pr^{\min}_{\cM,s_0}(\Diamond \goal) >0$
yields $y>0$.
Using the second inequality of statement (2) we obtain:
\[
  \PE^{\sched}_{\cM} + \threshold (1{-}y) + \frac{\delta}{2}
  \ \ \ \stackrel{\text{(2)}}{>} \ \ \
  \PE^{\sched}_{\cN} 
  \ \ > \ \ \threshold + \frac{\delta}{2}
\]
This yields:
\[
  \PE^{\sched}_{\cM}  \ \ > \ \ 
  \threshold y 
\]
But then 
$\CE^{\sched}_{\cM} > \threshold$,
and therefore $\CEmax{\cN} > \threshold$. 
\end{proof}


\begin{mylem}
\label{PEmin}
$
  \CEmin{\cM} < \threshold 
  \quad \ \text{iff} \ \quad
  \PEmin{\cN} \leqslant \threshold - \frac{\delta}{2}
  \quad \ \text{iff} \ \quad
  \PEmin{\cN} < \threshold - \frac{\delta}{2}
$
\end{mylem}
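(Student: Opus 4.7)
The plan is to mirror the argument of Lemma~\ref{PEmax} step by step, with reversed inequalities, noting that the two auxiliary facts established there carry over verbatim: (1)~$\Pr^{\sched}_{\cN,s_0}(\Diamond \goal) = y + p(1-y)$ and (2)~$\PE^{\sched}_{\cM} + \threshold(1-y) \leqslant \PE^{\sched}_{\cN} < \PE^{\sched}_{\cM} + \threshold(1-y) + \tfrac{\delta}{2}$ hold for every scheduler $\sched$, independently of whether one is maximizing or minimizing. Note also that $\cM$ and $\cN$ have the same set of schedulers, and the assumption $\Pr^{\min}_{\cM,s_0}(\Diamond \goal) > 0$ ensures $y > 0$ for every scheduler.

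First I would establish the dual of the granularity observation \eqref{magic}: for every deterministic scheduler $\sched$ with $y = \Pr^{\sched}_{\cM,s_0}(\Diamond \goal)$, the values $\PE^{\sched}_{\cM}$ and $\threshold y$ are both integer multiples of $\delta = 1/(bm)$ (this is the same divisibility argument, unchanged), so that
\[
\PE^{\sched}_{\cM} \ < \ \threshold y \quad\text{implies}\quad \PE^{\sched}_{\cM} \ \leqslant \ \threshold y - \delta.
\]

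For the forward direction, suppose $\CEmin{\cM} < \threshold$ and pick a deterministic $\sched$ with $\CE^{\sched}_{\cM} < \threshold$, so that $\PE^{\sched}_{\cM} < \threshold y$ with $y > 0$. The granularity gives $\PE^{\sched}_{\cM} \leqslant \threshold y - \delta$, and then the upper inequality in (2) yields
\[
\PE^{\sched}_{\cN} \ < \ \PE^{\sched}_{\cM} + \threshold(1-y) + \tfrac{\delta}{2} \ \leqslant \ \threshold y - \delta + \threshold - \threshold y + \tfrac{\delta}{2} \ = \ \threshold - \tfrac{\delta}{2}.
\]
Hence $\PEmin{\cN} < \threshold - \tfrac{\delta}{2}$, which a fortiori gives $\PEmin{\cN} \leqslant \threshold - \tfrac{\delta}{2}$.

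For the converse, assume $\PEmin{\cN} \leqslant \threshold - \tfrac{\delta}{2}$ and pick a deterministic $\sched$ witnessing this. Using the lower inequality in (2),
\[
\PE^{\sched}_{\cM} + \threshold(1-y) \ \leqslant \ \PE^{\sched}_{\cN} \ \leqslant \ \threshold - \tfrac{\delta}{2},
\]
so $\PE^{\sched}_{\cM} \leqslant \threshold y - \tfrac{\delta}{2} < \threshold y$. Since $y > 0$, dividing yields $\CE^{\sched}_{\cM} < \threshold$, so $\CEmin{\cM} < \threshold$. This closes the circle $\CEmin{\cM} < \threshold \Rightarrow \PEmin{\cN} < \threshold - \tfrac{\delta}{2} \Rightarrow \PEmin{\cN} \leqslant \threshold - \tfrac{\delta}{2} \Rightarrow \CEmin{\cM} < \threshold$, which establishes all three equivalences. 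There is no real obstacle here; the only care needed is the $\delta/2$ slack in (2), which is exactly what lets the non-strict threshold $\leqslant \threshold - \delta/2$ on $\PEmin{\cN}$ be upgraded back to a strict $< \threshold - \delta/2$ via the granularity step.
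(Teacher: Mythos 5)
Your proof is correct and follows essentially the same route as the paper's: the dual granularity fact \eqref{magic-PEmin}, the forward direction via the upper bound in statement (2) of the proof of Lemma~\ref{PEmax}, and the converse via the lower bound together with $y>0$, closing the three-way cycle in the same order. No substantive differences.
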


\begin{proof}
The argument is fairly the same as in the proof of Lemma \ref{PEmax}.
Instead of \eqref{magic} we use here the following fact:
\begin{equation}
  \label{magic-PEmin}
  \text{If $y=\Pr^{\sched}_{\cM,s_0}(\Diamond \goal)$ and
         $\PE^{\sched}_{\cM} < 
            \threshold y$ 
        then
        $\PE^{\sched}_{\cM} \leqslant
         \threshold y - \delta$.}
  \tag{**}
\end{equation}
Suppose first that $\CEmin{\cM} < \threshold$.
Pick a deterministic scheduler $\sched$ such that
$\CE^{\sched}_{\cM} < \threshold$ and let
$y=\Pr^{\sched}_{\cM,s_0}(\Diamond \goal)$. 
Then, $y>0$ and
$\PE^{\sched}_{\cM} < \threshold y$.
By \eqref{magic-PEmin} we get:
\[
  \PE^{\sched}_{\cM} \ \ \leqslant \ \ 
  \threshold y - \delta
\]
We now rely on
the second inequality of statement (2) 
in the proof of Lemma \ref{PEmax} 
and obtain:
\[
  \PE^{\sched}_{\cN}
  \ \ \stackrel{\text{(2)}}{<} \ \
  \PE^{\sched}_{\cM} +  \threshold(1{-}y) +\frac{\delta}{2}
  \ \ \leqslant \ \
  \threshold y \, - \, \delta \, + \, \threshold \, - \, \threshold y
  \, + \, \frac{\delta}{2}
  \ \ = \ \ 
  \threshold - \frac{\delta}{2}
\]
Hence, $\PEmin{\cN} < \threshold - \frac{\delta}{2}$.

Suppose now that $\PEmin{\cN} \leqslant \threshold - \frac{\delta}{2}$.
Pick a deterministic scheduler $\sched$ such that
$\PE^{\sched}_{\cN} \leqslant \threshold - \frac{\delta}{2}$.
By assumption
\[
  y \ \ \eqdef \ \ 
  \Pr^{\sched}_{\cM,s_0}(\Diamond \goal) \ \ > \ 0 
\]
The first inequality of statement (2) 
in the proof of Lemma \ref{PEmax} yields:
\[
  \PE^{\sched}_{\cM} + \threshold (1{-}y) 
  \ \ \ \stackrel{\text{(2)}}{\leqslant} \ \ \
  \PE^{\sched}_{\cN} 
  \ \ \leqslant \ \ \threshold - \frac{\delta}{2}
\]
Hence:
\[
  \PE^{\sched}_{\cM}  \ \ \leqslant \ \ 
  \threshold y  - \frac{\delta}{2} \ \ < \ \ \threshold y
\]
But then 
$\CE^{\sched}_{\cM} < \threshold$, which implies
$\CEmin{\cN} < \threshold$. 
\end{proof}


By the PSPACE-hardness of the threshold problem
``does $\CEmax{\cM}> \threshold$ hold?'' or 
``does $\CEmin{\cM}< \threshold$ hold?'' 
\cite{tacas2017}
and Lemma \ref{PEmax} and Lemma \ref{PEmin},
the four PE-problems are PSPACE-hard too.
Membership of the four threshold problems for the partial SSPP to PSPACE
follows from the inter-reducibility with  threshold problems for maximal conditional expectations
for acyclic MDPs.
Hence:

\begin{mycor}
  The four threshold problems for the partial SSPP in acyclic MDPs with non-negative integer weights
  are PSPACE-complete.
\end{mycor}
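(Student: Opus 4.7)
The plan is to give a polynomial-time reduction from the threshold problem for the partial SSPP in acyclic MDPs with non-negative weights, which is PSPACE-hard by Proposition~\ref{prop:CE_to_PE}. The construction is exactly the one sketched in the paragraph immediately preceding the proposition: starting from an acyclic MDP $\cM$ with non-negative integer weights, absorbing $\goal$ and a $\fail$ state, I would first pad $\cM$ so that every path from $\sinit$ to $\{\goal,\fail\}$ has the same length $\ell$, and then add weight-$0$ transitions from $\goal$ and $\fail$ back to $\sinit$. Call the resulting MDP $\cN$; all of its weights remain non-negative.

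For the padding, I would compute for each state $s$ the length $h(s)$ of the longest path from $s$ to $\{\goal,\fail\}$ in $\cM$ and set $\ell=h(\sinit)$. For every action $\act$ enabled in $s$ and every successor $t$ with $P(s,\act,t)>0$, I would insert a chain of $h(s)-1-h(t)$ fresh deterministic intermediate states with weight $0$ on the $t$-branch of $\act$. After this surgery every $\sinit$-to-trap path uses exactly $\ell$ actions, while all transition probabilities, reachability probabilities of $\goal$, and accumulated weights on $\goal$-paths are preserved; the size of $\cN$ is polynomial in the size of $\cM$ because at most $\ell\leq|S_\cM|$ states are added per outcome of each action. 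The back-transitions from $\goal$ and $\fail$ to $\sinit$ have weight $0$ and deterministic transition probability $1$, and they turn runs in $\cN$ into infinite concatenations of ``episodes'' of length exactly $\ell+1$, each starting and ending at $\sinit$.

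The key computation is then to read off $\PMP^{\max}_\cN$ from $\PE^{\max}_\cM$. Within a single episode, if the episode ends in $\goal$ then for every position $i$ inside it the suffix starting at $i$ meets $\Goal$ before $\Fail$ (namely at the current episode's end), so $\mathds{1}_{\pi[i\ldots]\vDash\neg\Fail\Until\Goal}=1$ for all $\ell+1$ positions of the episode, and the sum of collected weights equals the weight of the corresponding $\sinit$-to-$\goal$ path of $\cM$. If the episode ends in $\fail$, the same indicator is $0$ at every position of the episode, because from any such position the run hits $\fail$ before any $\goal$-state. Therefore, for any finite-memory scheduler $\sched$ that resets at each visit to $\sinit$ (which includes every ``periodic replay'' of a $\cM$-scheduler), the expected weight accumulated per episode is $\PE^{\sched}_\cM$ while the length of an episode is deterministically $\ell+1$, giving $\PMP^{\sched}_\cN=\PE^{\sched}_\cM/(\ell+1)$. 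For general $\sched$ I would apply Fatou's lemma as in Lemma~\ref{app_lem:FM-scheduler}: pulling the $\liminf$ outside the expectation gives $\PMP^{\sched}_\cN\leq\liminf_n \tfrac{1}{n+1}\mathbb{E}^{\sched}\bigl[\sum_{i=0}^n \wgt(s_i,\act_i)\mathds{1}_{\pi[i\ldots]\vDash\neg\Fail\Until\Goal}\bigr]$, and since each of the at most $\lceil (n+1)/(\ell+1)\rceil$ episodes contributes at most $\PE^{\max}_\cM$ in expectation, this is bounded by $\PE^{\max}_\cM/(\ell+1)$. The matching lower bound is obtained by taking an optimal deterministic scheduler for the partial SSPP in $\cM$ and replaying it identically in every episode of $\cN$.

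Putting the two bounds together yields $\PMP^{\max}_\cN=\PE^{\max}_\cM/(\ell+1)$, so $\PE^{\max}_\cM\bowtie\vartheta$ iff $\PMP^{\max}_\cN\bowtie\vartheta/(\ell+1)$ for every $\bowtie\in\{>,\geq,<,\leq\}$. Since all four variants of the partial-SSPP threshold problem are PSPACE-hard on acyclic non-negative MDPs by Proposition~\ref{prop:CE_to_PE}, and the construction of $\cN$ together with the rescaled threshold $\vartheta/(\ell+1)$ is polynomial, the proposition follows. The main subtlety to get right in the write-up is the indicator-and-episode accounting, in particular verifying that episodes ending in $\fail$ really contribute zero to $\mathit{wlf}$ even when followed by successful episodes (which they do, because $\fail$ itself lies on every such suffix before any later $\goal$); once this is nailed down the algebra is immediate.
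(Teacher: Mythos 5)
Your argument establishes the PSPACE-hardness of the threshold problem for \emph{weighted long-run frequencies} by reducing \emph{from} the partial SSPP --- that is the content of Proposition~\ref{prop:PE_to_WLF}, not of the corollary at hand. The corollary asserts PSPACE-\emph{completeness} of the four partial-SSPP threshold problems themselves on acyclic MDPs with non-negative integer weights, and a reduction that starts from the partial SSPP and lands on some other problem cannot establish hardness of the partial SSPP: the reduction has to go \emph{to} the partial SSPP from a problem already known to be PSPACE-hard. The paper does exactly that. It reduces the conditional-SSPP threshold problems, which are PSPACE-complete for acyclic non-negative MDPs by \cite{tacas2017}, to the partial-SSPP threshold problems via the gadget that redirects $\fail$ to $\goal$ with probability $p=\delta/(2w)$ and weight $R=2w\threshold/\delta$, chosen so that $pR=\threshold$ while the perturbation $pw=\delta/2$ stays below the granularity $\delta=1/(bm)$ of the partial expectations achievable by deterministic schedulers; this is what lets Lemmas~\ref{PEmax} and~\ref{PEmin} convert a strict conditional-expectation threshold into both a strict and a non-strict partial-expectation threshold, covering all four variants, while keeping every weight non-negative. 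None of this appears in your proposal.

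The membership half of the corollary is also unaddressed: the paper obtains containment in PSPACE from the inter-reducibility of the partial-SSPP and conditional-SSPP threshold problems on acyclic MDPs, the latter being in PSPACE by \cite{tacas2017}. So even granting every step of your episode-accounting computation --- which, read as a proof of Proposition~\ref{prop:PE_to_WLF}, is essentially correct and coincides with the paper's own argument for that proposition --- nothing in the proposal yields either the lower or the upper bound claimed in the corollary you were asked to prove.
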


\begin{myprop}[Proposition \ref{prop:PE_to_WLF}]\label{prop_app:PE_to_WLF}
The threshold problem for weighted long-run frequencies, ``Does $\PMP^{\max}_\cM\bowtie \vartheta$ hold?'', in MDPs with non-negative weights is PSPACE-hard.
\end{myprop}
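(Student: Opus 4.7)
The plan is to reduce from the PSPACE-hard threshold problem for the partial SSPP on acyclic MDPs with non-negative weights (Proposition~\ref{prop:CE_to_PE}). Starting from an instance $(\cM,\vartheta)$ where $\cM$ is acyclic with non-negative weights, initial state $\sinit$, and disjoint absorbing targets $\goal$ (success) and $\fail$ (collecting all remaining paths, as in Appendix~\ref{app:PE_CE}), I would construct in polynomial time an MDP $\cN$ with non-negative weights and a rational threshold $\vartheta'$ such that
\[ \PE^{\max}_\cM \;\bowtie\; \vartheta \quad\Longleftrightarrow\quad \PMP^{\max}_\cN \;\bowtie\; \vartheta' \]
for each $\bowtie\,\in\{>,\ge,<,\le\}$.

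The construction has two steps. First I normalize path lengths: using a topological ordering of $\cM$, stratify each state $s$ into copies $s^{(0)},\dots,s^{(\ell)}$, where $\ell\le|S|$ is the longest path length from $\sinit$ to $\{\goal,\fail\}$, so that the only transitions go from level-$i$ copies to level-$(i{+}1)$ copies, and pad every copy of $\goal$ or $\fail$ that sits above level $\ell$ with a weight-$0$ chain of dummy states down to level $\ell$. This produces a polynomial-size acyclic MDP $\cN_0$ in which all runs from $\sinit^{(0)}$ to $\{\goal,\fail\}^{(\ell)}$ have length exactly $\ell$, preserving weights, reachability probabilities, and partial expectations. Second, I add a weight-$0$ transition back from each of $\goal^{(\ell)}$ and $\fail^{(\ell)}$ to $\sinit^{(0)}$ (folded into the padding so that every round trip has exactly length $L:=\ell$), yielding a strongly connected MDP $\cN$ whose runs are infinite concatenations of round trips.

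The crux is the identity
\[ \PMP^{\max}_\cN \;=\; \frac{\PE^{\max}_\cM}{L}, \]
from which setting $\vartheta':=\vartheta/L$ finishes the reduction. The key observation is that for any position $i$ inside a round trip, the suffix $\pi[i\dots]$ satisfies $\neg\Fail\until\Goal$ iff that trip ends at $\goal$: within a single trip exactly one of $\goal,\fail$ is reached, and later visits to $\goal$ or $\fail$ cannot ``rescue'' the suffix because $\Goal\cap\Fail=\emptyset$. Hence the per-trip contribution to $\mathit{wlf}$ is the weight of a successful trip and $0$ for a failed one, so under any scheduler the expected per-trip contribution is at most $\PE^{\max}_\cM$, while lifting an optimal partial-expectation scheduler of $\cM$ and replaying it on every trip achieves exactly $\PE^{\max}_\cM$ per trip; i.i.d.\ round trips then make the time-average converge almost surely to $\PE^{\max}_\cM/L$, giving the $\ge$ direction.

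The main obstacle is the $\le$ direction against arbitrary, potentially history-dependent schedulers of $\cN$: a priori, adapting behavior across trips could exploit the $\liminf$ in the definition of $\mathit{wlf}$. I would handle this by invoking Fatou's lemma exactly as in the proof of Lemma~\ref{app_lem:FM-scheduler}, reducing the supremum to finite-memory schedulers $\sched$; for such $\sched$, the induced Markov chain has (after discarding sub-optimal BSCCs by the standard ``pick the best BSCC'' argument) a single BSCC containing $\sinit^{(0)}$, and the weighted long-run frequency equals the steady-state ratio of expected per-trip $\mathit{wlf}$-contribution to trip length $L$, which is bounded by $\PE^{\max}_\cM/L$. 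Since the construction and $\vartheta'=\vartheta/L$ have polynomial bit length in the input, PSPACE-hardness transfers to the threshold problem for $\PMP^{\max}$.
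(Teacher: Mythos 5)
Your proposal is correct and takes essentially the same route as the paper: reduce from the partial SSPP in acyclic MDPs with non-negative weights, pad with weight-$0$ intermediate states so all runs to $\goal$ or $\fail$ have equal length, loop back to the initial state, and divide the threshold by the round-trip length. The paper states the resulting identity $\PMP^{\max}_\cN = \PE^{\max}_\cM/(\ell{+}1)$ without further justification, whereas you additionally spell out the per-trip decomposition and the Fatou/finite-memory argument for the upper bound; this is a harmless (and welcome) elaboration, not a different approach.
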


\begin{proof}
We reduce from the threshold problem of the  partial SSPP in acyclic MDPs with non-negative weights: Given such an MDP $\cM$ and a threshold $\vartheta$,  we can transform $\cM$ to an MDP $\cN$ in which all paths to $\fail$ or $\goal$ have the same length $\ell$ in polynomial time by simply adding (polynomially many) intermediate states and transitions with no weight. In $\cN$, we let $\Goal=\{\goal\}$ and $\Fail=\{\fail\}$ and add transitions with porbability $1$ and weight $0$ from $\fail$ and $\goal$ to the initial state. Then $\PMP^{\max}_\cN\bowtie \vartheta/(\ell+1)$ if and only if $\PE^{\max}_\cM \bowtie \vartheta$ for ${\bowtie} \in \{<,\leq,\geq,>\}$.
\end{proof}






\end{document}